\newtheorem{thm}{Theorem}[section]
\newtheorem{propo}[thm]{Proposition}
\newtheorem{lemme}[thm]{Lemma}
\newtheorem{corro}[thm]{Corollary}
\newtheorem{defi}[thm]{Definition}
\newtheorem{remarque}[thm]{Remark}
\newtheorem{example}[thm]{Example}
\newtheorem{Hyp}{Assumption}
\newcommand{\argmin}{\mathop{\mathrm{arg\,min}}}
\def\R{\mathbb R}
\def\C{\mathbb C}
\def\E{\mathbb E}
\def\shd{{\cal D}}
\def\shf{{\cal F}}
\def\shl{{\cal L}}
\def\halb{{\frac{1}{2}}}
\author{{\sc Stéphane GOUTTE $^*$}\footnote{Universit\'e Paris 13,
    Mathématiques LAGA, Institut Galilée, 99 Av. J.B. Clément 93430
    Villetaneuse. E-mail:{\tt
      goutte@math.univ-paris13.fr}}\thanks{Luiss Guido Carli - Libera
    Università Internazionale degli Studi Sociali Guido Carli di Roma}
  {\sc,}\ {\sc Nadia OUDJANE $^*$}\thanks{EDF R\&D 
 %and    Universit\'e Paris 13.  
E-mail:{\tt  
nadia.oudjane@edf.fr}}
\ {\sc and}\ {\sc Francesco RUSSO $^*$}\thanks{INRIA Rocquencourt 
and Cermics Ecole des Ponts, Projet MATHFI. Domaine de Voluceau, BP 105
F-78153 Le Chesnay Cedex (France) .  E-mail:{\tt  russo@math.univ-paris13.fr}}
}
\date{December 2nd 2009}
\title{\bf Variance Optimal Hedging for continuous time processes 
with independent increments and applications}
\newcommand{\MBFigure}[6]{
$\left. \right.$ \\
\refstepcounter{figure}
\addcontentsline{lof}{figure}{\numberline{\thefigure}{\ignorespaces #5}}
\begin{center}
\begin{minipage}{#1cm}
\centerline{\includegraphics[width=#2cm,angle=#3]{#4}}
\begin{center}
\upshape{F\textsc{ig} \normal
\end{center}
size{\thefigure}. $-$} #5
%\upshape{F\textsc{igure} \normalsize{\thefigure} $-$} #5
\end{center}
\label{#6}
\end{minipage}
\end{center}
$\left. \right.$ \\}
\begin{document}
\maketitle

\begin{abstract} For a large class of vanilla contingent claims, 
we establish an explicit F\"ollmer-Schweizer 
decomposition when the underlying is 
a process with independent increments (PII)  and an exponential of a PII process.
This allows to provide an efficient algorithm for solving the
mean variance hedging problem.
Applications to models derived from the electricity market are performed.
\end{abstract}

\bigskip\noindent {\bf Key words and phrases:}  Variance-optimal hedging, Föllmer-Schweizer decomposition, L\'evy process, Cumulative generating function, Characteristic function, Normal Inverse Gaussian process, Electricity markets, Process with independent increments.

\bigskip\noindent  {\bf 2000  AMS-classification}: 60G51, 60H05, 60J25, 60J75

\medskip
 \bigskip\noindent   {\bf JEL-classification}: C02, G11, G12, G13

%\newpage

%\tableofcontents

\newpage

%%%%%%%%%%%%%%%%%%%%%%%%%%%%%%%%%%%%
\section{Introduction}

\setcounter{equation}{0}

%%%%%%%%%%%%%%%%%%%%%%%%%%%%%%%%%%%
%\subsection{Hedging and Pricing approaches}
%%%%%%%%%%%%%%%%%%%%%%%%%%%%%%%%%%%%%%%%%%
There are basically two main approaches to define the
 \textit{mark to market} of a contingent claim: one relying on the \textit{no-arbitrage assumption} and the other related to a \textit{hedging portfolio}, those two approaches converging in the specific case of complete markets. A simple introduction to the different hedging and pricing models in incomplete markets can be found in chapter~10 of~\cite{livreTankovCont}.\\
%
%The use of risk-neutral measure has proved to be very powerfull for computing prices of contingent claims in the context of complete markets. 
%
%No arbitrage value: 
%\paragraph{No-arbitrage value}
\textit{The fundamental theorem of Asset Pricing}~\cite{delbaen94} implies that a pricing rule without arbitrage that moreover satisfies some usual conditions (linearity non anticipativity \ldots) can always be written as an expectation under a martingale measure. In general, the resulting price is not linked with a hedging strategy except in some specific cases such as complete markets. 
More precisely, it is proved~\cite{delbaen94} that the market
 completeness is equivalent to uniqueness of the equivalent martingale
 measure. Hence, when the market is not complete, there exist several equivalent martingale measures (possibly an infinity) and one has to specify a criterion to select one specific pricing measure: 
%\begin{itemize}
% \item 
to recover some given option prices (by calibration)~\cite{goll-ruschendorf}; 
% 	\item 
to simplify calculus and obtain a simple process under the pricing measure;
% 	\item 
to maintain the structure of the real world dynamics; 
% 	\item 
to minimize a \textit{distance} to the objective probability (entropy~\cite{fritelli}) \ldots 
%\end{itemize}
In this framework, the difficulty is to understand in a practical way the impact of the choice of the martingale measure on the resulting prices.\\
If the resulting price is in general not connected to a hedging
strategy, yet it is possible to consider the hedging question in a
second step, optimizing the hedging strategy for the given price. In this framework, one approach consists in deriving the hedging
strategy minimizing the \textit{global quadratic hedging error} under
the pricing measure where the martingale property of the underlying highly simplifies calculations. This approach, is developed in~\cite{livreTankovCont}, in the case of exponential-Lévy models: the optimal quadratic hedge is then expressed as a solution of an integro-differential equation involving the Lévy measure. Unfortunately, minimizing the quadratic hedging error under the pricing measure has  no clear interpretation since the resulting hedging strategy can lead to huge quadratic error under the objective measure.\\ 
%
%Hedging value:  
%\paragraph{Hedging value}
Alternatively, one can define option prices  as a by product of the hedging strategy. In the case of complete markets, any option can be replicated perfectly by a self-financed hedging portfolio continuously rebalanced, then the option \textit{hedging value} can be defined as the cost of the hedging strategy. 
When the market is not complete,  it is not possible, in general, to
hedge perfectly an option. One has to specify a risk criteria,  and
 consider the hedging strategy that minimizes the distance (in terms of the given criteria) between the pay-off of the option and the terminal value of the hedging portfolio. Then, the price of the option is related to the cost of this imperfect hedging strategy to which is added in practice another prime related to the residual risk induced by incompleteness. \\
Several criteria can be adopted. 
%\begin{itemize}
%		\item 
The aim of super-hedging is to hedge all cases. This approach yields in general prices that are too  expensive to be realistic~\cite{elkaroui-quenez}. 
%		\item 
Quantile hedging modifies this approach allowing for a limited probability of loss~\cite{fl99}. 
%		\item 
Indifference Utility pricing introduced by~\cite{hodges-neuberger} defines the price of an option to sell (resp. to buy) as the minimum initial value
 s.t. the hedging portfolio with the option sold (resp. bought) is equivalent
 (in term of utility) to the initial portfolio. 
%		\item 
Quadratic hedging is developed in~\cite{S94},~\cite{S95bis}.  The quadratic distance
 between the hedging portfolio and the pay-off is 
minimized. Then, contrarily to the case of utility maximization, losses and gains are treated in a symmetric manner, which yields a \textit{fair price} for both the buyer and the seller of the option. \\
%		\item \ldots
%\end{itemize}  
%How to choose the risk criteria?\\
%How to price the residual risk?
%
In this paper, we follow this last approach and our developments can be used in both the \textit{no-arbitrage value} and the \textit{hedging value}  framework:     
%\begin{itemize}
%	\item 
either to derive the hedging strategy minimizing the \textit{global quadratic hedging error} under the objective measure, for a given pricing rule; 
%	\item 
or to derive both the price and the hedging strategy minimizing the \textit{global quadratic hedging error} under the objective measure.\\ 
%\end{itemize}
We spend now some words related to the global quadratic hedging approach
 which is also 
called {\it mean-variance hedging} or {\it global risk minimization}.
Given a square integrable r.v. $H$, we say that the pair $(V_0, \varphi)$
is optimal if $(c, v) = (V_0, \varphi)$ minimizes the functional
$\E \left (H-c- \int_0^T v dS\right)^2$. The price $V_0$ represents the price of
the contingent claim $H$ and $\varphi$ is the optimal strategy. \\
Technically speaking, the global risk minimization problem,  
 %is based on the local quadratic hedging which is
is based on the so-called {\it F\"ollmer-Schweizer} decomposition 
(or FS decomposition) of a square integrable random variable
 (representing the contingent claim)
with respect to an $(\shf_t)$-semimartingale $S = M + A$  
modeling the asset price. $M$ is an $(\shf_t)$-local martingale
and $A$ is a bounded variation process with $A_0 = 0$.
%In fact, this decomposition provides  the solution to the so called
%local risk minimization problem, see \cite{FS91}. \\
%CONTRIBUTION:  RECENTLY BIAGINI, ... 
%KALLSEN
 Mathematically, the FS decomposition, constitutes the generalisation of the
 martingale representation theorem
(Kunita-Watanabe representation) 
when $S$ is a Brownian motion or a martingale. Given square integrable 
random variable $H$, the problem consists in expressing 
$H$ as $H_0 + \int_0^T \xi dS + L_T$ where $\xi$ is predictable and
$L_T$ is the terminal value of an orthogonal martingale $L$ to $M$,
i.e. the martingale part of $S$.
The seminal paper is \cite{FS91} where the problem is treated 
in the case that $S$ is continuous.
In the general case $S$ is said to have the {\bf structure condition}
(SC) condition if there is a predictable process $\alpha$ such that 
$A_t = \int_0^t \alpha_s d\langle M\rangle_s$ and 
$\int_0^T \alpha_s^2 d\langle M\rangle_s < \infty $ a.s.
In the sequel most of contributions were produced
in the multidimensional case. Here for simplicity we will
formulate all the results in the one-dimensional case. \\
An interesting connection with the theory of backward stochastic differential
equations (BSDEs)  in the sense of \cite{pardouxpeng}, was proposed in 
\cite{S94}.  \cite{pardouxpeng} considered BSDEs driven by Brownian motion;
in \cite{S94} the Brownian motion is in fact replaced by $M$.
% in the case of Brownian motion. 
The first author who considered
a BSDE driven by a martingale was \cite{buckdahn}.
Suppose that $V_t = \int_0^t \alpha_s d \langle M\rangle_s$. The BSDE problem consists
in finding a triple $(V,\xi, L)$ where
$$ V_t = H - \int_t^T \xi_s dM_s - \int_t^T \xi_s \alpha_s d \langle M\rangle_s
- (L_T - L_t), $$
and $L$ is an $(\shf_t)$-local martingale
orthogonal to $M$. 
\\
In fact, this decomposition provides  the solution to the so called
local risk minimization problem, see \cite{FS91}. 
In this case, $V_t$ represents the {\it price} of the contingent claim
at time $t$ and the price $V_0$ constitutes in fact
 the expectation under
the so called {\it variance optimal measure} (VOM), 
as it will be explained at Section \ref{sec:linkVOM},
with references to
 \cite{S01},
\cite{Arai052} and 
\cite{Arai05}. \\
In the framework of FS decomposition, a
 process which plays a significant role is the so-called
{\bf mean variance tradeoff} (MVT) process $K$. This notion is 
inspired by the theory in
discrete time started by \cite{Schal94}; in the continuous time case
$K$ is defined as $K_t = \int_0^t \alpha^2_s d\langle M\rangle_s, \ t \in [0,T]$.
\cite{S94} shows the existence  of the mean-variance hedging problem 
if the MVT process is deterministic. In fact, a slight more general 
condition was the (ESC) condition and the EMVT process but we will
not discuss here further  details.
We remark that in the  continuous case, treated by \cite{FS91},
 no need of any condition on $K$ is required.
When the MVT process is deterministic, \cite{S94} is able to
solve the global quadratic variation problem and provides an efficient 
relation, see Theorem \ref{mainthm}
 with the FS decomposition. He also shows that, for the obtention 
of the mentioned relation, previous 
condition is not far from being optimal.   
The next important step was done in 
\cite{MS95} where under the only condition that $K$ is uniformly bounded,
the FS decomposition of any square integrable random variable 
admits existence and uniqueness and the global minimization
problem admits a solution. \\
%LINK WITH VARIANCE OPTIMAL MEASURE
More recently has appeared an incredible amount of papers in
the framework of global (resp. local) risk minimization, so that 
it is impossible 
to list all of them and it is beyond our scope. 
Two significant papers containing a good list of references
are \cite{S01}, \cite{nunno} and \cite{CK08}.
%The price $V_0$ constitutes in fact the expectation under
%the so called {\it variance optimal measure} (VOM), 
%as it will be explained at Section \ref{sec:linkVOM},
%with references to
% \cite{S01},
%\cite{Arai052} and 
%\cite{Arai05}. \\
% CHECK provides sufficient conditions so that
%the variance optimal measure is a true probability.
%The general mean-variance minimization problem was faced using duality 
%techniques by \cite{karatzas} CHECK
The present paper puts emphasis on processes with independent
increments (PII) and exponential of those processes.
It provides explicit FS decompositions when the process $S$
is of that type when the contingent claims are 
provided by some Fourier transform (resp.
 Laplace-Fourier transform) of a finite  measure.
  Some results of \cite{Ka06} concerning exponential 
of L\'evy processes  are generalized trying to investigate some 
peculiar properties behind and to consider the case of  PII with possibly non stationary increments. The motivation came from 
hedging problems in the electricity market. Because of non-storability of electricity, the hedging instrument 
is in that case, 
a forward contract with value $S^{0}_t = e^{-r(T_d-t)}(F_t^{T_d} - F_0^{T_d})$ where $F_t^{T_d}$ 
is the forward price given at time $t\leq T_d$ for delivery of 1MWh at time $T_d$. Hence, the dynamic of the underlying $S^0$ is directly related to the dynamic of forward prices.  Now, forward prices show a volatility term structure that requires the use of models with non stationary increments and motivates the generalization of the pricing and hedging approach developed in~\cite{Ka06} for Lévy processes to the case of PII with possibly non stationary increments.  
%delivery time or expiration of the future contract.
%NADIA? EST-CE CORRECT.
 \\
%More recent work about FS decomposition and mean-variance
%in the case of continuous processes were obtained 
%in \cite{PRS98} xxxxxxxxxx
The paper is organized as follows. After this introduction and 
some generalities
about semimartingales, we introduce the notion of FS decomposition and 
describe local and global risk minimization. 
 Then, we examine at Chapter
3 (resp. 4) the explicit FS decomposition for  PII processes
(resp. exponential of PII processes). Chapter 5 is devoted to the
solution to the global minimization problem and Chapter 6 to
the case of a model intervening in the electricity market. Chapter 7 
is devoted to simulations. 
This paper will be followed by a companion paper, i. e. \cite{GOR}
which concentrates on the discrete time case leaving more
 space to numerical implementations.

%NADIA: quelques mots sur la nature ``objective'' acheteur-vendeur de la methode?

% ON TRAITE SITUATION LOIN DU CAS GAUSSIEN, CONTRAIREMENT AU CAS DE KALLSENET ALIA

%----------------------------------------------------------------
%%%%%%%%%%%%%%%%%%%%%%%%%%%%%%%%%%%%%%%%%%%%%%%%%%%%%%%%%%%%%%%%%%%%%%%%%%%%

\section{Generalities on semimartingales
 and Föllmer-Schweizer decomposition}
%%%%%%%%%%%%%%%%%%%%%%%%%%%%%%%%%%%%%%%%%%%%%%%%%%%%%%%%%%%%%%%%%%%%%%%%%%%%
\label{Generalities}

\setcounter{equation}{0}

In the whole paper, $T>0$, will be a fixed terminal time and we will 
denote by $(\Omega,\mathcal{F},(\mathcal{F}_t)_{t \in [0,T]},P)$ a
 filtered probability space,
fulfilling the usual conditions.

\subsection{Generating functions}
%%%%%%%%%%%%%%%%%%%%%%%%%%%%%%%%
Let $X=(X_t)_{t \in [0,T]}$ be a real valued stochastic process.
\begin{defi}
%[Characteristic Function]
\label{defPhi}
The {\bf characteristic function} of (the law of) $X_t$ 
is the continuous mapping 
$$
\varphi_{X_t}:\mathbb{R} \rightarrow \mathbb{C} \quad \textrm{with}\quad \varphi_{X_t}(u)=\mathbb{E}[e^{iu X_t}]\ .
$$
In the sequel, when there will be no ambiguity on the underlying process $X$, we will use the shortened notation $\varphi_t$ for $\varphi_{X_t}$. 
\end{defi}
\begin{defi}
%[Cumulant generating function]
\label{defKappa}
The {\bf cumulant generating function} of (the law of) $X_t$ is the
% continuous
mapping $z \mapsto {\rm Log} (\mathbb{E}[e^{zX_t}])$ 
where ${\rm Log}(w) = \log(\vert w \vert ) + i {\rm Arg (w)}$
where ${\rm Arg (w)}$ is the Argument of $w$, chosen in $]-\pi,
\pi]$; ${\rm Log}$ is the principal value logarithm.
In particular we have
$$
\kappa_{X_t}:D \rightarrow \mathbb{C} \quad \textrm{with} \quad e^{\kappa_{X_t}(z)}=\mathbb{E}[e^{zX_t}]\ ,
$$
where $D:=\{z \in \mathbb{C}\ \vert\ \mathbb{E}[e^{Re(z)X_t}]<\infty, \ 
 \forall t \in [0,T]\}$.\\

In the sequel, when there will be no ambiguity on the underlying process $X$, we will use the shortened notation $\kappa_t$ for $\kappa_{X_t}$. 
\end{defi}
We observe that $D$ includes the imaginary axis.
\begin{remarque}\label{remarkR2}
\begin{enumerate}
\item For all $z \in D$, $\kappa_t(\bar{z})=\overline{\kappa_t(z)}\ ,$ where   $\bar z$ denotes the conjugate complex of $z\in \mathbb{C}$.
Indeed, for any $z\in D$, 
$$
\exp(\kappa_t(\bar{z}))=\mathbb{E}[\exp(\bar{z}X_t)]=\mathbb{E}[\overline{\exp(zX_t)}]=\overline{\mathbb{E}[\exp(zX_t)]}=\overline{\exp(\kappa_t(z))}=\exp(\overline{\kappa_t(z)})\ .
$$
\item For all $z \in D\cap\mathbb{R}\ ,\ \kappa_t(z) \in \mathbb{R}\ .$
\end{enumerate}
\end{remarque}

\subsection{Semimartingales}
%%%%%%%%%%%%%%%%%%%%%%%%%%%%
An $(\mathcal{F}_t)$-semimartingale $X=(X_t)_{t \in [0,T]}$ is a process of the form $X=M+A$, where $M$ is an $(\mathcal{F}_t)$-local martingale and $A$ is a bounded variation adapted process vanishing at zero. $||A||_T$ will denote the total variation of $A$ on $[0,T]$. Given two  $(\mathcal{F}_t)$-
local martingales $M$ and $N$, $\left\langle M,N\right\rangle$
 will denote the angle bracket of $M$ and $N$, i.e. the 
unique bounded variation predictable process vanishing at zero such that $MN-\left\langle M,N\right\rangle$ is an $(\mathcal{F}_t)$-local martingale. If $X$ and $Y$ are $(\mathcal{F}_t)$-semimartingales, $\left[X,Y\right]$ denotes the square bracket of $X$ and $Y$, i.e. the quadratic covariation of $X$ and $Y$.  
In the sequel, if there is no confusion about the underlying
 filtration $(\shf_t)$, we will simply speak about
semimartingales, local martingales, martingales.
All the local martingales admit a càdlàg version. By default, 
when we speak about local martingales we always refer to their
càdlàg version.
\\
More details about previous notions are given in chapter I.1. of~\cite{JS03}.
\begin{remarque}\label{remarque17}
\begin{enumerate}
\item All along this paper we will consider $\mathbb{C}$-valued
  martingales (resp. local martingales, semimartingales). Given two
  $\mathbb{C}$-valued local martingales $M^1,M^2$ then
  $\overline{M^1},\overline{M^2}$ are still local martingales. Moreover
  $\langle \overline{M^1},\overline{M^2}\rangle=\overline{\langle
    M^1,M^2 \rangle}\ .$
\item If $M$ is a $\C$-valued martingale then $\langle M, \overline{M} \rangle$
is a real valued increasing process.
\end{enumerate}
\end{remarque}
\begin{thm}
\label{th:XVT}
$(X_t)_{t \in [0,T]}$ is a real semimartingale iff the characteristic function, $t \mapsto \varphi_t (u)$, has bounded variation over all finite intervals,
for all $u \in \R$.
\end{thm}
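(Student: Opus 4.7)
The plan is to prove the two implications separately.

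For the forward direction (semimartingale $\Rightarrow$ bounded variation of $\varphi_\cdot(u)$), I would write $X = M+A$ and apply It\^o's formula to the bounded $C^2$ map $x\mapsto e^{iux}$, obtaining
\begin{equation*}
e^{iuX_t} = 1 + iu\int_0^t e^{iuX_{s-}}dM_s + iu\int_0^t e^{iuX_{s-}}dA_s - \frac{u^2}{2}\int_0^t e^{iuX_{s-}}d\langle M^c\rangle_s + J_t^{(u)},
\end{equation*}
where $J_t^{(u)} := \sum_{s\le t}(e^{iuX_s}-e^{iuX_{s-}}-iue^{iuX_{s-}}\Delta X_s)$ is the jump-compensation term. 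Since $|e^{iuX_{s-}}|=1$ and $|e^{ix}-1-ix|\le x^2/2$, the total variation on $[0,T]$ of the finite-variation part $V^{(u)}$ (everything except the stochastic integral against $M$) is dominated by $|u|\,\|A\|_T + \tfrac{u^2}{2}\langle M^c\rangle_T + \tfrac{u^2}{2}\sum_{s\le T}|\Delta X_s|^2$, which is locally integrable. A standard localizing sequence turns the integral against $M$ into a true martingale; taking expectations and passing to the limit by dominated convergence yields $\varphi_t(u) = 1 + \mathbb{E}[V_t^{(u)}]$, and the BV property of $\varphi_\cdot(u)$ then follows from $|\varphi_t(u)-\varphi_s(u)|\le\mathbb{E}[\mathrm{TV}_{[s,t]}(V^{(u)})]$ together with additivity of total variation.

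For the converse, assume $t\mapsto\varphi_t(u)$ is of bounded variation for each $u\in\mathbb{R}$. The strategy is to promote this scalar, deterministic condition to a pathwise semimartingale decomposition of $X$. By Fourier inversion, for any Schwartz test function $f$ with $\hat f\in L^1$, Fubini gives $t\mapsto\mathbb{E}[f(X_t)] = (2\pi)^{-1}\int \hat f(-u)\varphi_t(u)\,du$, which inherits bounded variation with $\mathrm{TV}\le(2\pi)^{-1}\int|\hat f(-u)|\,\mathrm{TV}(\varphi_\cdot(u))\,du$ whenever the right-hand side is finite. One then extends by a density/truncation argument to a sufficiently rich class of $f$, and invokes a structural characterization of semimartingales (for instance the Bichteler--Dellacherie theorem, or equivalently the criterion that $f(X)$ admits a predictable compensator of locally integrable variation for enough $f$) to reconstruct the decomposition $X = M + A$.

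The main obstacle is the converse implication: translating a deterministic, pointwise-in-$u$ bounded variation condition on $\varphi_\cdot(u)$ into a pathwise statement about trajectories of $X$. The It\^o-formula direction is routine up to localization, but the converse requires both a careful density argument and the appeal to a nontrivial structural theorem for semimartingales; extra care is also needed regarding the filtration, since what one reconstructs canonically is the natural filtration of $X$, and upgrading this to $(\shf_t)$ typically uses that $X$ remains a semimartingale under enlargements that preserve the relevant conditional laws.
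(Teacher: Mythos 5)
The paper does not prove this statement (it is quoted from Jacod--Shiryaev, Theorem II.4.14), but note that it is a theorem about the process $X$ of this section, i.e.\ a process with independent increments; your proof never uses independence of increments, and the equivalence is false without it. Fractional Brownian motion with Hurst index $H\neq 1/2$ has $\varphi_t(u)=\exp(-u^2t^{2H}/2)$, which is monotone in $t$ and hence of bounded variation for every $u\in\R$, yet it is not a semimartingale. This kills your converse as outlined: Fourier inversion only controls the one-dimensional marginals $t\mapsto\mathbb{E}[f(X_t)]$, whereas Bichteler--Dellacherie requires bounding expectations of elementary integrals $\sum_i H_i(X_{t_{i+1}}-X_{t_i})$ against predictable $H$, which the marginals alone cannot give. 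The missing ingredient is exactly the one the paper uses in the analogous statement for $\kappa_t(z)$, $z\in D\cap\R^\ast$ (the unlabelled Proposition following Remark \ref{R3}): by independence of increments, $N^{(u)}_t:=e^{iuX_t}/\varphi_t(u)$ is a martingale (and $\varphi_t(u)\neq 0$ since $X_t$ is infinitely divisible), so if $t\mapsto\varphi_t(u)$ has bounded variation then $e^{iuX_t}=N^{(u)}_t\varphi_t(u)$ is a product of a martingale and a deterministic BV function, hence a semimartingale for every $u$; one then recovers $X$ itself by a local inversion of $x\mapsto e^{iux}$.

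The forward direction also has a genuine integrability gap. Your bound on the variation of $V^{(u)}$ involves $\|A\|_T$, $\langle M^c\rangle_T$ and $\sum_{s\le T}|\Delta X_s|^2$, which are a.s.\ finite but need not have finite expectation; after localizing you only obtain $\varphi_t(u)=1+\lim_n\mathbb{E}[V^{(u)}_{t\wedge\tau_n}]$, and the resulting variation bound is $\mathbb{E}\big[\mathrm{TV}_{[0,T]}(V^{(u)})\big]$, which may be $+\infty$. Concretely, for a compound Poisson process whose jump law has no second moment one has $\mathbb{E}\big[\sum_{s\le T}|\Delta X_s|^2\big]=\infty$, so your argument yields nothing, even though $\varphi_t(u)=\exp\{t\lambda(\widehat{\mu}(u)-1)\}$ is obviously of bounded variation. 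For a PII the repair is to use that the characteristics $(b,c,\nu)$ are deterministic (Theorem \ref{ThmCharaDeter}, Proposition \ref{propoJSA}): the L\'evy--Khinchine exponent $\Psi_t(u)$ is then an explicit bounded-variation function of $t$ for each fixed $u$, and $\varphi_t(u)=e^{\Psi_t(u)}$ inherits this property.
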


\begin{remarque}\label{remarque14}
According to Theorem I.4.18 of~\cite{JS03}, any local martingale $M$ admits a unique (up to indistinguishability) decomposition, 
$$
M=M_0+M^c+M^d\ ,
$$
where $M_0^c=M_0^d=0$, $M^c$ is a continuous local martingale and
 $M^d$ is a purely discontinuous local martingale 
%(i.e. a local martingale which is strongly orthogonal to all continuous local martingales). 
in the sense that $\langle N, M^d \rangle = 0$ for 
  all continuous local martingales $N$. 
%$M^dN$ is 
%still a local martingale. 
$M^c$ is called the \textbf{continuous part} of $M$ and $M^d$ the 
\textbf{purely discontinuous part}.
\end{remarque}

\begin{defi}
An $(\shf_t)$-\textbf{special semimartingale} is an $(\shf_t)$-semimartingale 
X with the decomposition $X = M + A$, where $M$ is a local martingale and $A$ is a bounded variation predictable process starting at zero. 
\end{defi}

\begin{remarque}
The decomposition of a special semimartingale of the form $X = M + A$ is 
unique, see~\cite{JS03} definition 4.22. 
\end{remarque}
For any special semimartingale X we define
%\begin{eqnarray*}|| X||_{\delta^2}:=||X_0||_2+||\sqrt{[M,M]_T}||_2+||||A||_T||_2\end{eqnarray*}
$$
|| X||^2_{\delta^2}=\mathbb{E}\left[[M,M]_T\right]+\mathbb{E}\left(||A||_T^2\right)\ .
$$
The set $\delta^2$ is the set of $(\mathcal{F}_t)$-special semimartingale $X$ for which $|| X||^2_{\delta^2}$ is finite.

A \textbf{truncation function} defined on $\mathbb{R}$ is a bounded function $h: \mathbb{R} \rightarrow \mathbb{R}$ with compact support such that $h(x)=x$ in a neighbourhood of $0$.  
%We will denote by $\mathcal{C}^d_t$ the class of all truncations functions.
 
An important notion, in the theory of semimartingales, is the notion of characteristics, defined in definition II.2.6 of~\cite{JS03}. Let $X=M+A$ be a real-valued semimartingale. A \textbf{characteristic} is a triplet, $(b,c,\nu)$, depending on a fixed truncation function, where 

\begin{enumerate}
\item $b$ is a predictable process with bounded variation;
\item $c=\left\langle M^c,M^c\right\rangle$, $M^c$ being the continuous part of $M$ according to Remark \ref{remarque14};
\item $\nu$ is a predictable random measure on $\mathbb{R}^+ \times \mathbb{R}$, namely the compensator of the random measure $\mu^X$ associated to the jumps of X.
\end{enumerate}

%From now on, we choose the same fixed process $(a_t)$ for all the semimartingale in this paper. In concrete models, $a$ is often taken to be $a_t=t$ (e.g., for Lévy processes, diffusions, Itô processes, etc.) and $A_t=[t]:=\max\{n \in \mathbb{N};n\leq t\}$ (discrete time processes). Since almost all semimartingales of interest in this paper are actually special semimartingales, we take the truncation function $h(x):=x$.\\
%By $\left\langle X,Y\right\rangle$ we denote the $\mathcal{P}$-compensator of $[X,Y]$ provided that $X,Y$ are semimartingales such that $[X,Y]$ is $\mathcal{P}$-special (see~\cite{Ja79}, page 37).

%%%%%%%%%%%%%%%%%%%%%%%%%%%%%%%%%%%%%%%%%%%%%%%%%
\subsection{Föllmer-Schweizer Structure Condition}
%%%%%%%%%%%%%%%%%%%%%%%%%%%%%%%%%%%%%%%%%%%%%%%%%%%
\label{sec:FMStruc}

Let $X=(X_t)_{t \in [0,T]}$ be a real-valued
 special semimartingale with canonical decomposition, 
$$
X= M+A\ .
$$
For the clarity of the reader, we formulate in dimension one,
 the concepts appearing in the literature, see e.g. \cite{S94}
%introduced by Föllmer and Schweizer~\cite{FS91}
 in the multidimensional case.
\begin{defi}\label{defiL2ML2A}
For a given local martingale $M$, the space $L^2(M)$ consists of all predictable $\mathbb{R}$-valued processes $v=(v_t)_{t \in [0,T]}$ such that 
$$
\mathbb{E}\left[\int_0^T|v_s|^2d\left\langle M\right\rangle_s\right]<\infty\ .
$$
For a given predictable bounded variation process $A$, the space $L^2(A)$ consists of all predictable $\mathbb{R}$-valued processes $v=(v_t)_{t \in [0,T]}$ such that 
$$
\mathbb{E}\left[(\int_0^T|v_s|d||A||_s)^2\right]<\infty\ .
$$
Finally, we set 
$$
\Theta:=L^2(M)\cap L^2(A)\ .
$$
\end{defi}
For any $v \in \Theta$, the stochastic integral process
$$
G_t(v):=\int_0^tv_sdX_s,\quad \textrm{for all}\ t \in [0,T]\ ,
$$
is therefore well-defined and is a semimartingale in $\delta^2$ with canonical decomposition
$$
G_t(v)=\int_0^t v_s dM_s+\int_0^t v_sdA_s\ ,\quad \textrm{for all}\ t \in [0,T]\ .
$$
We can view this stochastic integral process as the gain process
associated 
with strategy $v$ on the underlying process $X$.

\begin{defi}
The \textbf{minimization problem} we aim to study is the following.\\
Given $H \in \mathcal{L}^2$, an admissible strategy pair $(V_0,\varphi)$ will be called \textbf{optimal} if $(c,v)=(V_0,\varphi)$ minimizes the expected squared hedging error
\begin{equation}\mathbb{E}[(H-c-G_T(v))^2]\ ,\label{problem2}\end{equation}
over all admisible strategy pairs $(c,v) \in \mathbb{R} \times \Theta$.
 $V_0$ will represent the {\bf price} of the contingent claim $H$ at time zero.
\end{defi}

%%%%%%%%%%%%%%%%%%%%%%%%%%%%%%%%%%%%%%%%%%%%%%%%%%%%%%%%%%%%%%%%%%%%%
%\subsubsection{Infinitely divisible}
%\begin{defi}
%A probability measure $\mu$ on $\mathcal{R}^d$ is \textbf{infinitely divisible} if, for any positive integer $n$, there is a probability measure $\mu_n$ on $\mathcal{R}^d$ such that $\mu=\mu_n^n$. Where $\mu^n$ denotes the n-fold convolution of a probability measure $\mu$ with itself, that is, $\mu^n=\underbrace{\mu*\mu*...*\mu}_{n}$.
%\end{defi}

%\begin{example}
%Gaussian and Cauchy laws on $\mathcal{R}^d$ are infinitely divisible. Other examples are Poisson, geometric, exponential and Gamma-distributions are infinitely divisible.
%\end{example}

%\begin{thm}[Theorem 9.1 of~\cite{Sa99}]
%If $(X_t)_{t \in [0,T]}$ is a process with independent increment (PII) on $\mathbb{R}^d$, then, for every t, the distribution of $X_t$ is infinitely divisible.
%\end{thm}
%%%%%%%%%%%%%%%%%%%%%%%%%%%%%%%%%%%%%%%%%%%%%%%%%%%%%%%%%%%%%%%%%%%%%%%%

\begin{defi}\label{defSC}
Let $X=(X_t)_{t \in [0,T]}$ be a real-valued special semimartingale. 
%with canonical decomposition$$X=X_0+M+A\ .$$
$X$ is said to satisfy the
 \textbf{structure condition (SC)} if there is a predictable
 $\mathbb{R}$-valued process $\alpha=(\alpha_t)_{t \in [0,T]}$ such that 
the following properties are verified.
\begin{enumerate}
	\item  $A_t=\int_0^t \alpha_s d\left\langle M\right\rangle_s\
 ,\quad \textrm{for all}\ t \in [0,T],$ so that
  $dA\ll d\left\langle M\right\rangle $. 
\item  ${\displaystyle \int_0^T \alpha^2_s d\left\langle 
M\right\rangle_s<\infty\ ,\quad P-}$a.s.
\end{enumerate}

%\begin{eqnarray}\sigma_t\widehat{\lambda}_t=\gamma_t& P-as &\forall t\in [0,T].\label{defsigma}\end{eqnarray}
%and
%\begin{eqnarray}\widehat{K}_t:=\int_0^t\widehat{\lambda}_t^*\gamma_sdB_s<\infty& P-as for& t\in [0,T].\label{defKhat}\end{eqnarray}
%This condition is linked with the AOA (ie: existence of a martingale change probability).
%We then choose an cadlag version of $\widehat{K}$ and call it the \textit{mean-variance tradeoff (MVT) process} of X.
\end{defi}

\begin{defi}
From now on, we will denote by $K=(K_t)_{t \in [0,T]}$ the càdlàg process 
$$K_t=\int_0^t \alpha^2_s d\left\langle M\right\rangle_s\ ,\quad \textrm{for all}\ t\in [0,T]\ .$$
This process will be called the \textbf{mean-variance tradeoff} (MVT)  process.
\end{defi}

\begin{remarque}
In~\cite{S94}, the process $(K_t)_{t \in [0,T]}$ is denoted by $(\widehat{K}_t)_{t \in [0,T]}$.
\end{remarque}

We provide here a technical proposition which allows to make
 the class $\Theta$ of integration of $X$ explicit. 
\begin{propo}\label{lemmaTheta}
If $X$ satisfies (SC) such that $\mathbb{E}[K_T]<\infty$, then $\Theta=L^2(M)$.
\end{propo}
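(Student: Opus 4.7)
The inclusion $\Theta \subseteq L^2(M)$ is immediate from the very definition $\Theta = L^2(M)\cap L^2(A)$, so the whole content of the proposition is the reverse inclusion $L^2(M) \subseteq L^2(A)$. My plan is to fix $v \in L^2(M)$ and show that $\mathbb{E}[(\int_0^T |v_s|\,d\|A\|_s)^2]<\infty$ via a pathwise Cauchy--Schwarz estimate in the measure $d\langle M\rangle$, using the structure condition to eliminate $A$ in favor of $\alpha$ and $\langle M\rangle$.

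Concretely, I would first translate the total variation of $A$ under (SC). Since $A_t=\int_0^t \alpha_s \,d\langle M\rangle_s$ with $\langle M\rangle$ increasing, one has the pathwise identity $d\|A\|_s = |\alpha_s|\,d\langle M\rangle_s$, so that
\[
\int_0^T |v_s|\,d\|A\|_s \;=\; \int_0^T |v_s\,\alpha_s|\,d\langle M\rangle_s.
\]
Next I would apply Cauchy--Schwarz with respect to the (pathwise) measure $d\langle M\rangle$ on $[0,T]$:
\[
\Bigl(\int_0^T |v_s\,\alpha_s|\,d\langle M\rangle_s\Bigr)^{2} \;\le\; \Bigl(\int_0^T v_s^2\,d\langle M\rangle_s\Bigr)\,\Bigl(\int_0^T \alpha_s^2\,d\langle M\rangle_s\Bigr) \;=\; K_T\,\int_0^T v_s^2\,d\langle M\rangle_s .
\]
Taking expectations yields
\[
\mathbb{E}\Bigl[\Bigl(\int_0^T|v_s|\,d\|A\|_s\Bigr)^{2}\Bigr] \;\le\; \mathbb{E}\Bigl[K_T \int_0^T v_s^2\,d\langle M\rangle_s\Bigr].
\]

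The main (in fact only) obstacle is bounding this last expectation from the assumptions $\mathbb{E}[K_T]<\infty$ and $v\in L^2(M)$. The clean case is the one relevant to the rest of the paper: when $X$ is a PII, $K$ turns out to be deterministic, so $\mathbb{E}[K_T\int_0^T v_s^2\,d\langle M\rangle_s]=K_T\,\|v\|_{L^2(M)}^{2}<\infty$ and we are done. More generally, the same estimate works as soon as $K_T$ is essentially bounded, so the intended reading of the hypothesis is that $K_T$ is a.s. bounded (which forces $\mathbb{E}[K_T]<\infty$); the quantitative factorization above makes that hypothesis sharp for the present argument. Once $v \in L^2(A)$ is established, combined with $v\in L^2(M)$ this gives $v\in \Theta$, completing the proof.
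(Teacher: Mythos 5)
Your argument follows the same core idea as the paper's proof --- rewrite $d\|A\|_s=|\alpha_s|\,d\langle M\rangle_s$ and apply Cauchy--Schwarz --- but you apply Cauchy--Schwarz pathwise, whereas the paper applies it on the product space $\Omega\times[0,T]$ with the measure $dP\otimes d\langle M\rangle$, which yields $\mathbb{E}\bigl[\int_0^T|v_s\alpha_s|\,d\langle M\rangle_s\bigr]\le\bigl[\mathbb{E}\bigl(\int_0^T v_s^2\,d\langle M\rangle_s\bigr)\,\mathbb{E}(K_T)\bigr]^{1/2}<\infty$ directly from $\mathbb{E}[K_T]<\infty$. The trade-off is exactly the one you identified: the paper's product-space estimate controls only the \emph{first} moment of $\int_0^T|v_s|\,d\|A\|_s$, while membership in $L^2(A)$ as defined requires the \emph{second} moment; your pathwise version targets the correct moment but then leaves you with $\mathbb{E}\bigl[K_T\int_0^T v_s^2\,d\langle M\rangle_s\bigr]$, which indeed cannot be bounded using only $\mathbb{E}[K_T]<\infty$ and $v\in L^2(M)$. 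Your proposed resolution --- reading the hypothesis as $K$ bounded (in particular deterministic, as in the PII case) --- is the right one and matches how the proposition is actually used: it is only invoked under Assumption \ref{Hypo1}, where $K$ is uniformly bounded and your factorization closes the argument via the bound $\|K_T\|_\infty\,\mathbb{E}\bigl[\int_0^T v_s^2\,d\langle M\rangle_s\bigr]$. So your proof is correct under the hypothesis in force wherever the result is applied, and your remark that the literal hypothesis is too weak for the second-moment bound is a fair criticism of the statement (and of the paper's own one-line estimate) as written.
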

\begin{proof}
%[\textbf{Proof of Proposition \ref{lemmaTheta}}]
Assume that $\mathbb{E}[K_T]<\infty$, we will
 prove that $L^2(M)\subseteq L^2(A)$.
 Let us consider a process $v \in L^2(M)$, then 
\begin{eqnarray*}
\mathbb{E} \left (\int_0^T|v_s|d||A||_s \right)^2&=&\mathbb{E}
\left (\int_0^T|v_s||\alpha_s|d\left\langle M\right\rangle_s\right )^2\ 
\leq \left[\mathbb{E}\left(\int_0^T|v_s|^2d\left\langle
 M\right\rangle_s\right)\mathbb{E}
\left (\int_0^T|\alpha|^2d\left\langle M\right\rangle_s \right)
\right]^\frac{1}{2} \ ,\\
&=&\left (\mathbb{E}\left(
\int_0^T|v_s|^2d\left\langle M\right\rangle_s \right) \mathbb{E}(K_T) \right)
^\frac{1}{2}
 <\infty \ .
\end{eqnarray*}
\end{proof}
Schweizer  in~\cite{S94} also introduced the extended structure condition (ESC)
 on $X$ and he provided the Föllmer-Schweizer decomposition 
in this more extended framework. 
%In the  it
% is not useful to introduce (ESC) for the following reasons.
We recall that notion (in dimension 1). 
Given a real càdlàg stochastic process $X$, the quantity 
$\Delta X_t$ will represent the jump $X_t - X_{t-}$. 
\begin{defi} \label{ESC}
Let $X=(X_t)_{t \in [0,T]}$ be a real-valued special semimartingale. 
%with canonical decomposition$$X=X_0+M+A\ .$$
$X$ is said to satisfy the \textbf{extended structure condition (ESC)}
 if there is a predictable
 $\mathbb{R}$-valued process $\alpha=(\alpha_t)_{t \in [0,T]}$ with 
the following properties.  
\begin{enumerate}
\item  $A_t=\int_0^t \alpha_s d\left\langle M\right\rangle_s\
 ,\quad \textrm{for all}\ t \in [0,T]\ ,$ so that
  $dA\ll \left\langle dM\right\rangle $. 
\item The  quantity  
$$\int_0^T
\frac{\alpha_s^2}{1 + \alpha_s^2 \Delta \langle M \rangle_s}
d\left\langle M\right\rangle_s   $$
is finite.
\end{enumerate}
If condition (ESC) is fulfilled,  
then the process
$$ \widetilde{K}_t:=\int_0^t
\frac{\alpha_s^2}{1 + \alpha_s^2 \Delta \langle M \rangle_s}\ , \quad \textrm{for all}\ t\in[0,T]\ ,
$$
is well-defined.
It is called \textbf{extended mean-variance tradeoff (EMVT)}  process.
\end{defi}

%\begin{equation}
%\left \{
%\begin{array}{l}
%(1+\Delta K_t)\tilde{\lambda}_t=\alpha_t \label{1.1} 
% \quad P-\textrm{as for all}\  t\in [0,T]_,\\ 
%{\displaystyle \int_0^T}\left|\tilde{\lambda}_t\alpha_t\right|
%d\left\langle M\right\rangle_t<\infty \ ,\quad P-\textrm{as}\ ,
%\end{array}
%\right .
%\end{equation}
%where the process $\alpha$ is the process appearing in the semimartingale
%.\\
%We denote by $\widetilde{K}_t:=\int_0^t
%\tilde{\lambda}_s\alpha_sd\left\langle M\right\rangle_s$ for all $t\in [0,T]$,
% and this process is called the  
%\textbf{extended mean-variance tradeoff (EMVT)} process of X.
%Since $\Delta K_t = \alpha_t^2 \Delta \langle M \rangle_t$
%we have 
%$$ \widetilde{K}_t:=\int_0^t
%\frac{\alpha_s^2}{1 + \alpha_s^2 \Delta \langle M \rangle_s}, 0 \le t \le T.
%$$ 

% d\left\langle M\right\rangle_s$ for all $t\in [0,T]$,

%\begin{propo}\label{SCimpliesESC} 
%(SC) is equivalent to (ESC).
%\end{propo}
%\begin{proof}
%[\textbf{Proof of Proposition \ref{SCimpliesESC}}]
%The direct implication follows from Lemma 1.1 in \cite{S94}, therefore
%it remains to prove the converse implication.
%Suppose that (ESC) holds. 
%Since $K$ is increasing, then $1+\Delta K_t\geq 1$ almost surely. 
%Consequently, setting ${\displaystyle \tilde{\lambda}_t=\frac{\alpha_t}{1+\Delta K_t}}$,  
%we have
%\begin{eqnarray*}\int_0^T\left|\tilde{\lambda}_s\alpha_s\right|d\left\langle M\right\rangle_s=\int_0^T\frac{|\alpha_s|^2}{1+\Delta K_s}d\left\langle M\right\rangle_s\leq K_T<\infty \ ,\quad P-\textrm{as}\ . \end{eqnarray*}
%\end{proof}

\begin{remarque}\label{remarqueKKtilde}
%For convenience of the reader we recall the relation between
% $K$ and $\tilde{K}$ from \cite{S94}. Those properties
%appear in  Lemma 1.1 of \cite{S94} and some considerations below it,
%but in the one-dimensional case they are obvious.
% In the one dimensional case we have the following.
\begin{enumerate}
\item (SC) implies (ESC).
\item If $\left\langle M \right \rangle$ is continuous then
(ESC) and (SC) are equivalent and $K = \tilde K$.
\item $\tilde{K}_t={\displaystyle \int_0^t\frac{|\alpha_s|^2}
{1+\Delta K_s}d\left\langle M\right\rangle_s=\int_0^t\frac{1}
{1+\Delta K_s}dK_s}\ ,$ for all $t\in [0,T]\ .$
%\item ${\displaystyle \Delta \tilde{K}_t=\frac{\Delta K_t}
%{1 + \Delta K_t}}$ and $\Delta K_t={\displaystyle \frac{\Delta \tilde{K}_t}{1 - \Delta \tilde{K}_t}}\ ,$ for all $t\in [0,T]\ .$
\item $K_t={\displaystyle \int_0^t \frac{1}{1-\Delta \tilde{K}_s}d\tilde{K}_s}\ ,$ for all $t\in [0,T]\ .$
\item If $K$ is deterministic then $\tilde{K}$ is deterministic.
%\item  $K$ is continuous if and only if $\tilde{K}$ is continuous.
%In this case $K = \tilde K$.
\end{enumerate}
\end{remarque}

The structure condition (SC) appears quite naturally
in applications to financial mathematics. In fact, it is mildly related
to the no arbitrage condition. In fact (SC) is a natural extension of 
the existence of an equivalent martingale measure from the case where
 $X$ is continuous. Next proposition will show that every adapted
 continuous process X admitting an equivalent martingale measure
satisfies (SC).
In our applications (ESC) will be equivalent to (SC) since in Section 
\ref{SSCPII} and Section \ref{OSSDAC}, $\langle M \rangle$ will always be continuous.

%\begin{remarque}
%According to \textcolor{red}{METTRE REFERENCE} condition (2) of (SC) condition is related to the NA (No Arbitrage), in the sence of the existence of an equivalent martingale probability measure.
%\end{remarque}

\begin{propo}\label{NAandSC}
Let $X$ be a $(P,\mathcal{F}_t)$ continuous semimartingale. Suppose the existence of a locally equivalent probability $Q\,\sim\, P$ under which $X$ is an $(Q,\mathcal{F}_t)$-local martingale, then (SC) is verified.
\end{propo}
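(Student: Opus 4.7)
The strategy is to identify the drift $A$ via Girsanov's theorem and then to read off $\alpha$ from a Galtchouk-Kunita-Watanabe (GKW) decomposition of the density process. Let $Z=(Z_t)_{t\in[0,T]}$ denote the density process $Z_t=dQ/dP|_{\mathcal{F}_t}$, a strictly positive $P$-martingale with $Z_0=1$ and (by the equivalence $Q\sim P$ together with the minimum principle for positive martingales) with $Z_{s-}>0$ for all $s\in[0,T]$, $P$-a.s. Since $X=M+A$ is a continuous semimartingale, $M$ is a continuous $P$-local martingale and $A$ is continuous. Girsanov's theorem asserts that $M-\int_0^\cdot\frac{1}{Z_{s-}}d\langle M,Z\rangle_s$ is a $Q$-local martingale; since $X$ itself is a $Q$-local martingale, the uniqueness of the $Q$-canonical decomposition forces the $P$-drift to be
$$
A_t=-\int_0^t\frac{1}{Z_{s-}}\,d\langle M,Z\rangle_s,\qquad t\in[0,T].
$$

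Next, apply the GKW decomposition to the $P$-local martingale $Z$ along the continuous local martingale $M$: there exist a predictable process $\beta$ with $\int_0^T\beta_s^2\,d\langle M\rangle_s<\infty$ a.s., and a $P$-local martingale $L$ orthogonal to $M$, such that $Z=Z_0+\int\beta\,dM+L$. From $\langle M,L\rangle=0$ we deduce
$$
\langle M,Z\rangle_t=\int_0^t\beta_s\,d\langle M\rangle_s.
$$
Plugging this into the formula for $A$ and setting $\alpha_s:=-\beta_s/Z_{s-}$ (well-defined since $Z_{s-}>0$), we obtain $A_t=\int_0^t\alpha_s\,d\langle M\rangle_s$, which is precisely item 1 of Definition~\ref{defSC}.

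It remains to verify that $\int_0^T\alpha_s^2\,d\langle M\rangle_s<\infty$ $P$-almost surely. The main obstacle is pathwise strict positivity: one must exclude that $Z_{s-}$ approaches zero along some trajectory of $Z$. This is where the equivalence of measures (and not mere absolute continuity) plays the essential role. Indeed, a strictly positive càdlàg $P$-martingale cannot have $Z_{t-}=0$ while $Z_t>0$ (since $\E[Z_t\mid\mathcal{F}_{t-}]=Z_{t-}=0$ and $Z_t\ge0$ would force $Z_t=0$ a.s.), and once $Z$ hits zero it remains there; combined with $Z_T>0$ $P$-a.s., this yields $\inf_{s\in[0,T]}Z_{s-}(\omega)>0$ for $P$-a.e.\ $\omega$. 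Consequently
$$
\int_0^T\alpha_s^2\,d\langle M\rangle_s \le \Bigl(\sup_{s\in[0,T]}\frac{1}{Z_{s-}^2}\Bigr)\int_0^T\beta_s^2\,d\langle M\rangle_s <\infty \quad P\text{-a.s.},
$$
which is item 2 of Definition~\ref{defSC}, completing the proof.
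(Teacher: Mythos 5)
Your proof is correct, but it runs the argument in the mirror direction from the paper's. The paper works under $Q$: it takes the density $D=dP/dQ$, (tacitly) treats it as a continuous $Q$-local martingale written as a stochastic exponential $D=\exp(L-\tfrac12\langle L\rangle)$, invokes Girsanov (Revuz--Yor VIII.1.7) to identify $M=X-\langle X,L\rangle$ as the $P$-martingale part, and then applies the Kunita--Watanabe projection (Revuz--Yor IV.4.2) to $L$ along $X$ \emph{under $Q$}; that lemma delivers both $\langle X,L\rangle=\int R\,d\langle X\rangle$ and the integrability $\int_0^T R_s^2\,d\langle X\rangle_s<\infty$ in one stroke, so $\alpha=R$ works with no further effort. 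You instead stay under $P$, use the Girsanov--Meyer drift formula with $Z=dQ/dP$ to get $A=-\int_0^\cdot Z_{s-}^{-1}\,d\langle M,Z\rangle_s$, and project the density itself (rather than the log-density) onto $M$. The price is the extra step showing $\inf_{s\le T}Z_{s-}>0$ a.s.\ --- which you handle correctly via the absorption-at-zero property of nonnegative martingales --- in order to pass from $\int\beta^2\,d\langle M\rangle<\infty$ to $\int\alpha^2\,d\langle M\rangle<\infty$. What your version buys is that it never requires the density process to be continuous (you carry $Z_{s-}$ throughout), whereas the paper's proof assumes continuity of $D$ without comment. One point worth making explicit in your write-up: the GKW decomposition with $\int_0^T\beta_s^2\,d\langle M\rangle_s<\infty$ a.s.\ is most cleanly obtained by projecting the continuous martingale part $Z^c$ onto $M$ (for which $\int_0^T\beta_s^2\,d\langle M\rangle_s\le\langle Z^c\rangle_T<\infty$ a.s.); the purely discontinuous part of $Z$ is automatically orthogonal to the continuous $M$ and can be absorbed into $L$, and $\langle M,Z\rangle=\langle M,Z^c\rangle$ since $M$ has no jumps.
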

\begin{proof}
%[\textbf{Proof of Proposition~\ref{NAandSC}}]
Let $(D_t)_{t \in [0,T]}$ be the strictly positive continuous $Q$-local
martingale such that $dP=D_TdQ$. By Theorem~VIII.1.7  of~\cite{RY},
$M=X-\langle X,L \rangle$ is a continuous $P$-local martingale,
 where $L$ is the continuous $Q$-local martingale associated to the
 density process i.e. 
$$
D_t=\exp\{L_t-\frac{1}{2}\langle L \rangle_t\}\ ,\quad 
\textrm{for all}\ t\in [0,T]\ .
$$
According to Lemma~IV.4.2 in~\cite{RY}, there is a progressively
 measurable process $R$ such that for all $t\in [0,T]$, 
$$
L_t=\int_0^tR_sdX_s+O_t \quad\textrm{and}\quad \int_0^T R^2_sd\left
 \langle X \right \rangle_s<\infty \ , \quad Q-\textrm{a.s.} \ , 
$$
where $O$ is a $Q$-local martingale such that $\langle X,O\rangle=0$. Hence,
%\begin{eqnarray*}
%\langle X,L\rangle_t &=& \int_0^t R_s d\langle X \rangle \\
%X_t&=&M_t+\int_0^t R_s d[X]_s\ ,\quad \textrm{for all}\ t\in [0,T]. 
%\end{eqnarray*}
$$
\langle X,L\rangle_t = \int_0^t R_s d\langle X \rangle_s  \quad\textrm{and}\quad 
X_t=M_t+\int_0^t R_s d[X]_s\ ,\quad \textrm{for all}\ t\in [0,T]. 
$$
We end the proof by setting
${\displaystyle 
\alpha_t=\frac{d\langle X,L \rangle_t}{d\langle X \rangle_t}=R_t\ .
}$
\end{proof}

\subsection{Föllmer-Schweizer Decomposition 
and variance optimal hedging}
%%%%%%%%%%%%%%%%%%%%%%%%%%%%%%%%%%%%%%%%%%%%
\label{sec:FMDecomp}

Throughout this section, as in Section~\ref{sec:FMStruc}, $X$ is supposed to be an $(\mathcal{F}_t)$-special semimartingale fulfilling the (SC) condition.

We recall here the definition stated in Chapter~IV.3 p.~179 of~\cite{Pr92}. 
\begin{defi}
Two $(\mathcal{F}_t)$-martingales $M,N$ are said to be \textbf{strongly 
orthogonal} if $MN$ is a uniformly integrable martingale.
\end{defi}
\begin{remarque}\label{R126}
If $M,N$ are strongly orthogonal,  then they are (weakly) orthogonal in the sence that $\mathbb{E} [M_TN_T]=0\ .$
\end{remarque}
\begin{lemme}
\label{lem:ortho}
Let $M,N$ be two square integrable martingales.
% such that $\left\langle M,N\right\rangle=0$, 
Then $M$ and $N$ are strongly orthogonal if and only if
 $\left\langle M,N\right\rangle=0$.
\end{lemme}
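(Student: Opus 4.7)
My plan is to unfold the definition of the angle bracket and use Doob's $L^2$ inequality to pass between the local-martingale and uniformly integrable martingale properties of $MN$.

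First I would observe that because $M$ and $N$ are square integrable (hence locally square integrable) martingales, the predictable quadratic covariation $\langle M, N\rangle$ exists and is characterized, up to indistinguishability, as the unique predictable bounded variation process starting at zero for which $MN - \langle M, N\rangle$ is a local martingale. Uniqueness here follows from the standard fact that a predictable local martingale of bounded variation starting at zero is indistinguishable from zero.

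For the implication $\langle M, N\rangle = 0 \Rightarrow$ strong orthogonality, the defining property of $\langle M, N\rangle$ immediately gives that $MN$ itself is a local martingale. To upgrade this to a uniformly integrable martingale, I would dominate $MN$: by Doob's $L^2$ inequality,
\[
\mathbb{E}\!\left[\sup_{t\in[0,T]} M_t^2\right] \le 4\, \mathbb{E}[M_T^2] < \infty,
\qquad
\mathbb{E}\!\left[\sup_{t\in[0,T]} N_t^2\right] \le 4\, \mathbb{E}[N_T^2] < \infty,
\]
so that $\sup_{t\in[0,T]} |M_t N_t| \le \tfrac{1}{2}\bigl(\sup_t M_t^2 + \sup_t N_t^2\bigr)$ is in $L^1$. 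A local martingale dominated by an integrable random variable is a uniformly integrable martingale, which is precisely strong orthogonality.

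Conversely, if $MN$ is a uniformly integrable martingale, then $MN = MN - 0$ exhibits the zero process as a predictable bounded variation process starting at zero which makes $MN$ minus it a (local) martingale. By the uniqueness of the angle bracket recalled in the first paragraph, $\langle M, N\rangle = 0$. I do not foresee a real obstacle; the only point needing care is the domination argument used to promote the local martingale $MN$ to a uniformly integrable martingale, and Doob's inequality applied to each of $M$ and $N$ handles this cleanly.
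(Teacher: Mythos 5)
Your proof is correct, and for the implication $\left\langle M,N\right\rangle=0\Rightarrow$ strong orthogonality it takes a genuinely different route from the paper. The paper invokes Lemma~IV.3.2 of~\cite{Pr92}: it reduces strong orthogonality of $M$ and $N$ to checking weak orthogonality, $\mathbb{E}[M^f_TN^g_T]=0$, for all elements $M^f=\int f\,dM$, $N^g=\int g\,dN$ of the stable subspaces generated by $M$ and $N$, which follows at once from $\mathbb{E}[\langle M^f,N^g\rangle_T]=\mathbb{E}\left(\int_0^T fg\,d\langle M,N\rangle\right)=0$. You instead argue directly: $\langle M,N\rangle=0$ makes $MN$ a local martingale by the defining property of the angle bracket, and Doob's $L^2$ inequality applied to $M$ and $N$ separately dominates $\sup_{t\le T}|M_tN_t|$ by an integrable random variable, promoting $MN$ to a uniformly integrable martingale. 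Your argument is more elementary and self-contained (it avoids the stable-subspace machinery and makes explicit the passage from local to true martingale, which Protter's lemma absorbs), at the cost of not exhibiting the weak-orthogonality characterization that the paper's route provides as a by-product. For the reverse implication both arguments are essentially identical: $MN$ being a martingale exhibits $0$ as the predictable finite-variation compensator, and uniqueness of the special semimartingale decomposition (equivalently, that a predictable finite-variation local martingale starting at zero vanishes) gives $\langle M,N\rangle=0$.
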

\begin{proof}
%[\textbf{Proof of Lemma \ref{lem:ortho}}]
Let $\mathcal{S}(M)$ be the stable subspace generated by M. $\mathcal{S}(M)$ includes the space of martingales of the form
$$
M^f_t:=\int_0^tf(s)dM_s\ ,\quad \textrm{for all}\ t\in [0,T]\ ,
$$
where $f\in L^2(dM)$ is deterministic. According to Lemma~IV.3.2 of~\cite{Pr92}, it is enough to show that, for any $f\in L^2(dM)$, $g\in L^2(dN)$, $M^f$ and $N^g$ are weakly orthogonal in the sense that $\mathbb{E}[M^f_TN^g_T]=0$. This is clear since previous expectation equals
\begin{eqnarray*}\mathbb{E}[\left\langle M^f,N^g\right\rangle_T]=
\mathbb{E}\left (\int_0^Tfgd\left\langle M,N\right\rangle
\right)=0\,\end{eqnarray*}
if $\left\langle M,N\right\rangle=0$.
This shows the converse implication.\\
The direct implication follows from the fact that $MN$ is
a martingale, the definition of the angle bracket and
uniqueness of special semimartingale decomposition.
\end{proof}

\begin{defi}\label{DefFSDecomp}
%[Föllmer-Schweizer Decomposition]
We say that a random variable $H \in \mathcal{L}^2(\Omega,\mathcal{F},P)$ admits a \textbf{Föllmer-Schweizer (FS) decomposition}, if 
it  can be written as
\begin{eqnarray}H=H_0+\int_0^T\xi_s^HdX_s+L_T^H\ ,
 \quad P-a.s.\label{FSdecompo}\ ,\end{eqnarray}
where $H_0\in \mathbb{R}$ is a constant, $\xi^H \in \Theta$ and $L^H=(L^H_t)_{t \in [0,T]}$ is a square integrable martingale,  
%(i.e.  such that $\mathbb{E}[L_T^2]<\infty$), 
with $\mathbb{E}[L_0^H]=0$ and  strongly orthogonal to $M$.
% ${\displaystyle \int_0^{\cdot} v dM }$ for every $v \in L^2(M)$.
\end{defi}

We formulate for this section one basic assumption.
\begin{Hyp}\label{Hypo1}
We assume that X satisfies (SC) and that the MVT process $K$
 is uniformly bounded in $t$ and $\omega$.
\end{Hyp}
The first result below gives the existence and the uniqueness of the
Föllmer-Schweizer decomposition for a random variable $H \in
\mathcal{L}^2(\Omega,\mathcal{F},P)$. The second affirms that
subspaces
 $G_T(\Theta)$ and $\{\mathcal{L}^2(\mathcal{F}_0)+G_T(\Theta)\}$ are closed subspaces of $\mathcal{L}^2$ . The last one provides existence and uniqueness of the solution of the minimization problem~(\ref{problem2}). We recall Theorem 3.4 of~\cite{MS95}.
\begin{thm}\label{ThmExistenceFS}
Under Assumption~\ref{Hypo1}, every random variable $H \in \mathcal{L}^2(\Omega,\mathcal{F},\mathcal{P})$ admits a FS decomposition. Moreover, this decomposition is unique in the following sense:\\
If
\begin{eqnarray*}H=H_0+\int_0^T\xi_s^HdX_s+L_T^H=H^{'}_0+\int_0^T\xi_s^{'H}dX_s+L_T^{'H},\end{eqnarray*}
where $(H_0,\xi^H,L^H)$ and $(H^{'}_0,\xi^{'H},L^{'H})$ satisfy the conditions of the FS decomposition, then
$$
\left\{\begin{array}{lll}
H_0&=&H_0^{'}\ ,\quad P-a.s.\ ,\\
\xi^H&=&\xi^{'H}\quad \textrm{in}\   L^2(M) \ , \\
L_T^H&=&L_T^{'H},\quad P-a.s.\ .
\end{array}
\right .
$$ 
\end{thm}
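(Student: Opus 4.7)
This is Theorem~3.4 of Monat--Stricker~\cite{MS95}, and I would structure the proof in three stages: closedness of $G_T(\Theta)$, Hilbert-space projection to obtain the existence of $(H_0,\xi^H)$, and a strong-orthogonality argument producing $L^H$, with uniqueness handled in parallel.

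The first stage is the technical heart and the \emph{main obstacle}. Under Assumption~\ref{Hypo1}, Proposition~\ref{lemmaTheta} gives $\Theta=L^2(M)$, so $\Theta$ is a genuine Hilbert space. The essential quantitative estimate, obtained by Cauchy--Schwarz applied to $dA=\alpha\,d\langle M\rangle$, is
\[
\Bigl\|\int_0^T v\,dA\Bigr\|_{L^2}\;\le\;\|K\|_\infty^{1/2}\,\|v\|_{L^2(M)},
\]
which, combined with the It\^o isometry $\|\int_0^T v\,dM\|_{L^2}=\|v\|_{L^2(M)}$, shows that the map $v\mapsto G_T(v)$ from $\Theta$ into $L^2$ is continuous. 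Closedness of $G_T(\Theta)$ in $L^2$ does not follow directly from continuity; the Monat--Stricker proof uses a stopping-time partition of $[0,T]$ on which the increments of $K$ are small enough that the $A$-perturbation is dominated by the martingale part, and from any $L^2$-Cauchy sequence $G_T(v^n)$ extracts, piecewise, a limit integrand $v\in\Theta$ with $G_T(v)=\lim G_T(v^n)$. Granted this, $\mathbb{R}+G_T(\Theta)$ is closed as the sum of a closed subspace and the one-dimensional space of constants.

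For existence, I would apply the orthogonal projection theorem to project $H\in L^2$ onto $\mathbb{R}+G_T(\Theta)$, producing $H_0\in\mathbb{R}$ and $\xi^H\in\Theta$ such that the residual $R:=H-H_0-G_T(\xi^H)$ is $L^2$-orthogonal to $\mathbb{R}+G_T(\Theta)$. Setting $L^H_t:=\mathbb{E}[R\mid\mathcal{F}_t]$ defines a square-integrable c\`adl\`ag martingale, and orthogonality of $R$ to $\mathbb{R}$ gives $\mathbb{E}[L_0^H]=\mathbb{E}[R]=0$. By Lemma~\ref{lem:ortho}, strong orthogonality of $L^H$ and $M$ reduces to $\langle L^H,M\rangle=0$. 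This is obtained by testing $R$ against $G_T(v)=\int v\,dM+\int v\,\alpha\,d\langle M\rangle$ for $v$ running over bounded predictable processes (which all lie in $\Theta$): expanding $\mathbb{E}[L_T^H G_T(v)]=0$, using that $\int v\,dM$ is a martingale together with $\mathbb{E}[L_T^H \int v\,dM]=\mathbb{E}[\int v\,d\langle L^H,M\rangle]$, and varying $v$ through a measure-determining family, forces $\langle L^H,M\rangle\equiv 0$.

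For uniqueness, take two decompositions and set $c=H_0-H_0'$, $\eta=\xi^H-\xi^{'H}\in\Theta$, $N=L^H-L^{'H}$, so that
\[
c+\int_0^T\eta\,dM+\int_0^T\eta\,\alpha\,d\langle M\rangle+N_T=0,
\]
with $N$ a square-integrable martingale strongly orthogonal to $M$ and $\mathbb{E}[N_0]=0$. Separating into martingale and predictable bounded-variation parts, taking $L^2$-norms, and exploiting $\langle N,\int\eta\,dM\rangle=0$ (so that $N_T-N_0$ and $\int_0^T\eta\,dM$ are $L^2$-orthogonal by Remark~\ref{R126}), the inequality $\|\int_0^T\eta\,\alpha\,d\langle M\rangle\|_{L^2}\le\|K\|_\infty^{1/2}\|\eta\|_{L^2(M)}$ together with the same small-$K$-increments partition as in Stage~1 forces $\eta=0$ in $L^2(M)$. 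The identity reduces to $c+N_T=0$; taking expectations gives $c=0$ and consequently $N_T=0$, which yields the three claimed equalities.
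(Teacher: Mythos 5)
The paper offers no proof of this statement --- it is quoted verbatim as Theorem~3.4 of Monat--Stricker --- so your proposal must stand on its own. Your Stage~1 (continuity of $v\mapsto G_T(v)$ plus the small-increments-of-$K$ partition to get closedness) and your uniqueness argument are faithful to the Monat--Stricker strategy and are sound as sketches.

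The existence step, however, contains a genuine error. Projecting $H$ onto $\mathbb{R}+G_T(\Theta)$ produces a residual $R$ that is only \emph{weakly} orthogonal to $G_T(\Theta)$, i.e. $\mathbb{E}[R\,G_T(v)]=0$ for all $v\in\Theta$. Expanding this with $G_T(v)=\int_0^T v\,dM+\int_0^T v\,\alpha\,d\langle M\rangle$ and writing $L^H_t=\mathbb{E}[R\mid\mathcal{F}_t]$ gives
\[
\mathbb{E}\Bigl[\int_0^T v\,d\langle L^H,M\rangle_s\Bigr]=-\,\mathbb{E}\Bigl[L_T^H\int_0^T v_s\,\alpha_s\,d\langle M\rangle_s\Bigr],
\]
and the right-hand side has no reason to vanish. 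So varying $v$ does \emph{not} force $\langle L^H,M\rangle\equiv0$: the residual martingale of the orthogonal projection is strongly orthogonal to $M$ only when $A\equiv 0$, in which case the FS decomposition degenerates to the Galtchouk--Kunita--Watanabe decomposition. What your projection actually produces is the solution of the mean-variance problem of Theorem~\ref{thm128}, which is a different object: Theorem~\ref{ThmSolutionPb1} shows that the variance-optimal strategy $\varphi^{(c)}$ differs from the FS integrand $\xi^H$ by a feedback term proportional to $\alpha$, so the two decompositions cannot coincide unless $X$ is a martingale. The Monat--Stricker existence proof is instead a fixed-point argument: on each stochastic subinterval where the increment of $K$ is at most some $\delta<1$, one iterates the map sending a candidate integrand $\xi$ to the GKW decomposition of $H-\int\xi_s\,\alpha_s\,d\langle M\rangle_s$ with respect to $M$, shows it is a contraction using the bound $\|\int v\,\alpha\,d\langle M\rangle\|_{L^2}\le\delta^{1/2}\|v\|_{L^2(M)}$, and then pastes the finitely many pieces together backwards in time. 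Your proof needs this (or an equivalent) replacement for Stage~2.
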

We recall Theorem 4.1 of~\cite{MS95}. 
\begin{thm}\label{ThmGferme}
Under Assumption~\ref{Hypo1}, the subspaces $G_T(\Theta)$ and $\{\mathcal{L}^2(\mathcal{F}_0)+G_T(\Theta)\}$ are closed subspaces of $\mathcal{L}^2$.
\end{thm}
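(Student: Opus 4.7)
The approach is to bootstrap the existence-uniqueness of the F\"ollmer-Schweizer decomposition (Theorem \ref{ThmExistenceFS}) into closedness via the open mapping theorem. Let $\mathcal{N}$ denote the subspace of $\mathcal{L}^2$ consisting of terminal values of square integrable martingales $L$ with $L_0 = 0$ strongly orthogonal to $M$; by Lemma \ref{lem:ortho} this coincides with $\{L_T : \langle L,M\rangle = 0\}$. A first step is to check that $\mathcal{N}$ is closed in $\mathcal{L}^2$: if $L^n_T \to Y$ in $\mathcal{L}^2$ with $\langle L^n, M\rangle = 0$, set $L_t := \mathbb{E}[Y \mid \mathcal{F}_t]$; then $L^n_t M_t \to L_t M_t$ in $\mathcal{L}^1$ for every $t$, so the martingale property of $L^n M$ passes to $LM$, yielding $\langle L, M\rangle = 0$. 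Consequently $E := \mathcal{L}^2(\mathcal{F}_0) \times L^2(M) \times \mathcal{N}$, equipped with the natural Hilbert sum norm, is a Banach space.

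Define the linear reconstruction map
$$\Psi: E \longrightarrow \mathcal{L}^2, \qquad (c,\xi,L_T) \longmapsto c + G_T(\xi) + L_T.$$
By Theorem \ref{ThmExistenceFS}, $\Psi$ is a bijection. To show that $\Psi$ is continuous, the nontrivial bound is $\|G_T(\xi)\|_{\mathcal{L}^2} \leq C \|\xi\|_{L^2(M)}$. Splitting $G_T(\xi) = \int_0^T \xi \, dM + \int_0^T \xi \alpha \, d\langle M\rangle$, the martingale part is controlled by the It\^o isometry, while the Cauchy-Schwarz inequality combined with Assumption \ref{Hypo1} yields
$$\mathbb{E}\Bigl[\Bigl(\int_0^T \xi \alpha \, d\langle M\rangle\Bigr)^{\!2}\,\Bigr] \leq \mathbb{E}\Bigl[K_T \int_0^T \xi^2 \, d\langle M\rangle\Bigr] \leq \|K\|_\infty \|\xi\|_{L^2(M)}^2.$$

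By the open mapping theorem, $\Psi^{-1}$ is also continuous, so $\Psi$ is a Banach space isomorphism. Since $\Psi$ is in particular a homeomorphism,
$$G_T(\Theta) = \Psi\bigl(\{0\}\times L^2(M)\times\{0\}\bigr), \qquad \mathcal{L}^2(\mathcal{F}_0) + G_T(\Theta) = \Psi\bigl(\mathcal{L}^2(\mathcal{F}_0)\times L^2(M)\times\{0\}\bigr)$$
are images of closed subspaces of $E$, hence closed in $\mathcal{L}^2$.

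The hard part is hidden in Theorem \ref{ThmExistenceFS}, which is what makes $\Psi$ surjective; once that is granted, the essential ingredient here is Assumption \ref{Hypo1}, which via the uniform boundedness of $K$ supplies the continuity estimate above. Without that uniform bound the map $\Psi$ need not be a topological isomorphism and the whole argument collapses, which is precisely the phenomenon that prompted Monat and Stricker to impose this hypothesis.
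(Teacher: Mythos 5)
Your argument is sound, but note that the paper itself contains no proof of this statement: it is quoted as Theorem 4.1 of Monat--Stricker \cite{MS95}, so your open-mapping-theorem bootstrap is a genuinely different and essentially self-contained route once Theorem \ref{ThmExistenceFS} is granted. Monat and Stricker proceed in the opposite direction: their fixed-point construction of the FS decomposition produces an explicit a priori bound $H_0^2+\|\xi^H\|^2_{L^2(M)}+\mathbb{E}[(L^H_T)^2]\le C\,\mathbb{E}[H^2]$ with $C$ depending only on the bound for $K$ --- that is, the continuity of your $\Psi^{-1}$ with a quantitative constant --- and closedness then follows because, by uniqueness, the FS decomposition of $G_T(\vartheta^n-\vartheta^m)$ is $(0,\vartheta^n-\vartheta^m,0)$, so a Cauchy sequence $G_T(\vartheta^n)$ forces $\vartheta^n$ to be Cauchy in $\Theta=L^2(M)$. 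Your version obtains the same inverse estimate abstractly, which costs nothing here; what it buys is brevity, what it loses is the explicit constant. Two details should be made explicit in a polished write-up. First, bijectivity of $\Psi$ is not literally Theorem \ref{ThmExistenceFS}: there the decomposition has $H_0\in\mathbb{R}$ constant and $\mathbb{E}[L^H_0]=0$, whereas your factor space takes $c\in\mathcal{L}^2(\mathcal{F}_0)$ and $L_0=0$; one passes between the two normalizations by setting $c=H_0+L^H_0$ and $L=L^H-L^H_0$ (the bracket with $M$ is unaffected), and injectivity then requires conditioning on $\mathcal{F}_0$ to separate $c$ from $L_T$. Second, your closedness argument for $\mathcal{N}$ multiplies by $M_t$, which presumes $M_t\in\mathcal{L}^2$; under Assumption \ref{Hypo1} the martingale part $M$ is in general only a local martingale, so one should either localize $M$ and work with the characterization $\langle L,M\rangle=0$ in the spirit of Lemma \ref{lem:ortho}, or identify $\mathcal{N}$ with the terminal values of the orthogonal complement of the stable subspace generated by $M$ inside the square-integrable martingales null at zero, which is closed by general theory. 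Neither point threatens the strategy.
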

So we can project any random variable $H \in \mathcal{L}^2$ on $G_T(\Theta)$. By Theorem~\ref{ThmExistenceFS}, we have the uniqueness of the solution of the minimization problem~(\ref{problem2}). This is given by Theorem~4.6 of~\cite{MS95}, which is stated below.

\begin{thm}\label{thm128}
We suppose Assumption~\ref{Hypo1}.
\begin{enumerate}
\item For every $H \in \mathcal{L}^2(\Omega,\mathcal{F}, P)$ and every $c \in \mathcal{L}^2(\mathcal{F}_0)$, there exists a unique strategy $\varphi^{(c)}\in \Theta$ such that
\begin{eqnarray}\mathbb{E}[(H-c-G_T(\varphi^{(c)}))^2]=\min_{v\in \Theta}\mathbb{E}[(H-c-G_T(v))^2]\label{problem1}\ .\end{eqnarray}
\item For every $H \in \mathcal{L}^2(\Omega,\mathcal{F},\mathcal{P})$
  there exists a unique $(c^{(H)},\varphi^{(H)}) \in \mathcal{L}^2(\mathcal{F}_0) \times \Theta$ such that
\begin{eqnarray*}\mathbb{E}[(H-c^{(H)}-G_T(\varphi^{(H)}))^2]=\min_{(c,v)\in \mathcal{L}^2(\mathcal{F}_0) \times \Theta}\mathbb{E}[(H-c-G_T(v))^2]\ .\end{eqnarray*}
\end{enumerate}
\end{thm}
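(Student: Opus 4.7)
The plan is to reduce both statements to orthogonal projection in the Hilbert space $\mathcal{L}^2(\Omega,\mathcal{F},P)$. The two ingredients already in hand are Theorem \ref{ThmGferme}, which guarantees that the relevant linear subspaces are closed, and Theorem \ref{ThmExistenceFS}, whose uniqueness part will handle the identification of the strategy. Under Assumption \ref{Hypo1}, Proposition \ref{lemmaTheta} moreover identifies $\Theta$ with $L^2(M)$, which is the natural ambient space in which the uniqueness of $\varphi^{(c)}$ and $\varphi^{(H)}$ must be read.

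For assertion 1, I fix $c\in\mathcal{L}^2(\mathcal{F}_0)$ and rewrite the cost as $\mathbb{E}[(H-c-G_T(v))^2]=\|(H-c)-G_T(v)\|_{\mathcal{L}^2}^2$. Since $G_T(\Theta)$ is closed in $\mathcal{L}^2$, the orthogonal projection theorem produces a unique $Y^\star\in G_T(\Theta)$ realizing the infimum, and any $\varphi^{(c)}\in\Theta$ with $G_T(\varphi^{(c)})=Y^\star$ solves (\ref{problem1}). To upgrade this to uniqueness of $\varphi^{(c)}$ inside $\Theta$, I note that if $v_1,v_2\in\Theta$ both realize the minimum, then $w:=v_1-v_2$ satisfies $G_T(w)=0$; reading this identity as the FS decomposition $0=0+\int_0^T w_s\,dX_s+0$ of the zero random variable, and comparing with the trivial decomposition $(0,0,0)$ via Theorem \ref{ThmExistenceFS}, forces $w=0$ in $L^2(M)=\Theta$.

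For assertion 2, the same projection principle applied to the closed subspace $\mathcal{L}^2(\mathcal{F}_0)+G_T(\Theta)$ (again Theorem \ref{ThmGferme}) produces a unique minimizer $Z^\star$ that we write as $c^{(H)}+G_T(\varphi^{(H)})$. The main obstacle is to verify that this representation is itself unique, i.e.\ that $\mathcal{L}^2(\mathcal{F}_0)\cap G_T(\Theta)=\{0\}$; the subtle point is that the FS decomposition forces $H_0\in\mathbb{R}$, whereas $c^{(H)}$ is only required to be $\mathcal{F}_0$-measurable. The reconciliation is to observe that if $d\in\mathcal{L}^2(\mathcal{F}_0)$ and $w\in\Theta$ satisfy $d+G_T(w)=0$, then
$$0=\mathbb{E}[d]+\int_0^T w_s\,dX_s+\bigl(d-\mathbb{E}[d]\bigr)$$
is itself a bona fide FS decomposition of $0$: the time-constant process $L_t\equiv d-\mathbb{E}[d]$ is a square-integrable martingale with $\mathbb{E}[L_0]=0$, and it is strongly orthogonal to $M$ since $d$ is $\mathcal{F}_0$-measurable, so $LM$ is automatically a martingale. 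Theorem \ref{ThmExistenceFS} applied to the two FS decompositions of $0$ then yields $\mathbb{E}[d]=0$, $w=0$ in $L^2(M)$, and $d-\mathbb{E}[d]=0$, whence $d=0$. This disposes of the direct-sum issue and completes the argument.
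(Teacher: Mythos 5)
Your proof is correct and follows essentially the route the paper itself indicates: it states this result as Theorem 4.6 of Monat--Stricker \cite{MS95}, after observing that Theorem \ref{ThmGferme} permits orthogonal projection onto $G_T(\Theta)$ and $\mathcal{L}^2(\mathcal{F}_0)+G_T(\Theta)$ and that Theorem \ref{ThmExistenceFS} supplies the uniqueness. Your contribution is to spell out the details the citation conceals, in particular the verification that $\mathcal{L}^2(\mathcal{F}_0)\cap G_T(\Theta)=\{0\}$ and that $G_T$ is injective on $\Theta$ modulo $L^2(M)$-null processes, both correctly reduced to the uniqueness of the FS decomposition of the zero random variable.
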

Next theorem gives the explicit form of the optimal strategy
$\varphi^{(c)}$, 
which is valid even in the case where X satisfies the extended
 structure condition 
(ESC). For the purpose of the present work, this will not be useful,
see considerations following Remark \ref{remarqueKKtilde} 2.

% EMVT process $\tilde{K}$.
%
From Föllmer-Schweizer decomposition follows the solution to the global minimization problem~(\ref{problem2}).
\begin{thm}\label{ThmSolutionPb1}
Suppose that X satisifies (SC) and that the MVT process $K$ of X is
deterministic. If $H\in \mathcal{L}^2$ admits a FS decomposition of
type~(\ref{FSdecompo}), then the minimization problem~(\ref{problem1})
has a solution 
$\varphi^{(c)} \in \Theta$ for any $c\in \mathbb{R}$, such that
\begin{eqnarray}
\varphi^{(c)}_t=\xi^H_t+\frac{\alpha_t}{1+\Delta K_t}(H_{t-}-c-G_{t-}
(\varphi^{(c)}))\ ,\quad\textrm{for all}\ t \in [0,T]
\label{xisolution}
\end{eqnarray}
where the process $(H_t)_{t \in [0,T]}$ is defined by
\begin{eqnarray}H_t:=H_0+\int_0^t\xi_s^HdX_s+L_t^H\ ,\label{Vsolution}\end{eqnarray}
and the process $\alpha$ is the process appearing in Definition~\ref{defSC} of~(SC).
\end{thm}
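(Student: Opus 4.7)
The strategy is to verify that the closed-loop equation (\ref{xisolution}) defines an admissible strategy $\varphi^{(c)}\in\Theta$, and then to show optimality via an It\^o / completion-of-squares argument that exploits the determinism of $K$.

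For the first step, set $\tilde V_t := H_t - c - G_t(\varphi^{(c)})$ with $H_t$ as in (\ref{Vsolution}); equation (\ref{xisolution}) then reads $\varphi^{(c)}_t = \xi^H_t + \frac{\alpha_t}{1+\Delta K_t}\tilde V_{t-}$, and substituting back yields the linear forward SDE
$$d\tilde V_t = -\frac{\alpha_t\,\tilde V_{t-}}{1+\Delta K_t}\,dX_t + dL^H_t, \qquad \tilde V_0 = H_0 - c.$$
Writing $X=M+A$ with $dA=\alpha\,d\langle M\rangle$ and using Remark \ref{remarqueKKtilde}(3), the drift rewrites as $-\tilde V_{t-}\,d\tilde K_t$ with the deterministic process $\tilde K$. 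This linear SDE with deterministic drift admits a unique square-integrable solution $\tilde V$ via the Dol\'eans-Dade exponential formula, and one recovers $\varphi^{(c)}$ through (\ref{xisolution}); the boundedness of $K$ together with Proposition \ref{lemmaTheta} ensures $\varphi^{(c)}\in L^2(M)=\Theta$.

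For optimality, fix any $v\in\Theta$ and set $\tilde V^v_t := H_t-c-G_t(v)$. Apply It\^o's formula to $\rho_t(\tilde V^v_t)^2$, where $\rho$ is a deterministic integrating factor ($\rho_t = e^{K_t}$ when $\langle M\rangle$ is continuous, $\rho_t = \mathcal{E}(\tilde K)_t$ in general), take expectation, use that martingale-valued stochastic integrals have zero mean, and invoke the strong orthogonality $\langle M,L^H\rangle = 0$ to kill the cross bracket $\mathbb{E}\int(\xi^H-v)\,d[M,L^H]$. The drift terms then combine into a perfect square and one obtains
$$\rho_T\,\mathbb{E}\bigl[(\tilde V^v_T)^2\bigr] = (H_0-c)^2 + \mathbb{E}\left[\int_0^T \rho_t\left(\xi^H_t - v_t + \frac{\alpha_t\,\tilde V^v_{t-}}{1+\Delta K_t}\right)^2 d\langle M\rangle_t\right] + \mathbb{E}\left[\int_0^T \rho_t\,d\langle L^H\rangle_t\right].$$
Because $K$ is deterministic, $\rho_T$ is a positive constant and the rightmost expectation does not depend on $v$; the middle expectation is nonnegative and vanishes precisely when $v$ satisfies (\ref{xisolution}), that is, when $v=\varphi^{(c)}$. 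Hence $\varphi^{(c)}$ minimizes $\mathbb{E}[(H-c-G_T(v))^2]$ over $v\in\Theta$.

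The main technical obstacle is the rigorous justification of the It\^o expansion and the interchange of expectation and stochastic integration: one must establish $L^2$-control on $\tilde V^v$ and integrability of the various bracket terms so that the martingale contributions to $(\tilde V^v)^2$ really have zero mean. These controls are supplied by the boundedness of the deterministic MVT process $K$, the $L^2$-estimates carried by the FS decomposition of $H$, and the assumption $v\in\Theta$. A secondary subtlety is the precise form of the perfect square when $\langle M\rangle$ has jumps: one rewrites the drift of $(\tilde V^v)^2$ through $\tilde K$ using Remark \ref{remarqueKKtilde}(3)--(4) and chooses $\rho = \mathcal{E}(\tilde K)$ so as to cancel the residual drift $\tilde V^{v,2}_{t-}\,d\tilde K_t$ and produce the displayed perfect square.
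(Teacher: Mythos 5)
Your proposal is a genuine proof, whereas the paper does not actually prove this theorem: its ``proof'' is a two-line citation of Theorem~3 of Schweizer (1994) (\cite{S94}), together with the observation from Remark~\ref{remarqueKKtilde} that (SC) implies (ESC). What you have written is essentially a reconstruction of Schweizer's argument: step one (solving the linear feedback SDE $d\tilde V_t=-\frac{\alpha_t\tilde V_{t-}}{1+\Delta K_t}dX_t+dL^H_t$ by Dol\'eans--Dade and checking $\varphi^{(c)}\in\Theta$ via the boundedness of the deterministic $K$) and step two (completion of squares with the deterministic integrating factor built from $\tilde K$) are exactly the two pillars of his proof, and your identity is consistent with the paper's own Theorem~\ref{corro9}/Corollary~\ref{corro132Bis}, up to the omitted $\mathbb{E}[(L_0^H)^2]$ term coming from $\tilde V_0=H_0-c+L_0^H$. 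The one place where your displayed formula is imprecise is the jump case, which you do flag: computing the predictable compensator of $(\tilde V^v)^2$ gives the drift $\bigl[2\tilde V^v_{s-}(\xi^H_s-v_s)\alpha_s+(\xi^H_s-v_s)^2(1+\Delta K_s)\bigr]d\langle M\rangle_s+d\langle L^H\rangle_s$ (the extra $\alpha_s^2\Delta\langle M\rangle_s$ comes from the compensator of $[B]$, $B$ being the finite-variation part of $\tilde V^v$), so the perfect square carries the weight $(1+\Delta K_s)$ in front, i.e.\ $(1+\Delta K_s)\bigl(\xi^H_s-v_s+\frac{\alpha_s\tilde V^v_{s-}}{1+\Delta K_s}\bigr)^2d\langle M\rangle_s-\tilde V^{v,2}_{s-}d\tilde K_s$, and the integration by parts against $\rho=\mathcal{E}(\tilde K)$ produces an additional factor $(1+\Delta\tilde K_s)$ that must be tracked; the weight is strictly positive, so the conclusion (the middle term is nonnegative and vanishes exactly at the feedback strategy) is unaffected. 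In the applications of this paper $\langle M\rangle$ is continuous, so your clean version with $\rho_t=e^{K_t}$ is all that is actually used. In short: your route is more self-contained and informative than the paper's citation, at the cost of the $L^2$/compensator bookkeeping you correctly identify as the technical burden.
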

\begin{proof}
%[\textbf{Proof of Theorem \ref{ThmSolutionPb1}}]
Theorem~3 of~\cite{S94} states the result under the (ESC) condition.
 We recall that (SC) implies (ESC), see Remark 
\ref{remarqueKKtilde} and
% Remark  \ref{SCimpliesESC}.
 the result follows.
\end{proof}
To obtain the solution to the minimization problem~(\ref{problem2}), we use Corollary~10 of~\cite{S94} that we recall.
\begin{corro}\label{ThmSolutionPb2}
Under the assumption of Theorem~\ref{ThmSolutionPb1}, the solution of the minimization problem~(\ref{problem2}) is given by the pair $(H_0,\varphi^{(H_0)})\ .$
\end{corro}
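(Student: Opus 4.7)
The plan is to identify $(H_0,\varphi^{(H_0)})$ as a critical point of the strictly convex quadratic functional $F(c,v) := \mathbb{E}[(H-c-G_T(v))^2]$ on $\mathbb{R}\times\Theta$ and to conclude by the uniqueness of the minimizer asserted in Theorem~\ref{thm128}(2). By construction $\varphi^{(H_0)}$ already solves the inner problem $\min_{v\in\Theta} F(H_0,v)$ via Theorem~\ref{ThmSolutionPb1}, so only the first-order condition in $c$ is at stake, namely
$$
\mathbb{E}\bigl[H - H_0 - G_T(\varphi^{(H_0)})\bigr] = 0.
$$

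Define the running FS value $H_t := H_0 + G_t(\xi^H) + L_t^H$ and the residual $R_t := H_t - H_0 - G_t(\varphi^{(H_0)})$, so that $R_0 = 0$ and $R_t = L_t^H + \int_0^t(\xi^H_s - \varphi^{(H_0)}_s)\,dX_s$. Substituting the explicit form~(\ref{xisolution}) of $\varphi^{(H_0)}$ yields $\xi^H_s - \varphi^{(H_0)}_s = -\frac{\alpha_s}{1+\Delta K_s}R_{s-}$, and splitting $dX = dM + \alpha\,d\langle M\rangle$ turns the identity for $R$ into the linear equation
$$
R_t = L_t^H - \int_0^t \frac{\alpha_s}{1+\Delta K_s}\,R_{s-}\,dM_s - \int_0^t \frac{R_{s-}}{1+\Delta K_s}\,dK_s.
$$
The first two terms on the right are $L^2$-martingales starting at $0$ (the stochastic integrand is square-integrable against $M$ because $\varphi^{(H_0)}\in\Theta$ and $K$ is bounded), so upon taking expectations and using the fact that $K$ is deterministic, Fubini reduces the equation to the deterministic linear identity $\mu(t) = -\int_0^t \frac{\mu(s-)}{1+\Delta K_s}\,dK_s$ for $\mu(t) := \mathbb{E}[R_t]$. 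Uniqueness of its solution (Gronwall, since $dK$ is a finite positive deterministic measure) with $\mu(0)=0$ yields $\mu\equiv 0$, and in particular $\mathbb{E}[R_T] = 0$.

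Combined with the fact that $\varphi^{(H_0)}$ already minimizes $F(H_0,\cdot)$, this furnishes a critical point of the convex quadratic functional $F$, which Theorem~\ref{thm128}(2) pins down as the unique minimizer. The only substantive obstacle is justifying the interchange of expectation and stochastic integration above—specifically, ensuring that the $dM$-integral is a genuine martingale (rather than merely a local one) and that $R_{s-}$ is integrable against the total variation $dK_s$. Both follow from the square-integrability afforded by $\varphi^{(H_0)}\in\Theta$ and $L^H\in L^2$ in the FS decomposition, together with the boundedness of $K$ provided by Assumption~\ref{Hypo1}.
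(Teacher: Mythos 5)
Your argument is correct, but it is genuinely different from what the paper does: the paper gives no proof at all here, simply invoking Corollary~10 of Schweizer (1994), whose underlying argument is the ``minimize in $v$ first'' route already encoded in Theorem~\ref{corro9} --- namely $\min_{v}\mathbb{E}[(H-c-G_T(v))^2]=\mathcal{E}(-\tilde{K}_T)\bigl((H_0-c)^2+\cdots\bigr)$ with the remaining terms independent of $c$, so that positivity of $\mathcal{E}(-\tilde{K}_T)$ (each jump factor $1-\Delta\tilde{K}_s=\tfrac{1}{1+\Delta K_s}>0$) immediately forces $c=H_0$. You instead verify the joint first-order conditions directly: orthogonality of the residual $N=H-H_0-G_T(\varphi^{(H_0)})$ to $G_T(\Theta)$ comes for free from the inner minimization, and orthogonality to the constants is obtained by showing $\mu(t)=\mathbb{E}[R_t]$ solves the homogeneous linear equation $\mu(t)=-\int_0^t\frac{\mu(s-)}{1+\Delta K_s}\,dK_s$ with $\mu(0)=0$, hence vanishes; the integrability checks you flag (the $dM$-integral is a true martingale since $\xi^H-\varphi^{(H_0)}\in L^2(M)$, and $\sup_t\mathbb{E}[R_t^2]<\infty$ justifies Fubini against the finite deterministic measure $dK$) are exactly the right ones. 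Your route is more self-contained and does not need the quadratic-error formula of Theorem~\ref{corro9}, at the price of redoing a Gronwall/Dol\'eans-Dade uniqueness step; the paper's route gets the conclusion in one line once \eqref{corro9Sc1} is granted. One cosmetic point: $F$ is convex but not in general \emph{strictly} convex in $(c,v)$ (e.g.\ when $1\in G_T(\Theta)$), so you should phrase the last step as ``orthogonality of $N$ to $\mathbb{R}+G_T(\Theta)$ plus Pythagoras gives a global minimizer, and uniqueness is Theorem~\ref{thm128}'' rather than appealing to strict convexity.
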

In the sequel of this paper we will only refer to the structure condition (SC) and to the MVT process $K$. The definition below can be found in
section II.8 p.~85 of~\cite{Pr92}.
%
%VERIFIER SI LA DEFINITION  SUIVANTE A ETE UTILISEE.

\begin{defi}\label{expdolean} 
The {\bf Doléans-Dade exponential}
 of a semimartingale $X$ is defined to be the unique 
%càdlàg
càdlàg adapted solution $Y$ to the stochastic differential equation,  
$$
dY_t=Y_{t-}dX_t\ ,\quad \textrm{for all}\ t\in[0,T]\quad \textrm{with}\ 
 Y_0=1\ .
$$  
This process is denoted by $\mathcal{E}(X)$.
\end{defi}
%When X is a real-valued semimartingale 
This solution is a semimartingale given by 
$$
\mathcal{E}(X)_t=\exp\left(X_t-X_0-[X]_t/2\right)\prod_{s\leq t}(1+\Delta X_s)\exp(-\Delta X_s)\ .
$$ 

Theorem below is stated in~\cite{S94}.
\begin{thm}\label{corro9}
Under the assumptions of Theorem~\ref{ThmSolutionPb1}, 
for any $c\in \mathbb{R}$, we have
\begin{eqnarray}\min_{v\in \Theta}\mathbb{E}[(H-c-G_T(v))^2]=\mathcal{E}(-\tilde{K}_T)\left((H_0-c)^2+\mathbb{E}[(L_0^H)^2]+\int_0^T\frac{1}{\mathcal{E}(-\tilde{K}_s)}d\left(\mathbb{E}[\left\langle L^H\right\rangle_s]\right)\right)\ .\label{corro9Sc1}\end{eqnarray}
\end{thm}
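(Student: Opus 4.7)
Given the FS decomposition $H=H_0+\int_0^T\xi^H\,dX_s+L_T^H$ and the optimal strategy $\varphi^{(c)}$ from Theorem~\ref{ThmSolutionPb1}, I introduce the residual process
\[
R_t := H_t - c - G_t(\varphi^{(c)}),\qquad H_t := H_0+\int_0^t\xi_s^H\,dX_s+L_t^H,
\]
so that $R_T=H-c-G_T(\varphi^{(c)})$ and, by Theorem~\ref{ThmSolutionPb1} and Corollary~\ref{ThmSolutionPb2}, $\mathbb{E}[R_T^2]$ is precisely the minimum appearing on the left-hand side of (\ref{corro9Sc1}). The plan is therefore to compute $\mathbb{E}[R_T^2]$ by deriving a tractable linear equation for the deterministic function $t\mapsto \mathbb{E}[R_t^2]$.

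Plugging the recursion (\ref{xisolution}) for $\varphi^{(c)}$ into $dR_t=(\xi_t^H-\varphi_t^{(c)})\,dX_t+dL_t^H$ and using $dX_t=dM_t+\alpha_t\,d\langle M\rangle_t$, a direct manipulation yields the linear SDE
\[
dR_t = -\frac{\alpha_t R_{t-}}{1+\Delta K_t}\,dM_t - R_{t-}\,d\tilde K_t + dL_t^H,\qquad R_0=H_0+L_0^H-c.
\]
I then apply It\^o's formula, $d(R_t^2)=2R_{t-}\,dR_t+d[R]_t$, and take expectations. Under Assumption~\ref{Hypo1} (uniform boundedness of $K$), the stochastic integrals against $dM$ and $dL^H$ are true martingales, so their means vanish; strong orthogonality $\langle M,L^H\rangle=0$ from Lemma~\ref{lem:ortho} suppresses the cross contribution in $[R]$ in expectation. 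Combining the identity $\alpha_t^2(1+\Delta K_t)^{-1}\,d\langle M\rangle_t=d\tilde K_t$ with its quadratic counterpart $\alpha_t^2(1+\Delta K_t)^{-2}\,d\langle M\rangle_t=(1-\Delta \tilde K_t)\,d\tilde K_t$, and absorbing the $(\Delta \tilde K_t)^2$ jump contribution produced by the finite-variation part $-R_{t-}\,d\tilde K_t$, the Itô expansion reduces to the scalar linear equation
\[
d\mathbb{E}[R_t^2] = -\mathbb{E}[R_{t-}^2]\,d\tilde K_t + d\mathbb{E}[\langle L^H\rangle_t].
\]

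Since $K$, and hence $\tilde K$, is deterministic by hypothesis, this is a deterministic linear Volterra equation whose integrating factor is the Dol\'eans--Dade exponential $\mathcal{E}(-\tilde K)$. Variation of parameters delivers
\[
\mathbb{E}[R_t^2] = \mathcal{E}(-\tilde K)_t\left\{\mathbb{E}[R_0^2] + \int_0^t \mathcal{E}(-\tilde K)_s^{-1}\,d\mathbb{E}[\langle L^H\rangle_s]\right\}.
\]
Since $\mathbb{E}[L_0^H]=0$ and $H_0$, $c$ are constants, $\mathbb{E}[R_0^2]=(H_0-c)^2+\mathbb{E}[(L_0^H)^2]$, and evaluating at $t=T$ reproduces (\ref{corro9Sc1}).

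The main obstacle is the careful accounting of the quadratic variation $d[R]_t$ in the presence of jumps of $\langle M\rangle$: one must combine the drift contribution $-2\mathbb{E}[R_{t-}^2]\,d\tilde K_t$ arising from $2R_{t-}\,dR_t$ with the three pieces of $\mathbb{E}[d[R]_t]$ (the $d\langle M\rangle$ piece, the $d\langle L^H\rangle$ piece, and the purely discontinuous cross term coming from the finite-variation part $-R_{-}\,d\tilde K$) and verify that they collapse to the clean right-hand side $-\mathbb{E}[R_{t-}^2]\,d\tilde K_t + d\mathbb{E}[\langle L^H\rangle_t]$. A secondary but necessary technicality is the justification of integrability and of the true-martingale property of the Itô integrals, which rests ultimately on the boundedness of $K$ built into Assumption~\ref{Hypo1}.
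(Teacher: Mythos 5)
Your proposal is correct, and it follows the same route as the proof the paper relies on: the paper itself does not reprove the identity but simply invokes Corollary~9 of Schweizer (1994) together with Remark~\ref{remarqueKKtilde}, and the argument behind that corollary is exactly your computation (the linear SDE for the residual $R_t=H_t-c-G_t(\varphi^{(c)})$ obtained from \eqref{xisolution}, It\^o's formula, cancellation of the martingale terms via boundedness of $K$ and orthogonality of $L^H$ and $M$, and the Dol\'eans--Dade integrating factor for the resulting deterministic equation $d\mathbb{E}[R_t^2]=-\mathbb{E}[R_{t-}^2]\,d\tilde K_t+d\mathbb{E}[\langle L^H\rangle_t]$). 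Your jump bookkeeping checks out: the cross bracket of the martingale part with the predictable finite-variation part $-\int R_-\,d\tilde K$ is a local martingale by Yoeurp's lemma and so vanishes in expectation, and the contributions $-2\mathbb{E}[R_{-}^2]\,d\tilde K$, $\mathbb{E}[R_{-}^2](1-\Delta\tilde K)\,d\tilde K$ and $\mathbb{E}[R_{-}^2]\Delta\tilde K\,d\tilde K$ indeed collapse to $-\mathbb{E}[R_{-}^2]\,d\tilde K$; likewise the variation-of-parameters solution with integrand $\mathcal{E}(-\tilde K)_s^{-1}$ (evaluated at $s$, not $s-$) is the correct one for $dg=-g_-\,d\tilde K+dh$. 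In effect you have supplied the self-contained derivation that the paper delegates to the reference.
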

\begin{proof}
%[\textbf{Proof of theorem \ref{corro9}}]
See the proof of Corollary~9 of~\cite{S94} with Remark~\ref{remarqueKKtilde}.
\end{proof}

\begin{corro}\label{corro132Bis}
If 
%$\tilde{K}$
$ \left \langle M, M \right \rangle $
 is continuous
% then~(\ref{corro9Sc1}) simplifies to
\begin{eqnarray}
\min_{v\in \Theta}\mathbb{E}[(H-c-G_T(v))^2]&=&\exp(-K_T)\left((H_0-c)^2+\mathbb{E}[(L_0^H)^2]\right)\nonumber\\
&&+\mathbb{E}\left[\int_0^T\exp\{-(K_T-K_s)\}d\left\langle L^H\right\rangle_s\right]\ .\label{corro9Sc2}\end{eqnarray}
\end{corro}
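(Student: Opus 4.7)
The plan is to derive Corollary \ref{corro132Bis} as a direct specialization of Theorem \ref{corro9}, exploiting two simplifications that come from continuity of $\langle M,M\rangle$.

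First I would observe that since $\langle M,M\rangle$ is continuous and $dA \ll d\langle M\rangle$ in (SC), the process $K_t=\int_0^t\alpha_s^2\,d\langle M\rangle_s$ is itself continuous. By Remark \ref{remarqueKKtilde}(2), (ESC) coincides with (SC) and $\tilde K = K$; in particular $\tilde K$ is continuous and of bounded variation, with $\tilde K_0 = 0$. Furthermore, under the standing assumption of Theorem \ref{ThmSolutionPb1}, $K$ is deterministic.

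Next I would compute the Doléans-Dade exponential $\mathcal{E}(-\tilde K)$ appearing in formula (\ref{corro9Sc1}). Using the general formula from Definition \ref{expdolean},
\[
\mathcal{E}(-\tilde K)_t = \exp\bigl(-\tilde K_t - [\tilde K]_t/2\bigr)\prod_{s\leq t}(1-\Delta \tilde K_s)\exp(\Delta \tilde K_s),
\]
the continuity of $\tilde K = K$ forces the product over jumps to equal $1$ and $[\tilde K]_t = \sum_{s\leq t}(\Delta \tilde K_s)^2 = 0$ (since continuous finite variation processes have zero quadratic variation). Hence $\mathcal{E}(-\tilde K)_t = \exp(-K_t)$ for every $t\in[0,T]$, and similarly $\frac{1}{\mathcal{E}(-\tilde K_s)} = \exp(K_s)$.

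Substituting these identities into (\ref{corro9Sc1}) yields
\[
\min_{v\in\Theta}\mathbb{E}[(H-c-G_T(v))^2] = \exp(-K_T)\Bigl((H_0-c)^2+\mathbb{E}[(L_0^H)^2]\Bigr) + \int_0^T \exp\bigl(-(K_T-K_s)\bigr)\,d\bigl(\mathbb{E}[\langle L^H\rangle_s]\bigr).
\]
Finally, since $K$ is deterministic the integrand $\exp(-(K_T-K_s))$ is a deterministic function of $s$, so a standard Fubini argument (applied to the nondecreasing integrator $s\mapsto \mathbb{E}[\langle L^H\rangle_s]$ and the positive deterministic function $s\mapsto\exp(-(K_T-K_s))$) permits interchanging expectation and the Lebesgue–Stieltjes integral, giving the stated expression. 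There is no real technical obstacle here; the only point deserving care is the Fubini exchange, which is justified by nonnegativity of the integrand and the deterministic character of $K$.
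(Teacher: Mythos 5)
Your proposal is correct and follows essentially the same route as the paper: invoke Remark \ref{remarqueKKtilde} to identify $\tilde K$ with $K$, note that the Dol\'eans-Dade exponential of the continuous bounded-variation process $-K$ is the ordinary exponential, and substitute into Theorem \ref{corro9}. The paper states this in three lines and leaves the final Fubini interchange implicit, whereas you spell out both the exponential computation and the Fubini justification; these are welcome details but not a different argument.
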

\begin{proof}
 Remark \ref{remarqueKKtilde}  implies
that $K = \tilde K$. Since   $K$ is continuous and with bounded
variation, its Doléans-Dade exponential coincides with the
classical exponential. The result follows from Theorem \ref{corro9}.

\end{proof}

In the sequel, we will find an explicit expression of the
 FS decomposition for a large class of square integrable random variables, when the underlying process is a process with independent increments, or is an exponential of process with independent increments. For
 this, the first step will consist in verifying (SC) and the boundedness condition on  the MVT process, see Assumption \ref{Hypo1}.

\subsection{Link with the equivalent signed martingale measure}
%%%%%%%%%%%%%%%%%%%%%%%%%%%%%%%%%%%%%%%%%%%%%%%%%%%%%%%%%%%%%
\label{sec:linkVOM}
\subsubsection{The Variance optimal martingale (VOM) measure}
%%%%%%%%%%%%%%%%%%%%%%%%%%%%%%%%%%%%%%%%%%%%%%%%%%%%%%%%%%%%

\begin{defi}
\begin{enumerate}
\item A signed measure, $Q$, on $(\Omega, \mathcal{F}_T)$, is called a
{\bf signed $\Theta$-martingale measure}, if 
\begin{enumerate}
\item $Q(\Omega)=1\ ;$
\item $Q \ll  P$ with ${\displaystyle \frac{dQ}{dP} \in \mathcal{L}^2(P)}\ ;$
\item ${\displaystyle \mathbb{E}[\frac{dQ}{dP}G_T(v)]=0}$ for all $v \in \Theta$.
\end{enumerate}
We denote by $\mathbb{P}_s(\Theta)$, the set of all such signed 
$\Theta$-martingale measures. Moreover, we define
 $$\mathbb{P}_e(\Theta):=\{Q\in\mathbb{P}_s(\Theta) \ |\  Q \sim P\quad \textrm{and Q is a probability measure}\}\ ,$$
and introduce the closed convex set, 
\begin{eqnarray*}\mathcal{D}_d:=\{D \in \mathcal{L}^2(P)\ |\ 
 D=\frac{dQ}{dP}& for \ some &Q \in \mathbb{P}_s(\Theta)\}\ .\end{eqnarray*}
\item A signed martingale measure $\widetilde{P}\in
  \mathbb{P}_s(\Theta)$ is called variance-optimal martingale (VOM)
  measure  if  $\widetilde{D}=argmin_{D \in
    \mathcal{D}_d}Var[D^2]=argmin_{D \in \mathcal{D}_d}
  \left(\mathbb{E}[D^2]-1
\right)$, where $\widetilde{D}={\displaystyle \frac{d\widetilde{P}}{dP}}\ .$
\end{enumerate}
\end{defi}
The space $G_T(\Theta):=\{G_T(v)\ |\ v \in \Theta\}$ is a linear subspace of $\mathcal{L}^2(P)$. Then, we denote by $G_T(\Theta)^\perp$
its orthogonal complement, that is, 
$$
G_T(\Theta)^\perp:=\{D \in \mathcal{L}^2(P)\ |\ \mathbb{E}[DG_T(v)]=0\quad\textrm{for any} \ v\in \Theta\}\ .
$$
Furthermore, $G_T(\Theta)^{\perp\perp}$ denotes the orthogonal complement of $G_T(\Theta)^\perp$, which is the $\shl^2(P)$-closure of 
$G_T(\Theta)$.\\
A simple example when $\mathbb{P}_e(\Theta)$ is non empty is given by the following proposition, that anticipates some material treated in the next
section.

\begin{propo}\label{propoRusso2}
Let $X$ be a process with independent increments such that
\begin{itemize}
\item  $X_t$ has the same law as $-X_t$, for any $t \in [0,T]$;
\item $\frac{1}{2}$ belongs to the domain $D$ of the 
 cumulative generating function $(t, z) \mapsto \kappa_t(z)$.
\end{itemize}
 Then, there is a probability $Q \sim P$ 
such that $S_t=\exp(X_t)$ is a martingale.
\end{propo}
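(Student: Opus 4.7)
The plan is to construct $Q$ via an Esscher-type change of measure at parameter $-1/2$, exploiting the symmetry assumption to make $\exp(X)$ a $Q$-martingale.

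First I would observe that the hypotheses $X_t\stackrel{d}{=}-X_t$ and $1/2\in D$ together imply $-1/2\in D$ and, more importantly, the identity
\[
\kappa_t(1/2)=\kappa_t(-1/2)\in\mathbb{R}\quad\text{for every }t\in[0,T],
\]
since $\E[e^{\frac{1}{2}X_t}]=\E[e^{-\frac{1}{2}X_t}]$. Then I would define the candidate density process
\[
D_t:=\frac{e^{-\frac{1}{2}X_t}}{\exp(\kappa_t(-1/2))},\qquad t\in[0,T],
\]
and set $dQ/dP:=D_T$. Because $X$ has independent increments, a direct conditional-expectation computation using $\E[e^{-\frac{1}{2}(X_t-X_s)}]=\exp(\kappa_t(-1/2)-\kappa_s(-1/2))$ shows that $D$ is a strictly positive $P$-martingale with $\E[D_T]=1$, so $Q$ is well defined and $Q\sim P$.

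Next I would check the martingale property of $S_t=e^{X_t}$ under $Q$ by verifying that $D_tS_t$ is a $P$-martingale. By construction
\[
D_tS_t=\frac{e^{\frac{1}{2}X_t}}{\exp(\kappa_t(-1/2))}=\frac{e^{\frac{1}{2}X_t}}{\exp(\kappa_t(1/2))},
\]
where the second equality uses the symmetry identity above. Independent increments then give
\[
\E\bigl[D_tS_t\mid\shf_s\bigr]=\frac{e^{\frac{1}{2}X_s}\,\E[e^{\frac{1}{2}(X_t-X_s)}]}{\exp(\kappa_t(1/2))}=\frac{e^{\frac{1}{2}X_s}}{\exp(\kappa_s(1/2))}=D_sS_s,
\]
which by the Bayes rule for conditional expectations is exactly the $Q$-martingale property of $S$.

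The computation is essentially algebraic once the right transform is identified; the only substantive point is the observation that the Esscher parameter $-1/2$ is the correct choice, and that symmetry promotes it to a genuine martingale measure (without symmetry one would only get $D_tS_t$ proportional to $e^{\frac{1}{2}X_t}/\exp(\kappa_t(-1/2))$, which is not a martingale in general). I do not anticipate a real obstacle: integrability at $1/2$ is assumed, $D$ is automatically positive, and the PII structure makes the martingale checks reduce to the multiplicative property of $\exp(\kappa_t(\cdot))$.
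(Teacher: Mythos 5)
Your proof is correct and follows essentially the same route as the paper: the same Esscher density $D_t=\exp\{-\tfrac{1}{2}X_t-\kappa_t(-\tfrac12)\}$, the same use of symmetry to identify $\kappa_t(-\tfrac12)=\kappa_t(\tfrac12)$, and the same conclusion via showing $DS$ is a $P$-martingale (the paper cites Proposition III.3.8 of Jacod--Shiryaev where you invoke the Bayes rule, but these are the same step).
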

\begin{proof}
%[\textbf{Proof of Proposition \ref{propoRusso2}}]
For all $t\in[0,T]$, we set 
$D_t=\exp\{-\frac{X_t}{2}-\kappa_t(-\frac{1}{2})\}$.
 Notice that $D$ is a martingale so that the measure $Q$
on $(\Omega, \shf_T)$ defined by  $dQ=D_TdP$ is 
an (equivalent) probability to $P$. 
On the other hand, the symmetry of the law of $X_t$ 
implies for all $t\in [0,T]$, 
\begin{eqnarray*}S_tD_t=\exp\{\frac{X_t}{2}-
\kappa_t(-\frac{1}{2})\}=\exp\{\frac{X_t}{2}-\kappa_t(\frac{1}{2})\}\ 
.\end{eqnarray*}
So $SD$ is also a martingale. According to~\cite{JS03}, 
chapter III, Proposition 3.8 a), $S$ is a $Q$-martingale and so
%$Q$ defined by $dQ=D_TdP$
%$Q$ is a probability under which
 $S$ is a $Q$-martingale.
\end{proof}
\begin{example}
Let $Y$ be a process with independent increments. We consider two copies $Y^1$ of $Y$ and $Y^2$ of $-Y$. We set $X=Y^1+Y^2$. Then X has the same law of $-X$.
\end{example}
For simplicity, we suppose from now that Assumption~\ref{Hypo1} is verified, even if one could consider a more general framework, see
% Arai~\cite{Arai05} and~
\cite{Arai052} Therorem 1.28. This ensures that the linear space 
$G_T(\Theta)$ is closed in $\shl^2(\Omega)$, 
therefore $G_T(\Theta)=\overline{G_T(\Theta)}=G_T(\Theta)^{\perp\perp}$.
 Moreover, Proposition~\ref{lemmaTheta} ensures that
 $\Theta=L^2(M)$.
%
%Throughout this section, we will take in account the
% following standing assumption:
%\begin{Hyp}\label{Hyp2} 
%$1\notin G_T(\Theta)\ .$
%\end{Hyp}
%
We recall an almost known fact cited in~\cite{Arai052}. 
For completeness, we give a proof.
\begin{propo}\label{propo461}
%Assumption~\ref{Hyp2}
  $\mathbb{P}_s(\Theta) \neq \emptyset\ $ is equivalent to
$1\notin G_T(\Theta)\ .$
\end{propo}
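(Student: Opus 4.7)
The claim is an equivalence, so the plan is to prove the two implications separately, exploiting the Hilbert-space structure of $\mathcal{L}^2(P)$ together with the closedness of $G_T(\Theta)$ that is guaranteed by Assumption \ref{Hypo1} via Theorem \ref{ThmGferme}.

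For the easy direction ($\mathbb{P}_s(\Theta)\neq\emptyset \Rightarrow 1\notin G_T(\Theta)$), I would argue by contradiction. Pick any $Q\in\mathbb{P}_s(\Theta)$ with density $D=dQ/dP\in\mathcal{L}^2(P)$. By definition of a signed $\Theta$-martingale measure, $\mathbb{E}[D]=Q(\Omega)=1$ and $\mathbb{E}[D\,G_T(v)]=0$ for every $v\in\Theta$. If $1$ were in $G_T(\Theta)$, we could write $1=G_T(v_0)$ for some $v_0\in\Theta$, and then $1=\mathbb{E}[D]=\mathbb{E}[D\,G_T(v_0)]=0$, which is absurd.

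For the converse ($1\notin G_T(\Theta)\Rightarrow\mathbb{P}_s(\Theta)\neq\emptyset$), the idea is to produce a density by projection. Since Assumption \ref{Hypo1} is in force, Theorem \ref{ThmGferme} ensures that $G_T(\Theta)$ is a closed linear subspace of $\mathcal{L}^2(P)$. Decompose the constant $1$ orthogonally as $1=U+D_0$ with $U\in G_T(\Theta)$ and $D_0\in G_T(\Theta)^{\perp}$. Because $1\notin G_T(\Theta)$, we have $D_0\neq 0$. Computing
$$
\mathbb{E}[D_0]=\mathbb{E}[D_0\cdot 1]=\mathbb{E}[D_0 U]+\mathbb{E}[D_0^2]=0+\|D_0\|_{\mathcal{L}^2}^2>0,
$$
since $D_0\perp U$. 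Then set $D:=D_0/\mathbb{E}[D_0]\in\mathcal{L}^2(P)$ and define the signed measure $Q$ by $dQ=D\,dP$. By construction $Q(\Omega)=\mathbb{E}[D]=1$, $Q\ll P$, the density lies in $\mathcal{L}^2(P)$, and $\mathbb{E}[D\,G_T(v)]=0$ for every $v\in\Theta$ because $D$ is a scalar multiple of an element of $G_T(\Theta)^{\perp}$. Thus $Q\in\mathbb{P}_s(\Theta)$.

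The only non-trivial ingredient is the closedness of $G_T(\Theta)$ in $\mathcal{L}^2(P)$, which is precisely where Assumption \ref{Hypo1} (via Theorem \ref{ThmGferme}) is used; beyond that, everything reduces to the standard orthogonal-projection trick in a Hilbert space. No delicate step is anticipated.
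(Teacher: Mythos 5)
Your proof is correct and follows essentially the same route as the paper's: the same contradiction argument for the forward implication, and for the converse the same orthogonal projection of $1$ onto the closed subspace $G_T(\Theta)$ (your $D_0$ is the paper's $1-f$, and your normalization $\mathbb{E}[D_0]=\|D_0\|^2_{\mathcal{L}^2}>0$ is the paper's observation that $\mathbb{E}[1-f]=\mathbb{E}[(1-f)^2]\neq 0$). Nothing to add.
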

\begin{proof}
%[\textbf{Proof of Proposition \ref{propo461}}]
Let us prove the two implications. 
\begin{itemize} 
\item Let $Q\in \mathbb{P}_s(\Theta)$. 
If $1\in G_T(\Theta)$, 
then $Q(\Omega)=\mathbb{E}^Q(1)=0$ which leads to a contradiction
since $Q$ is a probability.
Hence $1\notin G_T(\Theta)$.
\item Suppose that $1\notin G_T(\Theta)$. 
%If 1 is orthogonal to $G_T(\Theta)$, then $P\in\mathbb{P}_s(\Theta)$ and
%so $\mathbb{P}_s(\Theta) \neq \emptyset$. 
%We suppose now that $1$ is not orthogonal to $G_T(\Theta)$.
 We denote by $f$ the orthogonal projection of 1 on $G_T(\Theta)$. Since $\mathbb{E}[f(1-f)]=0$, then $\mathbb{E}[1-f]=\mathbb{E}[(1-f)^2]$. Recall that $1\neq f\in G_T(\Theta)$, hence we have $\mathbb{E}[f]\neq 1$. Therefore, we can define the signed measure $\widetilde{P}$ by setting
\begin{eqnarray}\widetilde{P}(A)=\int_A\widetilde{D}dP\ ,\quad \textrm{with}\quad \widetilde{D}=\frac{1-f}{1-\mathbb{E}[f]}\ .\label{F1}\end{eqnarray}
%
% we obtain $\mathbb{E}[f]=\mathbb{E}[f^2]\label{F2}\ .$ 
We check now that $\widetilde{P}\in \mathbb{P}_s(\Theta)$. 
\begin{itemize}
\item Trivially $\widetilde{P}(\Omega)=\mathbb{E}(\widetilde{D})=1\ ;$
\item $\widetilde{P} \ll  P\ ,$ by construction.
\item Let $v\in \Theta$,
${\displaystyle \mathbb{E}[\widetilde{D}G_T(v)]=\frac{1}{1-\mathbb{E}[f]}\left(\mathbb{E}[(1-f)G_T(v)]\right)=0}\ ,$ 
since $1-f\in G_T(\Theta)^{\perp}$. 
\end{itemize}
Hence, $\widetilde{P}\in \mathbb{P}_s(\Theta)$ which concludes the proof of the Proposition.
\end{itemize}
\end{proof}
\begin{remarque} \label{R461}  
If 1 is orthogonal to $G_T(\Theta)$, then $f = 0$
and $P\in\mathbb{P}_s(\Theta)$ so $\mathbb{P}_s(\Theta) \neq \emptyset$. 
\end{remarque}
In fact, $\widetilde{P}$ constructed in the proof of Proposition~\ref{propo461} coincides with the VOM measure.
\begin{propo}
\label{propo:D}
Let $\widetilde{P}$ be the signed measure defined in~(\ref{F1}). Then, 
$$
\widetilde{D}=\argmin_{D \in \mathcal{D}_d}\mathbb{E}[D^2]=\argmin_{D
  \in  \mathcal{D}_d}Var[D]\ .
$$
\end{propo}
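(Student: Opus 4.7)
The plan is to exploit the Hilbert-space projection structure that is implicit in the definition of $\widetilde{D}$. Observe first that for every $D \in \mathcal{D}_d$ we have $\mathbb{E}[D] = Q(\Omega) = 1$, hence
\[
\mathrm{Var}[D] \;=\; \mathbb{E}[D^2] - 1,
\]
so the two optimization problems in the statement have the same set of minimizers. It therefore suffices to show that $\widetilde{D}$ minimizes $D \mapsto \mathbb{E}[D^2]$ over $\mathcal{D}_d$.

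Next I would reinterpret $\mathcal{D}_d$ geometrically. The defining property (c) of a signed $\Theta$-martingale measure says exactly that $D \in G_T(\Theta)^\perp$, while (a) reads $\mathbb{E}[D] = 1$. Hence
\[
\mathcal{D}_d \;=\; \bigl\{D \in G_T(\Theta)^\perp \,:\, \mathbb{E}[D] = 1\bigr\},
\]
which is an affine (nonempty, by Proposition~\ref{propo461}) subset of $\mathcal{L}^2(P)$. Under Assumption~\ref{Hypo1}, Theorem~\ref{ThmGferme} tells us that $G_T(\Theta)$ is a closed subspace of $\mathcal{L}^2(P)$, so the orthogonal projection $f$ of $1$ onto $G_T(\Theta)$ actually belongs to $G_T(\Theta)$ and $1 - f \in G_T(\Theta)^\perp$. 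In particular $\widetilde{D} \in \mathcal{D}_d$, as was already checked in the proof of Proposition~\ref{propo461}.

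The key step is then the following orthogonal decomposition. For any $D \in \mathcal{D}_d$, write $D = \widetilde{D} + (D - \widetilde{D})$ and compute
\[
\mathbb{E}\bigl[\widetilde{D}(D - \widetilde{D})\bigr]
\;=\; \frac{1}{1 - \mathbb{E}[f]}\Bigl(\mathbb{E}[D - \widetilde{D}] - \mathbb{E}\bigl[f(D - \widetilde{D})\bigr]\Bigr).
\]
Since both $D$ and $\widetilde{D}$ lie in $\mathcal{D}_d$, the first term vanishes (both have expectation $1$); since $f \in G_T(\Theta)$ while $D - \widetilde{D} \in G_T(\Theta)^\perp$, the second term vanishes as well. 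Consequently,
\[
\mathbb{E}[D^2] \;=\; \mathbb{E}[\widetilde{D}^2] + \mathbb{E}\bigl[(D - \widetilde{D})^2\bigr] \;\geq\; \mathbb{E}[\widetilde{D}^2],
\]
with equality if and only if $D = \widetilde{D}$ in $\mathcal{L}^2(P)$. Combined with the first observation, this yields both claimed identities. There is no real obstacle here beyond checking that $f$ genuinely lies in $G_T(\Theta)$ (not merely in its closure), which is the reason Assumption~\ref{Hypo1} was imposed via Theorem~\ref{ThmGferme}.
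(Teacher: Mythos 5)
Your argument is correct and is essentially the paper's own proof: both expand $\mathbb{E}[D^2]=\mathbb{E}[\widetilde{D}^2]+\mathbb{E}[(D-\widetilde{D})^2]+2\mathbb{E}[\widetilde{D}(D-\widetilde{D})]$ and kill the cross term using $\mathbb{E}[D]=\mathbb{E}[\widetilde{D}]=1$ together with $\mathbb{E}[Df]=\mathbb{E}[\widetilde{D}f]=0$ for $f\in G_T(\Theta)$. Your added remarks (equivalence of the two objectives via $\mathrm{Var}[D]=\mathbb{E}[D^2]-1$, and the uniqueness coming from the equality case) are harmless refinements of the same projection argument.
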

\begin{proof}
%[\textbf{Proof of Proposition \ref{propo:D}}]
Let $D \in \mathcal{D}_d$ and Q such that $dQ=DdP$. We have to 
show that $\mathbb{E}[D^2]\geq \mathbb{E}[\widetilde{D}^2]$. We write
\begin{eqnarray*}
\mathbb{E}[D^2]
%=\mathbb{E}[(D-\widetilde{D}+\widetilde{D})^2]
=\mathbb{E}[(D-\widetilde{D})^2]+\mathbb{E}[\widetilde{D}^2]+\frac{2}{1-\mathbb{E}[f]}\mathbb{E}[(D-\widetilde{D})(1-f)]\ .\end{eqnarray*}
Moreover, since $f\in G_T(\Theta)$ yields
\begin{eqnarray*}\mathbb{E}[(D-\widetilde{D})(1-f)]&=&\mathbb{E}[D]-\mathbb{E}[\widetilde{D}]-\mathbb{E}[Df]+\mathbb{E}[\widetilde{D}f]\ ,\\
&=&Q(\Omega)-\widetilde{Q}(\Omega)\ .\\
&=&0\ .\end{eqnarray*}
\end{proof}
\begin{remarque}
\begin{enumerate}
\item Arai~\cite{Arai05} 
%and~\cite{Arai052}
 gives sufficient conditions under which the VOM measure 
is a probability, see Theorem 3.4 in~\cite{Arai05}.
\item Taking in account Proposition~\ref{propo461}, 
%Assumption \ref{Hyp2}
the property $1\notin G_T(\Theta)$
   may be viewed as non-arbitrage condition. In fact, in
   \cite{delbaen94}, the existence of a martingale measure
which is a probability is equivalent to a {\it no free lunch condition}.
\end{enumerate}
\end{remarque}
Next proposition can be easily deduced for a more general formulation, see~\cite{S01}.
\begin{propo}\label{P2} 
We assume Assumption \ref{Hypo1}. Let $H\in \shl^2(\Omega)$ and consider the solution $(c^{H},\varphi^{H})$ of the minimization problem~(\ref{problem2}). Then, the price $c^{H}$ equals the expectation under the VOM measure $\widetilde{P}$ of H.
\end{propo}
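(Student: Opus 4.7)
My plan is to exploit the first order conditions characterizing the optimum $(c^H,\varphi^H)$ together with the explicit expression of the VOM density $\widetilde D$ obtained in Proposition \ref{propo:D}. Set
$$
R := H - c^H - G_T(\varphi^H).
$$
Since $(c^H,\varphi^H)$ minimizes $\mathbb{E}[(H-c-G_T(v))^2]$ over $\mathcal{L}^2(\mathcal{F}_0)\times\Theta$, a standard variational argument (perturbing $c$ by any bounded $\mathcal{F}_0$-measurable r.v. and perturbing $\varphi^H$ by $\varepsilon v$ with $v\in\Theta$) shows that $R$ is orthogonal, in $\mathcal{L}^2(P)$, to $\mathcal{L}^2(\mathcal{F}_0)$ and to $G_T(\Theta)$. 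In particular, taking $\mathcal{F}_0$ trivial (as is implicit in the statement, since $c^H\in\mathbb{R}$), one obtains $\mathbb{E}[R]=0$ and $\mathbb{E}[R\,G_T(v)]=0$ for every $v\in\Theta$.

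Next I would compute $\mathbb{E}_{\widetilde P}[H]$ by inserting the decomposition $H=c^H+G_T(\varphi^H)+R$:
$$
\mathbb{E}_{\widetilde P}[H]=\mathbb{E}[\widetilde D\,H]=c^H\,\mathbb{E}[\widetilde D]+\mathbb{E}[\widetilde D\,G_T(\varphi^H)]+\mathbb{E}[\widetilde D\,R].
$$
Since $\widetilde P\in\mathbb{P}_s(\Theta)$ is a signed $\Theta$-martingale measure, the first term equals $c^H$ and the second term vanishes. Therefore it remains only to prove that $\mathbb{E}[\widetilde D\,R]=0$.

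For this last step I would use the explicit form $\widetilde D=(1-f)/(1-\mathbb{E}[f])$ from (\ref{F1}), where $f$ is the orthogonal projection of $1$ onto the closed subspace $G_T(\Theta)$ (closedness being guaranteed by Theorem \ref{ThmGferme} under Assumption \ref{Hypo1}). Then $f\in G_T(\Theta)$, so $f=G_T(w)$ for some $w\in\Theta$, which combined with the two orthogonality relations on $R$ yields
$$
(1-\mathbb{E}[f])\,\mathbb{E}[\widetilde D\,R]=\mathbb{E}[(1-f)R]=\mathbb{E}[R]-\mathbb{E}[f\,R]=0-\mathbb{E}[R\,G_T(w)]=0.
$$
Since $1\neq f$ in $\mathcal{L}^2$ (otherwise $1\in G_T(\Theta)$, contradicting the existence of $\widetilde P\in\mathbb{P}_s(\Theta)$), we have $\mathbb{E}[f]\neq 1$ and hence $\mathbb{E}[\widetilde D\,R]=0$, giving $c^H=\mathbb{E}_{\widetilde P}[H]$.

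There is no real obstacle here: the argument is essentially a projection/duality argument combined with the characterization of $\widetilde D$. The only mildly delicate point is the identification of the first order condition on $c$ with $\mathbb{E}[R]=0$, which requires either trivial $\mathcal{F}_0$ or a conditional version of the statement; the case $c^H\in\mathbb{R}$ consistent with the statement corresponds to the trivial $\mathcal{F}_0$ setting.
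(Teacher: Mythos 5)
Your proof is correct and follows essentially the same route as the paper: decompose $H=c^H+G_T(\varphi^H)+R$ with $\mathbb{E}[R]=0$ and $R\perp G_T(\Theta)$, use the signed $\Theta$-martingale measure property of $\widetilde P$ to kill $\mathbb{E}[\widetilde D\,G_T(\varphi^H)]$, and then use the explicit density $\widetilde D=(1-f)/(1-\mathbb{E}[f])$ from the proof of Proposition~\ref{propo461} to conclude $\mathbb{E}[\widetilde D\,R]=0$. The only difference is that you spell out the variational (projection) argument giving the orthogonality of $R$, which the paper simply asserts.
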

\begin{proof}
%[\textbf{Proof of Proposition \ref{P2}}]
We have 
\begin{eqnarray*}H=c^{H}+G_T(\varphi^{H})+R\ ,\end{eqnarray*}
where $R$ is orthogonal to $G_T(\Theta)$ and 
$\mathbb{E}[R]=0$. Since $\widetilde{P}\in \mathbb{P}_s(\Theta)$,
 taking the expectation with respect to $\widetilde{P}$, 
denoted by $\widetilde{\mathbb{E}}$ we obtain
\begin{eqnarray*}\widetilde{\mathbb{E}}[H]=c^{H}+\widetilde{\mathbb{E}}[R]\ .\end{eqnarray*}
From the proof of Proposition~\ref{propo461}, we have
\begin{eqnarray*}\widetilde{\mathbb{E}}[R]=\frac{\mathbb{E}[(1-f)R]}{1-\mathbb{E}[f]}=\frac{1}{1-\mathbb{E}[f]}\left(\mathbb{E}[R]-\mathbb{E}[fR]\right)\ .
\end{eqnarray*}
Since $f \in G_T(\Theta)$ and $R$ is orthogonal to $G_T(\Theta)$, we get $\widetilde{\mathbb{E}}[R]=0\ .$
\end{proof}

\section{Processes with independent increments (PII)}
%%%%%%%%%%%%%%%%%%%%%%%%%%%%%%%%%%%%%%%%%%%%%%%%%%%%%
%We present some useful properties related to processes with independent increments (PII), which is by assumption a semimartingale.

\setcounter{equation}{0}

This section deals with the case of Processes with Independent
Increments. The preliminary part recalls some useful properties of such
processes. Then, we obtain a sufficient condition  on the characteristic
function for the existence of the FS decomposition. Moreover, an 
explicit FS decomposition is derived. 

Beyond its own theoretical interest, this work is motivated by its possible application to hedging and pricing energy derivatives and specifically electricity derivatives. Indeed, one  way of modeling electricity forward prices is to use arithmetic models such as the Bachelier model which was developed for standard financial assets. 
%With arithmetic models, the spot price is supposed to be a sum of Ornstein Uhlenbeck processes. 
%This approach has a significant advantage in energy markets.  
The reason for using arithmetic models, is that the usual hedging intrument available on electricity markets are swap contracts which give a fixed price for the delivery of electricity over a contracted time period. Hence, electricty swaps can be viewed as a strip of forwards for each hour of the delivery period. In this framework, arithmetic models have the significant advantage to yield closed pricing formula for swaps which is not the case of geometric models. 
%which play a primary role in electricty markets.  

%to produce  analitycal pricing formula for swap contracts which is not the case of geometric models. 

However, in whole generality, an arithmetic model allows  negative
prices which could be underisable. Nevertheless, in the electricity
market,  negative prices may occur because it can be more expensive for a producer to switch off some generators than to pay someone to consume the resulting excess of production. Still, in~\cite{Benth-Kallsen}, is introduced a class of arithmetic models where the positivity of spot prices is ensured, using a specific choice of increasing Lévy process. The parameters estimation of this kind of model is studied in~\cite{Meyer-Tankov}.

\subsection{Preliminaries}
%%%%%%%%%%%%%%%%%%%%%%%%
\begin{defi}\label{defPAI}
$X=(X_t)_{t \in [0,T]}$ is a (real) 
\textbf{process with independent increments (PII)} iff
\begin{enumerate}
\item $X$ is adapted to the filtration $(\mathcal{F}_t)_{t \in [0,T]}$ and has
 càdlàg paths. 

%EN FAIT TOUJOURS VRAI SI SEMIMARTINGALE. CADLAG UTILISE 
%EXPLICITEMENT ICI?
\item $X_0=0$.
\item $X_t-X_s$ is independent of $\mathcal{F}_s$ for $0 \leq s < t \leq T$.\\\\
Moreover we will also suppose
\item $X$ is continuous in probability, i.e. 
 $X$ has no fixed time of discontinuties.
\end{enumerate}
\end{defi}

We recall Theorem II.4.15 of~\cite{JS03}. 
\begin{thm}\label{ThmCharaDeter}
Let $(X_t)_{t \in [0,T]}$ be a real-valued special semimartingale, with $X_0=0$. Then, $X$ is a process with independent increments,
 iff there is a version $(b,c,\nu)$ of its characteristics that is deterministic. 
\end{thm}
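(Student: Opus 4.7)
My plan rests on the exponential L\'evy--Khintchine characterization of the characteristics, which underlies their very definition (Jacod--Shiryaev II.2.42): for a fixed truncation function $h$, a c\`adl\`ag semimartingale $X$ with $X_0=0$ has characteristics $(B,C,\nu)$ if and only if, for every $u\in\R$, the process
\begin{equation*}
N^{u}_t = e^{iuX_t} - \int_0^t e^{iuX_{s-}}\,d\Psi_s(u), \qquad
\Psi_t(u) := iuB_t - \tfrac{u^2}{2}C_t + \int_0^t\!\!\int_{\R}(e^{iux}-1-iuh(x))\,\nu(ds,dx),
\end{equation*}
is a (complex) local martingale. Because $X$ is continuous in probability, the characteristics have no fixed jumps, so $\Psi(u)$ is continuous and of locally bounded variation in $t$, and the local martingale property of $N^u$ is equivalent to that of $Z^u_t:=\exp\bigl(iuX_t-\Psi_t(u)\bigr)$. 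This reformulation is the backbone of both directions.

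\textbf{Sufficiency (deterministic $(B,C,\nu)\Rightarrow$ PII).} Assume a deterministic version exists; then $\Psi_\cdot(u)$ is deterministic and continuous on $[0,T]$. Since
$\Re\Psi_t(u)=-\tfrac{u^2}{2}C_t+\int_0^t\!\!\int(\cos(ux)-1)\,\nu(ds,dx)\le 0,$
one gets $|Z^u_t|\le e^{|\Psi_T(u)|}<\infty$, so the local martingale $Z^u$ is in fact a bounded martingale on $[0,T]$. Taking conditional expectation, for $0\le s\le t\le T$,
\begin{equation*}
\E\bigl[e^{iu(X_t-X_s)}\,\big|\,\mathcal{F}_s\bigr]=e^{\Psi_t(u)-\Psi_s(u)},
\end{equation*}
a deterministic function of $u$ for each $s,t$. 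By Cram\'er--Wold this forces $X_t-X_s$ to be independent of $\mathcal{F}_s$, i.e.\ $X$ has independent increments.

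\textbf{Necessity (PII $\Rightarrow$ deterministic $(B,C,\nu)$).} Conversely, assume $X$ is PII. The deterministic function $\phi_t(u)=\E[e^{iuX_t}]$ is continuous in $t$ (continuity in probability), and a standard connectedness argument exploiting $\phi_0(u)=1$ and the absence of fixed discontinuities shows $\phi_t(u)\neq 0$ on $[0,T]$, so it admits a unique continuous complex logarithm $\psi_\cdot(u)$ with $\psi_0(u)=0$. By Theorem~\ref{th:XVT}, $t\mapsto\phi_t(u)$ has locally bounded variation, hence so does $\psi_\cdot(u)$. Independence of increments gives
\begin{equation*}
\E[e^{iuX_t}\,|\,\mathcal{F}_s]=e^{iuX_s}\,\E[e^{iu(X_t-X_s)}]=e^{iuX_s}\,e^{\psi_t(u)-\psi_s(u)},
\end{equation*}
so $\exp(iuX_t-\psi_t(u))$ is a martingale. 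An integration by parts (using that $\psi_\cdot(u)$ is deterministic, continuous and of bounded variation) rewrites this as the local-martingale property of $e^{iuX_t}-\int_0^te^{iuX_{s-}}d\psi_s(u)$. Comparing with the characterization recalled above and invoking \emph{uniqueness} of the predictable triplet identifies $\Psi_\cdot(u)=\psi_\cdot(u)$ as a deterministic function, for every $u$. Applying this to a countable dense family of $u$'s and using joint right-continuity of $B,C,\nu$ in $t$, one recovers $B$, $C$ and $\nu$ themselves as deterministic objects by a L\'evy--Khintchine inversion (the real part isolates $C$ and the symmetrization of $\nu$, the imaginary part isolates $B$ and the antisymmetric part of $\nu$).

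\textbf{Main obstacle.} The genuinely delicate step is the necessity direction: justifying the non-vanishing of $\phi_t(u)$ in order to form a continuous logarithm $\psi_t(u)$ (where continuity in probability is essential), and upgrading the family of identities ``$\Psi_\cdot(u)$ is deterministic for each $u$'' into a single deterministic version of the triplet $(B,C,\nu)$. The latter requires the countable-dense choice of $u$'s plus a measurability/right-continuity argument, followed by a Fourier inversion to disentangle $B$, $C$ and $\nu$ from the single family $\{\Psi_\cdot(u)\}_{u\in\R}$.
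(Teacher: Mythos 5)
The paper does not actually prove this statement: Theorem~\ref{ThmCharaDeter} is a verbatim recall of Theorem~II.4.15 of~\cite{JS03}, so there is no internal proof to compare yours against. Your sketch is, in substance, the standard Jacod--Shiryaev argument built on the exponential characterization of the characteristics (\cite{JS03}, Section~II.2), and the overall architecture --- the local martingale property of $\exp(iuX_t-\Psi_t(u))$, deterministic conditional characteristic functions in one direction, uniqueness of the canonical decomposition of the bounded special semimartingale $e^{iuX}$ in the other --- is the right one.

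Two points deserve attention. First, in the sufficiency direction you justify the continuity of $t\mapsto\Psi_t(u)$ by appealing to the continuity in probability of $X$; but at that stage you only know that $X$ is a special semimartingale with a deterministic triplet, not yet that it is a PII, so this is circular. A deterministic triplet may well have fixed jumps ($\nu(\{t\}\times\mathbb{R})>0$), in which case the correct normalizing object is the Dol\'eans-Dade exponential $\mathcal{E}(\Psi(u))$ rather than $e^{\Psi(u)}$, and one must handle its possible zeros --- this is where \cite{JS03} spend real effort. Under the paper's Definition~\ref{defPAI}, which imposes continuity in probability, only the quasi-left-continuous case is ever used, but then the ``if'' direction should explicitly assume (or first derive) that the deterministic version has no fixed jumps. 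Second, the step you label ``Cram\'er--Wold'' is not Cram\'er--Wold: the fact you need is that if $\E[e^{iu(X_t-X_s)}\mid\mathcal{F}_s]$ equals a deterministic function of $u$ almost surely for each $u$, then the increment is independent of $\mathcal{F}_s$; this requires passing to a countable dense set of $u$'s and using continuity of characteristic functions, exactly parallel to the countable-dense argument you invoke in the necessity direction. The remaining gestured steps (non-vanishing of $\varphi_t(u)$ via continuity in probability, recovery of the deterministic $(b,c,\nu)$ from the family $\{\Psi_\cdot(u)\}_{u}$ by inversion) are standard but constitute precisely the technical content of the cited theorem, so as written your proposal is a correct roadmap rather than a complete proof.
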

\begin{remarque} \label{RemCharaDeter}
In particular, $\nu$ is a (deterministic non-negative) measure on the Borel
$\sigma$-field of $[0,T] \times \mathbb{R}$. 
\end{remarque}
From now on, given two reals $a,b$, we  denote by $a\vee b$
 (resp. $a \wedge b$)  
%for all $a,b \in \mathbb{R}$ 
the maximum (resp. minimum) between $a$ and $b$.

\begin{propo}\label{propoJSA}
Suppose $X$ is a semimartingale with independent increments  with
characteristics $(b,c,\nu)$, then there exists an increasing
function
 $t\mapsto a_t$ such that
\begin{eqnarray}
db_t\ll da_t\ ,\quad
dc_t\ll da_t \quad \textrm{and}\quad 
\nu(dt,dx)=\widetilde{F}_t(dx)da_t\ ,
\label{170}\end{eqnarray}
where $\widetilde{F}_t(dx)$ is a non-negative kernel from $\big([0,T],
\mathcal{B}([0,T])\big )$ into $(\mathbb{R} ,\mathcal{B} )$
verifying
\begin{equation} \label{E1000}
\int_\R  (\vert x \vert^2 \wedge 1) \tilde F_t(dx) \le 1\,, \quad \forall t
\in [0,T].
\end{equation}
% For instance, one can take
and
\begin{eqnarray}a_t=||b||_t+c_t+\int_{\mathbb{R} } (|x|^2\wedge 1)\nu([0,t],dx)\ ,\label{171}\end{eqnarray}
\end{propo}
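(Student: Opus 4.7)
The plan is to define $a_t$ by formula~(\ref{171}) and then verify the three required conditions in turn. Since $X$ is a PII, Theorem~\ref{ThmCharaDeter} allows us to pick a deterministic version of $(b,c,\nu)$; then $a_t$ is a deterministic nondecreasing function of $t$ with $a_T<\infty$, because $b$ has finite variation, $c$ is finite-valued and any semimartingale compensator $\nu$ satisfies $\int_{\mathbb{R}}(|x|^2\wedge 1)\nu([0,T],dx)<\infty$.

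For the first two claims I would simply exploit the additive structure of~(\ref{171}). The identity $a_t-\|b\|_t=c_t+\int_{\mathbb{R}}(|x|^2\wedge 1)\nu([0,t],dx)$ is nondecreasing in $t$, so as Stieltjes measures on $[0,T]$ we have $d\|b\|\le da$. The absolute continuity $db\ll da$ then follows from $|db|=d\|b\|\le da$. The same manipulation applied to $a_t-c_t=\|b\|_t+\int_{\mathbb{R}}(|x|^2\wedge 1)\nu([0,t],dx)$ gives $dc\le da$, hence $dc\ll da$.

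The heart of the argument is the kernel representation of $\nu$. I would introduce the finite nonnegative Borel measure $\mu(dt,dx):=(|x|^2\wedge 1)\nu(dt,dx)$ on $[0,T]\times\mathbb{R}$; its first marginal is $m(dt):=\int_{\mathbb{R}}(|x|^2\wedge 1)\nu(dt,dx)$. By construction $m([0,t])\le a_t$, and the difference $a_t-m([0,t])=\|b\|_t+c_t$ is again nondecreasing, so $m\le da$ as measures, hence $m\ll da$ with Radon--Nikodym density $h:=dm/da$ satisfying $h\le 1$ for $da$-almost every $t$. Since $\mathbb{R}$ is Polish, the disintegration theorem provides a measurable probability kernel $(\rho_t)_{t\in[0,T]}$ from $([0,T],\mathcal{B}([0,T]))$ to $(\mathbb{R},\mathcal{B})$ such that $\mu(dt,dx)=\rho_t(dx)\,m(dt)$. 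Because $\nu$ places no mass on $\mathbb{R}\times\{0\}$, we may choose the disintegration so that $\rho_t(\{0\})=0$ for every $t$, and then set
$$
\widetilde F_t(dx):=\frac{h(t)}{|x|^2\wedge 1}\,\rho_t(dx)\quad\text{on }\mathbb{R}\setminus\{0\},\qquad \widetilde F_t(\{0\}):=0.
$$
A direct substitution yields $\nu(dt,dx)=\widetilde F_t(dx)\,da_t$, while $\int_{\mathbb{R}}(|x|^2\wedge 1)\widetilde F_t(dx)=h(t)\,\rho_t(\mathbb{R})=h(t)$, which is $\le 1$ for $da$-almost every $t$.

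To upgrade this $da$-a.e.\ bound to the pointwise inequality~(\ref{E1000}) required for every $t\in[0,T]$, I would redefine $\widetilde F_t$ to be the zero measure on the $da$-null set $N:=\{t:\,h(t)>1\}$; since $da(N)=0$, this modification does not alter the identity $\nu(dt,dx)=\widetilde F_t(dx)\,da_t$. The only nontrivial step in the whole argument is the measurable disintegration of $\mu$ with respect to its first marginal: this is standard for finite Borel measures on a product with a Polish second factor, but it must be invoked carefully to guarantee that $t\mapsto\widetilde F_t(A)$ is Borel measurable for every $A\in\mathcal{B}(\mathbb{R})$. Everything else reduces to elementary Radon--Nikodym comparisons based on the three dominations $d\|b\|\le da$, $dc\le da$ and $dm\le da$.
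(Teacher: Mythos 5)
Your argument is correct, but it takes a more self-contained route than the paper, whose entire proof consists of citing Proposition II.2.9 of Jacod--Shiryaev \cite{JS03} --- that proposition is exactly the assertion that the characteristics of any semimartingale are dominated by the increasing process \eqref{171} with a kernel satisfying \eqref{E1000} --- and then observing via Theorem \ref{ThmCharaDeter} that for a PII the version of $(b,c,\nu)$, and hence $a$, may be taken deterministic. What you have done is reconstruct the proof of the cited result in this deterministic setting: the dominations $d\|b\|\le da$ and $dc\le da$ follow from monotonicity of the corresponding differences, and the kernel is obtained by disintegrating the finite measure $(|x|^2\wedge 1)\nu(dt,dx)$ over its first marginal $m$, dividing by $|x|^2\wedge 1$, and inserting the density $h=dm/da\le 1$. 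This is sound, and the two delicate points you flag yourself --- Borel measurability of $t\mapsto\widetilde F_t(A)$, which the disintegration theorem for finite Borel measures on $[0,T]\times\mathbb{R}$ does provide, and the modification of $\widetilde F$ on the $da$-null set $\{h>1\}$ (and on the $m$-null set where $\rho_t$ is unspecified) needed to upgrade the $da$-a.e.\ bound to the everywhere bound \eqref{E1000} --- are precisely where care is required, and you handle them correctly. The paper's citation buys brevity; your construction buys an explicit identification of the density $h$ and of the kernel $\widetilde F_t$, which is closer to what is actually used downstream (e.g.\ in Proposition \ref{corroR1}). The only ingredient worth stating explicitly rather than in passing is the finiteness of $\int_{\mathbb{R}}(|x|^2\wedge 1)\nu([0,T],dx)$, which is a standard property of the compensator of the jump measure and, for a PII, is deterministic.
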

\begin{proof}
%[\textbf{Proof of Proposition \ref{propoJSA}}]
The existence of $(a_t)$ as a process fulfilling $(\ref{171})$ 
and $\tilde F$ fulfilling \eqref{E1000} 
is provided by the statement and the proof of  Proposition II.
2.9 of~\cite{JS03}. $(\ref{171})$ and Theorem \ref{ThmCharaDeter} guarantee that $(a_t)$ is deterministic.
\end{proof}
\begin{remarque}\label{remarque16}
In particular, $(b_t)$, $(c_t)$ and 
$t\mapsto \int_{[0,t] \times B} (\vert x \vert^2 \wedge 1) \nu(ds,dx)  $ 
has bounded variation for any $B \in \mathcal{B} $.
\end{remarque}

The proposition below provides the so called
{\bf Lévy-Khinchine Decomposition}.
\begin{propo}\label{LKdeco}
Assume that $(X_t)_{t \in [0,T]}$ is a process with independent increments.
Then
\begin{eqnarray}\varphi_t (u)= e^{\Psi_t(u)}\ ,
%e^{\kappa_t(iu)}  
\quad \textrm{for all} \ u \in \R
\ ,  \end{eqnarray}
%where the log-characterictic function,
 $\Psi_t$, is given by the 
Lévy-Khinchine decomposition of the process $X$, 
\begin{eqnarray}
\Psi_t(u)=iu b_t-\frac{u^2}{2}c_t+\int_{\mathbb{R}} 
(e^{iu x}-1-iu h(x))F_t(dx)\ ,\quad \textrm{for all}\ u\in\mathbb R 
\label{LKFormule}\ ,
\end{eqnarray}
where $B\mapsto F_t(B)$ is the positive measure $\nu([0,t]\times B)$ 
which integrates $1 \wedge |x|^2$ for any $t \in [0,T]$.
\end{propo}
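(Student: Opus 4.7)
The plan is to apply It\^o's formula for semimartingales with jumps to the complex exponential $e^{iuX_t}$, take expectations, and derive a linear integral equation for $\varphi_t(u)$ whose unique solution is $e^{\Psi_t(u)}$. By Theorem~\ref{ThmCharaDeter} and Proposition~\ref{propoJSA}, $X$ is a semimartingale with deterministic characteristics $(b,c,\nu)$, and its canonical decomposition with respect to the truncation function $h$ reads
$$X_t = b_t + X_t^c + h \star (\mu^X - \nu)_t + (x-h(x)) \star \mu^X_t,$$
where $\mu^X$ denotes the jump measure of $X$ and $X^c$ is the continuous local-martingale part with $\langle X^c \rangle = c$.

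Substituting this into It\^o's formula (e.g.\ Theorem~I.4.57 of~\cite{JS03}) applied to $f(y) = e^{iuy}$, the $iu(x-h(x))$ contributions coming from the pure-jump drift $(x-h(x)) \star \mu^X$ and from the jump-sum residual $f(X_s)-f(X_{s-})-f'(X_{s-})\Delta X_s$ cancel, leaving $e^{iuX_t}-1$ as the sum of three drift integrals (against $db_s$, $dc_s$, and $(e^{iux}-1-iuh(x))\,\nu(ds,dx)$) pre-multiplied by $e^{iuX_{s-}}$, plus two stochastic integrals against $dX^c$ and against $(\mu^X-\nu)(ds,dx)$. Since $|e^{iuX_{s-}}|=1$ and $|e^{iux}-1-iuh(x)|\le C(u)(1\wedge|x|^2)$ is $\nu$-integrable by~\eqref{E1000}, the two stochastic integrals are genuine zero-mean martingales; taking expectations and invoking Fubini then yields
$$\varphi_t(u) = 1 + \int_0^t \varphi_{s-}(u)\, d\Psi_s(u), \qquad d\Psi_s(u) := iu\, db_s - \frac{u^2}{2}\, dc_s + \int_{\R}(e^{iux} - 1 - iu h(x))\,\nu(ds,dx).$$

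Finally, since $X$ is continuous in probability (item~4 of Definition~\ref{defPAI}), it has no fixed time of discontinuity, so $t\mapsto b_t$, $t\mapsto c_t$, and $t\mapsto F_t(B)$ are continuous; hence $\Psi_{\cdot}(u)$ is a continuous $\mathbb{C}$-valued function of bounded variation, $\varphi_{s-}(u)=\varphi_s(u)$, and the linear equation above admits the unique solution $\varphi_t(u)=\exp(\Psi_t(u))$. I expect the main obstacle to be the integrability/interchange-of-expectation step for the jump integral, which is handled precisely by the quadratic bound on $e^{iux}-1-iuh(x)$ combined with~\eqref{E1000}; continuity in probability then ensures that the Dol\'eans-Dade exponential solving the linear equation collapses to the classical exponential.
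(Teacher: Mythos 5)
Your proof is correct. Note, however, that the paper itself offers no proof of Proposition~\ref{LKdeco}: it is stated as the classical L\'evy--Khinchine representation for PII semimartingales and is in effect quoted from Theorem~II.4.15--4.16 of~\cite{JS03}. What you have written is precisely the standard derivation behind that citation: canonical representation $X = b + X^c + h\star(\mu^X-\nu) + (x-h(x))\star\mu^X$, It\^o's formula for $e^{iuX}$, recombination of the jump terms into $(e^{iux}-1-iuh(x))\star\mu^X$, compensation, expectation, and resolution of the resulting linear equation $\varphi_t(u)=1+\int_0^t\varphi_{s-}(u)\,d\Psi_s(u)$, with continuity in probability forcing the Dol\'eans-Dade exponential to reduce to $e^{\Psi_t(u)}$ -- this last step is exactly where item~4 of Definition~\ref{defPAI} enters, and you identify it correctly. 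Two points deserve a word of care. First, your argument (like the statement itself, which refers to the characteristics $(b,c,\nu)$) presupposes that the PII is a semimartingale; this is consistent with the paper's framework (Proposition~\ref{propoJSA} makes the same assumption, and Theorem~\ref{th:XVT} shows a PII need not be one), but it should be said explicitly since the proposition's hypothesis mentions only independent increments. Second, to pass from local to genuine martingales for the compensated-jump integral you should invoke the deterministic bound $(1\wedge x^2)\star\nu_T\le a_T<\infty$ from \eqref{E1000}--\eqref{171}: it gives $|W|\star\nu_T$ bounded for $W=e^{iuX_{s-}}(e^{iux}-1-iuh(x))$, hence $W\star\mu^X$ and $W\star\nu$ are integrable and their difference is a true zero-mean martingale; the bound $|e^{iux}-1-iuh(x)|\le C(u)(1\wedge x^2)$ you quote is exactly what is needed, so this is a matter of stating the justification rather than a gap. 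With these remarks your proof is a complete, self-contained substitute for the external citation.
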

We introduce here a simplifying hypothesis for this section.
\begin{Hyp} \label{HPAIND} For any $t > 0$, $X_t$ is never deterministic.
\end{Hyp}
\begin{remarque}\label{rem:moment} We suppose Assumption \ref{HPAIND}.
\begin{enumerate}
\item Up to a $2 \pi i$ addition of $\kappa_t(e)$, we can write
$\Psi_t(u) = \kappa_t(iu), \ \forall u \in \R$.
From now on we will always make  use of this modification. 
\item $\varphi_t(u) $ is never a negative number. Otherwise,
there would be $u \in \R^\ast, t > 0$ such that
$E(cos(uX_t)) = -1$. Since $cos(uX_t) +1 \ge 0$ a.s.
then $cos(uX_t) = -1$ a.s. and this is not possible since $X_t$
is non-deterministic.  
\item Previous point implies that all the differentiability properties 
of $u \mapsto \varphi_t(u)$ are equivalent to those of 
 $u \mapsto \Psi_t(u)$.
\item If $\mathbb{E}[|X_t|^2]<\infty$, then for all $u \in \mathbb{R}$, $\Psi^{'}_t(u)$ and $\Psi^{''}_t(u)$ exist.
\end{enumerate}
\end{remarque}
We come back to the cumulant generating function $\kappa$ and its domain $D$.
\begin{remarque} \label{CumGenPAI}
In the case where the underlying process is a PII, then 
\begin{eqnarray*}
 D := \{z \in \mathbb{C}\ \vert \ \mathbb{E}[e^{Re(z)X_t}]<\infty, \  \forall t \in
 [0,T]\} =\{z \in \mathbb{C}\ \vert \ \mathbb{E}[e^{Re(z)X_T}]<\infty\}\ .
\end{eqnarray*}
In fact, for given $t \in [0,T], \gamma \in \mathbb{R} $ we have
$$   \mathbb{E}(e^{\gamma X_T}) = \mathbb{E}(e^{\gamma X_t}) 
\mathbb{E}(e^{\gamma (X_T - X_t)})  < \infty.$$
Since each factor is positive,  and if the left-hand side is finite,
then $\mathbb{E}(e^{\gamma X_t})$ is also 
 finite.
 \end{remarque}

We need now a result which extends the L\'evy-Khinchine decomposition
to the cumulant generating function.
Similarly to Theorem~25.17 of~\cite{Sa99} we have.
\begin{propo}\label{LKdecoKappa}
Let $D_0=\left\{c \in \mathbb{R}\ |\ \int_{[0,T]\times \left\{|x|>1\right\}}e^{cx}\nu(dt,dx)<\infty \right\}$. Then, 
%Then for every $z \in \mathbb{C}$ such that $Re(z)\in D_0$ and $z\in D$, we have
\begin{enumerate}
\item $D_0$ is convex and contains the origin.
\item $D_0= D\cap\mathbb{R}$.
\item If $z\in \mathbb{C}$ such that $Re(z)\in D_0$, i.e. $z\in D$, then
\begin{eqnarray} \label{E3.5bis}
\kappa_t(z)=z b_t+\frac{z^2}{2}c_t+\int_{[0,t]
\times\mathbb{R}} (e^{zx}-1-z h(x))\nu(ds,dx)\ .\end{eqnarray}
\end{enumerate}
\end{propo}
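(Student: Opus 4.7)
For Part~1, $0 \in D_0$ because $\nu([0,T]\times\{|x|>1\}) \le \int(1 \wedge |x|^2)\nu([0,T],dx) < \infty$ by Proposition~\ref{propoJSA}, and convexity of $D_0$ follows from H\"older's inequality applied to $\nu$ restricted to $[0,T]\times\{|x|>1\}$, which gives
\[
\int e^{(\lambda c_1 + (1-\lambda)c_2)x}\,\nu(dt,dx)
\le\left(\int e^{c_1 x}\,\nu(dt,dx)\right)^{\lambda}
\left(\int e^{c_2 x}\,\nu(dt,dx)\right)^{1-\lambda}
\]
for $c_1,c_2 \in D_0$ and $\lambda \in [0,1]$.

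The heart of the proof is Part~3, from which Part~2 then drops out. Let $\psi_t(z)$ denote the right-hand side of \eqref{E3.5bis}. I would first verify that its integral is absolutely convergent whenever $Re(z) \in D_0$: near $x=0$, the bound $|e^{zx}-1-zh(x)| = O(|z|^2 x^2)$ combines with the $x^2$-integrability of $\nu$ near the origin, while on $|x|>1$ the crude majorant $e^{Re(z)x}+1+|z|\|h\|_\infty$ combines with $Re(z) \in D_0$ and the finiteness of $\nu([0,T]\times\{|x|>1\})$. Uniform versions of these estimates on compact subsets of the open strip $\Sigma := \{z \in \mathbb{C}\,:\,Re(z) \in \mathrm{int}(D_0)\}$ justify differentiation under the integral sign, so $\psi_t$ is holomorphic on $\Sigma$ and continuous up to its closure. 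On the imaginary axis one has $\psi_t(iu) = \Psi_t(u) = \kappa_t(iu)$ by Proposition~\ref{LKdeco} and Remark~\ref{rem:moment}.

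To establish $\kappa_t = \psi_t$ on the real set $D_0$, which will also give $D_0 \subseteq D \cap \mathbb{R}$, I would invoke the L\'evy--It\^o decomposition available for the PII semimartingale $X$ (Theorem~II.4.15 of~\cite{JS03} combined with Proposition~\ref{propoJSA}):
\[
X_t = b^{(1)}_t + X_t^c + \int_0^t\!\!\int_{|x|\le 1} x\,(\mu-\nu)(ds,dx) + \int_0^t\!\!\int_{|x|>1} x\,\mu(ds,dx),
\]
where $b^{(1)}_t$ is the drift obtained after re-centering with the indicator truncation, $X^c$ is the Gaussian martingale part satisfying $\langle X^c\rangle = c$, and $\mu$ is the jump measure of $X$. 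Because the characteristics are deterministic, $\mu$ is an $(\mathcal{F}_t)$-Poisson random measure with intensity $\nu$, which makes the four summands mutually independent. Computing the exponential moment of each summand (Gaussian factor, exponential formula for compensated Poisson integrals with bounded jumps, Laplace functional for the large-jumps Poisson integral) and multiplying gives $\mathbb{E}[e^{cX_t}] = e^{\psi_t(c)} < \infty$ for every $c \in D_0$. Analyticity of $z \mapsto \mathbb{E}[e^{zX_t}]$ on $\Sigma$, justified by dominated convergence with the majorant $e^{\alpha X_t^+} + e^{-\alpha X_t^-}$ for $\alpha$ in a real neighborhood of $Re(z)$ contained in $D_0$, combined with the identity theorem, extends $\kappa_t = \psi_t$ to all of $\Sigma$ and then to its closure by continuity. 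The reverse inclusion $D \cap \mathbb{R} \subseteq D_0$ comes from the same factorization: finiteness of $\mathbb{E}[e^{cX_T}]$ forces finiteness of the strictly positive large-jumps factor $\exp(\int_{[0,T]\times\{|x|>1\}}(e^{cx}-1)\nu)$, hence $c \in D_0$.

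The main obstacle is justifying the independence of the four summands in the genuinely non-stationary PII setting, which reduces to confirming that $\mu$ has Poisson structure with deterministic intensity $\nu$; here one cannot simply copy the L\'evy case of Sato's Theorem~25.17 but must appeal to the characterization of PIIs via deterministic characteristics in Theorem~\ref{ThmCharaDeter}. Once this structural ingredient is in place, the exponential-moment computations on the real line and the analytic-continuation argument on the strip are routine.
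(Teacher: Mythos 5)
Your argument is sound in outline, but it takes a genuinely different and substantially heavier route than the paper. The paper's proof is a purely distributional reduction to the L\'evy case at each \emph{fixed} time: since the law of $X_t$ has the L\'evy--Khinchine form with triplet $(b_t,c_t,F_t)$, it is infinitely divisible, so by Corollary~II.11.6 of~\cite{Sa99} there is a L\'evy process $L^t$ with $L^t_1\overset{d}{=}X_t$, and all three assertions then follow by applying Theorem~25.17 of~\cite{Sa99} to $L^t$ (together with Remark~\ref{CumGenPAI} to identify the domains at different times with those at time $T$). In particular, your remark that ``one cannot simply copy the L\'evy case of Sato's Theorem~25.17'' is exactly what the paper circumvents: one can, because only the marginal law of $X_t$ matters for $\kappa_t$, not the path structure. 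Your route instead goes through the pathwise L\'evy--It\^o decomposition of the PII, the Poisson structure of the jump measure with deterministic intensity $\nu$, mutual independence of the four summands, factorization of the exponential moment, and analytic continuation in the strip. This is correct --- the required structural input (the jump measure of a quasi-left-continuous PII semimartingale is a Poisson random measure, and the Gaussian part is independent of it) is Theorem~II.4.8 of~\cite{JS03} and the L\'evy--It\^o theorem for additive processes in Chapter~4 of~\cite{Sa99} --- but that input is at least as deep as the theorem the paper cites, and you correctly identify it as the main burden. What your approach buys is a self-contained derivation of the finiteness of $\mathbb{E}[e^{cX_t}]$ for $c\in D_0$ and of the reverse inclusion by factorization, plus pathwise information; what the paper's approach buys is brevity and the complete avoidance of any path decomposition. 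One small imprecision on your side: the dominating function for the analyticity argument should be $e^{\alpha_1 X_t}+e^{\alpha_2 X_t}$ with $[\alpha_1,\alpha_2]\subset D_0$ a compact neighbourhood of $Re(z)$, since $D_0$ need not be symmetric about the origin; and recall that equality of $\mathbb{E}[e^{zX_t}]$ with $e^{\psi_t(z)}$ identifies $\kappa_t(z)$ with $\psi_t(z)$ only modulo $2\pi i$, consistently with the convention of Remark~\ref{rem:moment}.
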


\begin{proof}
%[\textbf{Proof of Proposition \ref{LKdecoKappa}}]
\begin{enumerate}
\item is a consequence of Hölder inequality similarly as i) in Theorem 25.17 of~\cite{Sa99}\ .
\item The characteristic function of the law of $X_t$ is given by~(\ref{LKFormule}). According to Theorem~II.8.1~(iii) of Sato~\cite{Sa99}, there is an
infinitely divisible distribution with characteristics 
$(b_t,c_t, F_t(dx))$, fulfilling 
$F_t(\{0\}) = 0$ and $\int (1 \wedge x^2) F_t(dx)< \infty$ and $c_t \ge
0$.
By uniqueness of the characteristic function, that law is precisely the
law of $X_t$. By Corollary~II.11.6, in~\cite{Sa99}, there is a L\'evy process
$(L^t_s, 0 \le s \le 1)$ such that $L^t_1$ and $X_t$ are identically 
distributed. We define
$$
C_0^t = \{ c\in \mathbb{R} \ \vert \ 
\int_{\{\vert x \vert > 1\}} e^{cx} F_t(dx) < \infty \} \quad\textrm{and}\quad
C^t = \{z \in \mathbb C \ \vert\  \mathbb{E}\left[\exp (Re (z L_1^t)\right ] < \infty \}  \ .
$$
Remark \ref{CumGenPAI} says that $C^T = D$, moreover clearly
$C_0^T = D_0$.
Theorem V.25.17 of~\cite{Sa99} implies $D_0 = D \cap \mathbb{R}$,
i.e. point 2. is established.
\item Let $t \in [0,T]$ be fixed; let $w \in D$.
We apply point (iii) of Theorem V.25.17 of~\cite{Sa99} 
to the L\'evy process $L^t$.
\end{enumerate}
\end{proof}

\begin{propo}\label{corroR1}
Let $X$ be a semimartingale with independent increments.
For all $z \in D$, 
%$0\leq Re(z)\leq R$,
 $t\mapsto \kappa_t(z)$  has bounded variation and
\begin{eqnarray}\kappa_{dt}(z)\ll da_t\ .\end{eqnarray}
\end{propo}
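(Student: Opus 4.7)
The plan is to read off the claim from the explicit L\'evy--Khinchine formula for $\kappa_t(z)$ provided by Proposition \ref{LKdecoKappa}, combined with the disintegration of the characteristics $(b,c,\nu)$ against the reference function $a$ given by Proposition \ref{propoJSA}. Since $t \mapsto \kappa_t(z)$ is continuous (indeed $X$ has no fixed discontinuities), it is enough to produce, for each $z \in D$, a locally integrable density against $da_t$.

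Concretely, by Proposition \ref{LKdecoKappa}, for any $z \in D$,
\begin{equation*}
\kappa_t(z) = z b_t + \frac{z^2}{2} c_t + \int_{[0,t]\times \mathbb{R}} \bigl(e^{zx}-1-zh(x)\bigr)\,\nu(ds,dx).
\end{equation*}
By Proposition \ref{propoJSA}, $db_t = \beta_t\,da_t$, $dc_t = \gamma_t\,da_t$ and $\nu(ds,dx) = \widetilde F_s(dx)\,da_s$. The first two terms are therefore already of the desired form. For the jump term, I would apply Fubini to rewrite it as $\int_0^t g_z(s)\,da_s$, where
\begin{equation*}
g_z(s) := \int_{\mathbb{R}} \bigl(e^{zx}-1-zh(x)\bigr)\widetilde F_s(dx).
\end{equation*}
The three absolutely continuous pieces then give both the bounded variation of $t\mapsto \kappa_t(z)$ and the absolute continuity $\kappa_{dt}(z) \ll da_t$.

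The main (and only non-routine) obstacle is to justify that $g_z$ is well-defined and that $\int_0^T |g_z(s)|\,da_s < \infty$. I would split the integral at $|x|=1$. On $\{|x|\le 1\}$, since $h(x)=x$ near $0$ a Taylor bound yields $|e^{zx}-1-zh(x)|\le C_z |x|^2$, so \eqref{E1000} gives $\int_{|x|\le 1}|e^{zx}-1-zh(x)|\widetilde F_s(dx)\le C_z$, uniformly in $s$. On $\{|x|>1\}$, use the rough bound $|e^{zx}-1-zh(x)|\le e^{\mathrm{Re}(z)\,x}+1+|z|\,\|h\|_\infty$ together with $\int_{|x|>1}\widetilde F_s(dx)\le 1$ from \eqref{E1000}; integrating in $s$ against $da_s$,
\begin{equation*}
\int_0^T\!\!\int_{|x|>1} e^{\mathrm{Re}(z)x}\,\widetilde F_s(dx)\,da_s = \int_{[0,T]\times\{|x|>1\}} e^{\mathrm{Re}(z)x}\,\nu(ds,dx) < \infty
\end{equation*}
by point 2 of Proposition \ref{LKdecoKappa} (since $z\in D$ means $\mathrm{Re}(z)\in D_0$). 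Summing the two pieces yields the required finiteness, which concludes both assertions.
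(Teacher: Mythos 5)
Your proposal is correct and follows essentially the same route as the paper: both start from the representation \eqref{E3.5bis} together with the disintegration $\nu(ds,dx)=\widetilde F_s(dx)\,da_s$ of Proposition \ref{propoJSA}, and both establish integrability of the jump term by splitting at $|x|=1$, controlling the region $|x|\le 1$ via a quadratic bound and \eqref{E1000}, and the region $|x|>1$ via the condition $\mathrm{Re}(z)\in D_0$ plus \eqref{E1000}. The only cosmetic difference is that the paper explicitly normalizes $h(x)=x\mathbf{1}_{|x|\le 1}$ before estimating, which you should state as well since $h$ equals $x$ only in a neighbourhood of the origin.
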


\begin{proof}

Using (\ref{E3.5bis}), it remains to prove that
\begin{eqnarray*}t\mapsto \int_{[0,T]\times \R}(e^{zx}-1-zh(x))\nu(ds,dx)\end{eqnarray*}
is absolutely continuous with respect to $(da_t)$. 
%Without restriction of generality we can suppose $h(x)=x1_{|x|\leq 1}$.
We can conclude
 \begin{eqnarray*}\kappa_t(z) = 
  \int_0^t \frac{db_s}{da_s} da_s +\frac{z^2}{2}
 \int_0^t \frac{dc_s}{da_s} da_s  \nonumber +
 \int_0^tda_s\int_{\mathbb{R}}\left(e^{zx}-1-zh(x)\right)\widetilde{F}_s(dx)\ ,
\end{eqnarray*}
if we show that
\begin{eqnarray}\int_0^T da_s\int_{\R}|e^{zx}-1-zh(x)|\widetilde{F}_s(dx)<\infty\ . \label{A131}\end{eqnarray}
Without restriction of generality we can suppose 
$h(x)=x1_{|x|\leq 1}$.
(\ref{A131}) can be bounded by the sum $I_1+I_2+I_3$ where
%
%\begin{eqnarray*}I_1=\int_0^Tda_s\int_{|x|>1}|e^{zx}|\widetilde{F}_s(dx)\end{eqnarray*}
%\begin{eqnarray*}I_2=\int_0^Tda_s\int_{|x|>1}\widetilde{F}_s(dx)\end{eqnarray*}
%\begin{eqnarray*}I_3=\int_0^Tda_s\int_{|x|\leq1}|e^{zx}-1-zx|\widetilde{F}_s(dx)\end{eqnarray*}
$$
I_1=\int_0^Tda_s\int_{|x|>1}|e^{zx}|\widetilde{F}_s(dx)\ ,\quad 
I_2=\int_0^Tda_s\int_{|x|>1}\widetilde{F}_s(dx)\ ,\quad\textrm{and}\quad 
I_3=\int_0^Tda_s\int_{|x|\leq1}|e^{zx}-1-zx|\widetilde{F}_s(dx)\ .
$$
Using Proposition \ref{propoJSA}, we have
\begin{eqnarray*}
I_1&=&\int_0^Tda_s\int_{|x|>1}|e^{zx}|\widetilde{F}_s(dx)\,\\
&=&\int_0^Tda_s\int_{|x|>1}|e^{Re(z)x}|\widetilde{F}_s(dx)\,\\
&=&\int_{[0,T]\times |x|>1}|e^{Re(z)x}|\nu(ds,dx);
\end{eqnarray*}
this quantity is finite because $Re(z)\in D_0$ taking into account
 Proposition \ref{LKdecoKappa}. Concerning $I_2$ we have
\begin{eqnarray*}
I_2&=&\int_0^Tda_s\int_{|x|>1}\widetilde{F}_s(dx)\,\\
&=&\int_0^Tda_s\int_{|x|>1}(1\wedge |x^2|)\widetilde{F}_s(dx)\,\\
&\leq&a_T
\end{eqnarray*}
because of \eqref{E1000}. As far as $I_3$ is concerned, we have
\begin{eqnarray*}
I_3&\leq&e^{Re(z)}\frac{z^2}{2}\int_{[0,T]\times |x|\leq 1}da_s(x^2\wedge 1)\widetilde{F}_s(dx)\,\\
&=&e^{Re(z)}\frac{z^2}{2}a_T
\end{eqnarray*}
again because of \eqref{E1000}. This concludes the proof the Proposition.

\end{proof}

The converse of the first part of previous corollary also holds.
For this purpose we formulate first a simple remark.

\begin{remarque}\label{R3}
For every $z \in D$, $\left(\exp(zX_t-\kappa_t(z))\right)$ is a martingale.
In fact, for  all $0\leq s\leq t\leq T$, we have
\begin{eqnarray}\mathbb{E}[\exp(z(X_t-X_s))]=\exp(\kappa_t(z)-\kappa_s(z))\ .
\label{F}
\end{eqnarray}
\end{remarque}

\begin{propo} Let $X$ be a PII. 
Let $z \in D \cap \mathbb{R^\star}$.
$(X_t)_{t \in [0,T]}$ is a semimartingale iff 
$t \mapsto \kappa_t(z)$ has bounded variation.
\end{propo}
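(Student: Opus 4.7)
The direct implication follows immediately from Proposition~\ref{corroR1}: if $X$ is a semimartingale with independent increments, then for every $z\in D$ the map $t\mapsto \kappa_t(z)$ has bounded variation, in particular for our fixed $z\in D\cap\R^\star$.

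For the converse, my plan is to proceed in two steps. First, I would show that $U_t:=e^{zX_t}$ is itself a semimartingale. Since $z\in\R^\star$, Remark~\ref{remarkR2} gives $\kappa_t(z)\in\R$, and the bounded variation of $\kappa_\cdot(z)$ on $[0,T]$ together with the local Lipschitzness of the exponential implies that the deterministic function $t\mapsto e^{\kappa_t(z)}$ is also of bounded variation (and admits a c\`adl\`ag version). By Remark~\ref{R3}, the process $Y_t:=\exp(zX_t-\kappa_t(z))$ is a martingale. The identity
$$
U_t = e^{\kappa_t(z)}\,Y_t
$$
then exhibits $U$ as the product of a deterministic c\`adl\`ag bounded-variation function and a martingale, hence as a semimartingale.

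The second step is to deduce the semimartingale property of $X$ itself from that of $U$ by It\^o's formula applied to $\phi(x)=(1/z)\log x$. Since $X$ has c\`adl\`ag paths and $U=e^{zX}>0$, for every $\omega$ both $t\mapsto U_t(\omega)$ and $t\mapsto U_{t-}(\omega)$ remain in a compact subset of $(0,\infty)$ on $[0,T]$. The localizing stopping times
$$
\tau_n := \inf\bigl\{t\in[0,T]\ :\ U_t\notin[1/n,n]\ \text{ or }\ U_{t-}\notin[1/n,n]\bigr\}\wedge T
$$
then satisfy $\tau_n=T$ for $n$ large enough, pathwise. Replacing $\phi$ outside $[1/n,n]$ by a $C^2$ extension to all of $\R$ (which agrees with $(1/z)\log$ on $[1/n,n]$), It\^o's formula yields that $X^{\tau_n}=\phi(U^{\tau_n})$ is a semimartingale for every $n$; since semimartingality is a local property, $X$ is itself a semimartingale.

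The hard part is exactly this passage from $U$ to $X$: the logarithm is only $C^2$ on $(0,\infty)$, so some localization is unavoidable. The delicate ingredient is that $U$ may jump out of $[1/n,n]$ at time $\tau_n$, which is why I include $U_{t-}$ in the definition of $\tau_n$; strict positivity of $e^{zX}$ combined with pathwise boundedness of c\`adl\`ag processes on compact intervals is what ultimately ensures that the $\tau_n$'s exhaust $[0,T]$ pathwise and that the argument closes.
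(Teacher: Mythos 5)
Your proposal is correct and follows essentially the same route as the paper: write $e^{zX_t}=M_t e^{\kappa_t(z)}$ via Remark~\ref{R3}, note that bounded variation of $\kappa_\cdot(z)$ passes to $e^{\kappa_\cdot(z)}$, conclude that $e^{zX}$ is a semimartingale, and then take the logarithm. The paper dispatches the last step with the phrase ``taking the logarithm,'' whereas you supply the localization needed because $\log$ is only $C^2$ on $(0,\infty)$; that is a legitimate filling-in of detail, not a different argument.
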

\begin{proof} 
It remains to prove the converse implication.\\
%Suppose that $(X_t)_{t \in [0,T]}$ is a semimartingale. If $f$ is a function  of class $\mathbb{C}^2$ then $f(X_t)$ is a semimartingale.
%Therefore $(e^{-zX_t})$ is a semimartingale. For $z \in \mathbb{C}$, 
%\begin{eqnarray*}e^{-\kappa_t(z)}=e^{-\kappa_t(z)}e^{zX_t}e^{-zX_t}\end{eqnarray*}
%Since $(e^{-\kappa_t(z)}e^{zX_t})$ is a martingale then $(e^{-\kappa_t(z)})$ is a semimartingale. 
%Since it is deterministic, it must be of bounded variation. This implies that $t \mapsto \kappa_t(z)$ has bounded variation.\\
If $t \mapsto  \kappa_t(z)$ has  bounded variation then 
$ t \mapsto e^{\kappa_t(z)})$
 has the same property. Remark \ref{R3} says that
 $e^{z X_t} = M_t e^{\kappa_t (z)}$ where $(M_t)$ is a martingale.
 Finally, $(e^{z X_t})$ is a semimartingale  
and taking the logarithm 
 $(zX_t)$  has the same property.
\end{proof}

\begin{remarque}\label{R10}
Let $z \in D$. If $(X_t)$ is a semimartingale with independent
increments then $(e^{zX_t})$
 is necessarily a special semimartingale
since it is the product of a martingale and a
 bounded variation continuous deterministic function, by use
of integration by parts.
\end{remarque}

\begin{lemme}\label{P1}
Suppose that $(X_t)$ is a semimartingale 
with independent increments. 
Then for every $z \in Int(D)$, $t\mapsto \kappa_t(z)$ is continuous.
\end{lemme}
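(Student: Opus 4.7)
The plan is to use the explicit L\'evy-Khinchine-type representation (\ref{E3.5bis}) of $\kappa_t(z)$ provided by Proposition~\ref{LKdecoKappa} and to establish continuity of each of its three components. For $z\in Int(D)$,
$$\kappa_t(z)=zb_t+\frac{z^2}{2}c_t+\int_{[0,t]\times\mathbb{R}}(e^{zx}-1-zh(x))\nu(ds,dx),$$
so it suffices to show that $t\mapsto b_t$, $t\mapsto c_t$, and the last integral depend continuously on $t$.

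The key ingredient I would invoke is the classical characterization: a PII-semimartingale which is continuous in probability (property~4 of Definition~\ref{defPAI}) has deterministic characteristics $(b,c,\nu)$ such that $t\mapsto b_t$ and $t\mapsto c_t$ are continuous on $[0,T]$ and $\nu(\{t\}\times\mathbb{R})=0$ for every $t\in[0,T]$; this is standard and can be found in Chapter~II of~\cite{JS03}. This at once gives the continuity of $zb_t$ and $\frac{z^2}{2}c_t$.

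For the integral term, the proof of Proposition~\ref{corroR1} already contains the key estimate
$$\int_{[0,T]\times\mathbb{R}}|e^{zx}-1-zh(x)|\,\nu(ds,dx)<\infty,$$
so $(e^{zx}-1-zh(x))$ is $\nu$-integrable on $[0,T]\times\mathbb{R}$. Combined with the absence of time atoms of $\nu$, dominated convergence applied to the indicators $\mathbf{1}_{[0,s]}(u)\to\mathbf{1}_{[0,t]}(u)$ as $s\to t$ yields continuity of $t\mapsto\int_{[0,t]\times\mathbb{R}}(e^{zx}-1-zh(x))\nu(ds,dx)$.

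The main obstacle is the first step: translating the probabilistic continuity of $X$ into continuity of the deterministic characteristics. Once this classical fact is granted, the conclusion reduces to termwise dominated convergence. As a byproduct, the same argument actually delivers continuity of $t\mapsto\kappa_t(z)$ for every $z\in D$, so the restriction to $Int(D)$ is not essential for this particular statement; an alternative route would be to combine Proposition~\ref{corroR1} (which gives $\kappa_{dt}(z)\ll da_t$) with continuity of $a_t$, which is again a consequence of the continuity of the characteristics.
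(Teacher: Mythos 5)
Your proof is correct, but it is genuinely different from the one in the paper. The paper's argument is probabilistic and stays entirely on the random-variable side: since $z\in Int(D)$ there is $\gamma>1$ with $\gamma z\in D$, so $\sup_{t\le T}\mathbb{E}[\exp(\gamma\,Re(z)X_t)]=\sup_{t\le T}\exp(\kappa_t(\gamma Re(z)))<\infty$ (the supremum being finite because $t\mapsto\kappa_t(\gamma z)$ has bounded variation); hence the family $(\exp(zX_t))_{t\in[0,T]}$ is uniformly integrable, continuity in probability of $X$ upgrades to continuity of $t\mapsto\exp(zX_t)$ in $\mathcal{L}^1$, and therefore $t\mapsto e^{\kappa_t(z)}=\mathbb{E}[e^{zX_t}]$ is continuous, from which continuity of $\kappa_t(z)$ follows. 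You instead work on the analytic side, differentiating the three terms of the representation \eqref{E3.5bis}: continuity of $b$ and $c$ and the absence of time-atoms of $\nu$ (the standard consequence, in Chapter II of~\cite{JS03}, of the process having no fixed times of discontinuity), plus the $\nu$-integrability of $e^{zx}-1-zh(x)$ already secured in the proof of Proposition~\ref{corroR1}, and dominated convergence. Both routes are sound. The paper's proof is shorter and needs only uniform integrability, but it genuinely uses $z\in Int(D)$ to manufacture the dominating moment; your argument only needs $Re(z)\in D_0$, so, as you correctly observe, it proves continuity for every $z\in D$ directly and thereby absorbs the density argument that the paper later deploys in Proposition~\ref{proposition38} to pass from $Int(D)$ to $\bar D=\overline{Int\,D}$. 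The price you pay is the reliance on the finer structural fact about the characteristics (continuity of $b$, $c$ and $\nu(\{t\}\times\mathbb{R})=0$), which the paper's proof avoids entirely.
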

\begin{remarque}\label{R11}
%\begin{enumerate} \item 
%We will need however to suppose, in the sequel, 
%that $t\mapsto \kappa_t(2)$ is continuous. This is true, for instance, if $Int(D)$ contains $[0,2]$.
%\item
 The conclusion remains true for any process which is continuous
in probability, whenever $t \mapsto \kappa_t(z)$ is (locally) bounded.
%\end{enumerate}
\end{remarque}

\begin{proof}
[\textbf{Proof of Lemma \ref{P1}}]
Since $z \in Int(D)$, there is $\gamma > 1$ such that $\gamma z \in D$; so 
\begin{eqnarray*}\mathbb{E}[\exp(z\gamma X_t)]=\exp(\kappa_t(\gamma z))\leq \exp(\sup_{t\leq T}(\kappa_t(\gamma z))) \ ,\end{eqnarray*}
because $t\mapsto \kappa_t(\gamma z)$ is bounded, being of bounded variation. This implies that $(\exp(zX_t))_{t \in [0,T]}$ is uniformly integrable. Since $(X_t)$ is continuous in probability, then $(\exp(zX_t))$ is continuous in $\mathcal{L}^1$. The result easily follows.
\end{proof}

\begin{propo}\label{proposition38}
The function $(t,z)\mapsto \kappa_t(z)$ is continuous. In
particular, $(t,z) \mapsto \kappa_t(z)$, $t \in [0,T]$, 
$z$ belonging to a compact real subset, is bounded.
\end{propo}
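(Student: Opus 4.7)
The strategy is to use the L\'evy--Khintchine type representation of Proposition~\ref{LKdecoKappa},
\[
\kappa_t(z) = z b_t + \frac{z^2}{2} c_t + \int_{[0,t] \times \mathbb{R}} \bigl(e^{zx} - 1 - z h(x)\bigr)\, \nu(ds, dx),
\]
and to prove joint continuity on $[0,T] \times \mathrm{Int}(D)$ by a termwise dominated convergence argument. Since $X$ is continuous in probability (Definition~\ref{defPAI}, point 4), $X$ has no fixed times of discontinuity; this yields that $t \mapsto b_t$ and $t \mapsto c_t$ are continuous and that $t \mapsto \nu([0,t] \times B)$ is continuous for every Borel $B \subseteq \mathbb{R}$, so in particular $\nu(\{t_0\} \times \mathbb{R}) = 0$ for every $t_0 \in [0,T]$. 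The first two terms $(t,z) \mapsto z b_t + (z^2/2) c_t$ are then jointly continuous without further work.

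For the integral term, I would fix $(t_0, z_0) \in [0,T] \times \mathrm{Int}(D)$ and choose $r > 0$ with the closed complex ball $\bar B(z_0, r) \subset \mathrm{Int}(D)$. Set $m := \min_{z \in \bar B(z_0,r)} \mathrm{Re}(z)$ and $M := \max_{z \in \bar B(z_0,r)} \mathrm{Re}(z)$; by convexity of $D_0 = D \cap \mathbb{R}$ (Proposition~\ref{LKdecoKappa}), $[m,M] \subset D_0$. For any sequence $(t_n, z_n) \to (t_0, z_0)$ with $z_n \in \bar B(z_0, r)$ eventually, the integrand $\mathbf{1}_{[0, t_n]}(s)\bigl(e^{z_n x} - 1 - z_n h(x)\bigr)$ converges $\nu$-a.e.\ to the target, using $\nu(\{t_0\} \times \mathbb{R}) = 0$. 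A $\nu$-integrable dominating function is assembled exactly as in the proof of Proposition~\ref{corroR1}: on $\{|x| > 1\}$, the bound $|e^{z_n x}| \le e^{Mx} + e^{mx}$ combined with the boundedness of $h$ gives a $\nu$-integrable majorant, since both $m$ and $M$ belong to $D_0$; on $\{|x| \le 1\}$, a second-order Taylor expansion gives $|e^{z_n x} - 1 - z_n h(x)| \le C\, x^2$ for a constant depending only on $r$ and $|z_0|$, which is $\nu$-integrable by~\eqref{E1000}. Dominated convergence then delivers joint continuity of the integral term, and hence of $\kappa$, on $[0,T] \times \mathrm{Int}(D)$. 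The boundedness statement follows immediately: any compact real subset $K$ of $D_0$ is contained in $\mathrm{Int}(D)$, and a continuous function on the compact set $[0,T] \times K$ is bounded.

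The main obstacle is the construction of a dominating function that is valid uniformly for $z$ in a complex neighborhood of $z_0$; this hinges on the convexity of $D_0$ ensuring that an entire real interval around $\mathrm{Re}(z_0)$ sits inside $D_0$, so that the exponential tails $e^{\mathrm{Re}(z_n) x}$ remain $\nu$-integrable on $\{|x|>1\}$. Once that ingredient is in place, the remainder is a routine application of dominated convergence patterned on the tail/near-zero decomposition already carried out in the proof of Proposition~\ref{corroR1}.
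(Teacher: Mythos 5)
Your route is genuinely different from the paper's: you prove joint continuity directly by dominated convergence on the L\'evy--Khinchine representation \eqref{E3.5bis}, using that absence of fixed times of discontinuity makes $b$, $c$ and $t\mapsto\nu([0,t]\times B)$ continuous. The paper never touches the $t$-continuity of the characteristics; it gets $z$-continuity \emph{uniformly in $t$} from Proposition~\ref{LKdecoKappa}, gets $t$-continuity for each fixed $z\in{\rm Int}(D)$ from a uniform-integrability argument (Lemma~\ref{P1}), extends to $z\in D$ by uniform limits, and then assembles joint continuity from these two separate continuities. Your DCT argument on $[0,T]\times{\rm Int}(D)$ is sound (the domination $|e^{z_nx}|\le e^{Mx}+e^{mx}$ on $\{|x|>1\}$ with $m,M\in D_0$, and the $Cx^2$ bound on $\{|x|\le 1\}$, is exactly the right construction), provided you actually cite the result that continuity in probability forces continuity of the characteristics.

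However, there is a genuine gap at the boundary of $D$. You only establish continuity on $[0,T]\times{\rm Int}(D)$, and your closing claim that ``any compact real subset $K$ of $D_0$ is contained in ${\rm Int}(D)$'' is false in general: $D_0$ is a convex set (an interval) that need not be open, and when $D_0=[a,b]$ the compact set $K=[a,b]$ meets $\partial D$. The statement must hold up to the boundary --- Lemma~\ref{lemme38a} applies it to an arbitrary compact interval $I\subset D$, and Assumption~\ref{HypD} only requires $2\in D$, possibly as an endpoint of $D_0$. This is precisely the case the paper's third bullet handles by approximating $z\in D$ by $z_n\in{\rm Int}(D)$ and passing to a uniform limit. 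Your argument can be repaired without changing its spirit: for $z_0\in D$ with, say, $\mathrm{Re}(z_0)=\sup D_0$, any sequence $z_n\to z_0$ with $z_n\in D$ satisfies $\mathrm{Re}(z_n)\in[m,\mathrm{Re}(z_0)]\subset D_0$ eventually (choose any $m\in D_0$ with $m<\mathrm{Re}(z_0)$, using convexity of $D_0$), so the same dominating function works with $M=\mathrm{Re}(z_0)$; one just has to run the DCT for sequences in $D$ rather than in a full complex ball. As written, though, the proof does not cover the compact sets for which the boundedness conclusion is actually used later in the paper.
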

\begin{proof}
\begin{itemize}
\item Proposition \ref{LKdecoKappa} implies that 
$z \mapsto \kappa_t(z)$ is continuous uniformly with respect to $t \in [0,T]$.
\item By Lemma
 \ref{P1}, for  $z \in {\rm Int D}$, $t\mapsto \kappa_t(z)$ is continuous.
\item To conclude it is enough to show that  $t\mapsto \kappa_t(z)$ is continuous
for every $z \in D$.
Since $\bar D = \overline {\rm Int D}$, there is a sequence $(z_n)$ in the interior of $D$ 
converging to $z$. Since a uniform limit of continuous functions on $[0,T]$ 
converges to a continuous function, the result follows.
%Since for fixed $z\in D$, $t\mapsto \kappa_t(z)$ is continuous, it is enough to show that $z\mapsto \kappa_t(z)$ continuous uniformly with respect to $t \in [0,T]$. This follows from Remark \ref{remarque16} and Proposition \ref{LKdecoKappa}.
\end{itemize}
\end{proof}

%From now on, we choose the same fixed process $(a_t)$ for all the semimartingale in this paper. In concrete models, $a$ is often taken to be $a_t=t$ (e.g., for Lévy processes, diffusions, Itô processes, etc.) and $A_t=[t]:=\max\{n \in \mathbb{N};n\leq t\}$ (discrete time processes). Since almost all semimartingales of interest in this paper are actually special semimartingales, we take the truncation function $h(x):=x$.\\
%By $\left\langle X,Y\right\rangle$ we denote the $\mathcal{P}$-compensator of $[X,Y]$ provided that $X,Y$ are semimartingales such that $[X,Y]$ is $\mathcal{P}$-special (see~\cite{Ja79}, page 37).

\subsection{Structure condition for PII (which are semimartingales)} \label{SSCPII}
%%%%%%%%%%%%%%%%%%%%%%%%%%%%%%%%%%%%%%%%%

%DISCUSSION AVEC NADIA.
%$\Phi_t(u)$ est définie a $2\pi i$ près. 
%Dans Jacod-Shiryaev on dit qu'elle admet une UNIQUE REPRESENTATION.
%Que signifierait représenter une constante $2 \pi i$?
%Est-ce que les caractéristiques déterminent univoquement la branche
%dans $]-\pi, \pi]$.
% Attention: on derive !
%Je crois que nous avons une réponse si $X_t$ n'est pas déterministe.
%Voir Section précédente.
%\bigskip

Let $X=(X_t)_{t \in [0,T]}$ be a real-valued semimartingale with
independent  increments and $X_0 = 0$.
 We assume that $\mathbb{E}[|X_t|^2]<\infty$.
We denote by  $\varphi_t(u)= \mathbb{E}[\exp(iu X_t)]$ the 
characteristic function of $X_t$ and by 
$u \mapsto \Psi_t(u)$ its  log-characteristic function 
introduced in Proposition \ref{LKdeco}. We recall 
that  $\varphi_t(u)=\exp(\Psi_t(u))$. \\
 $X$ has the property of independent increments; therefore
\begin{eqnarray}\exp(iu X_t)/\mathbb{E}[\exp(iu X_t)]=\exp(i u X_t)/\exp(\Psi_t(u)) \ ,\end{eqnarray}
is a martingale.
% where $u\in\mathbb{R}\,\mapsto\Psi_t (u)$ is the
% Lévy-Khinchine decomposition or equivalently the log-characteristic
% function of $X$, i.e. $\varphi_t(u)=\exp(\Psi_t(u))$.

%Rque evidente je l'enlève 
%Nadia
%\begin{remarque}
%$var(X_t)<\infty$, $\forall \ t\in [0,T]$\ .
%\end{remarque}

%\begin{propo}\label{prop:moment}
\begin{remarque}\label{rem:moment:b}
%Assume that $X=(X_t)_{t \in [0,T]}$ is a real-valued process with independent increments, with $X_0=0$, and with no fixed time of discontinuity. We assume that $\mathbb{E}[|X_t|^2]<\infty$, then 
Notice that the two first order moments of $X$ are related to
 the log-characterisctic function of $X$, as follows 
\begin{eqnarray}
\mathbb{E}[X_t]&=&-i\Psi_t^{'}(0)\ , \quad
%\label{EspX}\\
\mathbb{E}[X_t-X_s]=-i(\Psi_t^{'}(0)-\Psi_s^{'}(0)),\label{EspXts} \\
%\label{EspXts}\\
Var(X_t)&=&-\Psi^{''}_t(0)\ , \quad
%\label{VarX}\\
Var(X_t-X_s)=-[\Psi^{''}_t(0)-\Psi^{''}_s(0)]  \ .\label{VarXts}
%\label{VarXts}\ . \label{VarXts}
\end{eqnarray}
\end{remarque}

%Remarque déjà faite
%\begin{remarque}
%Since $\mathbb{E}[|X_t|^2]<\infty$ then $\Psi_t^{'}(0)$ and $\Psi_t^{''}(0)$ exist.
%\end{remarque}
%
%Notice first that $\mathbb{E}[|X_t|^2]<\infty$ implies that $\Psi_t^{'}(0)$ and $\Psi_t^{''}(0)$ exist. The first and second moments of $X_t$ are obtained as follows,
%\begin{eqnarray*}
%\mathbb{E}[X_t]&=&-i\Psi_t^{'}(0)\ ,\\ 
%\mathbb{E}[X^2_t] &=&  -[(\Psi^{'}_t(0))^2+\Psi^{''}_t(0)]\ ,
%\end{eqnarray*}
%so we obtain that
%\begin{eqnarray*}
%Var(X_t)&=&-\Psi^{''}_t(0)\ ,\\
%\mathbb{E}[(X_t-X_s)^2]& = & -[(\Psi^{'}_t(0)-\Psi^{'}_s(0))^2+\Psi^{''}_t(0)-\Psi^{''}_s(0)]\ .
%\end{eqnarray*}
%%
%%We can write this equality like this
%%\begin{center}$ \begin{array}{ccl}
%%\Psi^{''}_t(0)-\Psi^{''}_s(0)& = &-\mathbb{E}[(X_t-X_s)^2]-(\Psi^{'}_t(0)-\Psi^{'}_s(0))^2\\\\
%%& = &-[\mathbb{E}[(X_t-X_s)^2]+(\Psi^{'}_t(0)-\Psi^{'}_s(0))^2]\\\\
%%& = &-[\mathbb{E}[(X_t-X_s)^2]-(-i\Psi^{'}_t(0)+i\Psi^{'}_s(0))^2]\\\\
%%& = &-[\mathbb{E}[(X_t-X_s)^2]-(\mathbb{E}[X_t]-\mathbb{E}[X_s])^2]\\\\
%%& = &-[\mathbb{E}[(X_t-X_s)^2]-(\mathbb{E}[X_t-X_s])^2]\\\\
%%\end{array}$\end{center}
%Hence, we obtain $Var(X_t-X_s)=-[\Psi^{''}_t(0)-\Psi^{''}_s(0)]\ .$

\begin{propo}\label{prop:SCPII}
%Let $X=(X_t)_{t \in [0,T]}$ be a real-valued semimartingale with independent increments, with no fixed time of discontinuity and $X_0=0$. We assume that $\mathbb{E}[|X_t|^2]<\infty$. 
%Let $X=M+A$  be the canonical decomposition of $X$, then for all $t\in[0,T]$, 
Let $X=(X_t)_{t \in [0,T]}$ be a real-valued semimartingale
 with independent increments.
\begin{enumerate}
\item $X$ is a special semimartingale with decomposition
$X = M + A$ with the following properties:
\begin{eqnarray} \left\langle M\right\rangle_t=-\Psi^{''}_t(0)\
 \quad \textrm{and}\quad A_t=-i\Psi^{'}_t(0)\ .\label{MAPII}\end{eqnarray}
In particular $t \mapsto  -\Psi^{''}_t(0)$ is increasing and therefore
of bounded variation.
\item $X$ satisfies condition~(SC) of Definition~\ref{defSC} if and only if 
\begin{equation}
\label{SCPII}
 \Psi^{'}_t(0) \ll\Psi^{''}_t(0) \quad \textrm{and}\quad
{\displaystyle \int_0^T\left|\frac{d_t \Psi^{'}_s}{d_t \Psi^{''}_s}(0)\right|^2|d\Psi_s^{''}(0)|<\infty}\ .
\end{equation}
In that case
\begin{equation}
A_t=\int_0^t \alpha_sd\left\langle M\right\rangle_s \quad  \textrm{with}\quad \alpha_t=i\frac{d_t \Psi^{'}_t(0)}{d_t \Psi^{''}_t(0)}\quad \textrm{for all}\ t\in [0,T]\ .\label{AMPII}
\end{equation}
\item Under condition~(\ref{SCPII}), FS decomposition exists (and it is unique)
 for every square integrable random variable. 
\end{enumerate}
\end{propo}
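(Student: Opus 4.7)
The plan is to treat the three assertions in turn, all of them reducing, once the right objects are identified, to general results already recalled in the paper.

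For Part 1, I would invoke the fact that $\mathbb{E}[|X_t|^2]<\infty$ together with the semimartingale property implies that $X$ is locally integrable, hence a special semimartingale with a unique canonical decomposition $X=M+A$ where $M$ is a local martingale, $A$ is predictable with bounded variation and $A_0=0$. My candidate for $A$ is $\widetilde{A}_t := -i\Psi_t'(0)$, which equals $\mathbb{E}[X_t]$ by Remark~\ref{rem:moment:b} and is therefore deterministic, hence predictable. The complementary process $\widetilde{M}_t := X_t - \widetilde{A}_t$ then has independent increments and zero mean, and lives in $L^2$, so it is a genuine square integrable martingale; by uniqueness of the canonical decomposition, $M = \widetilde{M}$ and $A = \widetilde{A}$. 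This in particular forces $t\mapsto -i\Psi_t'(0)$ to have bounded variation. To identify $\langle M\rangle$, observe that $M$ inherits independent increments with $\mathbb{E}[M_t^2] = \mathrm{Var}(X_t) = -\Psi_t''(0)$, so $M_t^2 + \Psi_t''(0)$ is a martingale. By independence of increments, $t\mapsto -\Psi_t''(0) = \mathrm{Var}(X_t)$ is increasing, hence a deterministic predictable process of bounded variation, so uniqueness of the angle bracket yields $\langle M\rangle_t = -\Psi_t''(0)$.

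Part 2 is then pure unwinding of definitions: condition (SC) requires $dA_t \ll d\langle M\rangle_t$ with an $L^2(d\langle M\rangle)$ Radon--Nikodym density $\alpha$. Substituting the expressions from Part 1 gives $d(-i\Psi_t'(0)) \ll d(-\Psi_t''(0))$, which is exactly $\Psi_t'(0) \ll \Psi_t''(0)$, with density $\alpha_t = i\, d_t\Psi_t'(0)/d_t\Psi_t''(0)$, and the integrability requirement $\int_0^T \alpha_s^2\,d\langle M\rangle_s < \infty$ reads precisely as the second half of condition~(\ref{SCPII}).

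Part 3 is a direct corollary of Theorem~\ref{ThmExistenceFS}. Under (SC), the MVT process $K_t = \int_0^t \alpha_s^2\,d\langle M\rangle_s$ is deterministic (every factor is) and increasing; by (\ref{SCPII}) its terminal value $K_T$ is finite, so $K$ is bounded by the constant $K_T$ uniformly in $(t,\omega)$. Assumption~\ref{Hypo1} is thus satisfied and Theorem~\ref{ThmExistenceFS} delivers existence and uniqueness of the FS decomposition for every $H\in\mathcal{L}^2$.

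The only delicate point is in Part 1: showing that the candidate $\widetilde{A}_t = -i\Psi_t'(0)$ really is the predictable bounded-variation part of $X$. The decisive ingredient is the $L^2$ assumption, which guarantees that $X$ is special and that $\widetilde{M} := X - \widetilde{A}$ is not merely a local martingale but a true square integrable martingale with zero mean; once this is in hand, everything else follows from the uniqueness of the canonical decomposition and of the angle bracket, and Parts 2 and 3 become essentially bookkeeping.
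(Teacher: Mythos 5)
Your proof is correct and follows essentially the same route as the paper: both arguments identify $A_t=-i\Psi'_t(0)$ by showing $X_t+i\Psi'_t(0)$ is a martingale via $\mathbb{E}[X_t-X_s]=-i(\Psi'_t(0)-\Psi'_s(0))$, identify $\langle M\rangle_t=-\Psi''_t(0)$ via $\mathrm{Var}(X_t-X_s)$ and independence of increments, and then reduce Parts 2 and 3 to Definition~\ref{defSC} and Theorem~\ref{ThmExistenceFS} through the deterministic, bounded MVT process. Your phrasing via uniqueness of the canonical decomposition (rather than the paper's direct conditional-expectation computation) is only a cosmetic difference.
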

In the sequel, we will provide an explicit decomposition for a class of contingent claims, under condition~(\ref{SCPII}). 

\begin{proof}
%[\textbf{Proof of Proposition \ref{prop:SCPII}}]
\begin{enumerate}
\item
Let us first determine $A$ and $M$ in terms of the log-characteristic function of $X$. Using~(\ref{EspXts}) of Remark~\ref{rem:moment:b}, we get
\begin{eqnarray*}
\mathbb{E}[X_t|\mathcal{F}_s] & = & \mathbb{E}[X_t-X_s+X_s\ |\
 \mathcal{F}_s]\ ,
  =  \mathbb{E}[X_t-X_s]+X_s\ ,\\
 & = & -i\Psi^{'}_t(0)+i\Psi^{'}_s(0)+X_s\ ,\quad \textrm{then}\ ,\\
 \mathbb{E}[X_t+i\Psi^{'}_t(0)|\mathcal{F}_s] &=&X_s +i\Psi^{'}_s(0)\ .
\end{eqnarray*}
 Hence, $(X_t+i\Psi^{'}_t(0))$ is a martingale and the canonical decomposition of $X$ follows 
\begin{eqnarray*}
X_t=\underbrace{X_t+i\Psi^{'}_t(0)}_{M_t}\underbrace{-i\Psi^{'}_t(0)}_{A_t} \ ,
\end{eqnarray*}
where $M$ is a local martingale and $A$ is a locally bounded variation process thanks to 
%since $t \rightarrow \Psi^{'}_t(0)$ has locally bounded variation, 
the  semimartingale property of $X$.  
%At this moment, we aim now to find minimal suffcient condition to express $A_t$ in terms of $\left\langle M\right\rangle_t$.\\
%By definition of $\langle M\rangle $, there exists a predictable process $N_t$, such that $M_t^2=N_t+\left\langle M\right\rangle_t\ ,$ so $\ .$
%
Let us now determine  $\langle M\rangle $, in terms of the log-characteristic function of $X$. 
\begin{center}$ \begin{array}{ccl}
M_t^2&=&[X_t+i\Psi^{'}_t(0)]^2\ ,\\\\
 \mathbb{E}[M^2_t|\mathcal{F}_s] & = & \mathbb{E}[(X_t+i\Psi^{'}_t(0))^2|\mathcal{F}_s]\ ,\\\\
   & = & \mathbb{E}[(X_s+i\Psi^{'}_s(0)+X_t-X_s+i\Psi^{'}_t(0)-i\Psi^{'}_s(0))^2|\mathcal{F}_s]\ ,\\\\
   & = & \mathbb{E}[(M_s+X_t-X_s+i(\Psi^{'}_t(0)-\Psi^{'}_s(0)))^2|\mathcal{F}_s]\ ,\\\\
\end{array}$\end{center}
Using~(\ref{EspXts}) and~(\ref{VarXts}) of Remark~\ref{rem:moment:b}, yields 
\begin{eqnarray*}
M_t^2 & = & \mathbb{E}[(M_s-\mathbb{E}[X_t-X_s]+X_t-X_s)^2]\ ,\\
 & = & M_s^2+Var(X_t-X_s)
= M_s^2-\Psi^{''}_t(0)+\Psi^{''}_s(0) \ .
\end{eqnarray*}
Hence, $(M^2_t+\Psi^{''}_t(0))$ is a $(\mathcal{F}_t)$-martingale,
and point 1. is established.
% so that we get 
%$$
%\left\langle M\right\rangle_t=-\Psi^{''}_t(0)\ ,\quad \textrm{and again}\quad A_t=-i\Psi^{'}_t(0)\label{APII2}\ .
%$$
%Notice that this implies, in particular, that $t\mapsto -\Psi^{''}_t(0)$ is an increasing function. Hence, if  $t\mapsto\Psi^{'}_t(0)$ is absolutely continuous with respect to $\Psi^{''}_t(0)$ then
$$
A_t=\int_0^t \alpha_sd\left\langle M\right\rangle_s \quad 
 \textrm{with}\quad \alpha_t=i\frac{d_t \Psi^{'}_t(0)}{d_t
   \Psi^{''}_t(0)}
\quad \textrm{for all}\ t\in [0,T]\ .
$$
\item is a consequence of point 1. and of Definition~\ref{defSC}.
%Hence, condition~(SC) of Definition~\ref{defSC} is verified as soon as  
%${\displaystyle \int_0^T\left|d_t \Psi^{'}_s\right |<\infty}$. 
%${\displaystyle \int_0^T\left|\frac{d_t \Psi^{'}_s}{d_t \Psi^{''}_s}\right|^2|d\Psi^{''}|<\infty}$. 
%then condition 4 is verified of definition \ref{defSC}.\\
\item follows from  Theorem \ref{ThmExistenceFS}. In fact 
$K_t =  
- \displaystyle \int_0^T \left(\frac{d_t \Psi^{'}_s}{d_t \Psi^{''}_s}(0)
\right)^2d\Psi_s^{''}(0)$
is deterministic and so Assumption \ref{Hypo1} is fulfilled.
\end{enumerate}

\end{proof}

\subsection{Examples}
%%%%%%%%%%%%%%%%%%%%%%%
\subsubsection{A continuous process example}
%%%%%%%%%%%%%%%%%%%%%%%%%%%%%%%%%%%%%%%%%%%%%
Let $\psi:[0,T]\rightarrow     \mathbb{R}$ be a continuous strictly increasing
function, $\gamma:[0,T]\rightarrow \mathbb{R}$ be a bounded variation
function such that $d\gamma \ll  d\psi$. We set
$X_t=W_{\psi(t)}+\gamma(t)$, 
where $W$ is the standard Brownian motion on $\mathbb{R}$. Clearly, $X_t=M_t+\gamma(t)$, where $M_t=W_{\psi(t)}$, defines a continuous martingale, such that $\left\langle M\right\rangle_t=\left[M\right]_t=\psi(t)$. Since 
$X_t\sim \mathcal{N}(\gamma(t),\psi(t))$ for all $u \in \mathbb{R}$ and $t \in [0,T]$, we have
\begin{eqnarray*}
\Psi_t(u)=i\gamma(t)u-\frac{u^2\psi(t)}{2}\ ,
\end{eqnarray*}
which yields 
\begin{eqnarray*}
\Psi^{'}_t(0)=i\gamma(t) \quad \textrm{and}\quad 
\Psi^{''}_t(0)=-\psi(t)\ ,
\end{eqnarray*}
%so\begin{eqnarray*}\Psi^{'}_t(0)&=&i\gamma(t)\ , \\ \Psi^{''}_t(0)&=&-\psi(t)\end{eqnarray*}
Therefore, if ${\displaystyle \frac{d\gamma}{d\psi} \in \mathcal{L}^2(d\psi)}$, then $X$ satisfies condition~(SC) of Definition~\ref{defSC} with 
\begin{eqnarray*}A_t=\int_0^t\alpha_sd\left\langle M\right\rangle_s \quad \textrm{and}\quad \alpha_t=\left . \frac{d\gamma}{d\psi}\right\vert_t
\quad \textrm{for all}\ t\in [0,T]\ .
\end{eqnarray*}

\subsubsection{Processes with independent and stationary increments
  (Lévy processes)  }
%%%%%%%%%%%%%%%%%%%%%%%%%%%%%%%%%%%%%%%%%%%%%%%%%%%%%%%%%%%%%%%%%%%%%%%%%%
\label{sec:example:Levy}

\begin{defi}\label{defLEVY}
$X=(X_t)_{t \in [0,T]}$ is called {\bf Lévy process} 
or process with stationary and independent
increments if the following properties hold.
\begin{enumerate}
\item $X$ is adapted to the filtration $(\mathcal{F}_t)_{t \in [0,T]}$
  and has càdlàg trajectories.
\item $X_0=0$.
\item The distribution of $X_t-X_s$ depends only on $t-s$ for $0 \leq s \leq t \leq T$.
\item $X_t-X_s$ is independent of $\mathcal{F}_s$ for $0 \leq s \leq t \leq T$.
\item $X$ is continuous in probability.
\end{enumerate}
\end{defi}
For details on Lévy processes, we refer the reader to \cite{Pr92},
\cite{Sa99}
 and \cite{JS03}.\\
Let $X=(X_t)_{t \in [0,T]}$ be a real-valued Lévy process, with $X_0=0$. 
%and without fixed times of discontinuity.
 We assume that $\mathbb{E}[|X_t|^2]<\infty$ and we do not consider the trivial case where $L_1$ is deterministic. 
\begin{remarque}\label{rem:moment:levy}
\begin{enumerate}
\item Since $X=(X_t)_{t \in [0,T]}$ is a Lévy process then $\Psi_t(u)=t\Psi_1(u)$. In the sequel, we will use the shortened notation  $\Psi:=\Psi_1$. 
\item $\Psi$ is a function of class $C^2$ and $\Psi^{''}(0)=Var(X_1)$ which
is strictly positive if $X$ has no stationary increments.

\end{enumerate}
\end{remarque}
%By application of Proposition~\ref{prop:SCPII} and Remark~\ref{rem:moment:levy}, we get the following result. 
%\begin{corro}\label{cor:SCLevy}
%Let $X=M+A$ be the canonical decomposition of $X$, then for all $t\in[0,T]$, 
%\begin{eqnarray}\left\langle M\right\rangle_t=-t\Psi^{''}(0)\ ,\quad \textrm{and}\quad A_t=-it\Psi^{'}(0)\ .\label{MALevy}\end{eqnarray}
%Hence $X$ satisfies condition~(SC) of Definition~\ref{defSC} with 
%\begin{equation}
%A_t=\int_0^t \alpha d\left\langle M\right\rangle_s\ ,\quad  \textrm{with}\quad \alpha=i \frac{\Psi^{'}(0)}{ \Psi^{''}(0)}\quad \textrm{for all}\ t\in [0,T]\ .\label{AMPIIS}
%\end{equation}
%Hence, FS decomposition exists for every square integrable random variable. 
%\end{corro}

\subsection{Cumulative and characteristic functionals in some particular cases}
\label{LevyPC}

We recall some cumulant and log-caracteristic functions 
of some typical Lévy processes.
\begin{remarque}\label{remark47Bis}
\begin{enumerate}
\item \underline{Poisson Case:}
If $X$ is a Poisson process with intensity $\lambda$,
 we have that $\kappa^\Lambda(z)=\lambda(e^z-1)$.
 Moreover, in this case the set $D=\mathbb{C}$. \\
Concerning the log-characteristic function we have
$$
\Psi(u)=\lambda(e^{iu}-1)\ ,\quad 
\Psi^{'}(0) =  i\lambda \quad \textrm{and}\quad 
\Psi^{''}(0) =  -\lambda, u \in \R.
$$
\item \underline{NIG Case:}
This process was introduced by Barndorff-Nielsen in~\cite{bnh}.
Then $X$ is a Lévy process with 
$X_1 \sim NIG(\alpha,\beta,\delta,\mu)$, with $\alpha>|\beta|>0$, $\delta>0$
 and $\mu \in \mathbb{R}$.
 We have $\kappa^\Lambda(z)=\mu z + \delta (\gamma_0-\gamma_z)$ and
 $\gamma_z=\sqrt{\alpha^2-(\beta+z)^2}$, 
$D = ]-\alpha-\beta,\alpha-\beta[+i\mathbb{R}\,.$\\
Therefore
%Hence, ${\displaystyle \alpha=i\frac{\Psi^{'}(0)}{\Psi^{''}(0)}=1}\ .$
%\item Let $X$ be a Normal Inverse Gaussian (NIG) Lévy process, i.e. such that  $X_1 \sim NIG(\alpha,\beta,\delta,\mu)$, where the  parameters  are such that  $\delta> 0 $, $\alpha\geq0$  and $\alpha\geq \vert \beta\vert$, then  for all $u \in \mathbb{R}$, 
$$
\Psi(u)=\mu iu + \delta (\gamma_0-\gamma_{iu})\ ,
\quad\textrm{where}\quad \gamma_{iu}=\sqrt{\alpha^2-(\beta+iu)^2}\ .
$$
By derivation, one gets 
$$
\Psi^{'}(0) =  i\mu + \delta \frac{i\beta}{\gamma_0} \quad \textrm{and}\quad 
\Psi^{''}(0) =  -\delta(\frac{1}{\gamma_0}+ \frac{\beta^2}{\gamma^3_0} ),
$$
Which yields
${\displaystyle \alpha=i\frac{\Psi^{'}(0)}{\Psi^{''}(0)}=
\frac{\gamma^2_0(\gamma_0\mu+\delta\beta)}{\delta(\gamma^2_0+\beta)}\ .}
$
%We recall that those processes were introduced by Barndorff-Nielsen, see \cite{bnh}.
\item \underline{Variance Gamma case:} Let $\alpha, \beta > 0, \delta \neq 0$.
If $X$ is a Variance Gamma process with
 $X_1 \sim VG(\alpha,\beta,\delta,\mu)$ with
 $\kappa^\Lambda(z)=\mu z + \delta Log\left(\frac{\alpha}
{\alpha-\beta z - \frac{z^2}{2}}\right)\,,$
where $Log$ is again the principal value complex logarithm
defined in Section 2.
% where $Log(z)$ is again  the 
%principal value logarithm.
%, see~\cite{Sa99}, Section 7,
%
%\begin{eqnarray*}Log(z)=ln(|z|)+iArg(z)\ ,\end{eqnarray*}
%where $Arg(z)\in ]-\pi,\pi]$.\\
The expression of $\kappa^\Lambda(z)$ can be found in
\cite{Ka06, madan} or also
~\cite{livreTankovCont},  table IV.4.5 in the particular case  $\mu = 0$.
 In particular an easy calculation shows that we need $z \in \mathbb{C}$ 
such that $Re(z) \in ]-\beta-\sqrt{\beta^2+2\alpha},-\beta+
\sqrt{\beta^2+2\alpha}[$ so that $\kappa^\Lambda(z)$ is well defined
so that
$$ D =  ]-\beta-\sqrt{\beta^2+2\alpha},-\beta+
\sqrt{\beta^2+2\alpha}[ + i \R.$$
Finally we obtain 
$$ \Psi(u) = \mu iu + \delta Log\left(\frac{\alpha}{\alpha-\beta iu +
 \frac{u^2}{2}}\right). $$
After derivation it follows
$$
\Psi^{'}(0) =  i(\mu - \delta \beta), \quad
\Psi^{''}(0) =  \frac{\delta}{\alpha} (\alpha^2 - \beta^2) .
$$
\end{enumerate}

\end{remarque}

%Let us consider some simple examples of Lévy processes. 
%\begin{enumerate}
%\item Let $X$ be a Poisson process with intensity $\lambda>0$, then for all $u \in \mathbb{R}$, 
%$$
%\Psi(u)=\lambda(e^{iu}-1)\ ,\quad 
%\Psi^{'}(0) =  i\lambda \ ,\quad \textrm{and}\quad 
%\Psi^{''}(0) =  -\lambda \ .
%$$
%Hence, ${\displaystyle \alpha=i\frac{\Psi^{'}(0)}{\Psi^{''}(0)}=1}\ .$
%\item Let $X$ be a Normal Inverse Gaussian (NIG) Lévy process, i.e. such that  $X_1 \sim NIG(\alpha,\beta,\delta,\mu)$, where the  parameters  are such that  $\delta> 0 $, $\alpha\geq0$  and $\alpha\geq \vert \beta\vert$, then  for all $u \in \mathbb{R}$, 
%$$
%\Psi(u)=\mu iu + \delta (\gamma_0-\gamma_{iu})\ ,\quad\textrm{where}\quad \gamma_{iu}=\sqrt{\alpha^2-(\beta+iu)^2}\ .
%$$
%By derivation, one gets 
%$$
%\Psi^{'}(0) =  i\mu + \delta \frac{i\beta}{\gamma_0} \ ,\quad \textrm{and}\quad 
%\Psi^{''}(0) =  -\delta(\frac{1}{\gamma_0}+ \frac{\beta^2}{\gamma^3_0} )\ .
%$$
%Which yields
%${\displaystyle \alpha=i\frac{\Psi^{'}(0)}{\Psi^{''}(0)}=
%\frac{\gamma^2_0(\gamma_0\mu+\delta\beta)}{\delta(\gamma^2_0+\beta)}\ .}
%$
%\item Let $X$ be a Variance-Gamma (VG) process. TO BE DONE
%\item Assume that  $X$ follows  the Kou Model, i.e. 
%\begin{eqnarray*}\Psi(u)=\frac{\sigma^2}{2}u^2-i\mu u + \frac{ic_+u}{\lambda_+iu}+\frac{ic-u}{\lambda_-+iu}\ .\end{eqnarray*}
%
%SUITE DANS MON BLOC NOTE A TAPER
%
%\end{enumerate}

\subsection{Structure condition in the L\'evy case}

By application of Proposition~\ref{prop:SCPII} and 
Remark~\ref{rem:moment:levy}, we get the following result. 
\begin{corro}\label{cor:SCLevy}
Let $X=M+A$ be the canonical decomposition of $X$, then for all $t\in[0,T]$, 
\begin{eqnarray}\left\langle M\right\rangle_t=-t\Psi^{''}(0) \quad \textrm{and}\quad A_t=-it\Psi^{'}(0)\ .\label{MALevy}\end{eqnarray}
Moreover $X$ satisfies condition~(SC) of Definition~\ref{defSC} with 
\begin{equation}
A_t=\int_0^t \alpha d\left\langle M\right\rangle_s \quad  \textrm{with}\quad \alpha=i \frac{\Psi^{'}(0)}{ \Psi^{''}(0)}\quad \textrm{for all}\ t\in [0,T]\ .\label{AMPIIS}
\end{equation}
Hence, FS decomposition exists for every square integrable random variable. 
\end{corro}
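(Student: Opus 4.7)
The plan is to derive the corollary as a direct specialization of Proposition \ref{prop:SCPII} to the stationary-increment setting, exploiting the linearity $\Psi_t = t\Psi$ recalled in Remark \ref{rem:moment:levy}.

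First, I would record the consequence of $\Psi_t(u) = t\,\Psi(u)$: differentiating twice in $u$ at $u=0$ gives $\Psi'_t(0) = t\,\Psi'(0)$ and $\Psi''_t(0) = t\,\Psi''(0)$. Plugging these two identities into formula \eqref{MAPII} of Proposition \ref{prop:SCPII} immediately yields \eqref{MALevy}. Note that the finiteness of $\Psi'(0)$ and $\Psi''(0)$ is guaranteed by the standing assumption $\E[|X_t|^2] < \infty$, via Remark \ref{rem:moment}.

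Next, to obtain the structure condition, I would verify the two conditions in \eqref{SCPII}. Both maps $t\mapsto \Psi'_t(0) = t\Psi'(0)$ and $t\mapsto \Psi''_t(0) = t\Psi''(0)$ are linear in $t$, so each is absolutely continuous with respect to Lebesgue measure on $[0,T]$ and in particular $d\Psi'_t(0) \ll d\Psi''_t(0)$, with constant Radon--Nikodym derivative. Computing it,
$$
\alpha_t \;=\; i\,\frac{d_t\Psi'_t(0)}{d_t\Psi''_t(0)} \;=\; i\,\frac{\Psi'(0)}{\Psi''(0)},
$$
which is well defined and constant since $\Psi''(0) = -\mathrm{Var}(X_1) \neq 0$ thanks to the assumption that $X_1$ is not deterministic (Remark \ref{rem:moment:levy}). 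The integrability requirement in \eqref{SCPII} reduces to
$$
\int_0^T \left|\frac{\Psi'(0)}{\Psi''(0)}\right|^2\, |\Psi''(0)|\, ds \;=\; T\,\frac{|\Psi'(0)|^2}{|\Psi''(0)|} \;<\;\infty,
$$
which is trivially satisfied, so \eqref{AMPIIS} follows from \eqref{AMPII}.

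Finally, for the existence of the FS decomposition, point 3 of Proposition \ref{prop:SCPII} applies: the mean-variance tradeoff process
$$
K_t \;=\; \alpha^2\,\langle M\rangle_t \;=\; \frac{|\Psi'(0)|^2}{|\Psi''(0)|}\,t
$$
is deterministic and uniformly bounded on $[0,T]$, so Assumption \ref{Hypo1} holds and Theorem \ref{ThmExistenceFS} supplies existence and uniqueness of the FS decomposition for any square integrable random variable. There is no real obstacle here; the only point worth checking carefully is the non-degeneracy $\Psi''(0)\neq 0$, which is what justifies dividing by $\Psi''(0)$ and is precisely where the non-deterministic hypothesis on $X_1$ enters.
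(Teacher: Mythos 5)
Your proof is correct and follows exactly the route the paper intends: the corollary is stated there as a direct application of Proposition \ref{prop:SCPII} combined with $\Psi_t = t\Psi$ from Remark \ref{rem:moment:levy}, which is precisely your specialization. Your explicit verification of \eqref{SCPII}, the identification of the constant $\alpha$, and the observation that $K_t$ is deterministic and bounded (so Theorem \ref{ThmExistenceFS} applies) fill in the details the paper leaves implicit, with the non-degeneracy $\Psi''(0)=-\mathrm{Var}(X_1)\neq 0$ correctly pinpointed as the only point needing care.
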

\begin{remarque} \label{LevyPCSC}
We have the following in previous three examples of subsubsection
\ref{LevyPC}
\begin{enumerate}
\item{\bf Poisson case:}
$\alpha = 1$.
\item{\bf NIG process:}
${\displaystyle \alpha =
\frac{\gamma^2_0(\gamma_0\mu+\delta\beta)}{\delta(\gamma^2_0+\beta)}\ .}$
\item{\bf VG process:}
$\alpha = {\displaystyle \frac{\mu - \delta \beta}{\alpha^2 - \beta^2} \frac{\alpha}{\delta}}\ .$
\end{enumerate}

\end{remarque}

\subsubsection{Wiener integrals of Lévy processes}
%{$X_t=\int_0^t\gamma(s)d\Lambda_s$}
%%%%%%%%%%%%%%%%%%%%%%%%%%%%%%%%%%%%%%%%%%%
We take $X_t=\int_0^t\gamma_sd\Lambda_s$, where $\Lambda$ is a square integrable
Lévy process as in Section~\ref{sec:example:Levy}.  Then,
$\int_0^T\gamma_s d\Lambda_s$ is well-defined for at least $\gamma \in
\shl^\infty([0,T])$. It is then possible to calculate the characteristic
function and the cumulative function of $\int_0^\cdot \gamma_sd\Lambda_s$.
Let $(t,z) \mapsto t \Psi_\Lambda (z),$ \        
 (resp. $(t,z) \mapsto t \kappa^\Lambda(z)$)
 denoting the log-characteristic function 
 (resp. the cumulant generating function) of $\Lambda$.
\begin{lemme}\label{lemme27}
Let $\gamma:[0,T] \rightarrow \R  $ 
% \in \shl^\infty([0,T])$
be a Borel bounded function.
%  Riemann integrable function (therefore bounded with at most
%countable jumps).
\begin{enumerate}
	\item The log-characteristic function of $X_t$
%\int_0^t\gamma_sd\Lambda_s$,  
is such that for all $u\in\mathbb{R}$, 
$$
\Psi_{X_t}(u)=\int_0^t\Psi_\Lambda(u\gamma_s)ds\ ,\quad \textrm{where}\quad 
\mathbb{E}[\exp(iu X_t)]=\exp\big(\Psi_{X_t}(u)\big)\ ;
$$
	\item Let $D_\Lambda$ be the domain related to $\kappa^\Lambda$ 
in the sense of
Definition \ref{defKappa}.
The cumulant generating function  of $X_t$
%=\int_0^t\gamma_sd\Lambda_s$,
 is such that for all $z\in\{z\,\vert\, 
Re z\gamma_t\in D_\Lambda\  \textrm{for all} 
\ t\in[0,T]\}$, 
%($D$ denoting the domain of $L_1$), 
$$
\kappa_{X_t}(z)=\int_0^t\kappa^\Lambda(z\gamma_s)ds.
%\textrm{with}\quad \mathbb{E}[\exp(z X_t)]=\exp\big(\kappa_{X_t}(z)\big)\ ;
$$
\end{enumerate}

%
%\begin{enumerate}
%\item \begin{eqnarray*}\mathbb{E}\left[\exp\left(i\int_0^T\gamma_sdL_s\right)\right]=\exp\left(\int_0^T\Psi_L(\gamma_s)ds\right)\ .\end{eqnarray*}
%\item Let $\kappa_L:D\mapsto \mathbb{R}$, and $D\supset [0,R]$, for some $R>0$. If $\gamma_s \in [0,R]$, $\forall s \in [0,T]$ then
% \begin{eqnarray*}\mathbb{E}\left[\exp\left(\int_0^T\gamma_sdL_s\right)\right]=\exp\left(\int_0^T\kappa^\Lambda(\gamma_s)ds\right)\ .\end{eqnarray*}
%\end{enumerate}
\end{lemme}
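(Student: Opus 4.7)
The plan is to establish both identities first for elementary (step) functions $\gamma$, where the Wiener integral reduces to a finite sum of independent increments, and then pass to general bounded Borel $\gamma$ by an approximation argument. For a step function $\gamma = \sum_{i=1}^n c_i \mathbf{1}_{(t_{i-1},t_i]}$ with $0 = t_0 < t_1 < \cdots < t_n = T$, one has $X_t = \sum_{i=1}^n c_i (\Lambda_{t_i\wedge t}-\Lambda_{t_{i-1}\wedge t})$. Since $\Lambda$ has independent and stationary increments, the increments $(\Lambda_{t_i\wedge t}-\Lambda_{t_{i-1}\wedge t})$ are independent with $\mathbb{E}[\exp(w(\Lambda_{t_i\wedge t}-\Lambda_{t_{i-1}\wedge t}))] = \exp((t_i\wedge t - t_{i-1}\wedge t)\,\kappa^\Lambda(w))$ whenever $\mathrm{Re}(w)\in D_\Lambda$, and analogously $\mathbb{E}[\exp(iu(\Lambda_{t_i\wedge t}-\Lambda_{t_{i-1}\wedge t}))]=\exp((t_i\wedge t-t_{i-1}\wedge t)\Psi_\Lambda(u))$. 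Factoring the product and recognising the exponent as a Riemann sum yields both formulas for step functions.

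For general bounded Borel $\gamma$ with $\|\gamma\|_\infty \le C$, choose step functions $\gamma^n$ converging to $\gamma$ Lebesgue-a.e.\ on $[0,T]$ with $|\gamma^n|\le C$ and, for part 2, with values contained in the closure of $\mathrm{Re}(z)\gamma([0,T])$ so that $\mathrm{Re}(z)\gamma^n_s$ stays in $D_\Lambda$. Standard properties of the stochastic integral with respect to the square integrable Lévy process $\Lambda$ imply $X_t^n := \int_0^t \gamma^n_s\,d\Lambda_s \to X_t$ in $L^2$, hence in probability. For part 1, bounded convergence gives $\mathbb{E}[e^{iuX_t^n}]\to \mathbb{E}[e^{iuX_t}]$, while on the right-hand side $\Psi_\Lambda$ is continuous and bounded on $[-C|u|,C|u|]$, so dominated convergence yields $\int_0^t\Psi_\Lambda(u\gamma^n_s)\,ds \to \int_0^t\Psi_\Lambda(u\gamma_s)\,ds$, and the identity follows upon taking logarithms.

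For part 2, the subtle step is to pass to the limit in $\mathbb{E}[e^{zX_t^n}]$. Here I would invoke the uniform bound $\mathbb{E}[e^{\mathrm{Re}(z)X_t^n}] = \exp\bigl(\int_0^t\kappa^\Lambda(\mathrm{Re}(z)\gamma^n_s)\,ds\bigr)\le \exp\bigl(T\sup_{s\in[0,T]}\kappa^\Lambda(\mathrm{Re}(z)\gamma_s)\bigr)$, finite because $\kappa^\Lambda$ is continuous (indeed analytic) on the interior of $D_\Lambda$ and can be bounded on the compact range $\mathrm{Re}(z)\gamma([0,T])\subset D_\Lambda$ (by reducing to a slightly dilated bounded set inside $D_\Lambda$ via convexity of $D_\Lambda$ and $0\in D_\Lambda$; if the range touches the boundary, a separate truncation argument using monotone convergence is required). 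This bound, combined with convergence in probability of $X_t^n$ to $X_t$, gives uniform integrability of $e^{zX_t^n}$ and hence convergence of $\mathbb{E}[e^{zX_t^n}]$ to $\mathbb{E}[e^{zX_t}]$. The right-hand side again converges by continuity of $\kappa^\Lambda$ on the relevant compact set and dominated convergence. The main obstacle is precisely this integrability step: ensuring that uniform integrability survives the approximation when $\mathrm{Re}(z)\gamma$ may reach the boundary of $D_\Lambda$; this is the only nontrivial ingredient, as the rest reduces to the Lévy structure of $\Lambda$ and standard Riemann-type limits.
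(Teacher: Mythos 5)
Your approach is essentially the paper's: the authors also prove the identity first on a dense class where independence of increments makes the computation exact (they use continuous $\gamma$ and Riemann sums of the stochastic integral, where you use step functions -- a cosmetic difference), and then pass to general bounded Borel $\gamma$ by a.e.\ approximation with $\sup|\gamma_n|\le\sup|\gamma|$, using continuity of $\Psi_\Lambda$, dominated convergence, and continuity of the stochastic integral. For part 1 your argument is complete and matches theirs.

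For part 2 the paper only says the proof ``follows similarly'' (and refers elsewhere), so your attempt to supply the missing details is welcome, but the key step as you state it does not hold: boundedness of $\mathbb{E}[e^{\mathrm{Re}(z)X_t^n}]$, i.e.\ $L^1$-boundedness of $|e^{zX_t^n}|$, together with convergence in probability does \emph{not} imply uniform integrability (the standard counterexample $n\mathbf{1}_{[0,1/n]}$ is $L^1$-bounded and converges in probability to $0$ without being uniformly integrable). To make this rigorous you need either an $L^{1+\epsilon}$ bound, which requires $(1+\epsilon)\mathrm{Re}(z)\gamma_s\in D_\Lambda$ -- available when $\mathrm{Re}(z)\gamma([0,T])$ sits in the interior of $D_\Lambda$ but genuinely problematic at the boundary, as you suspect -- or a Scheff\'e-type argument: first establish the identity for real $z$ (e.g.\ by monotone convergence on the positive random variables $e^{\mathrm{Re}(z)X_t^n}$), deduce $\mathbb{E}[e^{\mathrm{Re}(z)X_t^n}]\to\mathbb{E}[e^{\mathrm{Re}(z)X_t}]<\infty$, and conclude $L^1$-convergence of $e^{zX_t^n}$ from convergence in probability plus convergence of the $L^1$-norms. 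You correctly identify this as the only nontrivial obstacle, but the proposal as written papers over it with an invalid implication.
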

\begin{proof}
%[\textbf{Proof of lemma \ref{lemme27}}]
We only prove 1. since 2. follows similarly. Suppose first $\gamma$ to
 be continuous, then $\int_0^T\gamma_sd\Lambda_s$ is the 
limit in probability of $\sum_{j=0}^{p-1}\gamma_{t_j}(\Lambda_{t_{j+1}}-
\Lambda_{t_j})$ where $0=t_0<t_1<...<t_p=T$ is a subdivision of $[0,T]$
 whose mesh converges to zero. Using the independence of the increments,
 we have 
\begin{eqnarray*}
\mathbb{E}\left[\exp\{i\sum_{j=0}^{p-1}\gamma_{t_j}
(\Lambda_{t_{j+1}}-\Lambda_{t_j})\}\right]
&=&\prod_{j=0}^{p-1}\mathbb{E}\left[\exp\{i\gamma_{t_j}
(\Lambda_{t_{j+1}}-\Lambda_{t_j})\}\right]\ ,\\ \\
&=&\prod_{j=0}^{p-1}\exp\{\Psi_\Lambda(\gamma_{t_j})(t_{j+1}-{t_j})\}\ ,\\ \\
&=&\exp\{\sum_{j=0}^{p-1}(t_{j+1}-{t_j})\Psi_\Lambda(\gamma_{t_j})\}\ .
\end{eqnarray*}
This converges to $\exp\left(\int_0^T\Psi_\Lambda(\gamma_s)ds\right)$, when the
mesh
 of the subdivision goes to zero.\\
Suppose now that $\gamma$ is only bounded and
% \in \shl^\infty([0,T])$ and
 consider,  using convolution, a sequence $\gamma_n$
% \in \shl^\infty([0,T])$ 
of continuous functions, such that $\gamma_n \rightarrow \gamma$ a.e. 
and $\sup_{t \in [0,T]}|\gamma_n(t)|\leq \sup_{t \in
  [0,T]}|\gamma(t)|$. 
We have proved that
\begin{eqnarray}
\mathbb{E}\left[\exp\left(i\int_0^T\gamma_n(s)d\Lambda_s\right)\right]
=\exp\left(\int_0^T\Psi_\Lambda(\gamma_n(s))ds\right)\label{F11}
\end{eqnarray}
Now, $\Psi_\Lambda$ is continuous therefore bounded, so Lebesgue dominated 
convergence and continuity of stochastic integral imply statement 1.
\end{proof}
\begin{remarque}\label{remarque27}
\begin{enumerate}
\item Previous proof, which is left to the reader, also applies for statement 2. This statement in a slight different form is proved in~\cite{BS03}
\item We prefer to formulate a direct proof. In particular statement 1. holds with the same proof even if
 $\Lambda$ has no moment condition and $\gamma$ is a continuous function with bounded variation. Stochastic integrals are then defined using integration by parts.
\end{enumerate}
\end{remarque}
We suppose now that $\Lambda$ is a Lévy process such that
$\Lambda_1$ is not deterministic. In particular $Var(\Lambda_1) \neq 0$
and so $\Psi_\Lambda'' \neq 0$. \\
  In this case
\begin{eqnarray*}
\Psi_t^{'}(u)=\int_0^t\Psi_\Lambda^{'}(u\gamma_s)\gamma_sds
\quad \textrm{and}\quad \Psi_t^{''}(u)=\int_0^t\Psi_\Lambda^{''}
(u\gamma_s)\gamma^2_sds\ .\end{eqnarray*}
So
\begin{eqnarray*}\Psi_t^{'}(0)=\Psi_\Lambda^{'}(0)
\int_0^t\gamma_sds \quad \textrm{and}\quad 
\Psi_t^{''}(0)=\Psi_\Lambda^{''}(0)\int_0^t\gamma^2_sds\ .\end{eqnarray*}
Condition (SC) is verified since $d\Psi_t^{'}(0)\ll d\Psi_t^{''}(0)$ with 
$$
{\displaystyle \alpha_t=i\frac{d\Psi_t^{'}(0)}{d\Psi_t^{''}(0)}=
\frac{\Psi_\Lambda^{'}(0)}{\Psi_\Lambda^{''}(0)}\,\frac{i}{\gamma_t}} 
1_{\{\gamma_t \neq 0 \}} 
\quad \textrm{and}\quad 
{\displaystyle \int_0^T \alpha_s^2 \,
\vert \Psi_s^{''}(0)\vert \gamma_s^2  ds
=T \frac{\vert \Psi_\Lambda^{'}(0)\vert ^2}{\vert 
\Psi_\Lambda^{''}(0)\vert } <\infty}\ .
$$

\subsection{Explicit Föllmer-Schweizer decomposition in the PII case}
%%%%%%%%%%%%%%%%%%%%%%%%%%%%%%%%%%%%%%%%%%%%%%%%%%%%%%%%%%
\label{sec:FSPII}

\subsubsection{Preliminaries}\label{s61}
%%%%%%%%%%%%%%%%%%%%%%%%%%%%%%%%%%%%%
Let $X=(X_t)_{t \in [0,T]}$ be a semimartingale 
 with independent increments with log-characteristic function 
$(t,u)\mapsto \Psi_t(u)$. We assume that $(X_t)_{t \in [0,T]}$ is
 square integrable and satisfies Assumption \ref{HPAIND}.
\begin{remarque} \label{RQEFS}
\begin{enumerate}
\item $u\mapsto \Psi_t(u)$ is of class $C^2$, for any $t \in [0,T]$
because $X_t$ is square integrable. 
\item $t\mapsto \Psi^{''}_t(0)$ and $t\mapsto \Psi^{'}_t(0)$
have bounded variation 
because of Proposition \ref{prop:SCPII}. Therefore, they
are bounded.
\item $t\mapsto \Psi^{'}_t(u)$ is continuous for every $u \in \R$.
In fact, first $t \mapsto X_t$ is continuous in probability.
Since $M_t = X_t - \Psi_t'(0)  $ is a square integrable martingale and
 $t \mapsto \Psi^{'}_t(0)$  is bounded, then
the family $(E(X_t^2))$ is bounded and so $(X_t)$  is uniformly
integrable.
So $t \mapsto \varphi_t'(u)$ is continuous and the result follows
by Assumption \ref{HPAIND}  
\item $t\mapsto \Psi^{''}_t(0)$ is continuous.
In fact, again it is enough to prove $t \mapsto \varphi_t''(0)$
is continuous.
This follows if we prove that $(M_t)$ is continuous
in ${\cal L}^2$. This is true because $M$ is continuous in probability
and for any $N > 0$, $ t \in [0,T]$, Chebyshev implies that 
$$ P\{ \vert M^2_t \vert > N \} \le \frac{{\rm Var} (X_t)}{N} 
\le  \frac{{\rm Var} (X_T)}{N}, $$
and so the family $(M_t^2)$ is again uniformly integrable.
%\item For any $u \in \R$, $t\mapsto \Psi^{'}_t(u)$
%is continuous since $X$ is continuous in probability
%and the family $(X_t)$ is uniformly integrable being
\end{enumerate}
\end{remarque}
% In particular 
%$u\mapsto \Psi_t(u)$ is of class $C^2$, for any $t \in [0,T]$. 
 We suppose the following.
\begin{Hyp}\label{HypoFSPII:1}
%\begin{enumerate}
%\item $u\mapsto \Psi_t(u)$ is of class $C^2$, for any $t \in [0,T]$. 
%\item $t\rightarrow \Psi^{''}_t(0)$  is continuous .
%\item
\begin{enumerate}
\item  $t \mapsto \Psi^{'}_t(u)$ is absolutely continuous with respect to
 $d\Psi^{''}_t(0)$.
\item For every $u \in \R$, we suppose that the following quantity
\begin{equation} \label{KappaU2}
K(u) : = \int_0^T \left \vert\frac{d\Psi_t^{'}(u )}{d\Psi_t^{''}(0)}
\right \vert^2 d(- \Psi_s^{''}(0)) 
\end{equation}
is finite.
\end{enumerate}
\end{Hyp}
\begin{remarque} \label{RA2}
If $u=0$, the previous quantity~\eqref{KappaU2} is finite
because of the (SC) condition.
\end{remarque}
%\begin{Hyp}\label{HypoFSPII:2bis} CHANGER
%We suppose that  $t \longmapsto {\displaystyle \frac{d\Psi_t^{'}(0 )}
%{d\Psi_t^{''}(0)}}$ is bounded.
%\end{Hyp}
%EXAMPLES WHEN THIS ASSUMPTION IS NOT VERIFIED?\\
We consider a contingent claim which is given as a Fourier transform of $X_T$, 
\begin{eqnarray}H=f(X_T) \quad \textrm{with}\quad f(x)=\int_{\mathbb{R}}e^{iu x}\mu(du)\ ,\quad \textrm{for all}\ x\in\mathbb{R}\ ,\label{ES1}\end{eqnarray}
for some finite signed measure $\mu$. 
\begin{Hyp}\label{HypoFSPII:2}
$$ \int_\R K(u) d\vert \mu(u)\vert < \infty.$$
%%% OLD VERSION
%For any $u\in\mathbb{R}$, and $t\in [0,T]$, $u \longmapsto {\displaystyle \frac{d\Psi_t^{'}(u )}{d\Psi_t^{''}(0)}}\exp\left\{\Psi_T(u )-\Psi_t(u )\right\}$ is $\mu$-integrable.
\end{Hyp}
\begin{remarque} \label{RA3}
We observe that the function
$$ (u,t) \mapsto \exp(\Psi_T(u) - \Psi_t(u))$$
is uniformly bounded because the characteristic function is bounded.
\end{remarque}

%A VERIFIER DANS QUELQUES CAS, EX LEVY 
%A MON AVIS ON POURRAIT LAISSER TOMBER L'EXPONENTIEL 
%QUI EST BORNE CAR IL DONE LA FONCTION CARACTERISTIQUE\\

We will  first  evaluate an explicit Kunita-Watanabe decomposition 
 of $H$ w.r.t. the martingale part $M$ of $X$. Later, we will finally obain the decomposition with respect to $X$.

\subsubsection{Explicit elementary Kunita-Watanabe decomposition}

%%%%%%%%%%%%%%%%%%%%%%%%%%%%%%%%%%%%%%%%%%%%%%%%
By Propostion~\ref{prop:SCPII}, $X$ admits the following semimartingale 
decomposition, $X_t=A_t+M_t$, where
\begin{eqnarray}A_t=-i\Psi^{'}_t(0)\ 
\quad\textrm{and}\quad \left\langle M\right\rangle_t=-\Psi^{''}_t(0)\label{97bis}\ .\end{eqnarray}
\begin{propo}\label{propo521}
Let $H=f(X_T)$ where $f$ is of the form~(\ref{ES1}).
 We suppose that the PII $X$ satisfies Assumptions~
\ref{HPAIND},
\ref{HypoFSPII:1}
 and~\ref{HypoFSPII:2}. Then, $H$ admits the decomposition
\begin{equation} \label{h}
 \left  \{ 
\begin{array}{ccc}  
V_t &=& V_0 + \int_0^t Z_s dM_s + O_t  \\
V_T &=& H \ ,
\end{array} \right.      
\end{equation}
with the following properties.
\begin{enumerate}
\item For all $t\in [0,T]$, 
\begin{eqnarray}Z_t=i\int_{\mathbb{R}}e^{iu X_{t-}}\,
\frac{d\left(\Psi^{'}_t(u )-\Psi^{'}_t(0)\right)}{d\Psi^{''}_t(0)}
\exp\left\{\Psi_T(u )-\Psi_t(u )\right\}\, \mu(du)\label{Z}\ ;\end{eqnarray}
\item $O$ is a square integrable $(\mathcal{F}_t)$-martingale such that $\left\langle O,M\right\rangle=0\ ;$ 
\item $H=V_T$ where $(V_t)_{t\in[0,T]}$ is the $(\mathcal{F}_t)$-martingale defined by
\begin{eqnarray}V_t=\mathbb{E}[H\vert \mathcal{F}_t]=\int_{\mathbb{R}}e^{iu X_t}\exp\left\{\Psi_T(u )-\Psi_t(u )\right\}\mu(du)\label{V}\ .\end{eqnarray}
\end{enumerate}
\end{propo}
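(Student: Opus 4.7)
The plan is to identify $V_t = \mathbb{E}[H\vert \mathcal{F}_t]$ explicitly, then extract its Kunita--Watanabe component along $M$ by a direct bracket computation. First, I would apply Fubini (justified by $\mu$ finite and $\vert e^{iuX_T}\vert \le 1$) to write $V_t = \int_\mathbb{R} \mathbb{E}[e^{iuX_T}\vert \mathcal{F}_t]\,\mu(du)$. The independent increments property together with $\varphi_t(u) = e^{\Psi_t(u)}$ yields $\mathbb{E}[e^{iu(X_T-X_t)}] = e^{\Psi_T(u)-\Psi_t(u)}$, which gives formula~(\ref{V}) and shows that for each $u$ the process $N_t(u) := e^{iuX_t + \Psi_T(u)-\Psi_t(u)}$ is an $(\mathcal{F}_t)$-martingale; its continuity in $t$ is ensured by Lemma~\ref{P1}.

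Next I would compute the semimartingale decomposition of $N_t(u)$ via It\^o's formula applied to $g_u(t,x) = e^{iux + \Psi_T(u)-\Psi_t(u)}$, using $X = M + A$, the deterministic characteristic triplet $(b,c,\nu)$ from Theorem~\ref{ThmCharaDeter}, and the identity $M = M^c + \int x\,(\mu^X-\nu)$. Differentiating the L\'evy--Khinchine formula~(\ref{LKFormule}) yields $A_t = b_t + \int(x-h(x))F_t(dx)$ and $d\Psi_t(u) = iu\,db_t - \frac{u^2}{2}dc_t + \int(e^{iux}-1-iuh(x))\nu(dt,dx)$, from which the drift in the It\^o expansion of $N(u)$ cancels (as it must, $N(u)$ being a martingale). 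One obtains
\begin{equation*}
N_t(u) - N_0(u) = \int_0^t iu\,N_{s-}(u)\,dM^c_s + \int_0^t\!\!\int_\mathbb{R} N_{s-}(u)(e^{iux}-1)(\mu^X-\nu)(ds,dx),
\end{equation*}
and the predictable bracket with $M$ is
\begin{equation*}
\langle N(u), M\rangle_t = \int_0^t N_{s-}(u)\Bigl[iu\,dc_s + \int_\mathbb{R} x(e^{iux}-1)\nu(ds,dx)\Bigr].
\end{equation*}
A further differentiation of the L\'evy--Khinchine formula shows that the bracket in square brackets equals $-i\,d(\Psi'_s(u)-\Psi'_s(0))$. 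Integrating against $\mu(du)$, using Fubini (justified by Assumption~\ref{HypoFSPII:2} and Remark~\ref{RA3}) and dividing by $d\langle M\rangle_s = -d\Psi''_s(0)$ from Proposition~\ref{prop:SCPII} yields exactly formula~(\ref{Z}) for $Z_t$.

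Setting $O_t := V_t - V_0 - \int_0^t Z_s\,dM_s$, this is a martingale with $\langle O, M\rangle \equiv 0$ by construction, hence strongly orthogonal to $M$ by Lemma~\ref{lem:ortho}. The main technical obstacle is to verify $Z \in L^2(M)$ (which then implies square integrability of $O$ since $V_T = H \in L^2$). For this I would apply Cauchy--Schwarz with $\vert\mu\vert$, use $\vert N_{s-}(u)\vert \le C$ from Remark~\ref{RA3}, and bound $\mathbb{E}\int_0^T \vert Z_s\vert^2 d\langle M\rangle_s$ by a constant times $\int_\mathbb{R}\int_0^T \vert d(\Psi'_s(u)-\Psi'_s(0))/d\Psi''_s(0)\vert^2\,d(-\Psi''_s(0))\,d\vert\mu\vert(u)$, which is finite by Assumption~\ref{HypoFSPII:2} together with the (SC) condition applied to the $\Psi'_s(0)$ contribution.
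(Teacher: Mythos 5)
Your proof is correct, but the key step is carried out by a genuinely different method from the paper's. To compute $\langle V,M\rangle$ for the elementary claim $e^{iuX_T}$, the paper avoids It\^o's formula and semimartingale characteristics entirely: it shows directly that $\widetilde V_tM_t+i\widetilde V_t(\Psi'_t(u)-\Psi'_t(0))$ is a martingale by conditioning on $\mathcal{F}_s$, using independence of increments and the identity $\mathbb{E}[e^{iu(X_t-X_s)}(X_t-X_s)]=-i\partial_u\mathbb{E}[e^{iu(X_t-X_s)}]$, and then reads off the angle bracket from integration by parts. You instead expand $N_t(u)$ via It\^o's formula against the deterministic triplet $(b,c,\nu)$, represent $M=M^c+\int x\,(\mu^X-\nu)$, and compute $\langle N(u),M\rangle_t=\int_0^t N_{s-}(u)[iu\,dc_s+\int x(e^{iux}-1)\nu(ds,dx)]$; your identification of the bracketed measure with $-i\,d(\Psi'_s(u)-\Psi'_s(0))$ is a correct consequence of differentiating the L\'evy--Khinchine formula (and it also recovers $A_t=-i\Psi'_t(0)$ as a consistency check). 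The paper's route is more elementary and sidesteps the technical care your route requires — namely justifying It\^o's formula when $t\mapsto\Psi_t(u)$ is merely of bounded variation (best handled by writing $N_t(u)=e^{\Psi_T(u)-\Psi_t(u)}\cdot e^{iuX_t}$ and integrating by parts against the deterministic BV factor) and the compensated-integral representation of the purely discontinuous part of $M$. Your route, on the other hand, makes explicit where formula~(\ref{Z}) comes from in terms of the characteristics, and your closing $L^2(M)$ estimate (Cauchy--Schwarz in $|\mu|$, the bound of Remark~\ref{RA3}, Assumption~\ref{HypoFSPII:2} for the $\Psi'(u)$ contribution and (SC) for the $\Psi'(0)$ contribution via $(a-b)^2\le 2a^2+2b^2$) is spelled out more completely than in the paper, which delegates the general-$\mu$ integrability to the proof of Theorem~\ref{propo310}.
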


\begin{remarque}
In particular, 
\begin{enumerate}
\item $V_0=\mathbb{E}[H]\,;$
\item $\mathbb{E}\left[\int_0^TZ_s^2d\left\langle M\right\rangle_s\right]<\infty\,.$
\end{enumerate}
\end{remarque}

\begin{proof}
%[\textbf{Proof of Proposition~\ref{propo521}}]
We start with the case $\mu=\delta_u (dx)$ for some $u \in \mathbb{R}$ so that $f(x)=e^{iux}$. We consider the $(\mathcal{F}_t)$-martingale $V_t=\mathbb{E}[f(X_T)|\mathcal{F}_t]=\mathbb{E}[e^{ iuX_T}|\mathcal{F}_t]$.
\begin{enumerate}
\item Clearly $V_0=\mathbb{E}[e^{iuX_T}]\,.$
\item We calculate explicitely $V_t$, which gives 
\begin{eqnarray*}V_t &=& \mathbb{E}[e^{iu X_T}|\mathcal{F}_t]\ 
= e^{iu X_t}\mathbb{E}[e^{iu (X_T-X_t)}]\ 
= \exp(iu X_t-\Psi_t(u ))\exp(\Psi_T(u )) \\
&=&\widetilde{V_t}\exp(\Psi_T(u ))\ ,\end{eqnarray*}
%We recall that
%\begin{eqnarray*}
%\mathbb{E}[\exp(i\xi X_t)]=\exp(\Psi_t(\xi ))\ .\end{eqnarray*}
%$\Psi_t:\mathbb{R}\rightarrow \mathbb{C}$. 
%Hence, 
%\begin{eqnarray*}V_t=\widetilde{V_t}\exp(\Psi_T(\xi ))\ ,\quad \textrm{where}\quad \widetilde{V_t}=\exp(i\xi X_t-\Psi_t(\xi ))\ ,\end{eqnarray*}
where $\widetilde{V_t}=\exp(iu X_t-\Psi_t(u ))$ defines an $(\mathcal{F}_t)$-martingale.
\item We evaluate $\left\langle V,M\right\rangle$.

\begin{lemme}\label{lemme63}
$\left\langle V,M\right\rangle_t=-i\int_0^tV_s(\Psi^{'}_{ds}(u )-\Psi^{'}_{ds}(0))\ .$
\end{lemme}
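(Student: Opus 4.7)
The plan is to compute $\langle V,M\rangle$ through direct evaluation of conditional expectations, exploiting the independent increments of $X$. First, I would observe that $V_t = e^{\Psi_T(u)}\widetilde V_t$ where $\widetilde V_t = \exp(iuX_t - \Psi_t(u))$, so that by bilinearity of the sharp bracket $\langle V,M\rangle_t = e^{\Psi_T(u)}\langle \widetilde V,M\rangle_t$. Since $|\widetilde V_t|=e^{-\mathrm{Re}\,\Psi_t(u)}$ is uniformly bounded on $[0,T]$ (by continuity of $t\mapsto \Psi_t(u)$ on the compact interval) and $M$ is square integrable, both $\widetilde V$ and $\widetilde VM$ are integrable, so the angle bracket is well defined.

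The heart of the proof is to compute, for $0\le s\le t$, the conditional expectation $\mathbb E[\widetilde V_t M_t\mid \mathcal F_s]$. Writing $Y:=X_t-X_s$, independent of $\mathcal F_s$ by the PII property of $X$, one decomposes $\widetilde V_t M_t = \widetilde V_s\, e^{iuY-(\Psi_t(u)-\Psi_s(u))}\bigl(X_s + Y + i\Psi_t'(0)\bigr)$. Conditioning amounts to computing $\mathbb E[e^{iuY}]=e^{\Psi_t(u)-\Psi_s(u)}$ and $\mathbb E[Y\,e^{iuY}]=-i(\Psi_t'(u)-\Psi_s'(u))$, the latter being the derivative in $u$ of the characteristic function (justified by the $L^2$ integrability of $Y$). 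After simplification one finds
$$
\mathbb E[\widetilde V_t M_t\mid\mathcal F_s] - \widetilde V_s M_s = -i\,\widetilde V_s\bigl[(\Psi_t'(u)-\Psi_s'(u)) - (\Psi_t'(0)-\Psi_s'(0))\bigr].
$$

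Next, I would set $C_t := -i\int_0^t \widetilde V_r\, d\bigl(\Psi_r'(u)-\Psi_r'(0)\bigr)$. Since the cumulative integrator is deterministic (a consequence of $X$ being PII) and continuous in $r$ (Remark \ref{RQEFS}(3) gives continuity of $t\mapsto \Psi_t'(u)$ and $t\mapsto \Psi_t'(0)$, the latter already being of bounded variation), the process $C$ is a continuous, hence predictable, process of bounded variation with $C_0=0$. A conditional Fubini argument combined with the martingale property $\mathbb E[\widetilde V_r\mid \mathcal F_s]=\widetilde V_s$ for $s\le r$ yields
$$
\mathbb E[C_t-C_s\mid\mathcal F_s] = -i\,\widetilde V_s\bigl[(\Psi_t'(u)-\Psi_s'(u))-(\Psi_t'(0)-\Psi_s'(0))\bigr],
$$
which is exactly the increment computed above. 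Thus $\widetilde V M - C$ is a martingale, and by uniqueness of the predictable bounded variation process in the characterization of the angle bracket, $\langle \widetilde V,M\rangle_t = C_t$. Multiplying by $e^{\Psi_T(u)}$ and noting that $e^{\Psi_T(u)}\widetilde V_r = V_r$ yields the claimed formula.

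The main technical obstacle is the justification of the differentiation $\frac{d}{du}\mathbb E[e^{iuY}] = \mathbb E[iYe^{iuY}]$, which requires the $L^1$-integrability of $Y=X_t-X_s$; this is ensured by the standing square-integrability assumption on $X$. A minor point is that the statement of the lemma uses $V_s$ in the integrand rather than $V_{s-}$, but the two integrals coincide since $d(\Psi_s'(u)-\Psi_s'(0))$ is a continuous measure and does not charge the (countable) jumps of $V$.
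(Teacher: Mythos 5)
Your proposal is correct and follows essentially the same route as the paper: you reduce to $\widetilde V_t=\exp(iuX_t-\Psi_t(u))$, compute $\mathbb E[\widetilde V_tM_t\mid\mathcal F_s]$ via the independence of increments and differentiation of the characteristic function, and identify the predictable bounded-variation compensator. The only (immaterial) difference is in the last step, where the paper integrates by parts on $\widetilde V_t(\Psi'_t(u)-\Psi'_t(0))$ while you verify directly, via a conditional Fubini argument, that $C_t=-i\int_0^t\widetilde V_r\,d(\Psi'_r(u)-\Psi'_r(0))$ compensates the product.
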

\begin{proof}
%[\textbf{Proof of Lemma~\ref{lemme63}}]
We evaluate $\mathbb{E}[\widetilde{V_t}M_t|\mathcal{F}_s]$. Since $\widetilde{V}$ and $M$ are $(\mathcal{F}_t)$-martingales and using the property of independent increments we get
\begin{eqnarray*}
\mathbb{E}[\widetilde{V_t}M_t|\mathcal{F}_s]
&=&\mathbb{E}[\widetilde{V_t}M_s|\mathcal{F}_s]+\mathbb{E}[\widetilde{V_t}(M_t-M_s)|\mathcal{F}_s]\ ,\\\\
&=&M_s\widetilde{V_s}+\widetilde{V_s}\mathbb{E}[\exp\{iu (X_t-X_s)-(\Psi_t(u )-\Psi_s(u ))\}(M_t-M_s)]\ ,\\\\
&=&M_s\widetilde{V_s}+\widetilde{V_s}e^{-(\Psi_t(u )-\Psi_s(u ))}\mathbb{E}[e^{iu (X_t-X_s)}(M_t-M_s)]\ .
\end{eqnarray*}
Previous expectation gives
\begin{eqnarray*}
\mathbb{E}[e^{iu (X_t-X_s)}(M_t-M_s)]&=&\mathbb{E}[e^{iu (X_t-X_s)}(X_t-X_s)]+\mathbb{E}[e^{iu (X_t-X_s)}i(\Psi_t^{'}(0)-\Psi_s^{'}(0))]\ ,\\\\
&=&-i\frac{\partial}{\partial u }\mathbb{E}[e^{iu (X_t-X_s)}]+i(\Psi_t^{'}(0)-\Psi_s^{'}(0))\mathbb{E}[e^{iu (X_t-X_s)}]\ ,\\\\
&=&-ie^{\Psi_t(u )-\Psi_s(u )}(\Psi^{'}_t(u )-\Psi^{'}_s(u ))
+i(\Psi_t^{'}(0)-\Psi_s^{'}(0))e^{\Psi_t(u )-\Psi_s(u )}\ .\end{eqnarray*}
Consequently,
\begin{eqnarray*}
\mathbb{E}[\widetilde{V_t}M_t|\mathcal{F}_s]
&=&
M_s\widetilde{V_s}-i\widetilde{V_s}(\Psi^{'}_t(u )-\Psi^{'}_s(u ))+i\widetilde{V_s}(\Psi_t^{'}(0)-\Psi_s^{'}(0))\\ \\
&=&M_s\widetilde{V_s}-i\widetilde{V_s}\left(\Psi^{'}_t(u )-\Psi_t^{'}(0)-(\Psi^{'}_s(u )-\Psi_s^{'}(0))\right)\ .
\end{eqnarray*}
This implies that $\left( \widetilde{V_t}M_t+i\widetilde{V_t}(\Psi^{'}_t(u )-\Psi_t^{'}(0))\right )_t$ is an $(\mathcal{F}_t)$-martingale.
Then by integration by parts, 
$$
\widetilde{V_t}(\Psi^{'}_t(u )-\Psi_t^{'}(0))=\int_0^t\widetilde{V_s}(\Psi^{'}_{ds}(u )-\Psi_{ds}^{'}(0))+\int_0^t(\Psi^{'}_{s}(u )-\Psi_{s}^{'}(0))d\widetilde{V_s}\ .
$$
The second integral term of the right-hand side being a martingale,  it follows that
\begin{eqnarray*}
\left\langle \widetilde{V},M\right\rangle_t=-i \int_0^t\widetilde{V_s}(\Psi^{'}_{ds}(u )-\Psi_{ds}^{'}(0))  \ .
\end{eqnarray*}
and so
\begin{eqnarray}
\left\langle V,M\right\rangle_t=-i \int_0^tV_s(\Psi^{'}_{ds}(u )-\Psi_{ds}^{'}(0))\ .
\label{VM}\end{eqnarray}

\end{proof}

\item We continue the proof of the Proposition \ref{propo521}.
 For given $(Z_t)$ we have
\begin{eqnarray*} \left\langle \int_0^t ZdM,M\right\rangle_t=\int_0^t Z_{s-}d\left\langle M\right\rangle_s=-\int_0^tZ_s\Psi^{''}_{ds}(0)\ .\end{eqnarray*}
%Under the assumption that \underline{$\Psi^{'}_s(z)\ll \Psi^{''}_s(0)$, $\forall z \in \mathbb{R}$}.
\item  We want to identify
\begin{eqnarray*}-\int_0^tZ_s\Psi^{''}_{ds}(0)=-i\int_0^tV_s(\Psi^{'}_{ds}(u )-\Psi_{ds}^{'}(0))\ .\end{eqnarray*}
This naturally leads to
\begin{eqnarray}Z_s=i\frac{d(\Psi^{'}_{s}(u )-\Psi_{s}^{'}(0))}
{d\Psi^{''}_s(0)}V_{s-}\ .\label{ZZS}\end{eqnarray}
\item Finally, we obtain the general case, for general finite signed measure $\mu$,  similarly to the proof of Theorem~\ref{propo310} (in the sequel) in the case of exponential of PII processes. The use of Fubini's theorem is essential.
\end{enumerate}
\end{proof}

\begin{example}
We take $X=M=W$ the classical Wiener process. We have $\Psi_s(u )=-\frac{u ^2s}{2}$ so that $\Psi^{'}_s(u )=-u s$ and $\Psi^{''}_s(u )=-s$. So $Z_s=iu V_s$. We recall that 
\begin{eqnarray*}V_s=\mathbb{E}[\exp(iu W_T)|\mathcal{F}_s]=\exp(iu W_s)\exp\left(-u ^2\frac{T-s}{2}\right)\ .\end{eqnarray*}
In particular,  $V_0=\exp(-\frac{u ^2T}{2})$ and so
\begin{eqnarray*}\exp(iu W_T)=i\int_0^Tu \exp(iu W_s)
\exp\left(-u ^2\frac{T-s}{2}\right) dW_s + 
\exp(-\frac{u ^2T}{2}).\end{eqnarray*}
In fact that expression is classical and 
it can be derived from Clark-Ocone formula.
\end{example}

\subsubsection{Explicit Föllmer-Schweizer decomposition}

%%%%%%%%%%%%%%%%%%%%%%%%%%%%%%%%%%%%%%%%%%%%

We introduce a quantity which will be useful in the sequel. For
$t \in [0,T], u \in \R$ we set
\begin{equation} \label{EA3}
\eta(u,t) = \int_0^t \frac{d(\Psi_s^{'}(u )-\Psi_s{'}(0))}
{d(\Psi^{''}_s(0))}\Psi^{'}_{ds}(0)\ .
\end{equation}
\begin{remarque} \label{rhmCO}
\begin{enumerate}
\item $\eta$ is defined unambiguously since  
$d\left(\Psi^{'}_t(u )-\Psi^{'}_t(0)\right) $ is absolutely 
continuous with respect to $d\Psi^{''}_t(0) \ .$
\item $\eta$ is well-defined, because for any $u \in  \R$,
$$ \eta(u,t) = \int_0^t  \frac{d(\Psi_s^{'}(u )-\Psi_s{'}(0))}
{d(\Psi^{''}_s(0))}  
\frac{d(\Psi_s{'}(0))}
{d(\Psi^{''}_s(0))}
d\Psi^{''}_{ds}(0)  
$$
is bounded by Cauchy-Schwarz, taking into account Assumption
\ref{HypoFSPII:1} point 2.
\end{enumerate}
\end{remarque}

We are now able to evaluate the FS decomposition of
 $H=f(X_T)$ where $f$ is given by~(\ref{FORM}).

We introduce now a supplementary  hypothesis.
\begin{Hyp} \label{ASupeta} The quantity
$$ \sup_{u \in {\rm supp} \mu, t \in [0,T]} 
( {\rm Re(\eta(u,t)}) < \infty \ .$$
\end{Hyp}

\begin{thm}\label{thmCO}
Under the assumptions of Proposition~\ref{propo521}
and Assumption \ref{ASupeta},
%Assumption \ref{HypoFSPII:2},
 the FS decomposition of $H$ 
is the following
\begin{eqnarray}H_t=H_0+\int_0^t \xi_sdX_s+L_t \quad\textrm{with}\quad
  H_T=H\ 
\end{eqnarray}
and
\begin{eqnarray} \label{EA1}
H_t & = & \int_{\mathbb{R}}H(u)_t\mu(du)\ , \nonumber \\
&& \\
 \xi_t&=&\int_{\mathbb{R}}\xi(u )_t\mu(du) \  \nonumber ,
\end{eqnarray}
where
\begin{eqnarray} \label{EA2}
\xi(u )_t &=& i\frac{d(\Psi^{'}_t(u )-\Psi^{'}_t(0))}
{d\Psi^{''}_t(0)}\,H(u)_{t-}\ , \nonumber \\
&& \\
H(u )_t&=&\exp\left\{\eta(u ,T)-\eta(u ,t)+\Psi_T(u )-\Psi_t(u )
\right\}\,e^{iuX_t}\ .
%\\
%\eta(u ,t)&=& i\int_0^t\frac{d\Psi^{'}_s(0)}{d\Psi^{''}_s(0)}
%d\left(\Psi^{'}_s(u )-\Psi^{'}_s(0)\right)\ ,
%\label{EA3}
 \nonumber
\end{eqnarray}
\end{thm}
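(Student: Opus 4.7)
The plan is to reduce to the Dirac case $\mu = \delta_u$ (so $H = e^{iuX_T}$) and then extend to general finite signed $\mu$ by Fubini, in the spirit of the final step of Proposition~\ref{propo521}. In the Dirac case I have to show that the process $H(u)_t$ prescribed by~(\ref{EA2}) admits the representation $H(u)_t = H(u)_0 + \int_0^t \xi(u)_s\, dX_s + L(u)_t$ for some square-integrable $(\mathcal{F}_t)$-martingale $L(u)$ strongly orthogonal to $M$, the terminal condition $H(u)_T = e^{iuX_T}$ being immediate from the definition.

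To establish this, I would write $H(u)_t = g(t)\, e^{iuX_t}$ with $g(t) = \exp\{\eta(u,T) - \eta(u,t) + \Psi_T(u) - \Psi_t(u)\}$. Under Assumption~\ref{HypoFSPII:1}, $t \mapsto \eta(u,t)$ is continuous and of bounded variation; by Proposition~\ref{prop:SCPII} and Proposition~\ref{proposition38}, so is $t \mapsto \Psi_t(u)$. Hence $g$ is continuous of bounded variation, and integration by parts gives, without cross-variation term,
\begin{equation*}
 dH(u)_t = g(t)\, d(e^{iuX_t}) + e^{iuX_{t-}}\, dg(t).
\end{equation*}
Using next the factorization $e^{iuX_t} = e^{\Psi_t(u)}\, \widetilde V_t$ with $\widetilde V_t$ the $(\mathcal{F}_t)$-martingale of Remark~\ref{R3}, a second integration by parts and the Kunita--Watanabe decomposition of Proposition~\ref{propo521} (rescaled by $e^{-\Psi_T(u)}$) yield $d\widetilde V_t = i\, \frac{d(\Psi'_t(u) - \Psi'_t(0))}{d\Psi''_t(0)}\, \widetilde V_{t-}\, dM_t + d\widetilde O_t$ for some square-integrable martingale $\widetilde O$ orthogonal to $M$. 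Substituting back and recognizing the coefficient of $dM_t$ as $\xi(u)_t$, the martingale part of $dH(u)_t$ equals $\xi(u)_t\, dM_t + g(t) e^{\Psi_t(u)}\, d\widetilde O_t$.

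The heart of the argument is the algebraic identity $\xi(u)_t\, dA_t = H(u)_{t-}\, d\eta(u,t)$, a direct consequence of $dA_t = -i\, d\Psi'_t(0)$, the explicit form of $\xi(u)_t$ in~(\ref{EA2}), and the definition~(\ref{EA3}) of $\eta$. This identity shows that the bounded-variation residual in the It\^o expansion of $dH(u)_t$ matches exactly $\xi(u)_t\, dA_t$, so that $dH(u)_t = \xi(u)_t\, dX_t + g(t) e^{\Psi_t(u)}\, d\widetilde O_t$; the last term defines $L(u)$, automatically strongly orthogonal to $M$ since $\widetilde O$ is. Integrability of $\xi(u)$ in $L^2(M)$ follows from Assumption~\ref{HypoFSPII:1} by Cauchy--Schwarz in the spirit of Proposition~\ref{propo521}, while Assumption~\ref{ASupeta} together with Remark~\ref{RA3} give uniform boundedness of $g(t) e^{\Psi_t(u)}$, ensuring $L(u)$ is square integrable.

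For general $\mu$, Fubini's theorem turns the Dirac-case decomposition into the claimed identities~(\ref{EA1}), with $L_T = \int_\R L(u)_T\, \mu(du)$; the integrability needed for the interchange is provided by Assumption~\ref{HypoFSPII:2} combined with the uniform bound of Assumption~\ref{ASupeta}. Uniqueness is automatic from Theorem~\ref{ThmExistenceFS}. The main obstacle is the It\^o bookkeeping in the Dirac case: carefully tracking the bounded-variation contributions produced by $dg$ and by the drift piece $e^{iuX_{t-}} d\Psi_t(u)$, and verifying that they combine precisely via the identity $\xi(u)_t\, dA_t = H(u)_{t-}\, d\eta(u,t)$. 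This cancellation is the mechanism that converts the Kunita--Watanabe decomposition with respect to $M$ into a F\"ollmer--Schweizer decomposition with respect to $X$, and is exactly what forces the particular form~(\ref{EA3}) of $\eta$.
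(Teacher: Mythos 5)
Your proof follows essentially the same route as the paper's: reduction to $\mu=\delta_u$, use of the Kunita--Watanabe decomposition of Proposition~\ref{propo521}, integration by parts against the continuous bounded-variation factor $e^{\int_t^T\eta(u,ds)}$ so that the drift terms are absorbed into the choice of $\eta$, and Fubini for general $\mu$. The only (cosmetic) difference is that you split $e^{iuX_t}=e^{\Psi_t(u)}\widetilde V_t$ and integrate by parts twice, whereas the paper works directly with $V(u)_t=e^{iuX_t+\Psi_T(u)-\Psi_t(u)}$ and its integrand $Z(u)$ from \eqref{ZZS}.

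There is, however, a sign problem at the very step you call the heart of the argument, and it is inherited from the paper. With $G(t)=e^{\eta(u,T)-\eta(u,t)}$ one has $dG(t)=-G(t)\,\eta(u,dt)$, so the bounded-variation residual left over after extracting $\xi(u)_t\,dM_t$ and the orthogonal martingale part is $-H(u)_{t-}\,\eta(u,dt)$; on the other hand your identity gives $\xi(u)_t\,dA_t=+H(u)_{t-}\,\eta(u,dt)$ (from $dA_t=-i\,d\Psi'_t(0)$ and \eqref{EA3}). These two contributions therefore do not cancel --- they add up to $-2\,\xi(u)_t\,dA_t$ --- unless $\eta$ is taken with the opposite sign to \eqref{EA3} (equivalently, unless the exponent in \eqref{EA2} reads $\eta(u,t)-\eta(u,T)$). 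The toy case $X_t=W_t+mt$ confirms this: \eqref{EA3} gives $\eta(u,t)=iumt$, while the true decomposition (the market being complete here) is $H(u)_t=e^{iuX_t-u^2(T-t)/2}=e^{-(\eta(u,T)-\eta(u,t))+\Psi_T(u)-\Psi_t(u)}e^{iuX_t}$. The paper's own proof makes exactly the same slip (its displayed condition determining $\eta$ is off by a sign relative to the identity it is meant to enforce), so your argument is faithful to the source; but as written the asserted cancellation fails, and you should flip the sign either in \eqref{EA3} or in the exponent of $H(u)_t$, after which your computation closes correctly.
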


\begin{proof}
%[\textbf{Proof of Theorem~\ref{thmCO}}]
%We define $(H_t)$ by $(\ref{EA1})$ and $(\ref{EA2})$. 
Using Fubini's theorem, we reduce the problem to show that
\begin{eqnarray*}H(u)_t=H(u )_0+\int_0^t  \xi(u)_sdX_s+L(u )_t
\quad \textrm{with}\quad H(u )_T=\exp(iu X_T)\ ,\end{eqnarray*}
for fixed $u \in\mathbb{R}$
where $L(u)$ is a square integrable martingale and 
$\langle L(u),M \rangle  = 0$, where $M$ is the martingale part of the
special semimartingale $X$. 
% Let $(H_t)=(H(u )_t)$ as in $(\ref{EA2})$, 
Notice that by equation~(\ref{EA2}), 
$$
H(u)_t= H(u)_0+ e^{\int_t^T\eta(u ,ds)} V(u)_t\quad \textrm{with}\quad  V(u )_t=\exp(iu X_t+\Psi_T(u )-\Psi_t(u ))\ .
$$
 Integrating by parts, gives
%
%%%%%%%%%%%%%%%%%
\begin{eqnarray*}H(u)_t=H(u)_0-\int_0^t e^{\int_r^T\eta(u ,ds)} V(u)_r
\eta(u ,d r)+\int_0^t e^{\int_r^T \eta(u ,ds)} dV(u)_r\ .\end{eqnarray*}
We denote again by $Z(u)$ the expression provided by $(\ref{ZZS})$. 
We recall that 
$$
dV(u)_r=Z(u)_rdM_r+dO(u)_r=Z(u)_r(dX_r-dA_r)+dO(u)_r\ ,
$$ 
where $A$ is given by~$(\ref{97bis})$ and $O$ is a square
 integrable martingale  strongly 
orthogonal to $M$ (i.e. $\left\langle M,O\right\rangle_.=0$).
$$
H(u )_t=H(u )_0+L(u )_t+\int_0^t e^{\int_r^T\eta(u ,ds)} Z(u )_rdX_r-\int_0^t e^{\int_r^T\eta(u ,ds)}Z(u )_r(-i\Psi_{dr}^{'}(0))-\int_0^t e^{\int_r^T\eta(u ,ds)}V(u )_r\eta(u ,dr)\ ,
$$
where
\begin{eqnarray*}L(u)_t=\int_0^t e^{\int_r^T\eta(u ,ds)} dO(u )_r\ ,\end{eqnarray*}
is a martingale strongly orthogonal to $M$. To conclude, 
we need to choose $\eta$ so that
\begin{eqnarray*}\int_0^tZ(u )_r e^{\int_r^T\eta(u ,ds)}(-i\Psi_{dr}^{'}(0))=\int_0^t e^{\int_r^T\eta(u ,ds)} V(u )_r\eta(u ,dr))\ .\end{eqnarray*}
This requires
\begin{eqnarray*}\eta(u ,dr)=
\frac{d(\Psi_r^{'}(u )-\Psi_r{'}(0))}{d(\Psi^{''}_r(0))}\Psi^{'}_{dr}(0)\ .\end{eqnarray*}
So we define $\eta$ as in
 \eqref{EA3}.
\end{proof}

%\begin{example}
\subsubsection{The Lévy case}

Let X be a square integrable Lévy process, with characteristic function
$\exp(\Psi(u )t)$. In particular, $\Psi$ is of class $C^2(\mathbb{R})$.
% Then, Assumption 1. and 2. of Section~\ref{s61} are obviously
% fulfilled.
 We have
$$
\frac{d\Psi^{'}_t(u )}{d\Psi^{''}_t(0)}=
\frac{\Psi^{'}(u )}{\Psi^{''}(0)}\quad \textrm{and}\quad \eta(u ,t)=t\frac{\Psi^{'}(u )-
\Psi^{'}(0)}{\Psi^{''}(0)}\Psi^{'}(0)\ .
$$
%
%and the integrability assumption of Proposition~\ref{propo521} is
%fullfilled.
%\begin{eqnarray*}\eta(u ,t)=t\frac{\Psi^{'}(u )-\Psi^{'}(0)}{\Psi^{''}(0)}\Psi^{'}(0)\ .\end{eqnarray*}
%
We remark that Assumptions \ref{HPAIND} is verified. Concerning
Assumption \ref{HypoFSPII:1}, point 1. is trivial; point 2. is verified
because 
$K(u)=\left|{\displaystyle
    \frac{\Psi^{'}(u)}{\Psi^{''}(u)}}\right |^2 (-T\Psi^{''}(u)) \ .$ 
On the other hand Assumption \ref{ASupeta} is verified if 
\begin{eqnarray}Re\left(\frac{\Psi^{'}(u)\Psi^{'}(0)}
{\Psi^{''}(0)} \right)<\infty \ .\label{ARePsi}\end{eqnarray}
Since $\Psi^{'}(0)=i\E[X_1]$ and $\Psi^{''}(0) < 0$,  \eqref{ARePsi} is fulfilled if
\begin{eqnarray} \E[X_1]Im(\Psi^{'}(u))\label{EARePsi}>-\infty \ .
\end{eqnarray}
 Assumption \ref{HypoFSPII:2} is verified if 
\begin{eqnarray}\int_{\R}\left|\Psi^{'}(u)\right|^2d\vert \mu(u)\vert
  <\infty\ .\label{EAmu}\end{eqnarray}
\begin{example}\label{GaussianFS}
We start with the toy model $X_t=\sigma W_t+m t$, $\sigma,m \in \R$. We have $\Psi(u)=-\frac{u^2}{2}\sigma^2+im u$ so 
$\Psi^{'}(u)=-u\sigma^2+im$ and $Im(\Psi^{'}(u))=m$. Condition
\eqref{EARePsi} is always verified and Condition 
\eqref{EAmu} is
 verified if 
\begin{eqnarray}\int_{\R}u^2d\mu(u)<\infty\ .\label{EAmu1}\end{eqnarray}

\eqref{EAmu1} is verified for instance in the example of the
 beginning of subsection \ref{RSCFT} since $\int_{-\infty}^cu^2e^udu<\infty$ for $c>0.$
\end{example}

\begin{remarque}\label{FSLevy10}
In the examples introduced in Remark \ref{remark47Bis}, we can show that $u\mapsto \left|\Psi^{'}(u)\right|$ is bounded and so \eqref{EARePsi} and \eqref{EAmu} are always verified for the following reasons.
\begin{enumerate}
\item \underline{Poisson case}\\ We have $\Psi^{'}(u)=i\lambda e^{iu}\ . $
\item \underline{NIG case}\\ We have $\Psi{'}(u)=i\mu+i\delta\left(\beta+iu\right)\left(\alpha^2-(\beta+iu)^2\right)^{-\frac{1}{2}} \ .$
Now 
\begin{eqnarray*}\left|\Psi{'}(u)\right|\leq2\left(|\mu|^2+2\delta \sqrt{\frac{\beta^2+u^2}{(\alpha^2-\beta^2+u^2)^2+4u^2\beta^2}}\right)\ .\end{eqnarray*}
Since $|\alpha|>|\beta|$, $u\mapsto \left|\Psi{'}(u)\right|$ is bounded.
\item \underline{Variance Gamma case}\\We have $\Psi^{'}(u)=i\mu-\frac{u-i\beta}{\alpha-iu\beta+\frac{u^2}{2}}$
Clearly $|\Psi{'}(u)|$ is again bounded.
\end{enumerate}
\end{remarque}

In conclusion, we can apply Theorem \ref{thmCO} and we obtain
\begin{eqnarray*}
V(u )_t&=& \exp(iu X_t+(T-t)\Psi(u ))\ ,\\
H(u )_t&=&\exp\left((T-t)\Psi(u )+\eta(u ,T)-\eta(u ,t)\right)\,e^{iuX_t}\ ,\\
 \xi(u )_t&=&H_t(u )i\frac{\Psi^{'}(u )-\Psi^{'}(0)}{\Psi^{''}(0)}\ .
\end{eqnarray*}

%\begin{eqnarray*}\eta(u ,t)=t\frac{\Psi^{'}(u )-\Psi^{'}(0)}{\Psi^{''}(0)}\Psi^{'}(0)\ .\end{eqnarray*}
%
%\end{example}

\subsection{Representation of some contingent claims by Fourier transforms}
\label{RSCFT}

%%%%%%%%%%%%%%%%%%%%%%%%%%%%%%%%%%%%%%%%%%
In general, it is not possible to find a Fourier representation, of the form~(\ref{ES1}), for a given payoff function which is not necessarily bounded or integrable. Hence, it can be more convenient to use the bilateral Laplace transform that allows an extended domain of definition including non integrable functions. We refer to~\cite{Cramer},~\cite{Raible98} and more recently~\cite{Eberlein} for such characterizations of payoff functions. This will be done in the next section. 
%In~\cite{Eberlein}, are listed some examples of payoff functions used in finance with their representations as Fourier Transforms or bilateral. 
%
However, to illustrate the results of this section restricted to payoff functions represented as classical Fourier transforms, we give here two simple examples of such representation extracted from~\cite{Eberlein}:  

\begin{enumerate}
	\item A variant of the digital option is the so-called \textit{asset-or-nothing digital}, where the option holder receives one unit of the \textit{asset}, instead of \textit{currency}, depending on wether the underlying reaches some barrier or not. Hence, the payoff of the asset-or-nothing digital put with barrier is 
$$
f(x)=e^x\mathbf{1}_{e^x<B}  \quad \textrm{and}\quad \hat{f}(u)=\int_{\mathbb{R}} e^{iux} f(x)\,dx=\frac{B^{1+iu}}{1+iu}\ .
$$
	\item The payoff of a {\it self quanto put option} with strike $K$ is 
$$
f(x)=e^x(K-e^x)_+\quad \textrm{and}\quad \hat{f}(u)=\int_{\mathbb{R}} e^{iux} f(x)\,dx=\frac{K^{2+iu}}{(1+iu)(2+iu)}\ .
$$
\end{enumerate}
In both cases the measure $\mu$ is finite.

%%%%%%%%%%%%%%%%%%%%%%%%%%%%%%%%%%%%%%%%%%%%%%%%%%%%%%%%%%%%%%%%%%%%%%%%%
\section{Föllmer Schweizer decomposition for exponential of PII processes}
%%%%%%%%%%%%%%%%%%%%%%%%%%%%%%%%%%%%%%%%%%%%%%%%%%%%%%%%%%%%%%%%%%%%%%%%%
\label{sec:expPII}

\setcounter{equation}{0}

In this section, we consisder the case of exponential of PII
corresponding to geometric models (such as the Black-Scholes model) much
more used in finance than arithmetic models (such as the Bachelier
model). The aim of this section is to generalize the results
of~\cite{Ka06} to the case of PII with possibly non stationary increments. Here again, this generalization is motivated by applications to energy derivatives where forward prices show a volatility term structure that requires the use of models with non stationary increments.

\subsection{A reference variance measure}
%%%%%%%%%%%%%%%%%%%%%%%%%%%%%%%%%%%%%%%%%%
%Let $X=(X_t)_{t \in [0,T]}$ be a real-valued semimartingale.
 We come back to the main optimization problem which was formulated in 
Section \ref{Generalities}. We assume that the process $S$ is the 
discounted price of the non-dividend paying stock which is supposed to be of the form,
\begin{center}
 $S_t=s_0\exp(X_t)\ ,\quad \textrm{for all}\ t\in[0,T]\ ,$
\end{center}
where $s_0$ is a strictly positive constant and $X$ is a
semimartingale process with independent increments (PII),
 in the sense of Definition~\ref{defPAI},  but not necessarily
 with stationary increments.

\begin{remarque}\label{R1}
Let $\gamma \in \R^\ast$,
\begin{enumerate}
\item $\mathbb{E}[\exp(\gamma (X_t-X_s))]>0$, since $X_t-X_s > - \infty$ a.s.
\item $\exp(\gamma(X_t-X_s))$ has a strictly positive variance if
 $(X_t-X_s)$ is non-deterministic.
%\item For all $0\leq a\leq b\leq T$, we have
%\begin{eqnarray}\mathbb{E}[\exp(z(X_t-X_a))]=\exp(\kappa_t(z)-\kappa_a(z))\label{F}\end{eqnarray}
\end{enumerate}
\end{remarque}
We introduce a new function that will be useful in the sequel. 
%%%%%%%%%%%%%%%%%%%%%%%%%%%%%%%%%%%%%%%%%%%%%%%%%%%%%%%%%%%%%%%%%%%%%%%%%%%%%%%%%%%%%%%
%nouvelle definition
%%%%%%%%%%%%%%%%%%%%%%%%%%%%%%%%%%%%%%%%%%%%%%%%%%%%%%%%%%%%%%%%%%%%%%%%%%%%%%%%%%%%%%
\begin{defi}\label{defi:rho} 
For any $t\in [0,T]$, let $\rho_t$ denote the complex valued function such that for all $z,y \in D$
\begin{equation}
\label{eq:rho:zy}
\rho_t(z,y)=\kappa_t(z+y)-\kappa_t(z)-\kappa_t(y)\ .
\end{equation}
For all $z\in D$, then $\bar z\in D$ and $\rho_t(z,\bar z)$ is well defined. To shorten notations $\rho_t$ will also denote the real valued function defined on $D$ such that, 
\begin{equation}
\label{L1}
\rho_t(z)= \rho_t(z,\bar z)=\kappa_t(2Re(z))-2Re(\kappa_t(z))\ . 
\end{equation}
Notice that the last equality results from Remark~\ref{remarkR2}. 
\end{defi}
%%%%%%%%%%%%%%%%%%%%%%%%%%%%%%%%%%%%%%%%%%%%%%%%%%%%%%%%%%%%%%%%%%%%%%%%%%%%%%%%%ù
%
An important technical lemma follows below.
\begin{lemme}\label{L2}
Let $z\in D$, with $z \neq 0$, then, 
% with $Re(z)>0$. 
$t\mapsto \rho_t(z)$ is strictly increasing if and only if $X$ has no deterministic increments.
\end{lemme}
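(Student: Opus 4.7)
The strategy is to recognize $\rho_t(z)$ as the logarithmic ``defect'' in a Cauchy--Schwarz inequality applied to $e^{zX_t}$, and then to exploit the independence of increments in order to separate the defect contributed by $Y:=X_t-X_s$. Concretely, for $0\leq s<t\leq T$ and $a:=Re(z)$, the identity $\exp(\kappa_\tau(w))=\mathbb{E}[e^{wX_\tau}]$ combined with the independence of $X_s$ and $Y$ gives $\exp(\kappa_t(w)-\kappa_s(w))=\mathbb{E}[e^{wY}]$. Using this with $w=2a$, and with $w=z$ together with $|\exp(\zeta)|^2=\exp(2Re(\zeta))$, yields
\begin{equation*}
\rho_t(z)-\rho_s(z)=\log\frac{\mathbb{E}[e^{2aY}]}{|\mathbb{E}[e^{zY}]|^2}.
\end{equation*}
Applying the Cauchy--Schwarz inequality in $L^2(\Omega,\mathbb{C})$ to $U:=e^{aY}$ and $V:=e^{-i\,Im(z)\,Y}$ (so that $\mathbb{E}[U\overline{V}]=\mathbb{E}[e^{zY}]$, $\|U\|_2^2=\mathbb{E}[e^{2aY}]$ and $\|V\|_2^2=1$) already gives $\rho_t(z)-\rho_s(z)\geq 0$.

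For the ``only if'' direction I would argue by contrapositive: if some increment $Y=X_t-X_s$ is $P$-a.s.\ equal to a constant $c\in\mathbb{R}$, a direct computation yields $\mathbb{E}[e^{2aY}]=|\mathbb{E}[e^{zY}]|^2=e^{2ac}$, forcing $\rho_t(z)=\rho_s(z)$ and ruling out strict monotonicity. For the ``if'' direction, assume every increment $X_t-X_s$ with $s<t$ is non-deterministic; the Cauchy--Schwarz bound is strict unless $U$ and $V$ are proportional in $L^2(\Omega,\mathbb{C})$, equivalently unless $e^{zY}$ is $P$-a.s.\ equal to a deterministic constant $\lambda\in\mathbb{C}$. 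Taking modulus one then gets $e^{aY}=|\lambda|$ a.s.; if $a\neq 0$ this would force $Y$ to be a.s.\ equal to $\log|\lambda|/a$, contradicting the hypothesis and delivering strictness.

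The main obstacle I anticipate is the purely imaginary subcase $a=0$, $b:=Im(z)\neq 0$, where $e^{ibY}=\lambda$ a.s.\ only forces $bY$ into the countable arithmetic progression $\arg(\lambda)+2\pi\mathbb{Z}$ without making $Y$ deterministic. Here I would combine the continuity in probability of the PII $X$ (Definition~\ref{defPAI}(4)) with the no-deterministic-increment hypothesis to exclude that the law of $Y$ be concentrated on a nontrivial lattice; alternatively, since the subsequent use of the lemma in the exponential-PII F\"ollmer--Schweizer decomposition involves only $z$ whose real part is nonzero, the pathological subcase is irrelevant in the applications, so even a slightly weakened version of the equivalence would suffice for the sequel.
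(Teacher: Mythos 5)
Your proof follows essentially the same route as the paper's: independence of increments reduces everything to the ratio $\mathbb{E}\bigl[|e^{zY}|^2\bigr]/\bigl|\mathbb{E}[e^{zY}]\bigr|^2$ for $Y=X_t-X_s$, non-negativity of its logarithm is Cauchy--Schwarz (the paper phrases the same fact as $\exp[\rho_t(z)-\rho_s(z)]-1=\bigl(\mathrm{Var}[Re(\Gamma)]+\mathrm{Var}[Im(\Gamma)]\bigr)/|\mathbb{E}[\Gamma]|^2$ with $\Gamma=e^{zY}$, see~(\ref{E33})), the converse is the same direct computation, and the equality case is characterized identically as $e^{zY}$ being a.s.\ constant.

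The obstacle you isolate in the purely imaginary subcase is real, and the paper's own proof slides over it: it passes from ``$\Gamma_s^t(z)$ would be deterministic'' to a contradiction with the no-deterministic-increment hypothesis, which requires exactly the modulus argument you give and hence $Re(z)\neq 0$. Be aware, though, that your first proposed repair (continuity in probability excludes lattice concentration) cannot work: for $b\neq 0$ take $X_t=(2\pi/b)N_t$ with $N$ a standard Poisson process; this is a PII, continuous in probability, with no deterministic increments, yet $e^{ibX_t}\equiv 1$, so $\kappa_t(ib)\equiv 0$ and $t\mapsto\rho_t(ib)$ is constant. The equivalence therefore genuinely fails on the imaginary axis as stated, and your fallback is the right resolution: all later uses of the lemma (the reference measure $\rho_{dt}(1)$, Remark~\ref{33bis}, Proposition~\ref{remarque314Bis}) take $Re(z)\neq 0$, and the paper itself imposes this restriction when it extends the result in Remark~\ref{R314bis}.
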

\begin{proof}
It is enough to show that $X$ has no deterministic increment if and only if for any $0\leq s<t\leq T$, the following quantity is positive, 
\begin{equation}
\label{eq1}
\rho_t(z)-\rho_s(z)=\big [\kappa_t\big(2Re(z)\big)-\kappa_s\big(2Re(z)\big)\big ]-2Re\big (\kappa_t(z)-\kappa_s(z)\big ) \ .
\end{equation}
By Remark~\ref{R3}, for all $z\in D$, we have
$$
\exp[\kappa_t(z)-\kappa_s(z)]=\mathbb{E}[\exp(z\Delta_s^t )]\ ,\quad \textrm{where}\  \Delta_s^t=X_t-X_s\ .
$$
%Taking the exponential of $\rho_t(z)-\rho_s(z)$ yields
%$$
%\exp[\rho_t(z)-\rho_s(z)]=\mathbb{E}[\exp(2Re(z)\Delta_s^t )]-\exp\left [-\big(\kappa_t(z)-\kappa_s(z)\big)-\overline{\big(\kappa_t(z)-\kappa_s(z)\big)}\right]\ . 
%$$
Applying this property and Remark~\ref{remarkR2} 1.,
 to the exponential of the first term on the right-hand side of \eqref{eq1} yields
\begin{eqnarray*}
\exp\left [\kappa_t\big(2Re(z)\big)-\kappa_s\big(2Re(z)\big)\right ]
&=&\mathbb{E}[\exp(2Re(z)\Delta_s^t )]
= 
\mathbb{E}[\exp((z+\bar z)\Delta_s^t )]\\ 
&=&
\mathbb{E}[\left|\exp(z\Delta_s^t )\right|^2]\ .
\end{eqnarray*}
Similarly, for the exponential of the second term on the right-hand 
side difference of
\eqref{eq1}, one gets
\begin{eqnarray*}
\exp\left [2Re\big (\kappa_t(z)-\kappa_s(z)\big )\right ]
&=&
\exp\left [\big(\kappa_t(z)-\kappa_s(z)\big)+\overline{\big(\kappa_t(z)-\kappa_s(z)\big)}\right]
=
\left|\mathbb{E}[\exp(z\Delta_s^t )]\right|^2\ .
\end{eqnarray*}
%Notice that this latter quantity is positive, by Remark $\ref{R1}$ 1., 
Hence taking the exponential of $\rho_t(z)-\rho_s(z)$ yields
\begin{eqnarray}
\label{E33}
\exp[\rho_t(z)-\rho_s(z)]-1
&=&\frac{\mathbb{E}[\left|\exp(z\Delta_s^t )\right|^2]}{\left|\mathbb{E}[\exp(z\Delta_s^t )]\right|^2}-1\ ,
\nonumber \\ \nonumber \\
&=&\frac{\mathbb{E}[\left|\Gamma_s^t(z)\right|^2]}{\left|\mathbb{E}[\Gamma_s^t(z)]\right|^2}-1\ ,\quad \textrm{where} \ \Gamma_s^t(z)=\exp(z\Delta_s^t )\ ,
\nonumber \\ \nonumber \\
&=&\frac{Var\left [Re\big (\Gamma_s^t(z)\big )\right ]+Var\left [Im\big (\Gamma_s^t(z)\big )\right ]}{\left|\mathbb{E}[\Gamma_s^t(z)]\right|^2}\ .
\end{eqnarray}
\begin{itemize}
\item If $X$ has a deterministic increment $\Delta_s^t =X_t-X_s$, then $\Gamma_s^t(z)$ is again deterministic and~ \eqref{E33} vanishes and hence  $t\rightarrow \rho_t(z)$ is not strictly increasing.
\item If $X$ has never deterministic increments, then the nominator is never zero, otherwise $Re\big (\Gamma_s^t(z)\big )$, $Im\big (\Gamma_s^t(z)\big )$ and therefore $\Gamma_s^t(z)$ would be deterministic.
\end{itemize}
\end{proof}

From now on, we will always suppose the following assumption.
\begin{Hyp}\label{HypD}
\begin{enumerate}
\item $(X_t)$ has no deterministic increments.
\item $2 \in D$. 
\end{enumerate}
\end{Hyp}

\begin{remarque}\label{33bis}
\begin{enumerate}
\item In particular for $\gamma \in D$, $\gamma \neq 0$, the function $t \mapsto \rho_t(\gamma)$ 
 is strictly increasing.
\item If $z=1$,~(\ref{E33}) equals ${\displaystyle 
\frac{Var\big(\exp(\Delta_s^t )\big)}{\big(\mathbb{E}[\exp(\Delta_s^t) ]
\big)^2}}$, which is a mean-variance quantity.
% will be related to the (SC) condition and the MVT process. 
%	\item In the specific case where, $D=\mathbb{C}$ and if $X_t$ is not deterministic, then $(z,y)\mapsto \frac{1}{2}(\rho_t(z,\bar y)+\rho_t(\bar z,y))$ defines a scalar product on $\mathbb{C}$. 
 %will again appear in section 5 when we handle with the discrete case.
\end{enumerate}
\end{remarque}
%
%We denote by $\shd$ the set of $z \in D$ such that

%
We continue with a simple observation.
\begin{lemme}\label{lemme38a}
Let $I$ be a compact real interval included in $D$.
\begin{eqnarray*}\sup_{\gamma \in I}
\sup_{t\leq T}\mathbb{E}[S_t^{\gamma}]<\infty\ .\end{eqnarray*}
\end{lemme}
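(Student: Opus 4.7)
The plan is to reduce the statement to boundedness of the cumulant generating function $\kappa_t(\gamma)$ on the compact set $I\times[0,T]$. Since $S_t = s_0 \exp(X_t)$, for any $\gamma \in I \subset D\cap\mathbb{R}$ one has $S_t^\gamma = s_0^\gamma \exp(\gamma X_t)$, and by definition of the cumulant generating function,
\[
\mathbb{E}[S_t^\gamma] \;=\; s_0^\gamma \,\mathbb{E}[\exp(\gamma X_t)] \;=\; s_0^\gamma \exp\bigl(\kappa_t(\gamma)\bigr).
\]
So it suffices to show that $(t,\gamma)\mapsto s_0^\gamma\exp(\kappa_t(\gamma))$ is bounded on $[0,T]\times I$.

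The main input is Proposition \ref{proposition38}, which asserts that $(t,z)\mapsto \kappa_t(z)$ is continuous and, in particular, bounded when $t$ ranges over $[0,T]$ and $z$ ranges over a compact real subset. Since $I\subset D\cap\mathbb{R}$ is compact by assumption, this applies directly with the compact real set $I$, yielding a finite constant $C:=\sup_{(t,\gamma)\in[0,T]\times I}\kappa_t(\gamma)<\infty$. On the other hand, $\gamma\mapsto s_0^\gamma$ is continuous on the compact set $I$, hence also bounded by some constant $C'<\infty$. Combining,
\[
\sup_{\gamma\in I}\sup_{t\le T}\mathbb{E}[S_t^\gamma] \;\le\; C'\exp(C)\;<\;\infty.
\]

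There is no real obstacle: the statement is essentially a restatement of the continuity of $(t,z)\mapsto\kappa_t(z)$ (which was already established) together with compactness of $[0,T]\times I$. The only point worth verifying is that $I\subset D$ ensures $\kappa_t(\gamma)$ is well-defined for every $\gamma\in I$ and every $t\in[0,T]$, which is exactly the definition of $D$ used throughout Section 3.
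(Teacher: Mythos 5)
Your proof is correct and follows essentially the same route as the paper: write $\mathbb{E}[S_t^\gamma]=s_0^\gamma\exp(\kappa_t(\gamma))$ and invoke the continuity of $(t,z)\mapsto\kappa_t(z)$ on the compact set $[0,T]\times I$ (Proposition \ref{proposition38}) to conclude boundedness. No gaps.
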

\begin{proof}
%[\textbf{Proof of lemma \ref{lemme38a}}]
Let $t \in [0,T]$ and $\gamma\in I$,  we have
\begin{eqnarray*}\mathbb{E}[S_t^{\gamma}]=
s_0^{\gamma}\exp\{\kappa_t(\gamma)\}\leq 
\max(1,s_0^{\sup I})\exp(\sup_{t\leq T,\gamma \in I} \vert \kappa_t(\gamma)
\vert)\ .
\end{eqnarray*}
since $\kappa$ is continuous.
\end{proof}
%
%\begin{remarque}
%Taking in account Corollary \ref{corroR1}, for all $z \in D$, we have
%\begin{eqnarray*}\kappa_{dt}(z)\ll da_t\end{eqnarray*}
%\end{remarque}
%

We state now a result that will help us to show that $\kappa_{dt}(z)$ is absolutely continuous with respect to $\rho_{dt}(1)=\kappa_{dt}(2)-2\kappa_{dt}(1)$.
\begin{lemme}\label{propoRusso}
We consider two positive finite non-atomic Borel measures on $E \subset \mathbb{R}^n$, $\mu$ and $\nu$. We suppose the following:
\begin{enumerate}
\item $\mu \ll  \nu\ ;$
\item $\mu(I)\neq 0$ for every open ball of $E$.
\end{enumerate} 
Then ${\displaystyle \frac{d\mu}{d\nu}:=h\neq 0}$ $\nu$ a.e. In particular $\mu$ and $\nu$ are equivalent.
\end{lemme}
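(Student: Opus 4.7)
Define $h := d\mu/d\nu$ (a nonnegative Borel function, defined $\nu$-a.e.) and set $N := \{x \in E : h(x) = 0\}$. The representation $\mu = h\,d\nu$ gives at once
\[
\mu(N) \;=\; \int_N h\,d\nu \;=\; 0,
\]
so the content of the lemma reduces to the complementary statement $\nu(N)=0$; once this is established, hypothesis~(1) together with the vanishing of $\nu(N)$ yields $\mu\sim\nu$.

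The argument for $\nu(N)=0$ is by contradiction. Assume $\nu(N)>0$; the plan is to exhibit an open ball $I$ of $E$ with $\mu(I)=0$, directly contradicting hypothesis~(2). Since $\mu$ is a finite Borel measure on $\mathbb{R}^n$, it is outer regular, so for every $\varepsilon>0$ there is an open set $U_\varepsilon \supset N$ with $\mu(U_\varepsilon)<\varepsilon$. Because $N\neq\emptyset$, the intersection $U_\varepsilon\cap E$ is a nonempty open subset of $E$ and therefore contains at least one open ball $I_\varepsilon$ of $E$; consequently $\mu(I_\varepsilon)\le \mu(U_\varepsilon)<\varepsilon$.

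Letting $\varepsilon\downarrow 0$ along a sequence and invoking the non-atomicity of $\mu$ and $\nu$ (which ensures that the mass of small balls varies continuously with their radius, and so permits a diagonal/Lindel\"of-type extraction), one extracts a single open ball $I$ of $E$ with $\mu(I)=0$. This directly contradicts hypothesis~(2), concluding the proof.

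\textbf{Main obstacle.} The delicate step is precisely this last extraction: passing from a sequence of balls whose $\mu$-mass tends to zero to one \emph{single} open ball of exact $\mu$-mass zero. A generic Borel set of positive $\nu$-measure need not contain any open ball, so the interplay between the outer regularity of $\mu$, the non-atomicity of both measures, and the Euclidean topology of $E\subset\mathbb{R}^n$ must be exploited carefully; this is where the bulk of the technical work lies, and where additional structure (essentially the fact that in the paper's application $E=[0,T]$ and the measures arise from continuous increasing functions) may need to be called upon.
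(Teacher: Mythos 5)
Your reduction to showing $\nu(N)=0$, and the observation $\mu(N)=\int_N h\,d\nu=0$, are fine; the problem is the ``extraction'' step, which is a genuine gap and not one that can be filled along the lines you sketch. From balls $I_\varepsilon$ with $\mu(I_\varepsilon)<\varepsilon$ one cannot pass to a single ball of $\mu$-measure exactly zero: take $\mu=\nu=$ Lebesgue measure on $[0,1]$ — both non-atomic, every interval of positive radius has positive measure, yet intervals of arbitrarily small measure abound. Non-atomicity and the continuity of $r\mapsto\mu(B(x,r))$ only say that the infimum of $\mu$ over open balls is $0$, not that it is attained. More fundamentally, the strategy is pointed in the wrong direction: outer regularity of $\mu$ hands you open sets $U_\varepsilon\supset N$, and a ball chosen inside $U_\varepsilon$ need not meet $N$ at all, so its small $\mu$-mass tells you nothing about $h$ vanishing on it. To contradict hypothesis (2) you need a ball $I$ on which $h=0$ holds $\nu$-a.e., so that $\mu(I)=\int_I h\,d\nu$ is \emph{exactly} zero; that is, you must produce an open ball essentially contained in $N$, working from inside $N$, not from open neighbourhoods of it.

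This is what the paper does: assuming $\nu(N)=c>0$, it uses inner regularity of the Radon measure $\nu$ to find nested compacts $K_\varepsilon\subset K_{\varepsilon/2}\subset N$ of positive $\nu$-measure, then a Urysohn function equal to $1$ on $K_\varepsilon$ and $0$ off $K_{\varepsilon/2}$ to manufacture a nonempty open set $O\subset K_{\varepsilon/2}\subset N$; any open ball $I\subset O$ then satisfies $\mu(I)=\int_I h\,d\nu=0$, contradicting (2). (Even that step is delicate — the Urysohn argument implicitly requires the inner compact to sit in the interior of the outer one, which is precisely where the topological content of hypothesis (2) and the structure of $E$ must be exploited, as you suspected in your closing remark.) As written, your proposal stops exactly where the real work of the proof begins.
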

\begin{proof}
%[\textbf{Proof of Lemma \ref{propoRusso}}]
We consider the Borel set
\begin{eqnarray*}
B=\{x\in E|h(x)=0\}\ .
\end{eqnarray*}
We want to prove that $\nu(B)=0$. So we suppose that there exists a
constant $c>0$ such that $\nu(B)=c>0$
and another constant $\epsilon$ such that $0<\epsilon<c$. 
Since $\nu$ is a Radon measure, there are compact subsets 
 $K_{\epsilon}$ and $K_{\frac{\epsilon}{2}}$ of $E$ such that
\begin{eqnarray*}
K_{\epsilon} \subset K_{\frac{\epsilon}{2}} \subset B\quad \textrm{and}\quad \nu(B - K_{\epsilon})<\epsilon\ ,\quad \nu(B - K_{\frac{\epsilon}{2}})<\frac{\epsilon}{2}\ .
\end{eqnarray*} 
Setting $\epsilon=\frac{c}{2}$, we have
\begin{eqnarray*}
\nu(K_{\epsilon})>\frac{c}{2}
\quad \textrm{and}\quad \nu(K_{\frac{\epsilon}{2}})>\frac{3c}{4}\ .
\end{eqnarray*}
By Urysohn lemma, there is a continuous function $\varphi:E\rightarrow \mathbb{R}$ such that, $0 \leq \varphi \leq 1 $ with
\begin{eqnarray*}\varphi=1 \ \textrm{on}\  K_{\epsilon} \quad
 \textrm{and}\quad \varphi=0 
\ \textrm{on}\  K_{\frac{\epsilon}{2}}^c\ .\end{eqnarray*}
Now
\begin{eqnarray*}
\int_E \varphi(x)\nu(dx) \ge \nu(K_{\epsilon})>\frac{c}{2}>0\ .
\end{eqnarray*}
By continuity of $\varphi$ there is an open set $O \subset E$ with
$\varphi(x)>0$ for $x \in O$. Clearly $O \subset K_{\frac{\epsilon}{2}}
\subset B$; since $O$ is relatively compact, it is a countable union of
balls, and so $B$ contains a ball $I$. The fact that $h=0$ on $I$
implies $\mu(I)=0$ and this 
contradicts Hypothesis 2. of the statement.
Hence the result follows.

\end{proof}

\begin{remarque}\label{remarque313}
From now on, in this section, $d\rho_t=\rho_{dt}$ will denote the measure 
\begin{equation}
\label{eq:rho}
 d\rho_t=\rho_{dt}(1)=d(\kappa_t(2)-2\kappa_t(1))\ .
\end{equation}
According to Remark \ref{33bis} 1.,  it is a positive measure which is strictly positive on each interval.
This measure will play a fundamental role.
\end{remarque}

\begin{remarque}\label{remarque3123}
\begin{enumerate}
\item If $E=[0,T]$, then point 2. of Lemma~\ref{propoRusso} becomes $\mu(I)\neq 0$ for every open interval $I\subset [0,T]$.
\item The result holds for every normal metric locally connected space
 $E$, provided  $\nu$ are Radon measures.
\end{enumerate} 
\end{remarque}

\begin{propo}\label{remarque314Bis}
Under Assumption \ref{HypD} 
\begin{equation}
 d(\kappa_t(z))\ll  d\rho_t\ , \quad \textrm{for all} \ 
z \in D\label{AncHyp1}\ .
\end{equation}
\end{propo}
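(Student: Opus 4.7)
The strategy is to show that the measures $d\rho_t$ and $da_t$ (with $a_t$ the dominating measure from Proposition~\ref{propoJSA}) are equivalent, and then invoke Proposition~\ref{corroR1} which already gives $d\kappa_t(z)\ll da_t$.

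First I would compute $\rho_t$ explicitly via the L\'evy--Khinchine decomposition~\eqref{E3.5bis}. Note that $2\in D$ by Assumption~\ref{HypD}, and since $D_0=D\cap\R$ is convex and contains $0$, also $1\in D_0$. A direct substitution gives
$$
\rho_t \;=\; \kappa_t(2)-2\kappa_t(1)\;=\;c_t+\int_{[0,t]\times\R}(e^{x}-1)^{2}\,\nu(ds,dx),
$$
the drift $b_t$ and the truncation $h$ cancelling. All integrals involved are finite: the $x^{2}$ behaviour of $(e^{x}-1)^{2}$ at $0$ is absorbed by $\int(1\wedge x^{2})F_T(dx)<\infty$, while integrability at $\pm\infty$ follows from $2\in D_0$ through Proposition~\ref{LKdecoKappa}. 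In particular $\rho_t$ is a well-defined, non-negative, non-decreasing, deterministic function on $[0,T]$.

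Next, from Proposition~\ref{propoJSA} we have $dc_t\ll da_t$ and $\nu(ds,dx)=\widetilde{F}_s(dx)\,da_s$, so
$$
d\rho_t=\frac{dc_t}{da_t}\,da_t+\Bigl(\int_\R(e^{x}-1)^{2}\widetilde{F}_t(dx)\Bigr)\,da_t,
$$
which shows $d\rho_t\ll da_t$. Moreover $a_t$ is continuous (since $X$ has no fixed time of discontinuity, all three summands in~\eqref{171} are continuous in $t$), hence both $d\rho_t$ and $da_t$ are non-atomic Borel measures on $[0,T]$. Finally, Lemma~\ref{L2} applied to $z=1\in D\setminus\{0\}$ (using Assumption~\ref{HypD}(1)) ensures that $t\mapsto \rho_t$ is strictly increasing, i.e.\ $d\rho_t(I)>0$ for every open interval $I\subset [0,T]$.

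We are now in a position to apply Lemma~\ref{propoRusso} (in the form of Remark~\ref{remarque3123}(1)) with $\mu=d\rho_t$ and $\nu=da_t$: conditions (1)--(2) of that lemma are the two properties just established, so $d\rho_t$ and $da_t$ are equivalent. Combining this with Proposition~\ref{corroR1}, which states $d\kappa_t(z)\ll da_t$ for every $z\in D$, immediately yields $d\kappa_t(z)\ll d\rho_t$ for all $z\in D$, as desired. The only delicate point is the verification of the equivalence of $\rho$ and $a$; everything else is a bookkeeping exercise on the L\'evy--Khinchine formula. Note that interpreting $d\kappa_t(z)$ for complex $z$ poses no difficulty since $t\mapsto\kappa_t(z)$ has bounded variation by Proposition~\ref{corroR1}, so the absolute continuity statement refers to the real and imaginary parts (equivalently, to the total variation measure).
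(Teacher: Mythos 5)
Your proof is correct and follows essentially the same route as the paper: both apply Lemma~\ref{propoRusso} with $\mu=d\rho_t$ and $\nu=da_t$, obtaining condition (1) from the absolute continuity of the cumulant measures with respect to $da_t$ and condition (2) from the strict monotonicity given by Lemma~\ref{L2}, and then conclude via Proposition~\ref{corroR1}. Your explicit L\'evy--Khinchine computation of $\rho_t$ and the check of non-atomicity are harmless elaborations of steps the paper leaves implicit.
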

\begin{proof}
%[\textbf{Proof of Proposition~\ref{remarque314Bis}}] 
We apply Lemma~\ref{propoRusso}, with $d\mu=d\rho_t$ and $d\nu=da_t$. 
 Indeed, Corollary~\ref{corroR1} implies Condition 1. of Lemma~\ref{propoRusso} and Lemma~\ref{L2} implies Condition 2. of Lemma~\ref{propoRusso}. Therefore, $da_t$ 
is equivalent to $d\rho_t$.
\end{proof}
\begin{remarque}\label{R314bis}
Notice that this result also holds  with $d\rho_t(y)$ instead of 
$d\rho_t=d\rho_t(1)$, for any $y\in D$ such that $Re(y) \neq 0$. 
\end{remarque}

\subsection{On some semimartingale decompositions and covariations} \label{OSSDAC}
%%%%%%%%%%%%%%%%%%%%%%%%%%%%%%%%%%%%%%%%%%%%%%%%%%%%%%%%%%%%%%%%%%%
%We denote by $a\vee b$
% for all $a,b \in \mathbb{R}$ the maximum between $a$ and $b$.
%
\begin{propo}\label{corro36}
 Let $y,z \in D$ such that $y+z,2Re(y),Re(y)+1,2Re(z)$ and $Re(z)+1 \in D$.
%Let $z \in D$ such that $(Re(z)+1)\vee 2 Re(z) \in D$.
 Then $S^z$ is a special semimartingale whose canonical decomposition
 $S_t^z =
% S_0^z + 
M(z)_t+A(z)_t$ satisfies
\begin{equation}\label{LL2}
% \label{MAexpPIIbis}
A(z)_t=\int_0^tS_{u-}^z\kappa_{du}(z)\ ,
\quad 
%\begin{equation} \label{LL2}
\left\langle M(y),M(z)\right\rangle_t= 
\int_0^tS_{u-}^{y+z}\rho_{du}(z,y)\ , \quad
M(z)_0 = s_0^z,
\end{equation}
where  $d\rho_u(z)$ is defined by equation~(\ref{L1}). 
In particular we have the following: 
\begin{enumerate} 
\item \label{MAexpPII} 
$ \left\langle M(z),M\right\rangle_t=\int_0^tS_{u-}^{z+1}\rho_{du}(z,1)$
\item $\left\langle M(z),M(\bar z)
\right\rangle_t=\int_0^tS_{u-}^{2Re(z)}\rho_{du}(z)\ .$ \label{L22}
\end{enumerate}

%\quad \textrm{and}\quad \left\langle M(z),M\right\rangle_t=\int_0^tS_{u-}^{z+1}\rho_{du}(z,1)\ ,\label{MAexpPII}\end{eqnarray}
%and
%\begin{eqnarray}\left\langle M(z),M\right\rangle_t=\int_0^tS_{u-}^{z+1}(\kappa_{du}(z+1)-\kappa_{du}(z)-\kappa_{du}(1))\ .\label{kappa36}\end{eqnarray}
%\begin{eqnarray}\left\langle M(z),M\right\rangle_t=\int_0^tS_{u-}^{z+1}\rho_{du}(z,1)\ .\label{kappa36}\end{eqnarray}
%where $M:=M(1)$ corresponds to $z=1$.
%For $y,z \in D$ such that $y+z,2Re(y),Re(y)+1,2Re(z)$ and $Re(z)+1 \in D$,
%\begin{eqnarray}\left\langle M(y),M(z)\right\rangle_t=\int_0^tS_{u-}^{y+z}(\kappa_{du}(y+z)-\kappa_{du}(y)-\kappa_{du}(z))\ .\end{eqnarray}
%\begin{eqnarray}\left\langle M(y),M(z)\right\rangle_t=\int_0^tS_{u-}^{y+z}\rho_{du}(z,y)\ .
%\label{LL2}
%\end{eqnarray}
%In particular, if $z \in D$ such that $2Re(z),Re(z)+1\in D$, then 
%\begin{eqnarray}\left\langle M(z),M(\bar z)\right\rangle_t=\int_0^tS_{u-}^{2Re(z)}\rho_{du}(z)\ ,
%\label{L22}\end{eqnarray}
%where $d\rho_u(z)$ is defined by equation~(\ref{L1}). 
\end{propo}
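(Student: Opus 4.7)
The plan is to represent $S^z$ as the product of a true martingale and a continuous bounded variation deterministic function, from which the canonical decomposition follows by integration by parts, and then to extract the angle bracket formula from the identity $S^y S^z = S^{y+z}$ combined with uniqueness of the canonical decomposition.

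First I would write $S_t^z = s_0^z e^{\kappa_t(z)}\,N(z)_t$, where $N(z)_t := \exp(zX_t - \kappa_t(z))$ is a martingale by Remark~\ref{R3}. By Corollary~\ref{corroR1} the map $t \mapsto \kappa_t(z)$ has bounded variation and by Proposition~\ref{proposition38} (applied in the complex setting through its real/imaginary parts, as in Lemma~\ref{P1} and Remark~\ref{R11}, exploiting that $z \in D$) it is continuous; hence $t \mapsto e^{\kappa_t(z)}$ is continuous and of bounded variation with $d(e^{\kappa_t(z)}) = e^{\kappa_t(z)}\,d\kappa_t(z)$. Integration by parts (the bracket vanishes because the second factor is continuous BV) yields
\begin{equation*}
dS_t^z = s_0^z e^{\kappa_t(z)}\,dN(z)_t + s_0^z N(z)_{t-}\,e^{\kappa_t(z)}\,d\kappa_t(z)
= s_0^z e^{\kappa_t(z)}\,dN(z)_t + S_{t-}^z\,\kappa_{dt}(z).
\end{equation*}
Setting $M(z)_t := s_0^z + \int_0^t s_0^z e^{\kappa_u(z)}\,dN(z)_u$ and $A(z)_t := \int_0^t S_{u-}^z\,\kappa_{du}(z)$ gives the claimed decomposition, with $M(z)$ a (true) martingale since the integrand is bounded and $N(z)$ is in $L^2$ under the assumption $2\Re(z) \in D$. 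The process $A(z)$ is predictable because $t\mapsto \kappa_t(z)$ is continuous, so the decomposition is canonical and $S^z$ is indeed special.

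For the angle bracket I would apply integration by parts to the pointwise identity $S^y S^z = S^{y+z}$. Since $A(y)$, $A(z)$ and $A(y+z)$ are continuous, $[M(y),M(z)] = [S^y,S^z]$ and
\begin{equation*}
S_t^y S_t^z - s_0^{y+z} = \int_0^t S_{u-}^y\,dM(z)_u + \int_0^t S_{u-}^z\,dM(y)_u + \int_0^t S_{u-}^{y+z}\bigl(\kappa_{du}(y)+\kappa_{du}(z)\bigr) + [M(y),M(z)]_t,
\end{equation*}
where I used $S_{u-}^y S_{u-}^z = S_{u-}^{y+z}$ and the explicit form of $A(y),A(z)$. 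Replacing $[M(y),M(z)]$ by its compensator $\langle M(y),M(z)\rangle$ (at the cost of a local martingale) and comparing with the canonical decomposition $S^{y+z} = M(y+z) + A(y+z)$, the uniqueness of the canonical decomposition of the special semimartingale $S^{y+z}$ forces
\begin{equation*}
\langle M(y),M(z)\rangle_t = \int_0^t S_{u-}^{y+z}\bigl[\kappa_{du}(y+z) - \kappa_{du}(y) - \kappa_{du}(z)\bigr] = \int_0^t S_{u-}^{y+z}\,\rho_{du}(z,y),
\end{equation*}
by Definition~\ref{defi:rho}. The two special cases follow directly: for point~1, take $y=1$ so that $M(y)=M$; for point~2, take $y = \bar z$, noting $y+z = 2\Re(z)$ and $\rho_t(z,\bar z) = \rho_t(z)$ by \eqref{L1}.

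The main obstacle is bookkeeping rather than a single hard step: one must verify that all the integrability/domain assumptions on $y,z$ ($y+z$, $2\Re(y)$, $\Re(y)+1$, $2\Re(z)$, $\Re(z)+1 \in D$) are exactly what is needed to legitimize each step, in particular to ensure that $N(y), N(z), N(y+z), N(y+1), N(z+1)$ are square integrable martingales (so that the stochastic integrals defining $M(y), M(z), M(y+z)$ are genuine martingales and their products have finite first moment), and that the integration-by-parts identity is applied between semimartingales whose predictable BV parts are continuous. The only subtle point is invoking the uniqueness of the canonical decomposition, which requires $S^{y+z}$ to be special; this is guaranteed by the first part applied to the exponent $y+z$.
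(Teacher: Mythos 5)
Your proof is correct and follows essentially the same route as the paper: both represent $S^z$ as the martingale $e^{-\kappa_t(z)}S_t^z$ times the continuous bounded-variation factor $e^{\kappa_t(z)}$, obtain the canonical decomposition by integration by parts, and then derive the angle bracket by integrating $S^yS^z$ by parts and identifying the predictable finite-variation part (the paper writes this out only for $y=1$ and identifies the compensator of $M(z)M$ directly, whereas you phrase it through uniqueness of the canonical decomposition of $S^{y+z}$, which is the same argument). No gaps.
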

\begin{proof}
%[\textbf{Proof of Proposition~\ref{corro36}}]
For simplicity, we will only treat  the case $y=1$ in \eqref{LL2},
i.e. statement \ref{MAexpPII}.
The general case will follow similarly. 
By Remark \ref{R3}, $N(z)_t:=e^{-\kappa_t(z)}S_t^z$ is a martingale.
 Integration by parts yields
$$
S_t^z=e^{\kappa_t(z)}N(z)_t=
%S_0^z+
M(z)_t+A(z)_t \quad \textrm{with} \quad
M_0(z) = s_0^z, \quad A(z)_t=\int_0^tS_{u-}^z\kappa_{du}(z) 
\quad \textrm{and}
$$
\begin{eqnarray*}
[M(z),M]_t&=&[S^z,S]_t\ ,\\ \\
&=&S^z_tS^1_t-S^z_0S^1_0-\int_0^t S^z_{s-}dS^1_s \int_0^tS^1_{s-}dS^z_{s}\ ,\\ \\
&=&S^z_tS^1_t-S^z_0S^1_0-\int_0^t S^z_{s-}dM_s-\int_0^t S^z_{s-}dA_s-
\int_0^tS^1_{s-} dM(z)_{s}-\int_0^tS^1_{s-}dA(z)_{s}\ ,\\ \\
&=&S^{z+1}_t-S^{z+1}_0-\int_0^t S^z_{s-}dM_s-\int_0^tS^1_{s-} dM(z)_{s}-
\int_0^t S^{z+1}_{s-}\kappa_{ds}(1)-\int_0^tS^{z+1}_{s-}\kappa_{ds}(z)\ , \\ \\
&=&M(z+1)_t-\int_0^t S^z_{s-}dM_s-\int_0^tS_{s-} dM(z)_{s}+
\int_0^t(\kappa_{ds}(z+1)-\kappa_{ds}(z)-\kappa_{ds}(1))S^{z+1}_{s-}\ .
\end{eqnarray*}
Note that the first three terms on the right-hand side are local martingales. Since $\left\langle M(z),M\right\rangle_t$ is the predictable part of finite variation of the special semimartingale $M(z)M$, equation~(\ref{MAexpPII}) follows.   

%Similar calculations as previously give result~(\ref{L22}).
\end{proof}
\begin{remarque}\label{remarque312a}
Lemma~\ref{lemme38a} implies that $\mathbb{E}\left[|\left\langle M(y),M(z)\right\rangle\right|]<\infty$ and so $M(z)$ is a square integrable martingale for any $z \in D$ such that $2Re(z),Re(z)+1\in D$.
\end{remarque}

%\begin{proof}[\textbf{Proof of Proposition \ref{corro36}}]
%\end{proof}

\subsection{On the Structure Condition}
%%%%%%%%%%%%%%%%%%%%%%%%%%%%%%%%%%%%%%%
If we apply Proposition~\ref{corro36} with $y=z=1$, we obtain
$S = M+A$ where $M$ is a martingale and
\begin{eqnarray}
A_t=\int_0^tS_{u-}\kappa_{du}(1)\ ,
\end{eqnarray}
and
\begin{eqnarray}\left\langle M,M\right\rangle_t=\int_0^tS_{u-}^{2}(\kappa_{du}(2)-2\kappa_{du}(1))=\int_0^tS_{u-}^{2}\rho_{du}\ .
\label{MMCrochetObl}
\end{eqnarray}

At this point, the aim is to exhibit a predictable $\mathbb{R}$-valued process $\alpha$ such that 
\begin{enumerate}
 \item $A_t=\int_0^t \alpha_s d\left\langle M\right\rangle_s\ ;$
 \item $K_T=\int_0^T \alpha^2_s d\left\langle M\right\rangle_s$ is bounded. 
\end{enumerate}
In that case, according Theorem \ref{ThmExistenceFS}, there will exist a unique FS decomposition for any $H \in \mathcal{L}^2$ and so the minimization problem~(\ref{problem2}) will have a unique solution, by Theorem \ref{ThmSolutionPb1}.

\begin{propo}\label{propoabsconti} Under Assumption \ref{HypD},
we have 
\begin{eqnarray}A_t=\int_0^t \alpha_s d\left\langle M\right\rangle_s\ ,\end{eqnarray}
where $\alpha$ is given by
\begin{eqnarray}\alpha_u:=\frac{\lambda_u}{S_{u-}} \quad \textrm{with}\quad \lambda_u:=\frac{d\kappa_{u}(1)}{d\rho_u}\ ,\quad \textrm{for all}\ u\in [0,T]\ .\label{alphau}\end{eqnarray}
Moreover the MVT process is given by 
\begin{eqnarray}K_t=\int_0^t \left(\frac{d(\kappa_{u}(1))}{d\rho_u}\right)^2d\rho_u\ .\label{KPII}\end{eqnarray}
\end{propo}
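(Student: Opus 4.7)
The plan is to combine Proposition~\ref{corro36} (specialized to $z=y=1$) with the absolute continuity statement of Proposition~\ref{remarque314Bis}, using the fact that $S_{u-}>0$ always (since $s_0>0$ and $X$ is a real cadlag process, $S_{u-} = s_0 e^{X_{u-}}$ is strictly positive everywhere).

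First, I will recall from Proposition~\ref{corro36} applied with $z=y=1$ that the canonical decomposition $S = M+A$ satisfies
\[
A_t = \int_0^t S_{u-}\,\kappa_{du}(1), \qquad \langle M\rangle_t = \int_0^t S_{u-}^2\,d\rho_u,
\]
where $d\rho_u = \rho_{du}(1) = d(\kappa_u(2)-2\kappa_u(1))$ is, by Remark~\ref{remarque313}, a positive measure, strictly positive on each interval under Assumption~\ref{HypD}. Next, Proposition~\ref{remarque314Bis} ensures $d\kappa_u(1) \ll d\rho_u$, so the Radon-Nikodym density
\[
\lambda_u := \frac{d\kappa_u(1)}{d\rho_u}
\]
is well-defined. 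Setting $\alpha_u := \lambda_u/S_{u-}$, which makes sense since $S_{u-}>0$, the identity
\[
\int_0^t \alpha_s\,d\langle M\rangle_s = \int_0^t \frac{\lambda_s}{S_{s-}}\,S_{s-}^2\,d\rho_s = \int_0^t S_{s-}\lambda_s\,d\rho_s = \int_0^t S_{s-}\,\kappa_{ds}(1) = A_t
\]
follows at once from the definition of $\lambda$. This establishes the first claim and identifies $\alpha$.

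For the MVT process, I simply substitute into its definition:
\[
K_t = \int_0^t \alpha_s^2\,d\langle M\rangle_s = \int_0^t \left(\frac{\lambda_s}{S_{s-}}\right)^2 S_{s-}^2\,d\rho_s = \int_0^t \lambda_s^2\,d\rho_s = \int_0^t \left(\frac{d\kappa_u(1)}{d\rho_u}\right)^2 d\rho_u,
\]
which gives the claimed expression for $K_t$.

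There is no real obstacle here: the only subtlety is invoking Proposition~\ref{remarque314Bis} to legitimize the density $\lambda_u$, and noting that the strict positivity of $S_{u-}$ lets us divide by it pathwise. The verification of~(SC) itself is immediate, but note that the statement does not yet address whether $K_T$ is bounded (the hypothesis required in Assumption~\ref{Hypo1}); that will presumably be treated separately.
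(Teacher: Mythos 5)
Your proof is correct and follows essentially the same route as the paper: the paper likewise derives $A_t=\int_0^t S_{u-}\kappa_{du}(1)$ and $\langle M\rangle_t=\int_0^t S_{u-}^2\,d\rho_u$ from Proposition~\ref{corro36} with $y=z=1$, invokes Proposition~\ref{remarque314Bis} to define $\lambda_u=d\kappa_u(1)/d\rho_u$, and reads off $\alpha$ and $K$ by substitution. You merely spell out the cancellations that the paper leaves implicit, and your closing remark that boundedness of $K_T$ is deferred matches the paper's treatment in the subsequent corollary.
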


\begin{corro} Under Assumption \ref{HypD},
the structure condition (SC) is verified if and only if 
\begin{eqnarray*}K_T=\int_0^T \left(\frac{d(\kappa_{u}(1))}{d\rho_u}\right)^2d\rho_u<\infty\ .\end{eqnarray*}
In particular, $(K_t)$ is deterministic therefore bounded.
\end{corro}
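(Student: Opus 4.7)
The plan is to read the statement as an immediate consequence of Proposition \ref{propoabsconti} combined with the semimartingale decomposition established in Proposition \ref{corro36}. By the definition of (SC), the only condition left to verify, after Proposition \ref{propoabsconti} has produced the representation $A_t=\int_0^t\alpha_s\,d\langle M\rangle_s$, is the integrability condition
$\int_0^T \alpha_s^2\,d\langle M\rangle_s < \infty$ $P$-a.s. So my first step would be to substitute the explicit forms available: from Proposition \ref{propoabsconti}, $\alpha_u = \lambda_u/S_{u-}$ with $\lambda_u=d\kappa_u(1)/d\rho_u$, while from \eqref{MMCrochetObl}, $d\langle M\rangle_u = S_{u-}^2\,d\rho_u$. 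Multiplying these yields
$$
\alpha_u^2\,d\langle M\rangle_u \;=\; \frac{\lambda_u^2}{S_{u-}^2}\,S_{u-}^2\,d\rho_u\;=\;\lambda_u^2\,d\rho_u,
$$
so $\int_0^T \alpha_s^2\,d\langle M\rangle_s$ coincides with $K_T$ as given by \eqref{KPII}. This immediately produces the equivalence: (SC) holds iff $K_T<\infty$ $P$-a.s.

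The second step would be to verify the deterministic and boundedness claim. Since $X$ is a PII, Theorem \ref{ThmCharaDeter} guarantees that its characteristics $(b,c,\nu)$ are deterministic, hence so is the cumulant map $t\mapsto \kappa_t(z)$ via Proposition \ref{LKdecoKappa}. Consequently the real measure $d\rho_t=d(\kappa_t(2)-2\kappa_t(1))$ of Remark \ref{remarque313} is deterministic, and the Radon--Nikodym density $\lambda_u=d\kappa_u(1)/d\rho_u$ is a deterministic Borel function of $u$. Therefore $K_T=\int_0^T \lambda_u^2\,d\rho_u$ is a deterministic number, so the $P$-a.s. finiteness in (SC) reduces to plain finiteness, and a deterministic constant is automatically uniformly bounded in $(t,\omega)$.

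There is essentially no obstacle here; the work has been done upstream. The only care point is to make sure that when invoking Proposition \ref{propoabsconti} one may apply Proposition \ref{corro36} with $z=1$, which under Assumption \ref{HypD} requires $2\in D$ and $2\mathrm{Re}(1)=2\in D$ and $\mathrm{Re}(1)+1=2\in D$, all granted by point 2 of Assumption \ref{HypD}, so the decomposition $S=M+A$ with $d\langle M\rangle_u=S_{u-}^2\,d\rho_u$ is legitimate. Once this is in place, the corollary follows by pure substitution and the deterministic-characteristics observation.
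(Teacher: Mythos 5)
Your proposal is correct and follows essentially the same route as the paper: the corollary is a direct consequence of Proposition \ref{propoabsconti} (which already gives $\alpha_u=\lambda_u/S_{u-}$ and $K_t=\int_0^t\lambda_u^2\,d\rho_u$ via the substitution $\alpha_u^2\,d\langle M\rangle_u=\lambda_u^2\,d\rho_u$), together with the observation that $\kappa$, hence $\rho$ and $K$, are deterministic for a PII. The only cosmetic addition would be to note explicitly that $t\mapsto K_t$ is increasing (the integrand is nonnegative against the positive measure $d\rho$), so $K_t\le K_T<\infty$ gives the uniform bound.
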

\begin{proof}[\textbf{Proof of Proposition \ref{propoabsconti}}]
By Proposition~\ref{remarque314Bis},  $d\kappa_{t}(1)$ is absolutely continuous with respect to $d\rho_t$. Setting $\alpha_u$ as in~(\ref{alphau}), relation
~(\ref{KPII}) follows from Proposition~\ref{corro36}, expressing $K_t=\int_0^t\alpha_u^2d\left\langle M\right\rangle_u$. 
\end{proof}

\begin{lemme}\label{lemmeTheta}
The space $\Theta$ is constituted by all predictable processes $v$ such that
\begin{eqnarray*}\mathbb{E}[\int_0^Tv_t^2S_{t-}^{2}d\rho_t]<\infty\ . \end{eqnarray*}
\end{lemme}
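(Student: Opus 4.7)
The plan is to combine Proposition~\ref{lemmaTheta} with the explicit form of $\langle M\rangle$ derived in Proposition~\ref{corro36}. By definition $\Theta = L^2(M)\cap L^2(A)$, and the strategy is to show that, in the present setting, $L^2(A)\supseteq L^2(M)$, so that $\Theta$ collapses to $L^2(M)$, and then to translate the $L^2(M)$ condition into the desired integrability of $v^2 S_{-}^2$ against $d\rho$.

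First I would note that under Assumption~\ref{HypD} and the standing (SC) condition, Proposition~\ref{propoabsconti} yields the explicit formula $K_t=\int_0^t\bigl(d\kappa_u(1)/d\rho_u\bigr)^2 d\rho_u$, which is deterministic because the characteristics of a PII are deterministic (Theorem~\ref{ThmCharaDeter}) and hence so is the cumulant $\kappa_\cdot(1)$ and the reference measure $\rho$. Consequently $K_T$ is a finite constant, in particular $\mathbb{E}[K_T]<\infty$, so the hypothesis of Proposition~\ref{lemmaTheta} is met and $\Theta = L^2(M)$.

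Next I would invoke the explicit angle bracket obtained by specializing Proposition~\ref{corro36} to $y=z=1$, namely
\begin{equation*}
\langle M,M\rangle_t \;=\; \int_0^t S_{u-}^{2}\,d\rho_u,
\end{equation*}
as already recorded in equation~(\ref{MMCrochetObl}). By Definition~\ref{defiL2ML2A}, a predictable $v$ belongs to $L^2(M)$ iff
\begin{equation*}
\mathbb{E}\!\left[\int_0^T v_s^2\,d\langle M\rangle_s\right]
\;=\;\mathbb{E}\!\left[\int_0^T v_s^2 S_{s-}^{2}\,d\rho_s\right] \;<\;\infty,
\end{equation*}
which is exactly the characterization claimed in the statement.

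There is no real obstacle here: the lemma is essentially a bookkeeping corollary of the two already-established facts (Proposition~\ref{lemmaTheta} and Proposition~\ref{corro36}). The only mild point to mention is the deterministic character of $K_T$, which is what guarantees boundedness and hence lets us discard $L^2(A)$ from the intersection defining $\Theta$.
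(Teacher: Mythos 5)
Your proof is correct and follows essentially the same route as the paper: invoke Proposition~\ref{lemmaTheta} (using that $K$ is deterministic and bounded under the standing assumptions, so $\mathbb{E}[K_T]<\infty$) to reduce $\Theta$ to $L^2(M)$, then substitute the explicit bracket $\langle M,M\rangle_t=\int_0^t S_{u-}^{2}\,d\rho_u$ from~(\ref{MMCrochetObl}). The only cosmetic caveat is that determinism of $K$ alone does not give finiteness of $K_T$ --- that comes from the (SC) condition you cite at the outset --- but this is exactly the bookkeeping the paper itself performs.
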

\begin{proof}
According to Proposition \ref{lemmaTheta}, the fact that $K$ is bounded and $S$ satisfies (SC), then $v \in \Theta$ holds if and only if $v$ is predictable and $\mathbb{E}[\int_0^Tv^2_td\left\langle M,M\right\rangle_t]<\infty$. Since
\begin{eqnarray*}\left\langle M,M\right\rangle_t=\int_0^tS_{s-}^{2}d\rho_s\ ,\end{eqnarray*}
the assertion follows.
\end{proof}

\subsection{Explicit Föllmer-Schweizer decomposition}
%%%%%%%%%%%%%%%%%%%%%%%%%%%%%%%%%%%%%%%%%%%%%%%%%%%%%%

%
We denote by $\shd$ the set of $z \in D$ such that
\begin{equation} \label{Eqshd}
   \int_0^T\left \vert \frac{d\kappa_u(z)}{d\rho_u}\right \vert^2
d\rho_u<\infty.
\end{equation}
From now on, we formulate another assumption which will be  in
force for the whole section.
\begin{Hyp}\label{Hyp1bis}
$1 \in \shd$.
% ${\displaystyle K_T=\int_0^T\left(\frac{d\kappa_u(1)}
%{d\rho_u}\right)^2d\rho_u<\infty}\ .$
\end{Hyp}
%We formulate a supplementary Assumption.
%COMPARER AVEC L'HYPOTHESE PRECEDENTE.
%\begin{Hyp}\label{Hyp1ter}
% For all $z \in D$, 
%$$   \int_0^T\left \vert \frac{d\kappa_u(z)}{d\rho_u}\right \vert^2
%d\rho_u<\infty  .$$

%%% VERY OLD
%${\displaystyle \sup_{t\leq T}\left|\frac{d\kappa_t(z)}
%{d\rho_t}\right|<\infty}\ .$
%\end{Hyp}

\begin{remarque}
\begin{enumerate}
\item 
Because of Proposition~\ref{remarque314Bis}, ${\displaystyle 
\frac{d\kappa_t(z)}{d\rho_t}}$ exists for every $z \in D$.
\item Assumption \ref{Hyp1bis} implies that $K$ is uniformly bounded.
\end{enumerate}
\end{remarque}
The proposition below will constitute an important step for determining
the FS decomposition of the contingent claim $H=f(S_T)$
for a significant class of functions $f$, see 
Section \ref{SSDSCC}.
\begin{propo}\label{lemme38} 
Let $z \in \shd$ with $ z+1  \in \shd$
and $2Re(z) \in D$.
%, Re(z)+1    \in \shd$

\begin{enumerate}
\item 
 $S_T^z \in \mathcal{L}^2
(\Omega,\mathcal{F}_T)$.
\item  We suppose Assumptions \ref{HypD} and \ref{Hyp1bis}
and we define  
\begin{equation} \label{gammaZT}
 \gamma(z,t) :=  \frac{d(\rho_{t}(z,1))}{d\rho_t}, \ t \in [0,T]. 
\end{equation}
 $ \int_0^T \vert \gamma(z,t)\vert^2 \rho_{dt} < \infty$ 
and 
\begin{eqnarray} \label{etaZT}
\eta(z,t) &:=& \kappa_{t}(z)-\int_0^t\gamma(z,s)\kappa_{ds}(1) \nonumber
\\
&& \\
&=& \kappa_{t}(z)-\int_0^t\gamma(z,s) \frac{d\kappa_s(1)}{d\rho_{s}}
 \rho_{ds}   \nonumber
\end{eqnarray}
is well-defined and 
%for $z \in \shd$,
 $\eta(z,\cdot)$ is absolutely
continuous with respect to $\rho_{ds}$ and therefore bounded.
\item Under the same assumptions  
 $H(z)=S_T^z$ admits a FS decomposition 
$H(z)=H(z)_0+\int_0^T\xi(z)_tdS_t+L(z)_T$ where
\begin{eqnarray}
%\label{FS1}
H(z)_t &:= &e^{\int_t^T\eta(z,ds)}S_t^z\ , \label{FS1}  \\
%\label{FS1}\end{eqnarray}
\xi(z)_t&:=&\gamma(z,t)e^{\int_t^T\eta(z,ds)}S_{t-}^{z-1}\ , \label{FS2} \\
%\label{FS2}\end{eqnarray}
L(z)_t&:=&H(z)_t-H(z)_0-\int_0^t\xi(z)_udS_u\ . \label{FS3}
%\label{FS3}
\end{eqnarray}
%and $\gamma(z,t)$, $\eta(z,dt)$ are defined by
%\begin{eqnarray} 
%\label{gammaZT}
%\gamma(z,t):=\frac{d(\kappa_{t}(z+1)-\kappa_{t}(z)-\kappa_{t}(1))}{d\rho_t}\ ,
% \gamma(z,t) &:=& \frac{d(\rho_{t}(z,1))}{d\rho_t}\ , \label{gammaZT}\\
%\label{gammaZT}\end{eqnarray}
%&& \\
%\eta(z,t)&:=&\kappa_{t}(z)-\int_0^t\gamma(z,s)\kappa_{ds}(1)\ . \label{etaZT}
%\label{etaZT}
%\end{eqnarray}
\end{enumerate}
\end{propo}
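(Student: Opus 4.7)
My plan is to treat the three parts of the proposition in order, with the bulk of the work being the computation underlying Part 3.

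For Part 1, I would simply observe that $|S_T^z|^2 = S_T^{z+\bar z} = S_T^{2\text{Re}(z)}$, and since $2\text{Re}(z)\in D$, Remark \ref{R3} (or Lemma \ref{lemme38a}) gives $\mathbb{E}[S_T^{2\text{Re}(z)}]=s_0^{2\text{Re}(z)}e^{\kappa_T(2\text{Re}(z))}<\infty$. For Part 2, I would first note that by Proposition \ref{remarque314Bis} the three complex measures $d\kappa_t(z+1),d\kappa_t(z),d\kappa_t(1)$ are all absolutely continuous with respect to the reference measure $d\rho_t$; writing $\rho_t(z,1)=\kappa_t(z+1)-\kappa_t(z)-\kappa_t(1)$ then gives $d\rho_t(z,1)\ll d\rho_t$, so $\gamma(z,\cdot)$ is well defined as a $d\rho_t$-density. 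The bound $\int_0^T|\gamma(z,t)|^2 d\rho_t<\infty$ follows from the elementary inequality $|a-b-c|^2\le 3(|a|^2+|b|^2+|c|^2)$ applied to $\gamma(z,t)=\tfrac{d\kappa_t(z+1)}{d\rho_t}-\tfrac{d\kappa_t(z)}{d\rho_t}-\tfrac{d\kappa_t(1)}{d\rho_t}$, together with Assumption \ref{Hyp1bis} (for $1\in\shd$) and the fact that $z,z+1\in\shd$. Then Cauchy--Schwarz makes the product $\gamma(z,s)\tfrac{d\kappa_s(1)}{d\rho_s}$ integrable against $d\rho_s$, so $\eta(z,\cdot)$ is well defined, absolutely continuous with respect to $d\rho_s$ (hence continuous since $d\rho_s$ has no atoms), and of finite total variation, thus bounded on $[0,T]$.

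For Part 3, I would set $f(t):=e^{\eta(z,T)-\eta(z,t)}$, which is deterministic, continuous and of bounded variation, and apply integration by parts to $H(z)_t=f(t)S_t^z$. Since $f$ is deterministic continuous with bounded variation, $[f,S^z]=0$ and $df(t)=-f(t)\eta(z,dt)$, giving
\begin{equation*}
dH(z)_t=f(t)\,dS_t^z-f(t)S_{t^-}^z\,\eta(z,dt).
\end{equation*}
Using the canonical decomposition $dS_t^z=dM(z)_t+S_{t^-}^z\kappa_{dt}(z)$ from Proposition \ref{corro36} together with the definition of $\eta$, the drift collapses to $f(t)S_{t^-}^z\gamma(z,t)\kappa_{dt}(1)$. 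Substituting $S_{t^-}\kappa_{dt}(1)=dS_t-dM_t$ (the canonical decomposition at $z=1$) then yields
\begin{equation*}
dH(z)_t=\xi(z)_t\,dS_t+f(t)\,dM(z)_t-f(t)\gamma(z,t)S_{t^-}^{z-1}\,dM_t,
\end{equation*}
with $\xi(z)_t=\gamma(z,t)f(t)S_{t^-}^{z-1}$ exactly as claimed. This identifies $L(z)$ as the local martingale $\int_0^\cdot f(s)\,dM(z)_s-\int_0^\cdot f(s)\gamma(z,s)S_{s^-}^{z-1}\,dM_s$.

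It remains to verify that $L(z)$ is square integrable and strongly orthogonal to $M$. Orthogonality is the cleanest check: using Proposition \ref{corro36}\ref{MAexpPII} one has $\langle M(z),M\rangle_t=\int_0^t S_{u^-}^{z+1}\gamma(z,u)\,d\rho_u$ while $\langle M\rangle_t=\int_0^t S_{u^-}^2\,d\rho_u$, so the two contributions to $\langle L(z),M\rangle$ cancel identically. Square integrability of each integrand follows from Proposition \ref{corro36}\ref{L22} combined with boundedness of $f$, Fubini and Lemma \ref{lemme38a}: for instance $\mathbb{E}\bigl[\int_0^T |f(s)|^2\,d\langle M(z),M(\bar z)\rangle_s\bigr]=\int_0^T|f(s)|^2\mathbb{E}[S_{s^-}^{2\text{Re}(z)}]\,d\rho_s(z)<\infty$ because $\rho_\cdot(z)$ has finite variation on $[0,T]$, and an analogous estimate controls the second integrand using $\int_0^T|\gamma(z,s)|^2\,d\rho_s<\infty$.

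The main obstacle is not any single step but rather the bookkeeping of absolute-continuity relations: one must keep straight how $d\rho_t(z,1)$, $d\kappa_t(z)$, $d\kappa_t(1)$ and $d\rho_t$ relate, and check at every stage that the Radon--Nikodym derivatives used in the definitions of $\gamma$, $\eta$ and $\xi(z)$ live in the right $L^p$ spaces so that integration by parts and the identification of $\langle L(z),M\rangle$ are legitimate. The clever ingredient is the definition of $\eta$, which is precisely the drift correction needed to make the remainder martingale orthogonal to $M$; once one writes down the integration by parts, the choice \eqref{etaZT} is forced by requiring the $dM_t$ coefficient of $L(z)$ to cancel against the drift of $M(z)$.
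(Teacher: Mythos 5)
Your proposal is correct and follows essentially the same route as the paper: integration by parts on $e^{\int_t^T\eta(z,ds)}S_t^z$, the canonical decomposition of $S^z$ from Proposition~\ref{corro36}, the choice of $\eta$ forced by cancelling the drift, and the identity $\rho_{dt}(z,1)=\gamma(z,t)\rho_{dt}$ to kill $\langle L(z),M\rangle$. The only divergence is in the square-integrability step, where you bound the two stochastic integrals composing $L(z)$ separately via $\langle M(z),M(\bar z)\rangle$ and $\langle M\rangle$; this is a legitimate shortcut, whereas the paper instead computes $\langle L(z),\overline{L(z)}\rangle$ exactly through the conjugation identities of Lemma~\ref{lemmepassagecomplexe}, obtaining the formula \eqref{Eellz} that it needs again later (in Lemma~\ref{lemmeconstante} and Theorem~\ref{thm34}).
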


%\begin{remarque}\label{remarque322}
%\begin{enumerate}
%\item Assumption \ref{Hyp1ter} implies that for any $z \in D$, $t\mapsto \gamma(z,t)$ is bounded and $t\mapsto \eta(z,t)$ is well-defined
% with bounded variation.
%\item We can write $\int_t^T\eta(z,ds)=\eta(z,T)-\eta(z,t)\ .$
%\end{enumerate}
%\end{remarque}

\begin{proof}
%[\textbf{Proof of Proposition~\ref{lemme38}}]
\begin{enumerate}
\item is a consequence of Lemma \ref{lemme38a}.
\item $\gamma (z, \cdot)$ is square integrable because Assumption
\ref{Hyp1bis} and $z, z+1 \in \shd$. 
Moreover $\eta$ is well-defined  since
\begin{equation} \label{FS10}
 \left(\int_0^T \vert \gamma(z,s) \vert
\left \vert \frac{d\kappa_s(1)}{d\rho_{s}} \right \vert \rho_{ds} \right)^2 \le 
 \int_0^T \vert \gamma(z,s)\vert^2 \rho_{ds}
\int_0^T \vert \frac{d\kappa_s(1)}{d\rho_{s}} \vert^2 \rho_{ds} .
\end{equation}
\item
In order to prove that~(\ref{FS1}),(\ref{FS2}) and~(\ref{FS3})
constitute
 the FS decomposition of $H(z)$, taking into account Remark \ref{R126}
 we need to show that
\begin{enumerate}
\item $H(z)_0$ is $\mathcal{F}_0$-measurable,
\item $\left\langle L(z),M\right\rangle=0,$
\item $\xi(z)\in\Theta,$
\item $L(z)$ is a square integrable martingale.
%\item $\left\langle L(z),M\right\rangle=0\,.$
\end{enumerate}
Point (a) is obvious. 
Partial integration and point \ref{MAexpPII} of Proposition \ref{corro36}
%~(\ref{MAexpPII}) with $y=1$ 
yield
\begin{eqnarray*}
H(z)_t&=&H(z)_0+\int_0^te^{\int_u^T\eta(z,ds)}dS_u^z+\int_0^tS_u^z
d(e^{\int_u^T\eta(z,ds)}) \\ 
&=&H(z)_0+\int_0^te^{\int_u^T\eta(z,ds)}dM(z)_{u}+\int_0^t
e^{\int_u^T\eta(z,ds)}dA(z)_{u}+\int_0^tS_u^zd(e^{\int_u^T\eta(z,ds)})\\ 
&=&H(z)_0+\int_0^te^{\int_u^T\eta(z,ds)}dM(z)_{u}+\int_0^te^{\int_u^T\eta(z,ds)}dA(z)_{u}-\int_0^t e^{\int_u^T\eta(z,ds)}S_u^z \eta(z,du)  \\ 
&=&H(z)_0+\int_0^te^{\int_u^T\eta(z,ds)}dM(z)_{u}-\int_0^t 
e^{\int_u^T\eta(z,ds)}S_u^z \eta(z,du)
+\int_0^te^{\int_u^T\eta(z,ds)}S_{u-}^z\kappa_{du}(z)  \ .
\end{eqnarray*}
On the other hand
\begin{eqnarray*}
\int_0^t\xi(z)_udS_u&=&\int_0^t\xi(z)_udM_u+\int_0^t\xi(z)_udA_u\ ,\\ \\
&=&\int_0^t\xi(z)_udM_u+\int_0^t\xi(z)_uS_{u-}\kappa_{du}(1)\ ,\\ \\
&=&\int_0^t\xi(z)_udM_u+\int_0^t\gamma(z,u)e^{\int_u^T\eta(z,ds)}S_{u-}^{z}\kappa_{du}(1)\ .
\end{eqnarray*}
Hence, 
\begin{eqnarray*}
L(z)_t&=&H(z)_t-H(z)_0-\int_0^t\xi(z)_udS_u\ ,\\ \\
&=&\int_0^te^{\int_u^T\eta(z,ds)}dM(z)_{u}-\int_0^t
  e^{\int_u^T\eta(z,ds)}S_u^z \eta(z,du)
+\int_0^te^{\int_u^T\eta(z,ds)}S_{u-}^z\kappa_{du}(z)\\ \\
&-&\int_0^t\xi(z)_udM_u-\int_0^t\gamma(z,u)
e^{\int_u^T\eta(z,ds)}S_{u-}^{z}\kappa_{du}(1)\ ,\\ \\
&=&\int_0^te^{\int_u^T\eta(z,ds)}dM(z)_{u}-\int_0^t\xi(z)_udM_u\\ \\
&+& \int_0^t
%[\kappa_{du}(z)-\eta(z,du)-\gamma(z,u)\kappa_{du}(1)]
e^{\int_u^T\eta(z,ds)}S_{u-}^z  
 [\kappa_{du}(z)-\eta(z,du)-\gamma(z,u)\kappa_{du}(1)] .
\end{eqnarray*}
Then, by definition of $\eta$ in (\ref{etaZT}), 
$
\eta(z,du)=\kappa_{du}(z)-\gamma(z,u)\kappa_{du}(1)\ ,
$ 
hence, 
\begin{eqnarray}
L(z)_t=\int_0^te^{\int_u^T\eta(z,ds)}dM(z)_{u}-\int_0^t\xi(z)_udM_u\ ,
\label{43}\end{eqnarray}
which implies that $L(z)$ is a local martingale.\\
From point \ref{MAexpPII}  of Proposition \ref{corro36},
 it follows that
\begin{eqnarray*}
\left\langle L(z),M\right\rangle_t
&=&\int_0^te^{\int_u^T\eta(z,ds)}d\left\langle M(z),M\right\rangle_u-\int_0^t\xi(z)_ud\left\langle M,M\right\rangle_u\ ,\\ \\
&=&\int_0^te^{\int_u^T\eta(z,ds)}S_{u-}^{z+1} \rho_{du}(z,1)-\int_0^t\xi(z)_uS_{u-}^{2}\rho_{du}\ , \\ \\
&=&\int_0^te^{\int_u^T\eta(z,ds)}S_{u-}^{z+1}\rho_{du}(z,1)-\int_0^t\gamma(z,u)e^{\int_u^T\eta(z,ds)}S_{u-}^{z+1}\rho_{du}\ .
%&=&\int_0^te^{\int_u^T\eta(z,ds)}S_{u-}^{z+1}(\kappa_{du}(z+1)-\kappa_{du}(z)-\kappa_{du}(1))-\int_0^t\xi(z)_uS_{u-}^{2}d\rho_t\ , \\ \\
%&=&\int_0^te^{\int_u^T\eta(z,ds)}S_{u-}^{z+1}(\kappa_{du}(z+1)-\kappa_{du}(z)-\kappa_{du}(1))-\int_0^t\gamma(z,u)e^{\int_u^T\eta(z,ds)}S_{u-}^{z+1}d\rho_t\ .
\end{eqnarray*}
Consequently,
%\begin{eqnarray*}\left\langle L(z),M\right\rangle_t=\int_0^te^{\int_u^T\eta(z,ds)}S_{u-}^{z+1}[\kappa_{du}(z+1)-\kappa_{du}(z)-\kappa_{du}(1)-\gamma(z,u)d\rho_t]\ .\end{eqnarray*}
\begin{eqnarray*}\left\langle L(z),M\right\rangle_t=\int_0^te^{\int_u^T\eta(z,ds)}S_{u-}^{z+1}[\rho_{du}(z,1)-\gamma(z,u)\rho_{du}]\ .\end{eqnarray*}
Then by definition of $\gamma$ in (\ref{gammaZT}), 
%$$\kappa_{dt}(z+1)-\kappa_{dt}(z)-\kappa_{dt}(1)=\gamma(z,t)d\rho_t\ ,$$ 
$\rho_{dt}(z,1)=\gamma(z,t)\rho_{dt}\ ,$
which yields,  
\begin{eqnarray}
\left\langle L(z),M\right\rangle_t=0\label{51Bis}\ .
\end{eqnarray}
Consequently, point (b) follows.
% $L(z)M$ is a local martingale as well. 
To continue the proof of this proposition we need the lemma below.
\end{enumerate}

\begin{lemme}\label{lemmepassagecomplexe}
For all $z \in \mathbb{C}$ as in Proposition  \ref{lemme38},
 $d\rho_t$ a.e.   we have
\begin{enumerate}
\item $\overline{\gamma(z,t)}=\gamma(\bar{z},t)\,;$
\item $\overline{\eta(z,t)}=\eta(\bar{z},t)\,.$
\end{enumerate}
%where $\overline{a}$ means complex conjugate of $a \in \mathbb{C}$.
\end{lemme}
\begin{proof}
%[\textbf{Proof of lemma \ref{lemmepassagecomplexe}}]
Using Remark \ref{remarkR2} 1) 
% we observe \remark{remarkR2} 1)  
we observe $\bar z, \bar z + 1 \in \shd$.
\begin{enumerate}
	\item  By definition of $\gamma$ in (\ref{gammaZT}),
 $\gamma(z,t)\rho_{dt}=\rho_{dt}(z,1)\ .$
 Then, taking the complex conjugate of the integral
%with respect to  $\rho_{ds}$  
from $0$ to $t$
and using Remark~\ref{remarkR2}.1 yields, 
\begin{eqnarray*}
\overline{\int_0^t \gamma(z,s)\rho_{ds}}&=&\int_0^t \overline{\gamma(z,s)}\rho_{ds}\ ,\\
&=&\overline{\rho_t(z,1)}=\overline{\kappa_t(z+1)-\kappa_t(z)-\kappa_t(1)}\ ,\\
&=&
%\rho_t(\bar z,1)
=\kappa_t(\bar{z}+1)-\kappa_t(\bar{z})-\kappa_t(1)  =
 \rho_t(\bar z,1) \ ,\\ 
&=&\int_0^t \gamma(\bar{z},s)\rho_{ds}\ .
\end{eqnarray*}
\item  By definition of $\eta$ in (\ref{etaZT}),
$
\eta(z,t)=\kappa_{t}(z)-\int_0^t\gamma(z,u)\kappa_{du}(1)\ ,
$
so taking the complex conjugate,
\begin{eqnarray*}\overline{\eta(z,t)}&=&\kappa_{t}(\bar{z})-\int_0^t\overline{\gamma(z,s)}\kappa_{ds}(1)\ ,\\ 
&=&\kappa_{t}(\bar{z})-\int_0^t\gamma(\bar{z},s)\kappa_{ds}(1)\ ,\\ 
&=&\eta(\bar{z},t)\ .\end{eqnarray*}
\end{enumerate}
\end{proof}
We continue with the proof of point 3. of Proposition \ref{lemme38}.     
It remains to prove that $L(z)$ is a square-integrable martingale for all $z \in D$ and that $Re(\xi(z))$ and $Im(\xi(z))$ are in $\Theta$. 
(\ref{43}) says that
\begin{eqnarray*}
L(z)_t=\int_0^te^{\int_s^T\eta(z,du)}dM_s(z)-\int_0^t\xi(z)_sdM_s\ .
\end{eqnarray*}
By Proposition~\ref{corro36} and Lemma~\ref{lemmepassagecomplexe}, it follows
\begin{eqnarray}
\overline{L(z)_t}=L(\bar{z})_t\label{44bis}\ ,
\end{eqnarray}
hence, 
\begin{eqnarray} \label{44ter}
\left\langle L(z),\overline{L(z)}\right\rangle_t
&=&\left\langle L(z),L(\bar{z})\right\rangle_t, \nonumber \\ 
&=&\left\langle
L(z),\int_0^{.}e^{\int_s^T\eta(\bar{z},du)}dM_s(\bar{z})
\right\rangle_t\ ,\nonumber \\ 
&& \\
&=&\int_0^te^{\int_s^T\eta(z,du)}e^{\int_s^T\eta(\bar{z},du)}
d\left\langle M(z),M(\bar{z})\right\rangle_s\nonumber \\ 
&&-\int_0^t\xi(z)_se^{\int_s^T\eta(\bar{z},du)}d\left\langle 
M,M(\bar{z})\right\rangle_s\ . \nonumber
\end{eqnarray}
By Proposition~\ref{corro36} we have 
\begin{eqnarray*}
\left\langle L(z),\overline{L(z)}\right\rangle_t
&=&\int_0^te^{\int_s^T\eta(z,du)}e^{\int_s^T\eta(\bar{z},du)}S_{s-}^{2Re(z)}\rho_{ds}(z)\\ \\
%
%&&-\int_0^t\xi(z)_se^{\int_s^T\eta(\bar{z},du)}S_{s-}^{1+\bar{z}}(\kappa_{ds}(\bar{z}+1)-\kappa_{ds}(\bar{z})-\kappa_{ds}(1))\ .
&&-\int_0^t\xi(z)_se^{\int_s^T\eta(\bar{z},du)}S_{s-}^{1+\bar{z}}\rho_{ds}(\bar{z},1)\ .
\end{eqnarray*}
Using Lemma~\ref{lemmepassagecomplexe} and expressions~ \eqref{gammaZT},
and \eqref{FS2} of $\gamma(z,s)$ and $\xi(z)_s$, we have
\begin{eqnarray*}
\left\langle L(z),\overline{L(z)}\right\rangle_t&=&\int_0^te^{\int_s^T2Re(\eta(z,du))}S_{s-}^{2Re(z)}\rho_{ds}(z)-\int_0^t\xi(z)_se^{\int_s^T\eta(\bar{z},du)}S_{s-}^{1+\bar{z}}\,\gamma(\bar z,s)\rho_{ds}(1)\ ,\\ \\
%&&-\int_0^t\gamma(z,s)e^{\int_s^T2Re(\eta(z,du))}S_{s-}^{2Re(z)}(\kappa_{ds}(\bar{z}+1)-\kappa_{ds}(\bar{z})-\kappa_{ds}(1))\ ,\\ \\
%
&=&\int_0^te^{\int_s^T2Re(\eta(z,du))}S_{s-}^{2Re(z)}\rho_{ds}(z) -\int_0^t\gamma(z,s)e^{\int_s^T2Re(\eta(z,du))}S_{s-}^{2Re(z)} \gamma(\bar z,s)\rho_{ds}\ ,\\\\
%&=&\int_0^te^{\int_s^T2Re(\eta(z,du))}S_{s-}^{2Re(z)}\\ \\
%&&[\rho_{ds}(z)-\gamma(z,s)\{\kappa_{ds}(\bar{z}+1)-\kappa_{ds}(\bar{z})-\kappa_{ds}(1)\}]
&=&\int_0^te^{\int_s^T2Re(\eta(z,du))}S_{s-}^{2Re(z)}\rho_{ds}(z)
-\int_0^te^{\int_s^T2Re(\eta(z,du))}S_{s-}^{2Re(z)} \vert \gamma(
z,s)\vert ^2\rho_{ds}\ 
.\end{eqnarray*}
%Hence
%\begin{eqnarray*}\left\langle L(z),\overline{L(z)}\right\rangle_t=\int_0^te^{\int_s^T2Re(\eta(z,du)})S_{s-}^{2Re(z)}\times \end{eqnarray*}
%\begin{eqnarray}[\kappa_{ds}(2Re(z))-2Re(\kappa_{ds}(z))-\frac{d(\kappa_{s}(z+1)-\kappa_{s}(z)-\kappa_{s}(1))}{d\rho_s}\{d(\kappa_s(\bar{z}+1)-\kappa_s(\bar{z})-\kappa_s(1))\}]\label{LLz}\end{eqnarray}
%By Lemma \ref{lemmepassagecomplexe}, we can also write
Consequently
\begin{eqnarray} \label{Eellz}
\left\langle L(z),\overline{L(z)}\right\rangle_t=
\int_0^te^{\int_s^T 2Re(\eta(z,du)})S_{s-}^{2Re(z)}
[\rho_{ds}(z) -|\gamma(z,s)|^2\rho_{ds}]\ .
\end{eqnarray}
Then, point 2. implies 
\begin{eqnarray} \label{XiXiz}
\int_0^T|\xi(z)_s|^2S_{s-}^2 \rho_{ds}
&=&\int_0^T\xi(z)_s\xi(\bar{z})_sS_{s-}^2 \rho_{ds} \nonumber \\
&=&\int_0^T\gamma(z,s)e^{\int_s^T\eta(z,du)}S_{s-}^{z-1}\gamma(\bar{z},s)
e^{\int_s^T\eta(\bar{z},du)}S_{s-}^{\bar{z}-1}S_{s-}^2\rho_{ds}
 \ , \\
&=&\int_0^T|\gamma(z,s)|^2e^{\int_t^T2Re(\eta(z,du))}S_{s-}^{2Re(z)}\rho_{ds} 
\ . \nonumber\end{eqnarray}

% ANCIENNE VERSION
%Moreover
%\begin{eqnarray*}\mathbb{E}[S_{s-}^{2Re(z)}]=\mathbb{E}[S_{s}^{2Re(z)}]=S_0^{2Re(z)}e^{\kappa_t(2Re(z))}\leq S_0^{2Re(z)}e^{Var[\kappa_t(2Re(z))]}\end{eqnarray*}
%hence
%\begin{eqnarray}\mathbb{E}[S_{s-}^{2Re(z)}]\leq S_0^{2Re(z)}\exp\{sup_{z \in I,t \in [0,T]}\kappa_t(2Re(z))\}<\infty\label{EspStBornee}\end{eqnarray}
%where Var denotes the total variation. The last inequaility provides by the fact that $t \mapsto \kappa_t(z)$ has finite variation so is bounded.\\
%So~(\ref{LLz}) et~(\ref{EspStBornee}) imply that

Taking the expectation in~(\ref{XiXiz}), using again point 2.,
%Remark~\ref{remarque322}
 \eqref{gammaZT}, \eqref{etaZT} and
Lemma~\ref{lemme38a}, we obtain
\begin{eqnarray}\mathbb{E}\left[\left\langle L(z),\overline{L(z)}\right\rangle_T\right]<\infty\label{LsquareMart}\ .\end{eqnarray}
Therefore, $L$ is a square-integrable martingale. Similarly,~(\ref{XiXiz}) 
yields that $Re(\xi(z)) \in \Theta$ and $Im(\xi(z)) \in \Theta$. This concludes the proof of Proposition~\ref{lemme38}.
\end{proof}

%\begin{remarque}
%Since $\kappa_{dt}(1)\sim\kappa_{dt}(2)-2\kappa_{dt}(1)$ the definition of $\gamma$ makes sense.
%\end{remarque}

%LEMME DE NADIA QUI EST IMPLIQUE PAR LE LEMME \ref{lemmadensitepositive}
%///////////////////////////////////////////////////////////////////////
%\begin{lemme}\label{lemmeNadia}[Lemme de Nadia]\\
%We have that $\kappa_{dt}(2)-2\kappa_{dt}(1)$ is positive.
%\end{lemme}
%\begin{proof}
%We have for all interval $[a,b]$
%\begin{eqnarray*}\int_s^b\kappa_{ds}(2)-2\kappa_{ds}(1)=log\left(\frac{Var\left[e^{\Delta_s^bX_t}\right]}{\left(\mathbb{E}\left[e^{\Delta_s^bX_t}\right]\right)^2}+1\right)>0\end{eqnarray*}
%with $\Delta_s^bX_t=X_t-X_s$. So this measure is positive on all interval. That's why $\kappa_{ds}(2)-2\kappa_{ds}(1)$ is increasing.
%\end{proof}
%///////////////////////////////////////////////////////////////////////

\subsection{FS decomposition of special contingent claims} 
\label{SSDSCC}

%%%%%%%%%%%%%%%%%%%%%%%%%%%%%%%%%%%%%%%%%%%%%%%%%%%%%%%%%
Now, we will proceed to the FS decomposition of more general contingent claims. We consider now options of the type 
\begin{eqnarray}
H=f(S_T) \quad\textrm{with}\quad f(s)=\int_{\mathbb{C}} s^z\Pi(dz)\ ,
\label{FORM}\end{eqnarray}
where $\Pi$ is a (finite) complex measure in the sense of Rudin~\cite{RU87},  Section~6.1. 
An integral representation of some basic European calls can be found later.\\
We need now the new following assumption. 
\begin{Hyp}\label{Hyp1}
Let $I_0= {\rm supp} \Pi\cap \mathbb{R}$. We denote
 $I:=\left[\inf I_0\wedge 2\inf I_0,2\sup I_0 \vee \sup I_0+1\right].$ 
\begin{enumerate}
%\item $I + i \R \subset \shd \,;$
\item $\forall z \in {\rm supp} \Pi, \quad z, z+1 \in \shd.$
\item $I \subset D$ and
 $\sup_{x \in I \cup \{1\} }\left\|\frac{d(\kappa_t(x))}{d\rho_t}
\right\|_{\infty}<\infty$.
%VERY OLD \int_{\C \times \C} \rho_T(z) \pi(dz) \bar \pi(dz) 
%\item For every compact $K$ of $\C$
% $\sup_{z \in K} \|\frac{d(\kappa_t(z))}{d\rho_t} \|_{\infty}<\infty$.
% DEVIENT REMARQUE \item $sup_{z \in I + i \R]}
%\left\|d(\kappa_{dt}(Re(z)))\right\|_{var}<\infty\,.$
%$I \subset D.$
\end{enumerate}
%where $\left\|.\right\|_{\infty}$ denotes the $L^\infty(\mathbb{R})$ norm.
\end{Hyp}

%ATTENTION REVENIR EVENTUELLEMENT A L'HYPOTHESE 2. ET 
%A LA REMARQUE D'AVANT

%UTILISER MONAT STRICKER, TH. 3.8 POUR LA CONVERGENCE

\begin{remarque}\label{remarque320}
\begin{enumerate}
\item Point 2. of Assumption \ref{Hyp1} implies
 $\sup_{z \in I + i \R}
\left\|\kappa_{dt}(Re(z))\right\|_{T}<\infty\,.$
\item Under Assumption \ref{Hyp1}, $H = f(S_T)$ is square integrable.
In particular it admits an FS decomposition.
\item Because of~(\ref{AncHyp1}) in Proposition~\ref{remarque314Bis}, 
the Radon-Nykodim derivative at Point 2. of Assumption 
\ref{Hyp1}, always exists.
%\item Point 2. implies $sup_{z \in I + i \R]}
%\left\|d(\kappa_{dt}(Re(z)))\right\|_{var}<\infty\,.$
\end{enumerate}
\end{remarque}
We need now to obtain upper bounds on $z$ for the quantity~(\ref{LsquareMart}). We will first need the following lemma. 
%
%COMMENTAIRES

%LE POINT 2. DU  LEMME POSE PROBLEME. LE POINT 1.
%MARCHE CAR J'AI ENLEVE LA VALEUR ABSOLUE .... 
%CELA SUFFIT POUR UN CERTAIN NOMBRE DE CHOSES

%LE POINT 2. POSE PROBLEME AUSSI A KALLSEN. DANS LA VERSION
%LEVY (KALLSEN et alia) NE PROUVENT PAS CELA.

%JE SUIS CAPABLE DE PROUVER L'EQUIVALENT EN NON-HOMOGENE DE
%KALLSEN DANS SON LEMMA 3.9, C. A D. 
%DANS NOTRE CONTEXTE CELA DONNE
%$$ \vert \gamma(t,z) \vert^2 \le 
% {\rm const} (-frac{dRe(\eta(z,t))}
%{d\rho_s} +1)
%   $$
%MAIS CELA NE SUFFIT PAS A MONTRER LE POINT 2.
%IL NE DONNE PAR EXEMPLE AUCUNE 
%BORNE SUR 
%$$(-frac{dRe(\eta(z,t))}
%{d\rho_s}$$

%D'AILLEURS JE VOIS LE MEME PROBLEME DANS KALLSEN.
%L'UTILISATION QU'IL EN FAIT DANS LA PROPOSITION 3.10
%EST {\bf SUSPECTE}. 

%ATTENTION AU LEMME SUIVANT

%MESURABILITE DE $\frac{d\kappa_t(z)}{d\rho(t)}$.

\begin{lemme}\label{lemmeconstante}
There are positive constants $c_1, c_2, c_3$ such that $d\rho_s$ a.e.
 \begin{enumerate}
\item
 $$\sup_{z\in {\rm supp \Pi}}
 \frac{dRe(\eta(z,s))}{d\rho_s} \le c_1. $$
% Re\left(\eta(z,T)-\eta(z,t)\right)< \infty$.
\item For any $z \in {\rm supp} \Pi$
$$ \vert \gamma(z,s) \vert^2 \le    \frac{d\rho_s(z)}{d\rho_s} \le
c_2 - c_3
\frac{dRe(\eta(z,s))}{d\rho_s} $$ 
% Re(\eta(z,T) - \eta(z,t))).$$
 %$\sup_{z\in I + i \R, t\in [0,T]}|\gamma(z,t)|<\infty$.
%$\Gamma$ is locally bounded.
\item $$- \sup_{z \in {\rm supp \Pi}} \int_0^T 2 Re (\eta(z,dt)) \exp(
 \int_t^T Re (\eta(z,ds)) ) < \infty.$$
\end{enumerate}
\end{lemme}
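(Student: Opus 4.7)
The plan is to establish the three claims in the order: first inequality of item~2, then second inequality of item~2, then item~1 (which follows at once), then item~3. The unifying strategy is to work with Radon-Nikodym densities with respect to $d\rho_s$ and to exploit two uniform bounds furnished by Assumption~\ref{Hyp1}: $K := \left\|d\kappa_s(1)/d\rho_s\right\|_\infty < \infty$ and $E_0 := \sup_{z\in\mathrm{supp}\,\Pi}\left\|d\kappa_s(2Re(z))/d\rho_s\right\|_\infty < \infty$ (the latter being finite since $2Re(z)\in I$ whenever $z\in\mathrm{supp}\,\Pi$, by definition of $I$).

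For the bound $|\gamma(z,s)|^2 \le d\rho_s(z)/d\rho_s$, I shall apply the Kunita-Watanabe inequality for the Hermitian covariation $[N_1,N_2]:=\langle N_1,\bar N_2\rangle$ to the complex martingale $M(z)$ and the real martingale $M$. Inserting the three density identities of Proposition~\ref{corro36}, namely $d\langle M(z),M\rangle_t/d\rho_t = S_{t-}^{z+1}\gamma(z,t)$, $d\langle M(z),M(\bar z)\rangle_t/d\rho_t = S_{t-}^{2Re(z)}d\rho_t(z)/d\rho_t$ and $d\langle M\rangle_t/d\rho_t = S_{t-}^2$, yields $|S_{t-}^{z+1}\gamma(z,t)|^2 \le S_{t-}^{2Re(z)+2}d\rho_t(z)/d\rho_t$; cancelling the common factor $S_{t-}^{2Re(z)+2}$ (which is a.s.\ positive) gives the claim $d\rho_s$-a.e.

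For the bound $d\rho_s(z)/d\rho_s \le c_2 - c_3\,dRe(\eta(z,s))/d\rho_s$, I set $u := d\rho_s(z)/d\rho_s$ and $v := dRe(\eta(z,s))/d\rho_s$ and expand via \eqref{L1} and \eqref{etaZT} to obtain $u = d\kappa_s(2Re(z))/d\rho_s - 2Re(d\kappa_s(z)/d\rho_s)$ and $v = Re(d\kappa_s(z)/d\rho_s) - Re(\gamma(z,s))\,d\kappa_s(1)/d\rho_s$. The combination $u+2v$ eliminates $Re(d\kappa_s(z)/d\rho_s)$, leaving $u+2v = d\kappa_s(2Re(z))/d\rho_s - 2Re(\gamma(z,s))\,d\kappa_s(1)/d\rho_s$. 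Combining the first inequality of item~2 (which gives $|Re(\gamma(z,s))| \le \sqrt{u}$) with Young's inequality $2\sqrt{u}\,K \le u/2 + 2K^2$ produces $u+2v \le E_0 + u/2 + 2K^2$, that is $u \le 2E_0 + 4K^2 - 4v$; this is the claim with $c_3 := 4$ and $c_2 := 2E_0 + 4K^2$. Item~1 then follows instantly: Lemma~\ref{L2} ensures that $t\mapsto \rho_t(z)$ is increasing, so $u \ge 0$ $d\rho_s$-a.e., and hence $4v \le c_2 - u \le c_2$, giving $c_1 := c_2/4$.

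For item~3, the function $t\mapsto Re(\eta(z,t))$ is continuous and of bounded variation: continuity follows from Proposition~\ref{proposition38} combined with the absolute continuity of $\eta(z,\cdot)$ with respect to the continuous measure $d\rho_t$ established in Proposition~\ref{lemme38}, while bounded variation follows from the same absolute continuity. Setting $F(t):=\int_0^t Re(\eta(z,ds))$ and $g(t):=\exp(F(T)-F(t))$, the classical chain rule for continuous functions of bounded variation yields $dg(t) = -g(t)\,Re(\eta(z,dt))$, so that
\[
-2\int_0^T Re(\eta(z,dt))\,\exp\!\Big(\int_t^T Re(\eta(z,ds))\Big) = 2\bigl(g(T)-g(0)\bigr) = 2 - 2\exp\!\Big(\int_0^T Re(\eta(z,ds))\Big) \le 2,
\]
uniformly in $z\in\mathrm{supp}\,\Pi$, since $\exp \ge 0$. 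I expect the main obstacle to be the second inequality of item~2: one must combine the Kunita-Watanabe control of $\gamma(z,s)$ with Young's inequality in order to absorb the cross-term $Re(\gamma(z,s))\,d\kappa_s(1)/d\rho_s$ into $u/2$, which is what ultimately turns the boundedness assumption on real arguments into a bound uniform in $z\in\mathrm{supp}\,\Pi$.
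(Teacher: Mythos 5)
Your proof is correct, and it reaches the three estimates by a route that is genuinely different from (and in places cleaner than) the paper's. For the first inequality of item~2 you invoke Kunita--Watanabe on $\langle M(z),M\rangle$ with the density formulas of Proposition~\ref{corro36}, whereas the paper extracts the same bound \eqref{MajorationGamma} from the nonnegativity of the increasing process $\langle L(z),\overline{L(z)}\rangle$ computed in Proposition~\ref{lemme38}; these are two faces of the same projection inequality. The real divergence is in the order and mechanism of items~1 and~2: the paper first proves item~1 by bounding $\left|\int_t^T\gamma(z,s)\,d\kappa_s(1)\right|$ through a somewhat elaborate quadratic estimate (the constant $c_{13}$ with $c_{13}^2\ge 4c_{12}^4+c_{12}^2c_{11}$) so as to land on the intermediate inequality $dRe(\eta(z,s))/d\rho_s\le c_0+\tfrac12\,dRe(\kappa_s(z))/d\rho_s$, and only then deduces item~2; your observation that $u+2v$ cancels the term $Re(d\kappa_s(z)/d\rho_s)$ exactly, followed by a single application of Young's inequality, yields item~2 directly and item~1 as a one-line corollary, with explicit constants $c_2=2E_0+4K^2$, $c_3=4$, $c_1=c_2/4$ (the paper also ends with $c_3=4$). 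Both arguments rest on the same two pillars, namely $|\gamma(z,s)|^2\le d\rho_s(z)/d\rho_s$ and the uniform bounds of Assumption~\ref{Hyp1}, and both use the same implicit fact that $2Re(z)\in I$ for $z\in{\rm supp}\,\Pi$. For item~3 your exact-differential identity $dg(t)=-g(t)\,Re(\eta(z,dt))$ for the continuous bounded-variation function $t\mapsto\int_0^t Re(\eta(z,ds))$ gives the uniform bound $2$ with no appeal to item~1 at all, whereas the paper decomposes $Re(\eta(z,\cdot))$ into its positive and negative variations $A^{\pm}$ and uses $A^+(z,T)\le c_1\rho_T$ to obtain the weaker bound $e^{2c_1\rho_T}/2$; your computation is sharper, and the same identity covers the variant of the expression (with the factor $2$ inside the exponential rather than in front of the measure) that is actually invoked later in the proof of Theorem~\ref{propo310}.
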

\begin{remarque}
According to Proposition \ref{lemme38},
$t \mapsto Re (\eta(z,t))$ is absolutely continuous 
with respect to $d\rho_t$.
\end{remarque} 
%\begin{prooff}
\textit {Proof} ({\bf  of  Lemma} \ref{lemmeconstante}).\\
%[\textbf{Proof of Lemma \ref{lemmeconstante}}]
The proof is inspired by Lemma 3.9 of \cite{Ka06}.
According to Point 2. of Assumption \ref{Hyp1} 
we denote 
\begin{equation} \label{E300}
 c_{11} :=  
\sup_{x \in I}\left 
\Vert\frac{d(\kappa_t(x))}{d\rho_t}\right\Vert_{\infty}. 
\end{equation}
%$$ c_{11} := 
%\sup_{x \in I} \left \| \kappa_{dt}(x)\right\|_{var}.$$
For $z \in  {\rm supp} \Pi, t\in [0,T]$, we have
\begin{eqnarray*}\eta(z,t)=\kappa_{t}(z)-\int_0^t\gamma(z,s)d\kappa_{s}(1)
\quad \textrm{and}\quad \eta(\bar{z},t)=
\kappa_{t}(\bar{z})-\int_0^t\gamma(\bar{z},s)d\kappa_{s}(1).\end{eqnarray*}
Then, we get
$Re(\eta(z,t))=Re(\kappa_{t}(z))-\int_0^tRe(\gamma(z,s))d\kappa_{s}(1)\ .$
We obtain
\begin{eqnarray}\label{D0}
  \int_t^TRe(\eta(z,ds)) &\leq&  Re\left(\kappa_T(z)-\kappa_t(z)\right)
+ \left \vert \int_t^T \gamma(z,s) d\kappa_s(1) \right \vert  \nonumber\\
&&\\
&=&  \int_t^T \frac{Re (d\kappa_{s}(z))}{d\rho_s} d\rho_s
+ \left \vert \int_t^T \gamma(z,s) d\kappa_s(1) \right \vert \ .\nonumber
%\sup_{s \in [0,T]}\left|Re(\gamma(z,s))\right|\left|\kappa_T(1)-\kappa_t(1)
%\right|\ .
%&\le &  2 Re\left(\kappa_T(z)-\kappa_t(z)\right)
%+2 c_1 \sup_{s \in [0,T]}\left|Re(\gamma(z,s))\right| 
\end{eqnarray}
%We need to control $Re\left(\kappa_T(z)-\kappa_t(z)\right)$.
 Since $\left\langle L(z),\overline{L(z)}\right\rangle_t$ is increasing, 
and taking into account \eqref{Eellz}, 
 the measure,  $\left (d\rho_s(z)-|\gamma(z,s)|^2d\rho_s\right)$, is non-negative. It follows that
\begin{eqnarray} \label{D0bis}
\frac{d\rho_s(z)}{d\rho_s}-|\gamma(z,s)|^2\geq 0\ ,\quad d\rho_s \ a.e.
%\frac{d(\kappa_{s}(2Re(z))-2Re(\kappa_{d}(z)))}{d\rho_s}-|\gamma(z,s)|^2\geq 0& ,d\rho_s &a.e.
\label{MajorationGamma}\end{eqnarray}

\begin{remarque}\label{lemmadensitepositive}
By Lemma \eqref{MajorationGamma},
in particular  the density ${\displaystyle \frac{d\rho_s(z)}{d\rho_s}}$
 is non-negative $d\rho_s$ a.e.
%\end{remarque}
\bigskip

%%%%%%%%%%%%%%%%%%%%%%%%%%%%%%%%%%%%%%%%%%%%%%%%%%%%%%%%%%%%%%%%%%%%%%%%%%%%%%%%%%%%%%%%%%%%%%%%%%%
%\begin{proof}[Proof of lemma \ref{lemmadensitepositive}]
%For every interval $[s,t]$ with $s,t\in [0,T]$, and using the lemma~(\ref{lemmepassagecomplexe}), we have

%\begin{eqnarray*}\int_s^t\kappa_{s}(2Re(z))-2Re(\kappa_{s}(z))=log\left(\mathbb{E}\left[e^{2Re(z)\Delta_s^tX_u}\right]\right)-2Re\left(log\left(\mathbb{E}\left[e^{z\Delta_s^tX_u}\right]\right)\right)\end{eqnarray*}
%\begin{eqnarray*}=log\left(\mathbb{E}\left[e^{2Re(z)\Delta_s^tX_u}\right]\right)-log\left(\mathbb{E}\left[e^{z\Delta_s^tX_u}\right]\right)-log\left(\mathbb{E}\left[e^{\bar{z}\Delta_s^tX_u}\right]\right)\end{eqnarray*}
%\begin{eqnarray*}=log\left(\frac{\mathbb{E}\left[e^{2Re(z)\Delta_s^tX_u}\right]}{\mathbb{E}\left[e^{z\Delta_s^tX_u}\right]\mathbb{E}\left[e^{\bar{z}\Delta_s^tX_u}\right]}\right)\end{eqnarray*}
%\begin{eqnarray*}=log\left(\frac{\mathbb{E}\left[e^{z\Delta_s^tX_u}e^{\bar{z}\Delta_s^tX_u}\right]}{\mathbb{E}\left[e^{z\Delta_s^tX_u}\right]\mathbb{E}\left[e^{\bar{z}\Delta_s^tX_u}\right]}\right)\end{eqnarray*}
%\begin{eqnarray*}=log\left(\frac{\mathbb{E}\left[\left|e^{z\Delta_s^tX_u}\right|^2\right]}{\left|\mathbb{E}\left[e^{z\Delta_s^tX_u}\right]\right|^2}\right)\end{eqnarray*}
%By Cauchy-Schwartz the right part of this inequality is more than $log(1)\geq 0$. And the result follows.
%\end{proof}
%%%%%%%%%%%%%%%%%%%%%%%%%%%%%%%%%%%%%%%%%%%%%%%%%%%%%%%%%%%%%%%%%%%%%%%%%%%%%%%%%%%%%%%%%%%%%%%%%%%
Consequently,  
\begin{eqnarray} \label{H02}
%\frac{dRe(\kappa_s(z))}{d\rho_s}\leq
 2\frac{dRe(\kappa_s(z))}{d\rho_s}\leq 
\frac{d\kappa_s(2Re(z))}{d\rho_s}\ ,
\quad d\rho_s\ a.e. \end{eqnarray}
\end{remarque}

%Taking into account Assumption \ref{Hyp1},~(\ref{MajorationGamma}) 
%and~(\ref{H02}),
% it follows

%\begin{eqnarray} \label{RINT}
%Re(\kappa_T(z))-Re(\kappa_t(z))&=&\int_t^T
%\frac{dRe(\kappa_s(z))}{d\rho_s}d\rho_s\ , \nonumber \\
%&\leq &\frac{1}{2} \int_t^T \frac{d\kappa_s(2Re(z))}{d\rho_s}d\rho_s\ , 
%\nonumber\\ 
%&& \\
%&=& \frac{1}{2} \left( \kappa_T(2Re(z))-\kappa_t(2Re(z))\right ) \nonumber \\
%&\le & \frac{1}{2} \sup_{x\in I} \Vert \kappa_{dt}(x) \Vert_{\rm var}
%= \frac{c_1}{2}.  \nonumber
%  \end{eqnarray}

%(\ref{D0}) and Assumption \ref{Hyp1} allows
%to bound the first term of \eqref{D0bis}.

In order to prove 1. it is enough to verify that, for some $c_0 > 0$,
\begin{eqnarray}\frac{dRe(\eta(z,s))}{d\rho_s}\leq c_0+
\frac{1}{2}\frac{dRe(\kappa_s(z))}{d\rho_s} & d\rho_s \ a.e.
 &\label{EB1}\end{eqnarray}

In fact, \eqref{H02} and Assumption \ref{Hyp1} point 2. and \eqref{E300},
 imply that

\begin{eqnarray}\frac{dRe(\eta(z,s))}{d\rho_s}
\leq c_0 + \frac{1}{2} c_{11} =: c_1. 
%\frac{c_0}{2}\sup_{t\leq T, x\in I}
%\left|\frac{d\kappa_s(x)}{d\rho_s}\right|      
\label{EB2}\end{eqnarray}

To prove \eqref{EB1} it is enough to show that 
\begin{equation} \label{D0ter}
Re(\eta(z,T)-\eta(z,t))
\leq c_{0}(\rho_T-\rho_t)+\frac{1}{2}Re(\kappa_T(z)-\kappa_t(z)),
 \quad \forall t \in [0,T]. \end{equation}
%\eqref{H02} and Assumption \ref{Hyp1} 2. allow to bound the first term
%of \eqref{D0}. We obtain
%\begin{equation} \label{D0quater}
%Re(\kappa_T(z) - \kappa_t(z)) \le c_{0}(\rho_T - \rho_t)\ .
%\end{equation}
Again Assumption \ref{Hyp1} point 2. implies that
\begin{equation} \label{EA11}
\left |\int_t^T\gamma(z,s)d\kappa_s(1)\right| \le c_{12} 
\int_t^T \vert \gamma(z,s) \vert d\rho_s\ ,
 \end{equation}
where $c_{12} = \Vert \frac{d\kappa_s(1)}{d\rho_s} \Vert_\infty.$
%
%By Cauchy-Schwarz, we observe that
%\begin{eqnarray} \label{EA11}
%\left|\int_t^T\gamma(z,s)d\kappa_s(1)\right|^2&\leq&\int_t^T\left|\gamma(z,s)
%\right|^2d\rho_s\int_t^T\left(\frac{d\kappa_s(1)}{d\rho_s}\right)^2d\rho_s\
%, \nonumber \\
%&& \\
%&\leq &c_2\int_t^T\left|\gamma(z,s)\right|^2d\rho_s\ \nonumber
%  \end{eqnarray}
%
%because of Assumption \ref{Hyp1bis} for some constant $c_2>0$.\\
Using \eqref{MajorationGamma}, and Assumption \ref{Hyp1} it follows
% there is a constant $c_{11}$ such that
%
\begin{eqnarray} \label{EC1}
\left|\gamma(z,s)\right|^2&\leq&\frac{d\rho_s(z)}{d\rho_s}=\frac{d\kappa(2Re(z))}{d\rho_s}-\frac{2dRe(\kappa_s(z))}{d\rho_s}\ , \nonumber \\
&& \\
&\leq &c_{11}-\frac{2dRe(\kappa_s(z))}{d\rho_s}\ .\nonumber
  \end{eqnarray}
This implies that
$$ c_{12}^2 \left|\gamma(z,s)\right|^2 \le
 \left(c_{13}^2+\frac{1}{4}
\left (\frac{dRe(\kappa_s(z))}{d\rho_s}\right)^2 \right) \ ,
$$
where $c_{13} > 0$ is chosen such that 
 $c_{13}^2 \ge 4 c_{12}^4 +  c_{12}^2 c_{11}$.
%Previous expression and \eqref{EA11} imply
%
%\begin{eqnarray*}
%\left|\int_t^T\gamma(z,s)d\kappa_s(1)\right|^2 &\leq&
% c_{12}\left(\int_t^T\left(c_{11}-\frac{2dRe(\kappa_s(z))}{d\rho_s}
%\right)d\rho_s\right)\ , \\
%&\leq& \int_t^T d\rho_s \left(c_{13}+\frac{1}{4}
%\left (\frac{dRe(\kappa_s(z))}{d\rho_s}\right)^2 \right)
%\end{eqnarray*}
%where $c_{13}>0$ is chosen such that $c_{13}^2>4c_{12}^2+c_{11}c_{12}$.
% Consequently the second term of \eqref{D0} can be bounded as follows
Consequently
\begin{eqnarray*}\left|\int_t^T\gamma(z,s)d\kappa_s(1)\right| \le 
\int_t^T d\rho_s \left(c_{13}+\frac{1}{2}
\left|\frac{dRe(\kappa_s(z))}{d\rho_s}\right|\right)\ .
\end{eqnarray*}
Coming back to \eqref{D0}, we obtain
\begin{eqnarray*}
Re(\eta(z,T)-\eta(z,t)) &\le &
\int_t^T \left ( \frac{Re(d\kappa_s(z))}{d\rho_s}  
   + c_{13} + \halb 
\left \vert \frac{Re(d\kappa_s(z))}{d\rho_s} \right \vert \right) d\rho_s  \\
&\le &  
\int_t^T \left ( \halb \frac{Re(d\kappa_s(z))}{d\rho_s}  
 + \left(\frac{Re(d\kappa_s(z))}{d\rho_s} \right)^+ 
 + c_{13} \right)  d\rho_s
\end{eqnarray*}
\eqref{H02} and Assumption \ref{Hyp1} allow to establish
%Coming back to \eqref{D0}, using \eqref{D0quater} we obtain
\begin{eqnarray}Re(\eta(z,T)-\eta(z,t))\leq \int_t^T d\rho_s
\left(c_{0}+\frac{1}{2}\frac{dRe(\kappa_s(z))}{d\rho_s}\right)\ ,
\label{EA33}\end{eqnarray}
where $c_0=\frac{c_{11}}{2}+c_{13}$. This concludes the proof of point 1.\\
In order to prove point 2. we first observe that \eqref{EB1} implies
\begin{eqnarray}-\frac{dRe(\kappa_s(z))}{d\rho_s}\leq
 2\left(c_0-\frac{dRe(\eta(z,s))}{d\rho_s}\right)\label{EB3}\end{eqnarray}
$d\rho_s$ a.e. \eqref{EC1} implies
\begin{eqnarray}\left|\gamma(z,s)\right|^2\leq c_{21}-4\frac{dRe(\eta(z,s))}{d\rho_s}\ ,\label{EC2}\end{eqnarray}
where $c_{21}=c_{11}+4c_0$. Point 2. is now established with $c_2=c_{21}$ and $c_3=4$.\\
We continue with the proof of point 3. We decompose
\begin{eqnarray*}Re(\eta(z,t))=A^+(z,t)-A^-(z,t)\ ,\end{eqnarray*}
where
$$
A^+(z,t)=\int_0^t\left(\frac{dRe(\eta(z,s))}{d\rho_s}\right)_+d\rho_s\ , \quad\textrm{and}\quad 
A^-(z,t)=\int_0^t\left(\frac{dRe(\eta(z,s))}{d\rho_s}\right)_-d\rho_s\ .
$$
$A^+(z,.)$ and $A^-(z,.)$ are increasing non negative functions. Moreover 
point 1. implies
\begin{eqnarray*}A^+(z,t)\leq c_1\rho_t\ .\end{eqnarray*}
At this points for $z\in I+i\mathbb{R}$
\begin{eqnarray*}
-\int_0^TRe(\eta(z,dt))e^{\int_t^T2Re(\eta(z,ds))}&=&
\int_0^T\left(A^-(z,dt)- A^+(z,dt)\right)e^{
2\int_t^TRe(\eta(z,ds))}\,\\
&\leq&\int_0^TA^-(z,dt)e^{2\left(A^+(z,T)-A^+(z,t)\right)}e^{-2\left(A^-(z,T)-A^-(z,t)\right)}\,\\
&\leq&e^{2c_1\rho_T}\int_0^Te^{-2\left(A^-(z,T)-A^-(z,t)\right)}A^-(z,dt)\,\\
&=&\frac{e^{2c_1\rho_T}}{2}\left\{1-e^{-2A^-(z,T)}\right\}\,\\
&\leq&\frac{e^{2c_1\rho_T}}{2}\ ,
\end{eqnarray*}
which concludes the proof of point 3 of Lemma \ref{lemmeconstante}.

%%%%%%%%%%%%%%%%%%%%%%%%%%%%%%%%%%%%%%%%%%%%%%%%%%%%%%%%%%%%%%%%%%%%%%%%%%%%%%%%%%%%%%%%%%%%%%%%%%%
%Since by the previous remark, we have that 
%\begin{eqnarray}\kappa_{dt}(2)-2\kappa_{dt}(1)\label{RlemmeNadia}\end{eqnarray}
%is a positive measure strictly positive on each interval. We obtain that
%\begin{eqnarray}Re(\kappa_{ds}(z))\leq \frac{1}{2}\kappa_{ds}(2Re(z))\label{MajKRe}\end{eqnarray}

%Ancienne Version
%Therefore, a sufficent condition to have $sup_{z\in I,t\in [0,T]}|\gamma(z,s)|^2$ bounded is to have
%\begin{eqnarray*}\sup_{z \in I, s\in[0,T]}\left|\frac{d(\kappa_s(2Re(z)))}{d(\kappa_s(2)-2\kappa_s(1))}\right|<\infty\end{eqnarray*}
%Therefore, $z\mapsto \int_t^T2Re(\eta(z,ds))$ is bounded.
%%%%%%%%%%%%%%%%%%%%%%%%%%%%%%%%%%%%%%%%%%%%%%%%%%%%%%%%%%%%%%%%%%%%%%%%%%%%%%%%%%%%%%%%%%%%%%%%%%%
\bigskip

%COMMENTAIRES

%LE POINT 2. DU  LEMME POSE PROBLEME. LE POINT 1.
%MARCHE CAR J'AI ENLEVE LA VALEUR ABSOLUE .... 
%CELA SUFFIT POUR UN CERTAIN NOMBRE DE CHOSES

%LE POINT 2. POSE PROBLEME AUSSI A KALLSEN. DANS LA VERSION
%LEVY, KALLSEN et alia, NE PROUVENT PAS CELA.

%JE SUIS CAPABLE DE PROUVER L'EQUIVALENT EN NON-HOMOGENE DE
%KALLSEN DANS LEUR LEMMA 3.9, C. A D. 
%DANS NOTRE CONTEXTE CELA DONNE
%$$ \vert \gamma(t,z) \vert^2 \le 
% {\rm const} (-\frac{dRe(\eta(z,t))}
%{d\rho_t} +1)
%   $$
%MAIS CELA NE SUFFIT PAS A MONTRER LE POINT 2.
%IL NE DONNE PAR EXEMPLE AUCUNE 
%BORNE SUR 
%$$-\frac{dRe(\eta(z,t))} {d\rho_t}$$

%D'AILLEURS JE VOIS LE MEME PROBLEME DANS KALLSEN.
%L'UTILISATION QU'IL EN FAIT DANS LA PROPOSITION 3.10
%EST {\bf SUSPECTE}. 
%\end{proof}
%VOICI CE QUE L'ON PEUT PROUVER EN ADAPTANT LA PREUVE 
%DU LEMME 3.9 DE KALLSEN.

%\begin{lemme} \label{LADKAL}
%There are positive constants $c_1, C_2, c_3$ such
%that for any $t \in [0,T], z \in I + i\R$
%\begin{eqnarray*}
%\frac{dRe(\eta(t,z)}{d\rho(t)} \le c_1  \\
%\vert \gamma(t,z) \vert^2 \le c_3 - c_4 \frac{dRe(\eta(t,z)}{d\rho(t)}. 
%\end{eqnarray*}
%\end{lemme}
%COMMENTAIRE: CELA REGLERAIT LE POINT 1. DU LEMME PRECEDENT 
%MAIS PAS LE 2.
% FIN (A1) (A2) (A3) (A4) (A5) (A6) (A7)
%\end{prooff}
\qed \\

Let $\gamma=\sup_{z \in I}\left(2Re(z)\right)$, by Lemma~\ref{lemme38a}, 
it follows
\begin{eqnarray}
\sup_{z \in I, s\leq T}\mathbb{E}\left[S^{2Re(z)}_s\right]
%\leq \sup_{s\leq T}\mathbb{E}\left[S^{\gamma}_s\right]
<\infty\ .
\label{EspStBornee}\end{eqnarray}

%RAPPELER AU DEBUT LA NOTION DE  $\Pi$ $\sigma$-finie

%If $\Pi$ is a complex $\sigma$-finite measure, $c > 0$
%we denote by $\Pi^c$ the finite mesure 
%defined by $\Pi^c(B) = \Pi(B \bigcap \R + i [-c,c])$. 

\begin{thm}\label{propo310} Let $\Pi$ be a finite complex-valued 
Borel  measure on $\C$.\\
Suppose Assumptions \ref{HypD},
  \ref{Hyp1bis}, \ref{Hyp1}.
%%%%%%%%%%%%%%%%%%%%%%%%%%%%%%%%%%%%%%%%%%%%%%%%%%%%%%%%%%%%%%%%%%%%%%%%%%%%%%%%%%%%%%%%%%%%%%%%%%%
 %Assume that 
%\begin{eqnarray*}sup_{z \in I}\left|\frac{d(\kappa_s(2Re(z)))}{d(\kappa_s(2)-2\kappa_s(1))}\right|_{\infty}<\infty\end{eqnarray*}
%Then 
%%%%%%%%%%%%%%%%%%%%%%%%%%%%%%%%%%%%%%%%%%%%%%%%%%%%%%%%%%%%%%%%%%%%%%%%%%%%%%%%%%%%%%%%%%%%%%%%%%%
Any complex-valued contingent claim $H=f(S_T)$, where
 $f$ is of the form~(\ref{FORM}), 
and $H \in \shl^2$,
 admits a unique FS decomposition $H=H_0+\int_0^T\xi_tdS_t+L_T$ with the
 following
properties.
\begin{enumerate} 
%\item $L_t = H_t-H_0-\int_0^t\xi_u dS_u $ 
%is an $(\shf_t)$-martingale strongly orthogonal to $M$.
%Moreover $(H_t)$,$(\xi_t)$ and $(L_t)$ are real-valued.
\item 
%If $\Pi$  is a finite measure then
 $H \in \shl^2$ and
\begin{itemize}
\item $H_t=\int H(z)_t\Pi(dz)\,,$
\item $\xi_t=\int \xi(z)_t\Pi(dz)\,,$
\item $L_t=\int L(z)_t\Pi(dz)\,,$ 
%  =H_t-H_0-\int_0^t\xi_udS_u\,,$
\end{itemize}
where for $z \in {\rm supp} (\Pi)$,  $H(z),\xi(z)$ and $L(z)$ 
are the same as those introduced in 
Proposition~\ref{lemme38}  and we convene that they vanish if 
$z \notin   {\rm supp} (\Pi)$.
\item Previous decomposition is real-valued
if $f$ is real-valued.
%\item If $\Pi$ a general $\sigma$-finite measure then
%we define 
%\begin{eqnarray*}
%H^c_t &=& \int H(z)_t\Pi^c(dz) \\  
% \xi^c_t &=&\int \xi(z)_t\Pi^c(dz) \\
% L^c_t &=& \int L(z)_t\Pi^c(dz).
%\end{eqnarray*}
%\item The processes
% $(H_t)$,$(\xi_t)$ and $(L_t)$ are real-valued.
%\item If $\Pi$ a general $\sigma$-finite measure then
%We have 
%$ H_t = \lim_{c \rightarrow +\infty} H^c_t,$ in $L^2$ for any $t \in [0,T]$,
%$ L_t = \lim_{c \rightarrow +\infty} L^c_t,$ in $L^2$  for any $t \in [0,T]$,
%$$E\left (\int_0^T (\xi_s - \xi^c_s)^2 d \langle M  \rangle_s   \right) 
%\rightarrow_{c \rightarrow \infty} 0. $$
%\item If $\Pi$ is a finite measure then  
%we still have   $H_t=\int H(z)_t\Pi(dz)$.
\end{enumerate}
\end{thm}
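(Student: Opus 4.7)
The natural candidate is to integrate the elementary FS decomposition of Proposition~\ref{lemme38} against $\Pi$, namely to set
$$H_t := \int_{\mathbb{C}} H(z)_t\,\Pi(dz),\quad \xi_t := \int_{\mathbb{C}}\xi(z)_t\,\Pi(dz),\quad L_t := \int_{\mathbb{C}} L(z)_t\,\Pi(dz).$$
I would organize the proof around four tasks: (a) each integral above is well-defined and the resulting processes have the right integrability; (b) $\xi \in \Theta$; (c) a stochastic Fubini argument yields $H_t = H_0+\int_0^t \xi_s\,dS_s + L_t$ with $L_T = H-H_0-\int_0^T \xi\,dS$; and (d) $L$ is a square-integrable martingale with $\langle L,M\rangle \equiv 0$. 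Uniqueness then follows from Theorem~\ref{ThmExistenceFS} since $K$ is bounded (Assumption~\ref{Hyp1bis}) and $H\in\mathcal{L}^2$.

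For (a)-(b) the key ingredient is Lemma~\ref{lemmeconstante}, which provides \emph{uniform} bounds on $\text{Re}(\eta(z,s))$ and $|\gamma(z,s)|^2$ for $z\in\mathrm{supp}(\Pi)$. Combined with \eqref{EspStBornee} giving $\sup_{z\in I,\,s\le T}\mathbb{E}[S_s^{2\text{Re}(z)}]<\infty$, this yields $L^2$-bounds, uniform in $z\in\mathrm{supp}(\Pi)$, on $H(z)_t$, on $\xi(z)_t\,S_{t-}$, and on $L(z)_T$. Since $\Pi$ is a finite complex measure, Cauchy-Schwarz applied to $|\Pi|$ delivers
$$\mathbb{E}\!\left[\int_0^T |\xi_s|^2 S_{s-}^2\,d\rho_s\right] \le |\Pi|(\mathbb{C})\int_{\mathbb{C}}\mathbb{E}\!\left[\int_0^T |\xi(z)_s|^2 S_{s-}^2\,d\rho_s\right]|\Pi|(dz) < \infty,$$
so $\xi\in\Theta$ by Lemma~\ref{lemmeTheta}; the same estimate together with point~3 of Lemma~\ref{lemmeconstante} controls $\mathbb{E}[\langle L(z),\overline{L(z)}\rangle_T]$ uniformly and yields $L_T\in\mathcal{L}^2$.

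For (c), the main obstacle is justifying the interchange
$$\int_0^t\Big(\!\!\int_{\mathbb{C}}\xi(z)_s\,\Pi(dz)\!\Big) dS_s \;=\; \int_{\mathbb{C}}\Big(\!\!\int_0^t\xi(z)_s\,dS_s\!\Big)\,\Pi(dz),$$
and the analogous exchange against $dM(z)$ inside the defining formula \eqref{43} for $L(z)$. I would invoke the stochastic Fubini theorem (Protter, Theorem~IV.65), whose hypotheses reduce precisely to the $L^2$-bounds just established, distinguishing the martingale part $\int \xi(z)\,dM$ (controlled by $\mathbb{E}[\int_0^T|\xi(z)_s|^2 S_{s-}^2\,d\rho_s]$) and the drift part $\int\xi(z)S_{s-}\,d\kappa_s(1)$ (controlled by Assumption~\ref{Hyp1}.2 and $\sup_z\mathbb{E}[S_s^{2\text{Re}(z)}]$). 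Once the interchange is justified, summing the identities of Proposition~\ref{lemme38} against $\Pi$ gives the claimed FS decomposition, and the terminal condition $H_T = \int S_T^z\,\Pi(dz) = H$ is automatic.

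For (d), the strong orthogonality is immediate from Fubini applied to the angle bracket: using~\eqref{51Bis},
$$\langle L,M\rangle_t = \int_{\mathbb{C}}\langle L(z),M\rangle_t\,\Pi(dz)=0.$$
Finally, for the real-valuedness claim: if $f$ is real-valued, writing $f(s)=\overline{f(s)}$ and using Remark~\ref{remarkR2} shows that the pushforward $\Pi'$ of $\overline{\Pi}$ under $z\mapsto\bar z$ produces the same function $f$, so we may symmetrize to $\tilde\Pi=(\Pi+\Pi')/2$ which is invariant under $z\mapsto\bar z$. Then Lemma~\ref{lemmepassagecomplexe} (together with~\eqref{44bis}) gives $\overline{H(z)_t}=H(\bar z)_t$, $\overline{\xi(z)_t}=\xi(\bar z)_t$, $\overline{L(z)_t}=L(\bar z)_t$, so integrating against the symmetric $\tilde\Pi$ produces real-valued $H_t$, $\xi_t$, $L_t$. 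By uniqueness of the FS decomposition (Theorem~\ref{ThmExistenceFS}), this coincides with the one built above.
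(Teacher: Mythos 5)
Your proposal follows essentially the same route as the paper: integrate the elementary decomposition of Proposition~\ref{lemme38} against $\Pi$, use the uniform bounds of Lemma~\ref{lemmeconstante} together with \eqref{EspStBornee} and Jensen/Cauchy--Schwarz with respect to the finite measure $|\Pi|$ to get the $\shl^2$ estimates (the paper's (A1)--(A3)), apply the stochastic Fubini theorem of Protter to exchange $\int\Pi(dz)$ with $\int dS$, and deduce orthogonality and uniqueness from Theorem~\ref{ThmExistenceFS}. The only place you genuinely diverge is point 2: you symmetrize $\Pi$ under $z\mapsto\bar z$ (which is legitimate, since Remark~\ref{remarkR2} and Lemma~\ref{lemmepassagecomplexe} show the hypotheses and the objects $H(z),\xi(z),L(z)$ behave correctly under conjugation, and the conjugated measure represents the same $f$), whereas the paper conjugates the whole decomposition $H=H_0+\int\xi\,dS+L_T$ and invokes uniqueness of the real-valued FS decomposition to kill the imaginary parts. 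Both arguments are valid; the paper's is marginally shorter because it avoids checking that the symmetrized measure still satisfies Assumption~\ref{Hyp1}, while yours has the small advantage of producing manifestly real integrands rather than arguing a posteriori.
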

\begin{remarque} \label{Rsimbis}
Taking $\Pi = \delta_{z_0}(dz), \ z_0 \in \C$,
Assumption \ref{Hyp1} is equivalent to the assumptions of Proposition 
\ref{lemme38}.
\end{remarque}

\begin{proof}
%[\textbf{Proof of Theorem~\ref{propo310}}]
\begin{description}
%\item The first point follows by Therorem \ref{ThmExistenceFS}. 
%Since $H$ is a real random variable and because of uniqueness,
% the processes intervening
%in the FS decomposition are real.
\item{a)} %We suppose now that $\Pi$ is finite.
$f(S_T) \in \shl^2$ since by Jensen,
$$E \left \vert \int_\C \Pi(dz) S_T^z \right \vert ^2
  \le \int_\C \vert \Pi\vert(dz)
 E \vert S_T^{2Rez}\vert  \vert \Pi\vert(\C)
 \le \sup_{x \in I_0} E( S_T^{2x}) \vert \Pi\vert(\C)^2,$$
where $|\Pi|$ denotes the total variation of the finite measure $\Pi$.
Previous quantity is bounded because of Lemma \ref{lemme38}. \\
We go on with the FS decomposition.
%Let $z \in D$.
 We would like to prove first that $H$ and $L$ are well defined
 square-integrable processes and 
$E(\int_0^T \vert \xi_s\vert^2 d\langle M\rangle_s) < \infty$.\\
% We know that $\Pi$ is a finite measure.
 We denote  $K=supp(\Pi)$. 
By Jensen's inequality, we have
\begin{eqnarray*}\mathbb{E} \left \vert \int_\mathbb{C}
    L(z)_t\Pi(dz)\right \vert^2]\leq 
\mathbb{E} \left ( \int_\mathbb{C}|\Pi|(dz) |L_t(z)|^2_t \right )
\vert \Pi (\C) \vert 
=\int_\mathbb{C}|\Pi|(dz)\mathbb{E}[|L_t(z)|^2_t] \vert \Pi \vert \,
\end{eqnarray*}
%where $|\Pi|$ denotes the total variation of the finite measure
% $\Pi$. 
Similar calculations allow to show that
\begin{eqnarray*}\mathbb{E}[\xi_t^2]\leq 
\vert \Pi \vert(\C) \int_\mathbb{C} |\Pi|dz)\mathbb{E}[|\xi_t(z)|^2]\ 
\quad \textrm{and}\quad 
\mathbb{E}[L_t^2]\leq \vert \Pi(\C)\vert
\int_\mathbb{C} |\Pi|(dz)\mathbb{E}[|L_t(z)|^2]\ .
\end{eqnarray*}
We will show now that
\begin{itemize}
\item (A1): $\sup_{t\leq T,z \in I + i \R }\mathbb{E}[|H_t(z)|^2]<\infty\,;$
\item (A2): $ \int_\mathbb{C}|\Pi|(dz)\mathbb{E}[|L_t(z)|^2_t] < \infty;$
%\mathbb{E}[|L_t(z)|^2]<\infty\,
%$\sup_{t\leq T,z \in I + i \R}\mathbb{E}[|L_t(z)|^2]<\infty\,;$
\item (A3): 
%$\sup_{t\leq T,z \in I}
$$ E\left( \int_0^T d\rho_t S_t^2
\int_\C  \vert \xi_t(z)\vert^2 \vert \Pi \vert(dz) \right)<\infty\,. $$
\end{itemize}
(A1): Since $H(z)_t=e^{\int_t^T\eta(z,ds)}S_t^z$, we have
\begin{eqnarray*}|H(z)_t|^2=H(z)_t\overline{H(z)_t}=e^{\int_t^T2Re(\eta(z,ds))}S_t^{2Re(z)}\ ,\end{eqnarray*}
so
\begin{eqnarray*}\mathbb{E}[|H(z)_t|^2]=e^{\int_t^T2Re(\eta(z,ds))}\mathbb{E}[S_t^{2Re(z)}]\leq e^{\int_t^T2Re(\eta(z,ds))}\sup_{t\leq T}\mathbb{E}[S_t^{\gamma}]\ ,\end{eqnarray*}
with $\gamma=\sup_{z \in I}2Re(z)$. Inequality~(\ref{EspStBornee}) and
Lemma~\ref{lemmeconstante} imply (A1). Therefore $(H_t)$ is a
well-defined square-integrable process.
%i.e.: $\mathbb{E}\left[\int_0^T|H(z)_t|^2dt\right]<\infty$.\\
%

(A2): 
$\mathbb{E}[|L_t(z)|^2]\leq \mathbb{E}[|L_T(z)|^2]=\mathbb{E}[\left\langle L(z),\overline{L(z)}\right\rangle_T]\ ,$
where the first inequality is due to the fact that $|L_t(z)|^2$ is a 
submartingale. 
\begin{eqnarray*}\mathbb{E}\left[\left\langle L(z),
\overline{L(z)}\right\rangle_T\right]=
\mathbb{E}\left[\int_0^Te^{\int_s^T2Re(\eta(z,du)}S_{s-}^{2Re(z)}
\left[d\rho_s(z)-|\gamma(z,s)|^2d\rho_s\right]\right] \ .\end{eqnarray*}
By Fubini, Lemma \ref{lemme38a} and \eqref{Eellz}, we have
\begin{eqnarray*}
\mathbb{E}\left[\left\langle L(z),\overline{L(z)}\right\rangle_T\right]
&=& \int_0^Te^{\int_s^T2Re(\eta(z,du)} \mathbb{E}[S_{s-}^{2Re(z)}]
\left[\frac{d\rho_s(z)}{d\rho_s} -|\gamma(z,s)|^2 \right]
d\rho_s 
\\ &\le &  \int_0^Te^{\int_s^T 2Re(\eta(z,du)}
\left[\frac{d\rho_s(z)}{d\rho_s} -|\gamma(z,s)|^2 \right]
\mathbb{E}[S_{s-}^{2Re(z)}] d\rho_s\\
&\le & c_4 \int_0^Te^{\int_s^T2Re(\eta(z,du)}
\left[\frac{d\rho_s(z)}{d\rho_s} \right]
d\rho_s \ ,
\end{eqnarray*}
where $c_4 = \sup_{s \le T} \mathbb{E}[S_{s}^{2Re(z)}]\,$.

%DEBUT PAGES (B1) ... (B3)
According to Lemma \ref{lemmeconstante} point 2, previous expression is 
bounded by $c_4 I(z)$, where 
\begin{eqnarray}\label{EIZ}
I(z)&:=&\int_0^Td\rho_t\exp\left(\int_t^T2Re(\eta(z,ds))\left[c_2-c_3
\frac{dRe(\eta(z,t))}{d\rho_t}\right]\right)\,\nonumber\\
&&\\
&=& c_2I_1(z)+c_3I_2(z) \ ,\nonumber
\end{eqnarray}
where
$$
I_1(z)=\int_0^Td\rho_t\exp\left(\int_t^T2Re(\eta(z,ds))\right)\,\quad \textrm{and}\quad I_2(z)=\int_0^T\exp\left(\int_t^T2Re(\eta(z,ds))\right)dRe(\eta(z,ds))\ .
$$
Using Lemma \ref{lemmeconstante}, we obtain
\begin{equation} \label{EIZbis}
\sup_{z\in I+i\mathbb{R}}\left|I_1(z)\right|\leq
\rho_T\exp\left(2c_1\rho_T\right)\,\quad \textrm{and}\quad \sup_{z\in I+i\mathbb{R}}\left|I_2(z)\right|<\infty \ ,
\end{equation}
and so
%\begin{eqnarray}
%\sup_{z\in I+i\mathbb{R}}\left|I(z)\right|<\infty\label{EIZ1}\end{eqnarray}
\begin{eqnarray}
\sup_{z\in I+i\mathbb{R}} \mathbb{E}\left[\left\langle L(z),
\overline{L(z)}\right\rangle_T\right]
<\infty\ .
\label{EIZ1}\end{eqnarray}
This concludes (A2).\\
We verify now the validity of (A3). This requires to control
\begin{eqnarray*}\mathbb{E}\left[\int_0^T\rho_{dt}S_t^2\left(\int_{\mathbb{C}}|\Pi|(dz)|\xi(z)_t|^2\right)\right]\leq \mathbb{E}\left[\int_0^T\rho_{dt}S_t^2\left(\int_{\mathbb{C}}|\Pi|(dz)\left|\gamma(z,t)\exp\left(\int_t^TRe(\eta(z,ds))\right)S_t^{z-1}\right|^2\right)\right]\ .\end{eqnarray*}
Using Jensen inequality, this is smaller or equal than  
\begin{eqnarray*}
\vert \Pi(\C) \vert \int_{\mathbb{C}}|\Pi|(dz)\int_0^T\rho_{dt}\mathbb{E}\left[S_t^{2Re(z)}\right]|\gamma(z,t)|^2\exp\left(2\int_t^TRe(\eta(z,ds))\right)\ .
\end{eqnarray*}
Lemma \ref{lemmeconstante} gives the upper bound
$$
\vert \Pi\vert(\C)  \sup_{t\leq T,\gamma \in I}\mathbb{E}
\left[S_t^{2Re(z)}\right]\int_{\mathbb{C}}|\Pi|(dz)I(z)\ ,
$$
where $I(z)$ was defined in \eqref{EIZbis}.
 Since $\Pi$ is finite and because of \eqref{EIZ1}, (A3) is now established.
%FIN PAGES (B1) ... (B3)

%%% OLD
%By~(\ref{EspStBornee}) and Lemma \ref{lemmeconstante} and Remark \ref{remarque314Bis},
%
%\begin{eqnarray*}
%\mathbb{E}\left[\left\langle L(z),
%\overline{L(z)}\right\rangle_T\right] &\leq&
% {\rm const} \int_0^T\left(\frac{d\rho_s(z)}{d\rho_s} 
%-|\gamma(z,s)|^2\right)d\rho_s\ .
%\end{eqnarray*}

%Taking in account~(\ref{MajorationGamma}),~(\ref{H02}) and Assumption \ref{Hyp1} 3., A2 follows. Therefore, $L_t$ is a well-defined square-integrable martingale.\\
%Finally, according to  Lemma \ref{lemmeconstante} 

%Again~(\ref{EspStBornee}) and Lemma~\ref{lemmeconstante} imply A3 and so $\mathbb{E}\left[\int_0^T|\xi(z)_s|^2S_{s-}^2ds\right]<\infty$.\\
%Therefore, $\xi$ is well defined and $Re(\xi) \in \Theta$, $Im(\xi) \in \Theta$ by Lemma~(\ref{lemmeTheta}).\\
%

We show now that $(L_t)$ is an $(\mathcal{F}_t)$-martingale. Let
 $0\leq s\leq t \leq T$, $\mathcal{B} \in \mathcal{F}_s$. By Proposition
 \ref{lemme38}, since $(L(z)_t)$ is a martingale, we obtain
\begin{eqnarray*}\mathbb{E}[(L_t-L_s)1_\mathcal{B}]=\mathbb{E}[\int_\mathbb{C}(L(z)_t-L(z)_s)\Pi(dz)1_\mathcal{B}]\ .\end{eqnarray*}
By Fubini's theorem we conclude that
\begin{eqnarray*}\mathbb{E}[(L_t-L_s)1_\mathcal{B}]=\int_\mathbb{C}\mathbb{E}[(L(z)_t-L(z)_s)1_\mathcal{B}]\Pi(dz)\ ,\end{eqnarray*}
and $\mathbb{E}[(L(z)_t-L(z)_s)1_\mathcal{B}]=0$. So
\begin{eqnarray*}\mathbb{E}[(L_t-L_s)1_\mathcal{B}]=0\ .\end{eqnarray*}
Hence, $L$ is a square-integrable martingale.\\
 Similarly, it can be shown that
 $\mathbb{E}[(M_tL_t-M_sL_s)1_\mathcal{B}]=0$
 and so ML is a square-integrable martingale as well. 
Hence $L$ is orthogonal to M. 
%To be precise, we interprete L as, the up to indistinguishability,
%unique modification whose paths are almost surely càdlàg,
%cf. ~\cite{Pr92}, Corollary~I.1.
 By Fubini's theorem for stochastic integrals, cf.~\cite{Pr92},
 Theorem IV.46, we have
\begin{eqnarray*}\int\int_0^t\xi(z)_sdS_s\Pi(dz)=\int_0^t\int\xi(z)_s\Pi(dz)dS_s=\int_0^t\xi_sdS_s\ .\end{eqnarray*}
Consequently,  $(H_0,\xi,L)$ provide a (possibly complexe)
 FS decomposition of $H$.
\item{b)}
It remains to prove that the decomposition is real-valued.
 Let $(H_0,\xi,L)$ and $(\overline{H_0},\overline{\xi},\overline{L})$
 be two FS decomposition of $H$. Consequently,
since $H$ and $(S_t)$ are real-valued, we have
\begin{eqnarray*}0=H-\overline{H}=(H_0-\overline{H}_0)+\int_0^T(\xi_s-\overline{\xi}_s)dS_s+(L_T-\overline{L}_T)\ ,\end{eqnarray*}
which implies that $0=Im(H_0)+\int_0^TIm(\xi_s)dS_s+Im(L_T)$.
 By Theorem \ref{ThmExistenceFS}, the uniqueness of the real-valued Föllmer-Schweizer decomposition yields that the processes $(H_t)$,$(\xi_t)$ and $(L_t)$ are real-valued.
%\item UTILISER LE THEOREME DE FERMETURE DE MONAT ET STRICKER
% TH. 3.8 POUR LA CONVERGENCE
\end{description}
\end{proof}

\subsection{Representation of some typical contingent claims}
%%%%%%%%%%%%%%%%%%%%%%%%%%%%%%%%%%%%%%%%%%%%%%%%%%%%%%%%%%%%%%%
\label{sec:payoff:laplace}
We used some integral representations of  payoffs of the
form~(\ref{FORM}). We refer to~\cite{Cramer},~\cite{Raible98} and more
recently~\cite{Eberlein}, for some characterizations of classes of
functions which admit this kind of representation. In order to apply
the results of this paper, we need explicit formulae
 for the complex measure $\Pi$ in some example of contingent claims. 
%We give so a number of payoffs.

\subsubsection{Call}
%%%%%%%%%%%%%%%%%%%%
The first example is  the European Call option $H=(S_T-K)_+$.
 We have two representations of the form~(\ref{FORM}) which result
 from the following lemma. 

\begin{lemme}\label{Call}
Let $K>0$, the European Call option $H=(S_T-K)_+$ has two representations of the form~(\ref{FORM}):
\begin{enumerate}
\item For arbitrary $R>1$, $s>0$, we have
\begin{eqnarray}(s-K)_+=\frac{1}{2\pi i}\int_{R-i\infty}^{R+i\infty}s^z\frac{K^{1-z}}{z(z-1)}dz\ .\label{Call1}\end{eqnarray}
\item For arbitrary $0<R<1$, $s>0$, we have
\begin{eqnarray}(s-K)_+-s=\frac{1}{2\pi i}\int_{R-i\infty}^{R+i\infty}s^z\frac{K^{1-z}}{z(z-1)}dz\ .\label{Call2}\end{eqnarray}
\end{enumerate}
\end{lemme}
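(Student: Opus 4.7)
The plan is to reduce both identities to a single Bromwich/Mellin inversion and then apply the residue theorem. Setting $x=\log(s/K)$ so that $s^z K^{1-z}=K\,e^{zx}$, both formulas become, after dividing by $K$,
\begin{equation*}
\frac{1}{2\pi i}\int_{R-i\infty}^{R+i\infty}\frac{e^{zx}}{z(z-1)}\,dz
=\begin{cases}(e^x-1)_+ & \text{if } R>1,\\[2pt](e^x-1)_+ - e^x & \text{if }0<R<1.\end{cases}
\end{equation*}
The integrand $f(z)=e^{zx}/\bigl(z(z-1)\bigr)$ is meromorphic on $\mathbb{C}$ with two simple poles, at $z=0$ (residue $-1$) and at $z=1$ (residue $e^x$).

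I would then compute the integral by closing the vertical contour $\mathrm{Re}(z)=R$ with a large semicircle of radius $\rho\to\infty$, choosing the side of closure according to the sign of $x$. For $x>0$ I close to the left: on that semicircle $\mathrm{Re}(z)\le R$, hence $|e^{zx}|\le e^{xR}$, and since $|z(z-1)|\ge \rho(\rho-|R|-1)$ for $\rho$ large, the arc contribution is bounded by $\pi e^{xR}/\rho\to 0$. For $x<0$ I close to the right, where $|e^{zx}|\le e^{xR}$ by the same reasoning with $\cos\theta\ge 0$, and the arc again contributes $0$. The case $x=0$ (i.e.\ $s=K$) follows from continuity in $x$ of both sides.

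It then remains to read off the residues picked up in each case. For $R>1$, closing left ($x>0$) encloses both poles counterclockwise, so the integral equals $-1+e^x=e^x-1$; closing right ($x<0$) encloses no poles, so the integral vanishes. This yields $(e^x-1)_+$, which is~\eqref{Call1} once $x=\log(s/K)$ is substituted back. For $0<R<1$, closing left ($x>0$) encloses only $z=0$, giving $-1=(e^x-1)-e^x=(e^x-1)_+-e^x$; closing right ($x<0$) encloses only $z=1$ clockwise, giving $-e^x=0-e^x=(e^x-1)_+-e^x$. This is~\eqref{Call2}.

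The only delicate step is verifying the vanishing of the semicircular arcs; this is routine because $1/(z(z-1))=O(|z|^{-2})$ uniformly and $|e^{zx}|$ stays bounded on the chosen arc (left arc when $x>0$, right arc when $x<0$). There is no genuine obstacle beyond keeping careful track of signs from contour orientation and of which poles lie inside the closed contour in each of the four sub-cases.
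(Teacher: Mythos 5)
Your argument is correct. The paper itself states Lemma \ref{Call} without proof, deferring to the references (Cram\'er, Raible, Eberlein, and Hubalek et al.) for such Mellin--Bromwich representations of payoffs, so there is no in-paper proof to compare against; your reduction to $\frac{1}{2\pi i}\int_{R-i\infty}^{R+i\infty}e^{zx}\,dz/\bigl(z(z-1)\bigr)$ via $x=\log(s/K)$, the $O(|z|^{-2})$ decay that kills the semicircular arcs, and the four-fold case analysis of which poles ($z=0$ with residue $-1$, $z=1$ with residue $e^{x}$) are enclosed and with which orientation, is exactly the standard derivation and all signs check out. The only cosmetic addition worth making is the observation that $|K^{1-z}|/|z(z-1)|$ is integrable on the line $\mathrm{Re}(z)=R$, so that the right-hand sides really define finite complex measures $\Pi$ as required by~(\ref{FORM}).
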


\subsubsection{Put}
%%%%%%%%%%%%%%%%%%

\begin{lemme}\label{Put}
Let $K>0$, the European Put option $H=(K-S_T)_+$ gives for an arbitrary $R<0$, $s>0$
\begin{eqnarray}(K-s)_+=\frac{1}{2\pi i}\int_{R-i\infty}^{R+i\infty}s^z\frac{K^{1-z}}{z(z-1)}dz\ .\label{Put1}\end{eqnarray}
\end{lemme}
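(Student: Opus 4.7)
The plan is to read the identity, after the substitution $s=e^x$, as the Fourier/bilateral Laplace inversion formula applied to the function
\begin{equation*}
f(x):=(K-e^x)_+,\qquad x\in\mathbb{R},
\end{equation*}
which is bounded by $K$ and supported on $(-\infty,\log K]$. For any $R<0$ the map $x\mapsto e^{-Rx}f(x)$ is integrable on $\mathbb{R}$ (it vanishes on $(\log K,\infty)$ and decays exponentially at $-\infty$ since $-R>0$), so the bilateral Laplace transform $\hat f(z):=\int_{\mathbb{R}}e^{-zx}f(x)\,dx$ is well defined and analytic on the strip $\{\mathrm{Re}(z)<0\}$.

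Next I would compute $\hat f$ explicitly. Splitting $f$ and integrating,
\begin{equation*}
\hat f(z)=\int_{-\infty}^{\log K}Ke^{-zx}\,dx-\int_{-\infty}^{\log K}e^{(1-z)x}\,dx=-\frac{K^{1-z}}{z}-\frac{K^{1-z}}{1-z}=\frac{K^{1-z}}{z(z-1)},
\end{equation*}
where the first integral converges because $\mathrm{Re}(z)<0$ and the second because $\mathrm{Re}(1-z)>0<1-R$.

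The final step is inversion. Since $|\hat f(R+iu)|=\mathcal{O}(u^{-2})$ as $|u|\to\infty$, we have $u\mapsto\hat f(R+iu)\in L^{1}(\mathbb{R})$, while $x\mapsto e^{-Rx}f(x)\in L^{1}(\mathbb{R})\cap L^{\infty}(\mathbb{R})$; the standard Fourier inversion theorem then yields
\begin{equation*}
f(x)=\frac{1}{2\pi}\int_{\mathbb{R}}e^{(R+iu)x}\hat f(R+iu)\,du=\frac{1}{2\pi i}\int_{R-i\infty}^{R+i\infty}e^{zx}\,\frac{K^{1-z}}{z(z-1)}\,dz,
\end{equation*}
for almost every $x$, and, by continuity of both sides, for every $x\in\mathbb{R}$. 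Substituting back $s=e^x$ gives \eqref{Put1}.

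There is no serious obstacle: the only minor subtlety is the pointwise validity of Fourier inversion at the kink $x=\log K$, which is handled by continuity of $f$ and of the right-hand side (the latter being continuous in $x$ thanks to the absolutely convergent integral on the vertical line). The independence of the identity from $R<0$ is an immediate consequence of Cauchy's theorem, since $K^{1-z}/(z(z-1))$ is holomorphic on the open left half-plane $\{\mathrm{Re}(z)<0\}$ and the tails on $|\mathrm{Im}(z)|=N\to\infty$ vanish.
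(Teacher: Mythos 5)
Your proof is correct. Note that the paper itself gives no proof of this lemma (nor of the companion Call representations): these integral formulae are simply quoted, with the reader referred to the cited works of Cramer, Raible and Eberlein et al. for their derivation. So there is no in-paper argument to compare against; what you have written is precisely the standard derivation one finds in those references. Your two key steps check out: the bilateral Laplace transform of $x\mapsto(K-e^x)_+$ is well defined and analytic on $\{\mathrm{Re}(z)<0\}$ because the payoff is bounded, vanishes for $x>\log K$, and $e^{-Rx}$ with $R<0$ is integrable against it at $-\infty$; the elementary computation indeed yields $K^{1-z}/(z(z-1))$ (the two antiderivative evaluations at $-\infty$ vanish since $-\mathrm{Re}(z)>0$ and $\mathrm{Re}(1-z)>1$); and the $O(|u|^{-2})$ decay on the vertical line legitimises pointwise Fourier inversion, with the value at the kink $x=\log K$ recovered by continuity of both sides. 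Your closing remark on the $R$-independence via Cauchy's theorem is a nice bonus, consistent with the paper's parallel statement for the Call, where two distinct half-planes of admissible $R$ give two genuinely different representations separated by the poles at $z=0$ and $z=1$.
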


%\subsubsection{Digital}
%%%%%%%%%%%%%%%%%%%%%%

%PEUT-ETRE SUPPRIMER OU TROUVER UN ARGUMENT QUI MARCHE.
%ON NE SAIT PAS TRAITER LE CAS $\Pi$ $\sigma$-FINIE GENERALE

%The digital option with payoff function $f(s)=1_{[K,\infty)}(s)$ coincides almost surely with the payoff function
%
%\begin{eqnarray*}f(s)=\frac{1}{2}1_{K}(s)+1_{]K,\infty[}(s)\ ,\end{eqnarray*}
%%
%if the law of $(S_N)$ resp. $(S_T)$ has no atoms. Using Theorem A.3 of~\cite{Ka06}, the latter can be expressed as
%%
%\begin{eqnarray}f(s)=lim_{c\rightarrow\infty}\frac{1}{2\pi i}\int_{R-ic}^{R+ic}s^z\frac{K^{-z}}{z}dz\label{Digital1}\ ,\end{eqnarray}
%for an arbitrary $R>0$. This implies that the complex measure $\Pi$ is given by
%
%\begin{eqnarray}\Pi(dz)=\frac{1}{2\pi i}\frac{K^{-z}}{z}dz\ .\end{eqnarray}

%%%%%%%%%%%%%%%%%%%%%%%%%%%%%%%%%%%%%%%%%%%%%%%%%%
\section{The solution to the minimization problem}
%%%%%%%%%%%%%%%%%%%%%%%%%%%%%%%%%%%%%%%%%%%%%%%%%%

\setcounter{equation}{0}

\subsection{Mean-Variance Hedging}
%%%%%%%%%%%%%%%%%%%%%%%%%%%%%%%%%%
FS decomposition will help to provide the solution to the global
minimization  problem. Next theorem deals with the case where the 
underlying process is a PII. 
\begin{thm}\label{mainthmPII}
Let $X=(X_t)_{t \in [0,T]}$ be a process with independent increments
with log-characteristic function $\Psi_t$. Let $H=f(X_T)$ where $f$ is
of the form~(\ref{ES1}). We suppose that the PII, $X$, satisfies
 Assumptions \ref{HPAIND}, \ref{HypoFSPII:1}, \ref{HypoFSPII:2}
and \ref{ASupeta}.
Then, the variance-optimal capital $V_0$ and the variance-optimal hedging strategy $\varphi$, solution of the minimization problem $(\ref{problem2})$, are given by
\begin{eqnarray}V_0=H_0\ ,\end{eqnarray}
and the implicit expression
\begin{eqnarray}
\varphi_t=\xi_t+\frac{\lambda_t }{S_{t-}}(H_{t-}-V_0-\int_0^t\varphi_sdS_s)\ ,
\end{eqnarray}
where the processes $(H_t)$,$(\xi_t)$ and $(\lambda_t)$ are defined by
\begin{eqnarray}
H_t=\int_{\mathbb{R}}H(u)_t\mu(du)\ , \quad \xi_t=\int_{\mathbb{R}}
 i\frac{d(\Psi^{'}_t(u )-\Psi^{'}_t(0))}{d\Psi^{''}_t(0)}\,H(u)_t
 \mu(du)
 \quad\textrm{and}\quad 
\lambda_t=i\frac{d\Psi_t'(0)}{d\Psi''_t(0)}\ ,
\end{eqnarray}
and
\begin{eqnarray}
H(u )_t=e^{\eta(u ,T)-\eta(u ,t)+\Psi_T(u )-\Psi_t(u )}\,e^{iu X_{t-}} \quad\textrm{with}\quad 
\eta(u ,t)=i\int_0^t\frac{d\Psi^{'}_t(0)}{d\Psi^{''}_t(0)}d\left(\Psi^{'}_s(u )-\Psi^{'}_s(0)\right)\ .\end{eqnarray}
The optimal initial capital is unique. The optimal hedging strategy $\varphi_t(\omega)$ is unique up to some $(P(d\omega)\otimes dt)$-null set.
\end{thm}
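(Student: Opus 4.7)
The plan is to assemble Theorem \ref{mainthmPII} from three previously established pieces: the structure condition for PII from Proposition \ref{prop:SCPII}, the explicit Föllmer–Schweizer decomposition from Theorem \ref{thmCO}, and the general mean-variance hedging result from Theorem \ref{ThmSolutionPb1} together with Corollary \ref{ThmSolutionPb2}. No genuinely new analytic argument is needed; the work is to verify that the hypotheses of each preceding result are in force under our standing assumptions.

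First I would check that $X$ satisfies (SC) and that the MVT process $K$ is deterministic. Proposition \ref{prop:SCPII} gives exactly this under Assumption \ref{HypoFSPII:1}: the canonical decomposition $X=M+A$ has $\langle M\rangle_t=-\Psi''_t(0)$ and $A_t=-i\Psi'_t(0)$, with $\alpha_t=i\,d\Psi'_t(0)/d\Psi''_t(0)$, so that
\begin{equation*}
K_t=\int_0^t\alpha_s^{2}\,d\langle M\rangle_s=-\int_0^{t}\Bigl|\tfrac{d\Psi'_s(0)}{d\Psi''_s(0)}\Bigr|^{2}d\Psi''_s(0),
\end{equation*}
which is deterministic since the characteristics of the PII are deterministic (Theorem \ref{ThmCharaDeter}). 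Boundedness follows because $t\mapsto\Psi'_t(0)$ and $t\mapsto\Psi''_t(0)$ have bounded variation and the density is square integrable by Assumption \ref{HypoFSPII:1}(2); in particular Assumption \ref{Hypo1} is satisfied.

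Next I would invoke Theorem \ref{thmCO}. Assumptions \ref{HPAIND}, \ref{HypoFSPII:1}, \ref{HypoFSPII:2} and \ref{ASupeta} are precisely those required there, so $H=f(X_T)$ admits the explicit FS decomposition
\begin{equation*}
H=H_0+\int_0^T\xi_s\,dX_s+L_T,
\end{equation*}
with $H_t$, $\xi_t$ and the integrands $H(u)_t$ and $\eta(u,t)$ exactly as in the statement of the theorem (the identities are just the ones from \eqref{EA1}–\eqref{EA2} plus the observation that, because $\langle M\rangle$ is continuous under the PII assumption, Remark \ref{remarqueKKtilde}(2) gives $K=\tilde K$ and the FS and extended FS formalisms coincide).

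Finally I would conclude by applying Theorem \ref{ThmSolutionPb1} and Corollary \ref{ThmSolutionPb2}. Since $K$ is deterministic and (SC) holds, Theorem \ref{ThmSolutionPb1} yields the implicit representation $\varphi_t=\xi_t+\alpha_t(H_{t-}-V_0-\int_0^t\varphi_s\,dS_s)/(1+\Delta K_t)$; because $K$ is continuous, $\Delta K_t=0$, and substituting $\alpha_t=\lambda_t/S_{t-}$ (in the notation of the statement) gives the displayed formula. Corollary \ref{ThmSolutionPb2} identifies the variance-optimal initial capital with $H_0$, yielding $V_0=H_0$. Uniqueness of $V_0$ and of $\varphi$ modulo $P(d\omega)\otimes dt$-null sets is a direct consequence of Theorem \ref{thm128}(2) combined with the uniqueness clause of Theorem \ref{ThmExistenceFS}. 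The only mildly delicate point, and therefore the one I would double-check carefully, is the passage from the FS formulas expressed as Fourier integrals against $\mu$ to the implicit fixed-point equation for $\varphi$: it requires Fubini for stochastic integrals (as used in the proof of Theorem \ref{propo310}) to commute $\int_{\mathbb R}\mu(du)$ with $\int_0^t\cdot\,dX_s$, whose hypotheses are guaranteed by Assumption \ref{HypoFSPII:2}.
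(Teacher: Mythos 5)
Your proposal is correct and follows exactly the paper's route: the paper's own proof is a two-line citation of Theorem \ref{thmCO}, Theorem \ref{ThmSolutionPb1}, Corollary \ref{ThmSolutionPb2} (for optimality, using that $K$ is deterministic) and Theorem \ref{thm128} (for uniqueness), which is precisely the chain you assemble. Your additional verification of (SC) and the boundedness of $K$ via Proposition \ref{prop:SCPII}, and the remark about Fubini for stochastic integrals, merely make explicit what the paper leaves implicit.
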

\begin{proof}
%[\textbf{Proof of Theorem \ref{mainthmPII}}]
Since $K$ is deterministic, the optimality follows from Theorem~\ref{thmCO}, Theorem~\ref{ThmSolutionPb1} and Corollary~\ref{ThmSolutionPb2}. Uniqueness follows from Theorem~\ref{thm128}.
\end{proof}

Next theorem deals with the case where the payoff to hedge is given as a
bilateral Laplace transform of the exponential of a PII. It is an
extension of Theorem~3.3 of~\cite{Ka06} to PII with no stationary increments. 
\begin{thm}\label{mainthm}
Let $X=(X_t)_{t \in [0,T]}$ be a process with independent increments
with cumulant generating function $\kappa$.
 Let $H=f(e^{X_T})$ where $f$ is of the form~(\ref{FORM}). 
We assume the validity of Assumptions~\ref{HypD}, \ref{Hyp1bis}, 
%ref{Hyp1ter},
 \ref{Hyp1}.
The variance-optimal capital $V_0$ and the variance-optimal hedging strategy $\varphi$, solution of the minimization problem $(\ref{problem2})$, are given by
\begin{eqnarray}V_0=H_0 \end{eqnarray}
and the implicit expression
\begin{eqnarray}\varphi_t=\xi_t+\frac{\lambda_t }{S_{t-}}(H_{t-}-V_0-\int_0^t\varphi_sdS_s)\ ,\end{eqnarray}
where the processes $(H_t)$, $(\xi_t)$ and $(\lambda_t)$ are defined by
%\begin{eqnarray}\gamma(z,t):=\frac{d(\kappa_{t}(z+1)-\kappa_{t}(z)-\kappa_{t}(1))}{d\rho_t}\ ,\end{eqnarray}
\begin{eqnarray}\gamma(z,t):=\frac{d\rho_{t}(z,1)}{d\rho_t} \quad \textrm{with}\quad \rho_t(z,y)=\kappa_t(z+y)-\kappa_t(z)-\kappa_t(y)\ ,\end{eqnarray}
\begin{eqnarray}\eta(z,dt):=\kappa_{dt}(z)-\gamma(z,t)\kappa_{dt}(1)\ ,\end{eqnarray}
\begin{eqnarray}\lambda_t :=\frac{d(\kappa_{t}(1))}{d\rho_t}\ ,\end{eqnarray}
\begin{eqnarray}H_t:=\int_\C e^{\int_t^T\eta(z,ds)}S_t^z\Pi(dz)\ ,\end{eqnarray}
\begin{eqnarray}\xi_t:=\int_\C \gamma(z,t)e^{\int_t^T\eta(z,ds)}S_{t-}^{z-1}\Pi(dz)\ .\end{eqnarray}
The optimal initial capital is unique. The optimal hedging strategy $\varphi_t(\omega)$ is unique up to some $(P(d\omega)\otimes dt)$-null set.
\end{thm}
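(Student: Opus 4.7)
The plan is to combine the explicit F\"ollmer--Schweizer decomposition of Theorem~\ref{propo310} with the general mean-variance hedging result Theorem~\ref{ThmSolutionPb1} together with its Corollary~\ref{ThmSolutionPb2}. The scheme is entirely parallel to the proof of Theorem~\ref{mainthmPII}, but now in the exponential-PII setting.

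First I would verify the structural hypotheses required by Theorem~\ref{ThmSolutionPb1}. Assumption~\ref{HypD} together with Proposition~\ref{propoabsconti} show that $S=M+A$ satisfies the structure condition (SC) with
\[
\alpha_t=\frac{\lambda_t}{S_{t-}},\qquad \lambda_t=\frac{d\kappa_t(1)}{d\rho_t},
\]
and
\[
K_t=\int_0^t\left(\frac{d\kappa_u(1)}{d\rho_u}\right)^2 d\rho_u .
\]
Assumption~\ref{Hyp1bis} (i.e.\ $1\in\shd$) ensures that $K_T<\infty$; since $\rho$ and $\kappa(1)$ are deterministic (the characteristics of a PII being deterministic, cf.\ Theorem~\ref{ThmCharaDeter}), $K$ is itself deterministic, hence in particular uniformly bounded, so Assumption~\ref{Hypo1} is fulfilled. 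Moreover, Assumption~\ref{Hyp1} guarantees that the contingent claim $H=f(S_T)$ lies in $\shl^2$, and Theorem~\ref{propo310} applies and delivers a (real-valued) FS decomposition
\[
H=H_0+\int_0^T \xi_s\, dS_s + L_T,
\]
with $H_t,\xi_t$ as written in the theorem (after integrating the elementary pieces $H(z),\xi(z)$ against the measure $\Pi$).

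Next I would invoke Theorem~\ref{ThmSolutionPb1}: because (SC) holds and $K$ is deterministic, for any $c\in\R$ the problem~(\ref{problem1}) admits a unique solution $\varphi^{(c)}\in\Theta$ given by
\[
\varphi^{(c)}_t=\xi_t+\frac{\alpha_t}{1+\Delta K_t}\bigl(H_{t-}-c-G_{t-}(\varphi^{(c)})\bigr).
\]
The key simplification specific to the PII-exponential framework is that $\Delta K_t\equiv 0$: indeed, since $X$ is continuous in probability (Definition~\ref{defPAI}), Proposition~\ref{proposition38} yields continuity of $t\mapsto\kappa_t(z)$ for every $z\in D$; therefore $t\mapsto\rho_t$ and $t\mapsto K_t$ are continuous. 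Substituting $\alpha_t=\lambda_t/S_{t-}$ and $1+\Delta K_t=1$ produces exactly the announced implicit equation for $\varphi$. Corollary~\ref{ThmSolutionPb2} then identifies the optimal initial capital with $H_0$, so $V_0=H_0$ is optimal for~(\ref{problem2}).

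Finally, uniqueness of $(V_0,\varphi)$ up to $(P\otimes dt)$-null sets is supplied by Theorem~\ref{thm128}. The only genuinely non-routine step is the verification that Theorem~\ref{propo310} is indeed applicable and yields a bona fide $\Theta$-valued strategy together with a square-integrable orthogonal remainder; but this has already been established in the previous section, so the present proof reduces to assembling these ingredients. The main subtlety to keep in mind is the role of the continuity of $\kappa$ (hence of $K$), which is what allows replacement of $\tilde K$ by $K$ and elimination of the $1+\Delta K_t$ factor; without Assumption~\ref{HPAIND} used to drive Proposition~\ref{proposition38}, this simplification would not be automatic.
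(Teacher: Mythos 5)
Your proposal is correct and follows essentially the same route as the paper: the paper's proof of Theorem~\ref{mainthm} likewise just combines Theorem~\ref{propo310}, Theorem~\ref{ThmSolutionPb1} and Corollary~\ref{ThmSolutionPb2}, using that $K$ is deterministic, with uniqueness from Theorem~\ref{thm128}. Your additional remarks on the continuity of $t\mapsto\kappa_t(z)$ (hence $\Delta K_t=0$, which removes the $1+\Delta K_t$ factor) merely make explicit what the paper leaves implicit.
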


\begin{remarque}\label{remarque328}
The mean variance tradeoff process can be expressed as follows,
 see~(\ref{KPII}):
\begin{eqnarray*}K_t=\int_0^t\frac{d\kappa_u(1)}{d\rho_u}\kappa_{du}(1)\ .\end{eqnarray*}
\end{remarque}

\begin{proof}[\textbf{Proof of Theorem~\ref{mainthm}}]
Since $K$ is deterministic, the optimality follows from Theorem~\ref{propo310}, Theorem~\ref{ThmSolutionPb1} and Corollary~\ref{ThmSolutionPb2}. Uniqueness follows from Theorem~\ref{thm128}.

\end{proof}

\subsection{The quadratic error}
%%%%%%%%%%%%%%%%%%%%%%%%%%%%%%%%%
Again, $\rho_{dt}$ denotes the measure
 $\kappa_{dt}(2)-2\kappa_{dt}(1)$. Let $V,\varphi$ and $H$ appearing in Theorem \ref{mainthm}. The quantity $\mathbb{E}[(V_0+G_T(\varphi)-H)^2]$ will be called \textbf{the variance of the hedging error}.
\begin{thm}\label{thm34}
Under the assumptions of Theorem \ref{mainthm},
the variance of the hedging error equals
\begin{eqnarray*}J_0:=\left(\int_\mathbb{C}\int_\mathbb{C} J_0(y,z)\Pi(dy)\Pi(dz)\right)\ ,\end{eqnarray*}
where
$$
J_0(y,z):= \left \{
\begin{array}{ccc}
s_0^{y+z}\int_0^T\beta(y,z,dt)e^{\kappa_{t}(y+z)+\alpha(y,z,t)}dt\ &:& y, z
 \in{\rm supp} \Pi \\
0 &:& otherwise.
\end{array}
\right.
$$
and
\begin{eqnarray*}\alpha(y,z,t)&:=&\eta(z,T)-\eta(z,t)-(\eta(y,T)-\eta(y,t))-\int_t^T\left(\frac{d\kappa_{s}(1)}{d\rho_s}\right)^2d\rho_s\ ,\\ 
%\beta(y,z,t)&:=&\kappa_{t}(y+z)-\kappa_{t}(y)-\kappa_{t}(z)-\int_0^t\gamma(z,s)(\kappa_{ds}(y+1)-\kappa_{ds}(y)-\kappa_{ds}(1))
\beta(y,z,t)&:=&\rho_{t}(y,z)-\int_0^t\gamma(z,s)\rho_{ds}(y,1)
\ .\end{eqnarray*}
\end{thm}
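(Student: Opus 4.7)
The plan is to apply Corollary \ref{corro132Bis} to reduce the problem to an explicit double integral, and then to compute this double integral using the FS decomposition produced in Theorem \ref{propo310} combined with the bracket formulas of Proposition \ref{corro36}.

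First I would verify that the setting of Corollary \ref{corro132Bis} applies. Since $X$ is a PII with no fixed-time discontinuities, Proposition \ref{proposition38} gives continuity of $t \mapsto \kappa_t(z)$; hence $d\rho_t$ is a continuous measure and $\langle M, M\rangle_t = \int_0^t S_{u-}^2 d\rho_u$ is continuous. Furthermore, Assumption \ref{Hyp1bis} makes $K$ uniformly bounded, and it is deterministic by Proposition \ref{propoabsconti}. Theorem \ref{mainthm} gives $V_0 = H_0$ and the optimal $\varphi$, so Corollary \ref{corro132Bis} applied with $c = V_0 = H_0$ and $\mathbb{E}[L_0^H] = 0$ yields
\begin{equation*}
J_0 = \mathbb{E}\left[\int_0^T \exp\{-(K_T - K_s)\}\, d\langle L^H \rangle_s\right].
\end{equation*}

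Next I would use the representation $L^H = \int_\mathbb{C} L(z)\,\Pi(dz)$ from Theorem \ref{propo310}, together with a Fubini argument for stochastic integrals (as already invoked in the proof of that theorem) to obtain
\begin{equation*}
\langle L^H, L^H\rangle_t = \int_\mathbb{C}\int_\mathbb{C} \langle L(y), L(z)\rangle_t\, \Pi(dy)\,\Pi(dz).
\end{equation*}
The key step is then an explicit computation of $d\langle L(y), L(z)\rangle_s$. Inserting the formula $L(z)_t = \int_0^t e^{\int_s^T \eta(z,du)} dM(z)_s - \int_0^t \xi(z)_s\, dM_s$ from \eqref{43}, expanding the resulting bilinear bracket into four terms, and substituting the identities of Proposition \ref{corro36} for $\langle M(y), M(z)\rangle$, $\langle M(y), M\rangle$ and $\langle M, M\rangle$, together with $\xi(z)_s = \gamma(z,s)\, e^{\int_s^T \eta(z,du)} S_{s-}^{z-1}$ and the defining relation $\gamma(z,s)\,\rho_{ds} = \rho_{ds}(z,1)$, I expect the four terms to collapse to
\begin{equation*}
d\langle L(y), L(z)\rangle_s = e^{\int_s^T(\eta(y,du)+\eta(z,du))}\, S_{s-}^{y+z}\, \beta(y,z,ds),
\end{equation*}
with $\beta(y,z,ds) = \rho_{ds}(y,z) - \gamma(z,s)\,\rho_{ds}(y,1)$ exactly as in the statement (the symmetry $\gamma(z,s)\,\rho_{ds}(y,1) = \gamma(y,s)\,\rho_{ds}(z,1)$ being the cancellation that makes the two cross terms merge into one).

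Finally I would take expectation, using $\mathbb{E}[S_{s-}^{y+z}] = s_0^{y+z}\exp(\kappa_s(y+z))$ by Remark \ref{R3}, and combine the factor $\exp\{-(K_T-K_s)\} = \exp\{-\int_s^T (d\kappa_u(1)/d\rho_u)^2 d\rho_u\}$ with $\exp\{\int_s^T(\eta(y,du)+\eta(z,du))\}$ to recover the exponent $\kappa_t(y+z)+\alpha(y,z,t)$ of the statement. A last application of Fubini, justified by the uniform $L^2$-type bounds on $L(z)$ already proved in Theorem \ref{propo310} (items (A2) and (A3) there), allows to exchange $\mathbb{E}$ and the double integral in $\Pi(dy)\,\Pi(dz)$, yielding the formula for $J_0$. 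The main technical difficulty will be the bookkeeping for the stochastic Fubini and the integrability uniformly in $(y,z) \in \operatorname{supp}(\Pi)^2$, but both should follow from the estimates already established via Lemma \ref{lemmeconstante}.
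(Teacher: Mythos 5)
Your proposal is correct and follows essentially the same route as the paper: reduction via Corollary \ref{corro132Bis}, Fubini to reduce to $\langle L(y),L(z)\rangle$, explicit computation of that bracket from \eqref{43} and Proposition \ref{corro36}, then expectation of $S_{s-}^{y+z}$ and the exponent bookkeeping. The only cosmetic difference is that you expand the bracket into four terms and cancel using $\gamma(y,s)\gamma(z,s)\rho_{ds}=\gamma(z,s)\rho_{ds}(y,1)$, whereas the paper first invokes $\langle L(z),M\rangle=0$ from \eqref{51Bis} to drop one cross term; these are the same cancellation.
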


\begin{remarque}\label{remarque329}
We have
\begin{eqnarray*}\alpha(y,z,t)=(\eta(z,T)-\eta(z,t))-(\eta(y,T)-\eta(y,t))-(K_T-K_t)\ ,\end{eqnarray*}
where $K$ is the MVT process.
\end{remarque}

\begin{proof} 
%[of Theorem \ref{thm34}]
The quadratic error can be calculated using Corollary~\ref{corro132Bis} and Corollary~\ref{ThmSolutionPb2}. It gives 
\begin{eqnarray}\mathbb{E}\left[\int_0^T\exp\left\{-(K_T-K_s)\right\}
d\left\langle L\right\rangle_s\right]\ ,\label{E1}\end{eqnarray}
where $L$ is the remainder martingale in the FS decomposition of $H$.
%and $(K_t)$ is the mean-variance tradeoff process. 
We proceed now to the evaluation of  
 $\left\langle L\right\rangle$.

Using $(\ref{44bis}),(\ref{44ter})$, Remark \ref{remarque17} and the
 bilinearity of the covariation give 
\begin{eqnarray*}
Re\left(\left\langle L(y),L(z)\right\rangle\right)&=&
\frac{1}{2}\left ( \left\langle L(y),{L(z)}\right
\rangle+\left\langle \overline{L(y)},\overline{L(z)}\right\rangle\ \right)  
\\
&=& \frac{1}{2}\left(\left\langle
    L(y)+L(\bar{z}),\overline{L(y)+L(\bar{z})}\right\rangle-\left\langle
    L(y),\overline{L(y)}\right\rangle-\left\langle
    L(z),\overline{L(z)}\right\rangle\right)\ ,
\end{eqnarray*}
and
\begin{eqnarray*}
\left\langle L(y)+L(\bar{z}),\overline{L(y)+L(\bar{z})}\right\rangle
&\leq&
\left\langle L(y)+L(\bar{z}),\overline{L(y)+L(\bar{z})}\right\rangle+
\left\langle L(y)-L(\bar{z}),\overline{L(y)-L(\bar{z})}\right\rangle\ ,\\ \\
&=&2\left\langle L(y),\overline{L(y)}\right\rangle+2\left\langle L(z),\overline{L(z)}\right\rangle\ .
\end{eqnarray*}
\eqref{EIZ1}  in the proof of Theorem \ref{propo310}, 
and considerations above allow to prove that
$$ \sup_{y, z\in I+i\mathbb{R}}\left| 
Re\left(\left\langle L(y),L(z)\right\rangle \right) \right|
< \infty. $$
%  $\mathbb{E}\left[\left\langle
%  L(z),\overline{L(z)}\right\rangle_T\right]$ and hence also the
%  expected variation of $Re(\left\langle L(y),L(z)\right\rangle_t)$ is
%  bounded for $z\in I$. By replacing $L(\bar{z})$ with $iL(\bar{z})$,
%  it follows analogously that the total variation of $Im(\left\langle
%  L(y),L(z)\right\rangle_t)$ is bounded too. Therefore
Similarly we can bound 
 $Im(\left\langle L(y),L(z)\right\rangle_t)$, writing
$$ Im \left(\left\langle L(y),L(z)\right\rangle\right) 
=  \frac{1}{2}\left(\left\langle
    L(y)-L(\bar{z}),\overline{L(y)-L(\bar{z})}\right\rangle-\left\langle
    L(y),\overline{L(y)}\right\rangle-\left\langle
    L(z),\overline{L(z)}\right\rangle\right)\ ,
$$
so that we obtain 
$$  Im \left(\left\langle L(y),L(z)\right\rangle\right) \le  
\left\langle L(y),\overline{L(y)}\right\rangle+\left\langle L(z),
\overline{L(z)}\right\rangle\ $$
and 
$$ \sup_{y, z\in I+i\mathbb{R}}\left| 
Im \left(\left\langle L(y),L(z)\right\rangle \right) \right|
<\infty \ .
$$
Therefore
\begin{eqnarray*}\int\int\left\langle L(y),L(z)\right\rangle_t\Pi(dy)\Pi(dz)\end{eqnarray*}
is a well-defined, continuous, predictable, 
with bounded variation complex-valued process. \\
We recall that 
 $L_t=\int L(z)_t\Pi(dz)$ 
so 
\begin{eqnarray*}L_t^2=\int\int L(y)_tL(z)_t\Pi(dy)\Pi(dz)\ .\end{eqnarray*}
An application of Fubini's theorem yields that
\begin{eqnarray*}L_t^2-\int\int\left\langle L(y),L(z)\right\rangle_t\Pi(dy)\Pi(dz)\ ,\end{eqnarray*}
is a martingale. This implies
\begin{eqnarray*}\left\langle L,L\right\rangle_t=\int\int\left\langle L(y),L(z)\right\rangle_t\Pi(dy)\Pi(dz)\ ,\end{eqnarray*}
by definition of oblique bracket. It remains to evaluate
 $\left\langle L(y),L(z)\right\rangle$ for $y,z \in  {\rm supp (\Pi)}$.
\\
We know by Proposition~\ref{corro36} that for all $y,z,y+z \in D$,
%\begin{eqnarray*}\left\langle M(y),M(z)\right\rangle_t=\int_0^tS_{u-}^{y+z}(\kappa_{du}(y+z)-\kappa_{du}(y)-\kappa_{du}(z))\ .\end{eqnarray*}
\begin{eqnarray*}\left\langle M(y),M(z)\right\rangle_t=\int_0^tS_{u-}^{y+z}\rho_{du}(y,z)\ .\end{eqnarray*}
Using the same terminology of Proposition~\ref{lemme38},~(\ref{51Bis}) says $\left\langle L(z),M\right\rangle_t=0$ and~(\ref{43}) imply
\begin{eqnarray*}
\left\langle L(y),L(z)\right\rangle_t
&=&\int_0^t e^{\int_s^T(\eta(z,du)+\eta(y,du))}d\left\langle
  M(y),M(z)\right
\rangle_s
-\int_0^t\xi(z)_se^{\int_s^T\eta(y,du)}d\left\langle M,M(y)\right\rangle_s\ ,\\\\
&=&\int_0^t e^{\int_s^T(\eta(z,du)+\eta(y,du))}d\left\langle M(y),M(z)
\right\rangle_s
-\int_0^t\gamma(z,s)e^{\int_s^T(\eta(z,du)+\eta(y,du))}S_{s-}^{z-1}d\left\langle M,M(y)\right\rangle_s\ ,\\\\
%&=&\int_0^t e^{\int_s^T(\eta(z,du)+\eta(y,du))}S_{s-}^{y+z}(\kappa_{ds}(y+z)-\kappa_{ds}(y)-\kappa_{ds}(z))\ ,\\ \\
%&&-\int_0^t\gamma(z,s)e^{\int_s^T(\eta(z,du)+\eta(y,du))}S_{s-}^{z-1}S_{s-}^{y+1}(\kappa_{ds}(y+1)-\kappa_{ds}(y)-\kappa_{ds}(1))\ , \\ \\
&=&\int_0^t e^{\int_s^T(\eta(z,du)+\eta(y,du))}S_{s-}^{y+z}\rho_{ds}(y,z)-\int_0^t\gamma(z,t)e^{\int_s^T(\eta(z,du)+\eta(y,du))}S_{s-}^{z-1}S_{s-}^{y+1}\rho_{ds}(y,1)\ , \\ \\
%&=&\int_0^t e^{\int_s^T(\eta(z,du)+\eta(y,du))}S_{s-}^{y+z}
%\left[(\kappa_{ds}(y+z)-\kappa_{ds}(y)-\kappa_{ds}(z))-\gamma(z,t)(\kappa_{ds}(y+1)-\kappa_{ds}(y)-\kappa_{ds}(1)\right]\ .
&=&\int_0^t e^{\int_s^T(\eta(z,du)+\eta(y,du))}S_{s-}^{y+z}
\left[\rho_{ds}(y,z)-\gamma(z,s)\rho_{ds}(y,1)\right]\ .
\end{eqnarray*}
Hence,
\begin{eqnarray*}\left\langle L(y),L(z)\right\rangle_t=\int_0^t e^{\int_s^T(\eta(z,du)+\eta(y,du))}S_{s-}^{y+z}\beta(y,z,ds)\ .\end{eqnarray*}
We come back to~(\ref{E1}).
%Let $(K_t)$ be the mean-variance tradeoff process, then
Recalling Remark \ref{remarque328} we have
\begin{eqnarray*}
\int_0^T e^{-(K_T-K_t)}d\left\langle L(y),L(z)\right\rangle_t
%&=&\int_0^T e^{-(K_T-K_t)}e^{\int_t^T(\eta(z,du)+\eta(y,du))}S_{t-}^{y+z}\beta(y,z,dt)\ ,\\ 
&=&\int_0^T e^{-(K_T-K_t)+\int_t^T
(\eta(z,du)+\eta(y,du))}S_{t-}^{y+z}\beta(y,z,dt)\ ,\\ \\
&=&\int_0^T e^{\alpha(y,z,t)}S_{t-}^{y+z}\beta(y,z,dt)\ .
\end{eqnarray*}
Since $\mathbb{E}[S_{t-}^{y+z}]=s_{0}^{y+z}e^{\kappa_t(y+z)}$, an application of Fubini's theorem yields
\begin{eqnarray}\label{E2}
\mathbb{E} \left (\int_0^T e^{-(K_T-K_t)}d\left\langle
    L(y),L(z)\right\rangle_t
\right )
&=&\mathbb{E} \left(\int_0^T
  e^{\alpha(y,z,t)}S_{t-}^{y+z}\beta(y,z,dt)\right)\
 , \nonumber \\ 
&&\\
&=& s_{0}^{y+z}\int_0^T e^{\alpha(y,z,t)+\kappa_t(y+z)}\beta(y,z,dt)\ .
\nonumber
\end{eqnarray}
which equals $J_0(y,z)$.

Another application of Fubini's theorem gives
\begin{eqnarray*}
\int_0^Te^{-(K_T-K_t)}d\left\langle L,L\right\rangle_t=\int_\mathbb{C}\int_\mathbb{C}\int_0^Te^{-(K_T-K_t)}d\left\langle L(y),L(z)\right\rangle_t\Pi(dy)\Pi(dz)\ ,
\end{eqnarray*}
hence
\begin{eqnarray*}
\mathbb{E}\left[\int_0^Te^{-(K_T-K_t)}d\left\langle L,L\right\rangle_t\right]
&=&
\int_\mathbb{C}\int_\mathbb{C}\mathbb{E}\left[\int_0^Te^{-(K_T-K_t)}d\left\langle L(y),L(z)\right\rangle_t\right]\Pi(dy)\Pi(dz)\ ,\\\\
&=&\int_\mathbb{C}\int_\mathbb{C} J_0(y,z)\Pi(dy)\Pi(dz)\ .
\end{eqnarray*}
Corollaries~\ref{corro132Bis} and \ref{ThmSolutionPb2} imply that the left-hand side of the previous equation provides the variance of the hedging error.
\end{proof}

%%%%%%%%%%%%%%%%%%%%%%%%%%%%%%%%%%%%%%%%%%%%%%%%%%%%%%%%%%%%%%%%%%%%%%%%%%%%%%%%%%%%%%%%%%%%%%%%%%%
%\subsubsection{Measure theoritical Part}

%Take $X$ a process with independent increments (PII) with characteristics triplet $(B,C,\nu)$. 

%With proposition \ref{propoJSA} and theorem \ref{ThmCharaDeter}, we remember that we have for all $z \in \mathbb{C}$ with $R'\leq Re(z)\leq R$,
%\begin{eqnarray*}d\kappa_{t}(z)\ll dA_t\end{eqnarray*}
%so
%\begin{eqnarray*}\kappa_t(2)-2\kappa_t(1)=\int_0^t\left(k_s(2)-2k_s(1)\right)dA_s\end{eqnarray*}
%%%%%%%%%%%%%%%%%%%%%%%%%%%%%%%%%%%%%%%%%%%%%%%%%%%%%%%%%%%%%%%%%%%%%%%%%%%%%%%%%%%%%%%%%%%%%%%%%%%

\subsection{The exponential Lévy case}
%%%%%%%%%%%%%%%%%%%%%%%%%%%%%%%%%%%%%%

%CHANGER AUSSI $\kappa^L$ avec  $\kappa^\Lambda$ $\eta^L$ avec  $\eta^\Lambda $,  $\gamma^L$ avec  $\gamma^\Lambda$

\bigskip
In this section, we specify rapidly the results concerning
FS decomposition and the minimization problem when $(X_t)$ is a Lévy
process $(\Lambda_t)$.
 Using the fact that $(\Lambda_t)$ is a process with independent stationary increments it is not difficult to show that 
\begin{eqnarray}
\kappa_t(z)=t\kappa^\Lambda(z)\ ,
\label{431}\end{eqnarray}
where $\kappa^\Lambda(z)=\kappa_1(z)$,
 $\kappa^\Lambda: D \rightarrow \mathbb{C}$.
Since for every $z \in D$, $t \mapsto \kappa_t(z)$ has bounded
variation then $X=\Lambda$ is a semimartingale and Proposition
 \ref{proposition38} implies that $(t,z) \mapsto \kappa_t(z)$
is continuous.

%LE PROCHAIN LEMME DEVIENT EN PRINCIPE INUTILE COMPTE TENU 
%DE LA PROPOSITION \ref{proposition38}.
%\begin{lemme} 
%\label{LcontKappa}
%$z\mapsto \kappa^\Lambda(z)$ is continuous on $D$.
%\end{lemme}
%\begin{proof}.  
%Let $(z_n)$ be a sequence in $D$ converging to
% $z$.

%Since $\exp(z_n X_1) \rightarrow \exp(z X_1)$ pointwise and therefore
%a.s. it is enough to show that the r.v. $(\exp(z X_n))$ 
%are uniformly integrable.
% Without restriction of generality we can suppose 
%$Re z_n > 0$ for any $n$.
%For $M > 0$, we have 
%\begin{eqnarray*}
%\mathbb{E}( \exp(z_n X_1) 1_{\{ \vert \exp(z_n X_1) \vert > M  \}}) &\le &
%\mathbb{E}  \exp(z_n X_1)  1_{\{  \exp(Re z_n X_1)  > M  \}}) \\ &\le &
%= \int_{\vert y >  \frac{\log M}{Re z_n}\vert } \exp(Re z_n y) d\mu_1(y)
%\end{eqnarray*}
%where $\mu_1$ is the law of $X_1$.
%Setting $C:= \sup_n Re z_n$
%it obviously follows that $C \in D$, and so
%previous integral is bounded by
%$$ \int_{\frac{log M}{C}} \exp(Cy) d\mu_1(y).$$
%Since $\exp (C X_1) \in {\cal L}^1$, the result follows.
%\end{proof}
We make the following hypothesis.

\begin{Hyp}\label{HypLevy}
\begin{enumerate}
\item $2\in D\,;$
\item $\kappa^\Lambda(2)-2\kappa^\Lambda(1)\neq 0\,.$
%\item $z\mapsto \kappa^\Lambda(z)$ is continuous; 
%we remark that this is always true for $z \in Int(D)$.
\end{enumerate}
\end{Hyp}

\begin{remarque}
\begin{enumerate}
\item $\rho_{dt}=\left(\kappa^\Lambda(2)-2\kappa^\Lambda(1)\right)dt\,;$
\item ${\displaystyle \frac{d\kappa_t}{d\rho_t}(z)=\frac{1}
{\kappa^\Lambda(2)-2\kappa^\Lambda(1)}\kappa^\Lambda(z)}$ for any $t \in [0,T], z\in D\,;$
so $D = \shd$.
%\item $(t,z)\mapsto \kappa_t(z)$ is continuous.
\item Assumptions \ref{HypD},
% \ref{Hyp1bis} 
and \ref{Hyp1bis} are verified.
\end{enumerate}
\end{remarque}

Again we denote the process $S$ as
\begin{eqnarray*}S_t=s_0\exp(X_t)=s_0\exp(\Lambda_t)\ .\end{eqnarray*}

It remains to verify Assumption \ref{Hyp1} which of course
 depends on the contingent claim.
\begin{example}
\begin{enumerate}
\item $H=(S_T-K)_+$. We choose the second representation for the call.
 So, for $0<R<1$,
\begin{eqnarray*}
I_0=supp( \Pi) \cap \R =\left\{R,1\right\}.
\end{eqnarray*}
In this case Assumption \ref{Hyp1}.1 becomes $I=[R,R+1]\subset D$. This is
 always satisfied since $D\supset [0,2]$ and it is convex. Assumption \ref{Hyp1}.2 is always verified because I is compact and $\kappa^\Lambda$ is continuous.
\item $H=(K-S_T)_+$. We recall that $R<0$ and so
\begin{eqnarray*}
I_0  = supp( \Pi) \cap \R  = \{R\}.
\end{eqnarray*}
In this case, Assumption~\ref{Hyp1}.1, gives again $I=[2R,1]\subset D$.
 Since $[0,2]$ is always included in D, we need to suppose here that
 $2R$ (which is a negative value) belongs to D. \\
This is not a restriction provided that $D$ contains some negative
values since we have the degree of freedom for choosing $R$.
%\item If H is a digital options we are again in situation 1.
%PEUT-ETRE A SUPPRIMER
\end{enumerate}
\end{example}

In this subsection, we reobtain
 results obtain in~\cite{Ka06}.
 From Proposition~\ref{propoabsconti}, we obtain the following.
\begin{corro}\label{propolevycase}
Under Assumption \ref{HypLevy}, the process $(S_t)$ can be written as
\begin{eqnarray*}
S_t=M_t+A_t\ ,
\end{eqnarray*}
where 
%for all $z \in D$, and all $t \in [0,T]$,
\begin{eqnarray*}
A_t=\kappa(1)\int_0^t S_{u-} du \quad \textrm{and}\quad \left\langle M,M\right\rangle_t=(\kappa(2)-2\kappa(1))\int_0^tS_{u-}^2du\ .
\end{eqnarray*}
The mean-variance tradeoff process equals
\begin{eqnarray}K_t=\int_0^t\alpha_u^2d\left\langle M,M \right\rangle_u=\frac{\kappa(1)^2}{\kappa(2)-2\kappa(1)}t\ .\label{Kexplevy}\end{eqnarray}
\end{corro}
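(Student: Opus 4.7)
The plan is to specialize the general PII results, namely Propositions~\ref{corro36} and~\ref{propoabsconti}, to the L\'evy setting by exploiting the linearity identity $\kappa_t(z) = t\,\kappa^\Lambda(z)$ stated in~\eqref{431}. Since Assumption~\ref{HypLevy} guarantees $1, 2 \in D$ and $\kappa^\Lambda(2)-2\kappa^\Lambda(1)\neq 0$, and since $(t,z)\mapsto \kappa_t(z)$ is continuous by Proposition~\ref{proposition38}, all the hypotheses required to invoke these general statements with $y=z=1$ are in force.

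First I would read off the canonical decomposition. By Proposition~\ref{corro36} applied with $y=z=1$, one has $S_t = M_t + A_t$ with $A_t = \int_0^t S_{u-}\,\kappa_{du}(1)$. Plugging in $\kappa_{du}(1) = \kappa^\Lambda(1)\,du$ gives the claimed expression $A_t = \kappa^\Lambda(1)\int_0^t S_{u-}\,du$. Similarly, Proposition~\ref{corro36} gives $\langle M,M\rangle_t = \int_0^t S_{u-}^{2}\,\rho_{du}(1,1)$ with $\rho_{du}(1,1) = \kappa_{du}(2)-2\kappa_{du}(1) = (\kappa^\Lambda(2)-2\kappa^\Lambda(1))\,du$, which immediately yields the formula for $\langle M,M\rangle_t$.

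Next I would compute the MVT process using Proposition~\ref{propoabsconti}. The Radon--Nikodym density simplifies to a constant in the L\'evy case:
\[
\frac{d\kappa_u(1)}{d\rho_u} \;=\; \frac{\kappa^\Lambda(1)\,du}{(\kappa^\Lambda(2)-2\kappa^\Lambda(1))\,du} \;=\; \frac{\kappa^\Lambda(1)}{\kappa^\Lambda(2)-2\kappa^\Lambda(1)}.
\]
Substituting into $K_t = \int_0^t \bigl(\tfrac{d\kappa_u(1)}{d\rho_u}\bigr)^{2}\,d\rho_u$ gives
\[
K_t \;=\; \Bigl(\tfrac{\kappa^\Lambda(1)}{\kappa^\Lambda(2)-2\kappa^\Lambda(1)}\Bigr)^{\!2}\bigl(\kappa^\Lambda(2)-2\kappa^\Lambda(1)\bigr)\,t \;=\; \frac{\kappa^\Lambda(1)^{2}}{\kappa^\Lambda(2)-2\kappa^\Lambda(1)}\,t,
\]
which is exactly~\eqref{Kexplevy}.

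There is no genuine obstacle here: the proof is a verification that the abstract densities $d\kappa_u(z)/d\rho_u$ reduce to constants once $\kappa_t(z) = t\kappa^\Lambda(z)$, and the absolute continuity requirements are automatic by Assumption~\ref{HypLevy}.2 which ensures $\rho_{dt}$ is a nonzero multiple of Lebesgue measure. The only mild care needed is to confirm the well-posedness of the ratio, guaranteed precisely by $\kappa^\Lambda(2)-2\kappa^\Lambda(1) \neq 0$.
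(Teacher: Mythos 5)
Your proposal is correct and matches the paper's route exactly: the paper derives this corollary directly from Proposition~\ref{propoabsconti} (and implicitly Proposition~\ref{corro36} with $y=z=1$), with the whole content being the observation that $\kappa_t(z)=t\kappa^\Lambda(z)$ turns the Radon--Nikodym densities into constants. Your additional remark on well-posedness via $\kappa^\Lambda(2)-2\kappa^\Lambda(1)\neq 0$ is consistent with Assumption~\ref{HypLevy}.
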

From Theorem \ref{propo310}  and Theorem \ref{mainthm}, we obtain the 
following result.
\begin{thm} \label{THMREPLEVY}
We suppose the validity of Assumption \ref{HypLevy}.
 We consider  an option $H$ of the type $(\ref{FORM})$. 
The following properties hold true.
\begin{enumerate}
\item The FS decomposition is given by $H_T=H_0+\int_0^T\xi_tdS_t+L_T$ where
\begin{itemize}
\item $H_t=\int H(z)_t\Pi(dz)$ with $H(z)_t=\exp(\eta^\Lambda(z)(T-t))S_t^z$
 and $z \in I$, $t \in [0,T]\,;$
\item $\xi_t=\int \xi(z)_t\Pi(dz)$ with $\xi(z)_t=\gamma^\Lambda(z)
\exp(\eta^\Lambda(z)(T-t))S_{t-}^{z-1}$ and $z \in I$, $t \in [0,T]\,;$
\item $L_t=H_t-H_0-\int_0^t \xi_u dS_u\,.$
\end{itemize}
Moreover, for $z \in {\rm supp} \Pi$,
\begin{itemize}
\item $\gamma^\Lambda(z)={\displaystyle \frac{\kappa(z+1)-
\kappa(z)-\kappa(1)}{\kappa(2)-2\kappa(1)}}\,;$
\item $\eta^\Lambda(z)=\kappa(z)-\kappa(1)\gamma^\Lambda(z)\,.$
\end{itemize}
According to the notations of Lemma \ref{lemme38}, 
we have 
$$ \eta(z,t) = \eta^\Lambda(z) t, \quad \gamma(z,t) =  \gamma^\Lambda(z). $$
\item The solution of the minimization problem is given by
 a pair $(V_0,\varphi)$ where
\begin{eqnarray*}V_0=H_0 \quad \textrm{and}\quad 
\varphi_t=\xi_t+\frac{\lambda}{S_{t-}}(H_{t-}-V_0-G_{t-}(\varphi)) 
\quad \textrm{with}\quad \lambda=\frac{\kappa(1)}{\kappa(2)-2\kappa(1)}\,.
\end{eqnarray*}
\end{enumerate}
\end{thm}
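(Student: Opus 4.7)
The plan is to specialize Theorem~\ref{propo310} (for the FS decomposition) and Theorem~\ref{mainthm} (for the variance-optimal strategy) to the L\'evy setting, using the defining identity $\kappa_t(z)=t\kappa^\Lambda(z)$ from \eqref{431} to make every ingredient time-homogeneous.

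First, I would check that the running assumptions of those two theorems are indeed implied by Assumption~\ref{HypLevy} (together with the hypothesis that $H$ admits a representation of the form \eqref{FORM} with $\Pi$ supported in a suitable set). Since $\kappa_t(z)=t\kappa^\Lambda(z)$, the reference measure becomes $d\rho_t=\bigl(\kappa^\Lambda(2)-2\kappa^\Lambda(1)\bigr)\,dt$, which is strictly positive on every open interval by Assumption~\ref{HypLevy}.2; moreover $X$ has no deterministic increment because $\mathrm{Var}(\Lambda_1)>0$ (otherwise $\kappa^\Lambda(2)-2\kappa^\Lambda(1)=0$), so Assumption~\ref{HypD} holds. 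The Radon--Nikodym derivative
\begin{equation*}
\frac{d\kappa_t(z)}{d\rho_t}=\frac{\kappa^\Lambda(z)}{\kappa^\Lambda(2)-2\kappa^\Lambda(1)}
\end{equation*}
is constant in $t$, hence bounded on any compact $z$-set inside $D$; this gives Assumption~\ref{Hyp1bis} (taking $z=1$) and Assumption~\ref{Hyp1} once $I\subset D$, which is granted by convexity of $D$ and the fact that $[0,2]\subset D$.

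Next, I would compute the three key functions from Proposition~\ref{lemme38} and Theorem~\ref{mainthm} explicitly. By \eqref{gammaZT} and the cumulant identity,
\begin{equation*}
\gamma(z,t)=\frac{d\rho_t(z,1)}{d\rho_t}=\frac{\kappa^\Lambda(z+1)-\kappa^\Lambda(z)-\kappa^\Lambda(1)}{\kappa^\Lambda(2)-2\kappa^\Lambda(1)}=\gamma^\Lambda(z),
\end{equation*}
which is $t$-independent. Then \eqref{etaZT} yields
\begin{equation*}
\eta(z,t)=\kappa_t(z)-\int_0^t\gamma^\Lambda(z)\,\kappa^\Lambda(1)\,ds=t\bigl[\kappa^\Lambda(z)-\gamma^\Lambda(z)\kappa^\Lambda(1)\bigr]=t\,\eta^\Lambda(z),
\end{equation*}
so $\int_t^T\eta(z,du)=\eta^\Lambda(z)(T-t)$. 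The same computation with $z=1$ gives $\lambda_t=\kappa^\Lambda(1)/(\kappa^\Lambda(2)-2\kappa^\Lambda(1))=\lambda$, a constant.

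Finally, substituting these explicit expressions into the formulas \eqref{FS1}--\eqref{FS3} of Proposition~\ref{lemme38} produces $H(z)_t=e^{\eta^\Lambda(z)(T-t)}S_t^z$ and $\xi(z)_t=\gamma^\Lambda(z)e^{\eta^\Lambda(z)(T-t)}S_{t-}^{z-1}$; integrating against $\Pi$ and invoking the Fubini argument in the proof of Theorem~\ref{propo310} gives point~1. Point~2 then follows immediately by plugging $\lambda_t\equiv\lambda$ into the implicit expression of Theorem~\ref{mainthm}. There is no real obstacle: the only content beyond routine substitution is the verification of Assumption~\ref{Hyp1} for a given $\Pi$, which, as illustrated in the examples following Assumption~\ref{HypLevy}, reduces to checking that $I=[\inf I_0\wedge 2\inf I_0,\,2\sup I_0\vee(\sup I_0+1)]$ lies in $D$; this is where the specific payoff representation (call versus put) matters.
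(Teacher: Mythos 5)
Your proposal is correct and follows essentially the same route as the paper, whose proof of Theorem~\ref{THMREPLEVY} consists precisely of invoking Theorem~\ref{propo310} and Theorem~\ref{mainthm} after noting (in the remarks preceding the statement) that $\rho_{dt}=(\kappa^\Lambda(2)-2\kappa^\Lambda(1))\,dt$, that $d\kappa_t(z)/d\rho_t$ is constant in $t$ so that $D=\shd$ and Assumptions~\ref{HypD} and~\ref{Hyp1bis} hold, and that Assumption~\ref{Hyp1} must be checked payoff by payoff (call versus put) exactly as you indicate at the end. The explicit computations of $\gamma(z,t)=\gamma^\Lambda(z)$, $\eta(z,t)=t\,\eta^\Lambda(z)$ and $\lambda_t\equiv\lambda$ that you carry out are the same substitutions the paper leaves implicit.
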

\begin{remarque} 
Lemma \ref{lemmaTheta}  implies that 
  $\Theta$ is the linear space of predictable processes $v$ 
such that
$ \mathbb{E}\left(\int_0^T v_t^2 S_{t-}^2 dt\right) < \infty.$
\end{remarque}

\begin{remarque} 
We come back to the examples introduced in Remark \ref{remark47Bis}.
In all the three cases, Assumption \ref{HypLevy} is verified if $2 \in D$. This is happens in the following situations:
\begin{enumerate}
\item always in the Poisson case;
\item if $\Lambda=X$ is a NIG process and if $2<\alpha-\beta\,;$
\item if $\Lambda=X$ is a VG process and if $2<-\beta+\sqrt{\beta^2+2\alpha}\,.$
\end{enumerate}
\end{remarque}

\begin{remarque}
If X is a Poisson process with parameter $\lambda > 0$
 then the quadratic error is zero. In fact, the quantities
\begin{eqnarray*}
\kappa^\Lambda(z) &=& \lambda (\exp(z) -1)) \\
\rho_t(y,z) &=& \lambda t(\exp(y) - 1)(\exp(z) -1) \\
\gamma(z,t) &=& \frac{\exp(z) -1}{e-1}
\end{eqnarray*}
imply that $\beta(y,z,t) = 0$ for every $y,z \in \C, t \in [0,T]$.

%%
%\begin{eqnarray*}
%\beta(y,z,t)&=&(e^{-(y+z)}-e^{-y}-e^{-z}+1)\lambda t-\lambda t \left(\frac{e^{-(z+1)}-e^{-z}-e^{-1}+1}{e^{-2}-1-2(e^{-1}-1)}\right)\left(e^{-(y+1)}-e^{-y}-e^{-1}+1\right)\\ \\
%&&
%\lambda t \left((e^{-y}-1)(e^{-z}-1)-\left(\frac{(e^{-z}-1)(e^{-1}-1)(e^{-y}-1)(e^{-1}-1)}{(e^{-1}-1)^2}\right)\right)\\ \\
%&=&0\ .
%\end{eqnarray*}
Therefore $J_0(y,z,t)\equiv0$. In particular all the options of type~(\ref{FORM})  are perfectly hedgeable.
\end{remarque}

\subsection{Exponential of a Wiener integral driven by a Lévy process}
\label{ExpWiener}
%%%%%%%%%%%%%%%%%%%%%%%%%%%%%%%%%%%%%%%%%%%%%%%%%%%

Let $\Lambda$ be a Lévy process. The cumulant function  of $\Lambda_t$ 
equals $\kappa^\Lambda_t(z)=t\kappa_1^\Lambda(z)$ for $\kappa_1^\Lambda
=\kappa^\Lambda:D_\Lambda\rightarrow\mathbb{C}$. 
 We formulate the following hypothesis: 
\begin{Hyp}\label{HypWiener}
\begin{enumerate}
\item There is $r>0$ such that $r \in D_\Lambda$.
\item $\Lambda$ has no deterministic increments.
%\item $z\rightarrow \kappa^\Lambda(z)$ is continuous.
\end{enumerate}
\end{Hyp}

\begin{remarque}
\label{rem:kappa:gamma}
According to Lemma~\ref{L2} for every $\gamma>0$,  such that $\gamma \in D$, 
\begin{eqnarray}\kappa^\Lambda(2\gamma)-2\kappa^\Lambda(\gamma)>0\ .
\label{equa1}\end{eqnarray}
\end{remarque}

We consider the PII process $X_t=\int_0^t l_sd\Lambda_s$ where
$l: [0,T]\rightarrow [\varepsilon, r/2]$
 is a (deterministic continuous) function 
and $\varepsilon,r>0$ such that $2 \varepsilon \le r$.

\begin{remarque} \label{R5.14bis}
\begin{enumerate}
\item Lemma~\ref{lemme27} says that $D$ contains
$ D_{\varepsilon, r}:= \left\{x\in \mathbb{R}\,|\,\varepsilon x, 
\frac{r  x}{2}
  \in D_\Lambda\right\}+i\mathbb{R}\,,$ and $\kappa_t(z)=
\int_0^t\kappa^\Lambda(zl_s)ds\,.$ 

\item $\rho_t=\int_0^t\left(\kappa^\Lambda(2 l_s)-
2\kappa^\Lambda(l_s)\right)ds\,;$
\item $2\in D\,;$
%\item $(t,z)\longmapsto \kappa_t(z)$ is continuous.
$X$ is a PII semimartingale since 
$t \mapsto \kappa_t(2)$ has bounded variation, see Lemma \ref{P1}.
\item $1 \in D_{\varepsilon, r}$ since $0, r \in D_\Lambda$.
\end{enumerate}
\end{remarque}

\begin{propo}\label{propo411}
Assumptions \ref{HypD}  and \ref{Hyp1bis} are verified.
Moreover $D_{\varepsilon, r} \subset \shd$.
\end{propo}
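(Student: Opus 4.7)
The plan is to verify the two assumptions and the inclusion by direct computation, using the explicit expressions from Remark \ref{R5.14bis} together with the continuity of $\kappa^\Lambda$ on $D_\Lambda$ and the strict positivity statement from Remark \ref{rem:kappa:gamma}.

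First I would verify Assumption \ref{HypD}. Point $2 \in D$ is already stated in Remark \ref{R5.14bis}. For the absence of deterministic increments of $X$, fix $0 \le s < t \le T$: since $\Lambda$ has no deterministic increments by Assumption \ref{HypWiener}, and $l_u \ge \varepsilon > 0$ is a deterministic continuous function, one sees that $X_t - X_s = \int_s^t l_u\,d\Lambda_u$ is non-deterministic (otherwise, dividing by $l$ — more precisely, computing the characteristic function via Lemma \ref{lemme27} — would force $\Lambda_t-\Lambda_s$ to be deterministic, a contradiction).

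Next I would verify Assumption \ref{Hyp1bis}, i.e.\ $1 \in \shd$. From Remark \ref{R5.14bis} we have the densities (with respect to Lebesgue measure)
\begin{equation*}
\frac{d\kappa_u(1)}{du} = \kappa^\Lambda(l_u), \qquad
\frac{d\rho_u}{du} = \kappa^\Lambda(2 l_u) - 2\kappa^\Lambda(l_u),
\end{equation*}
so that $d\kappa_u(1)/d\rho_u = \kappa^\Lambda(l_u)/(\kappa^\Lambda(2 l_u)-2\kappa^\Lambda(l_u))$. Since $l_u \in [\varepsilon,r/2]$ with $\varepsilon, r/2 \in D_\Lambda$ (and $D_\Lambda$ is convex containing $0$), both $l_u$ and $2 l_u$ stay in the compact set $[0,r] \subset D_\Lambda$. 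By continuity of $\kappa^\Lambda$ (Proposition \ref{proposition38}) the numerator is bounded in modulus on $[0,T]$; by Remark \ref{rem:kappa:gamma} the denominator is strictly positive and, being continuous on the compact $[0,T]$, bounded away from zero. Hence $|d\kappa_u(1)/d\rho_u|$ is bounded, and since $\rho_T < \infty$, the integral \eqref{Eqshd} for $z=1$ is finite.

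Finally, for $D_{\varepsilon,r} \subset \shd$, fix $z = x + iy \in D_{\varepsilon,r}$, so $\varepsilon x, (r/2) x \in D_\Lambda$. By convexity of $D_\Lambda$, $x l_u \in D_\Lambda$ and $2 x l_u \in D_\Lambda$ for every $u \in [0,T]$, so that $z l_u \in D_\Lambda + i\R$, and in particular $z \in D$ by Remark \ref{R5.14bis}(1). Again from Remark \ref{R5.14bis},
\begin{equation*}
\frac{d\kappa_u(z)}{d\rho_u} = \frac{\kappa^\Lambda(z l_u)}{\kappa^\Lambda(2 l_u) - 2\kappa^\Lambda(l_u)}.
\end{equation*}
The set $\{z t : t \in [\varepsilon, r/2]\}$ is a compact line segment inside $D_\Lambda + i\R$, so continuity of $\kappa^\Lambda$ yields a uniform bound on $|\kappa^\Lambda(z l_u)|$; combined with the same lower bound on the denominator as above, $|d\kappa_u(z)/d\rho_u|$ is bounded on $[0,T]$, which gives $z \in \shd$. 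No real obstacle is expected; the only point requiring a line of care is justifying that the non-determinism of $\Lambda$ transfers to $X$, which is immediate from the strict positivity of $l$.
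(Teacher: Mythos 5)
Your proof is correct and, for the main content ($1\in\shd$ and $D_{\varepsilon,r}\subset\shd$), follows the paper's argument exactly: bound $\left|\frac{d\kappa_t(z)}{d\rho_t}\right|=\frac{|\kappa^\Lambda(zl_t)|}{\kappa^\Lambda(2l_t)-2\kappa^\Lambda(l_t)}$ uniformly in $t$ by the supremum of $|\kappa^\Lambda(zx)|$ over the compact segment $\{zx \,:\, x\in[\varepsilon,r/2]\}$ in the numerator and the strictly positive infimum of $\kappa^\Lambda(2x)-2\kappa^\Lambda(x)$ over $[\varepsilon,r/2]$ in the denominator. The only divergence is in checking that $X$ has no deterministic increments: the paper does not argue this directly, but shows that $t\mapsto\rho_t(1)=\int_0^t\left(\kappa^\Lambda(2l_s)-2\kappa^\Lambda(l_s)\right)ds$ is strictly increasing (its integrand is positive by \eqref{equa1}) and then invokes the equivalence of Lemma~\ref{L2}, whereas you sketch a direct transfer of non-determinism from $\Lambda$ to $X=\int_0^\cdot l_u\,d\Lambda_u$. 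Your route can be made rigorous --- if $\int_s^t l_u\,d\Lambda_u$ were deterministic, then $\int_s^t Re\big(\Psi_\Lambda(vl_u)\big)du=0$ with a nonpositive continuous integrand would force $Re\,\Psi_\Lambda\equiv 0$ on $\R$ (since $l\geq\varepsilon>0$) and hence $\Lambda_1$ deterministic --- but as written ("dividing by $l$") it is only a sketch, and the paper's detour through $\rho_t(1)$ is slightly more economical since the strict positivity of that density is needed anyway for your denominator bound.
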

\begin{proof}
\begin{enumerate}
\item Using Lemma~\ref{L2}, Assumption~\ref{HypD} is verified if we show that $t\mapsto \rho_t(1)=\kappa_t(2)-2\kappa_t(1)$ is strictly increasing. Now
\begin{eqnarray*}\kappa_t(2)-2\kappa_t(1)=\int_0^t\left(\kappa^\Lambda(2l_s)-2\kappa^\Lambda(l_s)\right)ds \ .\end{eqnarray*}
Inequality~(\ref{equa1}) and Lemma~\ref{L2} imply that $\forall s\in[0,T]$
\begin{eqnarray*}\kappa^\Lambda(2l_s)-2\kappa^\Lambda(l_s)>0\ .\end{eqnarray*}
In fact, $\Lambda$ has no deterministic increments. This shows Assumption \ref{HypD}.
\item 
%As for as Assumption \ref{Hyp1bis} is concerned, let $z\in D$. 
For $z \in D_{\varepsilon, r}$, by Remark \ref{R5.14bis} 1.
 we have
\begin{eqnarray*}\left|\frac{d\kappa_t(z)}{d\rho_t}\right|
=\left|\frac{\kappa^\Lambda(zl_t)}{\kappa^\Lambda(2l_t)-
2\kappa^\Lambda(l_t)}\right|\leq
 \frac{\sup_{x\in[\varepsilon,r]}\vert \kappa^\Lambda(xz) \vert}
{\inf_{x\in [\varepsilon, r/2]}\left
(\kappa^\Lambda(2x)-2\kappa^\Lambda(x)\right)}\ .\end{eqnarray*}
Previuous supremum and infimum exist since $x\mapsto \kappa^\Lambda(zx)$ is continuous and it attains a maximum and a minimum on a compact interval.
So, $D_{\varepsilon, r} \subset \shd$ and
 Assumption \ref{Hyp1bis} is verified because of Remark \ref{propo411} 4.
%Moreover, since $I$ is compact point 2. of Assumption \ref{Hyp1}
%would be verified for all cases.
\end{enumerate}
\end{proof}

\begin{remarque}\label{remarque412}
\begin{enumerate}
\item Point 1. of Assumption  \ref{Hyp1} is  also  verified if 
we show that $I  \subset D_{\varepsilon, r}$; in fact 
 $ D_{\varepsilon, r} \subset \shd$ and  $I_0 \cup (I_0 + 1) \subset I$.
\item From previous proof it follows that
\begin{eqnarray*}\frac{d\kappa_t(z)}
{d\rho_t}=\frac{\kappa^\Lambda(zl_t)}
{\kappa^\Lambda(2l_t)-2\kappa^\Lambda(l_t)}\ .\end{eqnarray*}
\item Since $I$ is compact 
and $t \mapsto \frac{d\kappa_t(z)}{d\rho_t}$ is continuous,
point 2. of Assumption \ref{Hyp1}
would be verified again for all cases provided that $I  \subset 
 D_{\varepsilon, r}$.
% In fact $I_0 \cup (I_0 + 1) \subset I$
\end{enumerate}
\end{remarque}
It remains to verify Assumption \ref{Hyp1} for the same class 
of options as in previous subsections. 
The only  point to establish will be to show 
\begin{equation} \label{E4000}
I \subset \{x \vert \varepsilon x, \frac{r x}{2} \in D_\Lambda \}.
\end{equation}
%in fact the whole Assumption will then hold since $I$ is compact.
%
\begin{example} \label{E517}

\begin{enumerate}
\item $H=(S_T-K)_+$. Similarly to the case where $X$ is a Lévy process, 
we take the second representation of the European Call.
% Assumption \ref{Hyp1}.1 is always satisfied. Using Remark
% \ref{remarque412},
% Assumption \ref{Hyp1}.2 is verified using very similar arguments to the proof of Proposition~\ref{propo411}.
In this case $I = [R,R+1]$ and \eqref{E4000} is verified.
\item $H=(K-S_T)_+$. Again, here $R<0$,
$I = [2R, R+1]$. \\
We only have to require that $D_\Lambda$
contains some negative values, which is the case for the three
examples introduced at Section \ref{LevyPC}.
Selecting $R$  in a proper way, \eqref{E4000}is fulfilled.
% so Assumption
% \ref{Hyp1}.1 is%  fulfilled if
% $2R\in D_{\varepsilon, R}$. A sufficient condition 
%for this  is to ask
%\begin{eqnarray}\frac{r}{2}R\in D_\Lambda\label{E40}\ .\end{eqnarray}
%Again Assumption \ref{Hyp1}.2 is verified.
%\item H digital option. Similar consideration as in Example 1. 
%A SUPPRIMER PEUT-ETRE
\end{enumerate}
\end{example}

We provide now the solution to the minimization problem
under Assumption \ref{HypWiener}.
. By Theorem~\ref{mainthm}, we have
\begin{eqnarray*}\lambda(s)=\frac{\kappa^\Lambda(l_s)}{\kappa^\Lambda(2l_s)-2\kappa^\Lambda(ls)}\ ,\end{eqnarray*}
\begin{eqnarray*}\gamma(z,s)=\frac{\kappa^\Lambda((z+1)l_s)-\kappa^\Lambda(zl_s)-\kappa^\Lambda(l_s)}{\kappa^\Lambda(2l_s)-2\kappa^\Lambda(l_s)}\ ,\end{eqnarray*}
\begin{eqnarray*}\eta(z,s)=\kappa^\Lambda(zl_s)-\frac{\kappa^\Lambda(l_s)}{\kappa^\Lambda(2l_s)-2\kappa^\Lambda(l_s)} \left(\kappa^\Lambda((z+1)l_s)-\kappa^\Lambda(zl_s)-\kappa^\Lambda(l_s)\right)\ ,\end{eqnarray*}
hence
\begin{eqnarray*}\eta(z,s)=\kappa^\Lambda(zl_s)-\lambda(s)\left(\kappa^\Lambda((z+1)l_s)-\kappa^\Lambda(zl_s)-\kappa^\Lambda(l_s)\right)\ ,\end{eqnarray*}
We obtain finally the optimal hedging
\begin{eqnarray*}\varphi_t=\xi_t+\frac{\lambda_t }{S_{t-}}(H_{t-}-V_0-\int_0^t\varphi_sdS_s)\ ,\end{eqnarray*}
where the processes $(H_t)$,$(\xi_t)$ are defined by
\begin{eqnarray*}H_t=\int_\mathbb{C} e^{\int_t^T\eta(z,ds)}S_t^z\Pi(dz)\ ,\end{eqnarray*}
\begin{eqnarray*}\xi_t=\int_\mathbb{C} \gamma(z,t)e^{\int_t^T\eta(z,ds)}S_{t-}^{z-1}\Pi(dz)\ .\end{eqnarray*}

\subsection{A toy example}
%%%%%%%%%%%%%%%%%%%%%%%%%%
Let $(W_t)$ be a standard Brownian motion, we consider $X_t=W_{\psi(t)}$, where $\psi:\mathbb{R}_+\rightarrow\mathbb{R}_+$ is a strictly increasing function, including the pathological case where $\psi^{'}_t=0$ a.e. We have
\begin{eqnarray*}\mathbb{E}[e^{z X_t}]=\mathbb{E}[e^{z W_{\psi(z)}}]=e^{\kappa_t(z)}=e^{\frac{z^2}{2}\psi(t)}\ ,\end{eqnarray*}
so that 
\begin{eqnarray*}\kappa_t(z)=\frac{z^2}{2}\psi(t)\ ,\quad
  \kappa_{t}(2)-2\kappa_{t}(1)=\psi(t) \quad \textrm{and}\quad \kappa_{t}(z+1)-\kappa_{t}(z)-\kappa_{t}(1)=z\psi(t)\ .\end{eqnarray*}
So
\begin{eqnarray*}\left\langle M,M\right\rangle_t=\int_0^t
  S_{s-}^2\psi(ds) \quad\textrm{and}\quad A_t=\int_0^t \frac{1}{2S_{s-}}d\left\langle M,M\right\rangle_s=\int_0^t \frac{1}{2}S_{s-}\psi(ds)\ ,\end{eqnarray*}
and the MVT process verifies 
\begin{eqnarray*}K_t=\int_0^t\frac{1}{4S^2_{s-}}d\left\langle M,M\right\rangle_s=\int_0^t\frac{1}{4}\psi(ds)=\frac{1}{4}\psi(t)\ .\end{eqnarray*}
All the conditions to apply Theorem~\ref{mainthm} are satisfied so the function $\gamma(z,t)$ is equal to the Radon-Nykodim derivative of $\kappa_t(z+1)-\kappa_t(z)-\kappa_t(1)$ with respect to $\kappa_t(2)-2\kappa_t(1)$, so
\begin{eqnarray*}\gamma(z,t)=z\ ,\quad
  \eta(z,t)=\frac{\psi(t)}{2}(z^2-z) 
\quad \textrm{and}\quad \lambda(t) =\lambda=\frac{1}{2}\ .\end{eqnarray*}
Hence we can compute the variance-optimal hedging strategy $\varphi$ and the variance-optimal initial capital $V_0$ in this case
\begin{eqnarray*}\varphi_t=\xi_t+\frac{1}{2S_{t-}}(H_{t-}-V_0-\int_0^t\varphi_sdS_s) \end{eqnarray*}
and
\begin{eqnarray*}H_t=\int_\mathbb{C} e^{\int_t^T\eta(z,ds)}S_t^z\Pi(dz)=\int_\mathbb{C}\exp\left\{\frac{z^2-z}{2}(\Psi(T)-\Psi(t))\right\}S_t^z\Pi(dz)\end{eqnarray*}

\begin{eqnarray*}\xi_t=\int_\mathbb{C} \gamma(z,t)e^{\int_t^T\eta(z,ds)}S_{t-}^{z-1}\Pi(dz)=\int_\mathbb{C}z\exp\left\{\frac{z^2-z}{2}(\Psi(T)-\Psi(t))\right\}S_{t-}^{z-1}\Pi(dz)\end{eqnarray*}

\begin{remarque}
Calculating $\beta(y,z,t)$ of the quadratic error section, we find $\beta\equiv 0$. Therefore here also the quadratic error is zero. This confirms the fact that the market is \textbf{complete}, at least for the considered class of options.
\end{remarque}

%%%%%%%%%%%%%%%%%%%%%%%%%%%%%%%%%%%%%%%
\section{Application to Electricity}
%%%%%%%%%%%%%%%%%%%%%%%%%%%%%%%%%%%%%%%
\label{sec:elec}

%ALLEGER LA PARTIE INTRODUCTIVE
%
%EVENTUELLEMENT INSERER LES SIMULATIONS A TEMPS CONTINU
%COMPARER AVEC LA COUVERTURE DELTA-HEDGING GAUSSIENS
%
%ANNONCER LE COMPANION PAPER

\subsection{Hedging electricity derivatives with forward contacts}
%%%%%%%%%%%%%%%%%%%%%%%%%%%%%%%%%%%%%%%%%%%%%%%%%%%%%%ù
%
Electricity markets are composed by the Spot market setting prices for each delivery hour of the next day and the forward or futures market setting prices for more distant delivery periods. For simplicity, we will assume that interest rates are deterministic and zero so that futures prices are equivalent to forward prices. 
Forward prices given by the market correspond to a fixed price of one MWh of electricity for delivery in a given future period, typically a month, a quarter or a year. Hence, the corresponding term contracts are in fact  swaps (i.e. forward contracts with delivery over a period) but are improperly named forward.  However, the strong assumption that there are tradable forward contracts for all future time points $T_d\geq 0$ is usual and will be assumed here. \\
Because of non-storability of electricity, no dynamic hedging strategy can be performed on the spot market. Hedging instruments for electricity derivatives are then futures or forward contracts. The value of a forward contract offering the fixed price $F_0^{T_d}$ at time $0$ for delivery of 1MWh at time $T_d$ is by definition of the forward price,  $S^{0,T_d}_0=0$. Indeed, there is no cost to enter at time $0$ the forward contract with the current market forward price $F_0^{T_d}$.  
Then, the value of the same forward contract $S^{0,T_d}$ at time $t\in
[0,T_d]$ is deduced by an argument of Absence of (static) Arbitrage as 
$S^{0,T_d}_t = e^{-r(T-t)}(F_t^{T_d} - F_0^{T_d})$. Hence, the dynamic of the hedging instrument $(S^{0,T_d}_t)_{0\leq t\leq T_d}$ is directly related (for deterministic interest rates) to the dynamic of forward prices $(F_t^{T_d})_{0\leq t\leq T_d}$.   
Consequently, in the sequel we will focus on the dynamic of forward prices.

\subsection{Electricity price models for pricing and hedging application}
%%%%%%%%%%%%%%%%%%%%%%%%%%%%%%%%%%%%%%%%%%%%%%%%%%%%%%
Observing  market data, one can notice two main stylised features of electricity spot and forward prices: 
\begin{itemize}
	\item Volatility term structure of forward prices: the volatility increases when the time to maturity decreases;
	\item Non-Gaussianity of log-returns: log-returns can be considered as Gaussian for long-term contracts but they are clearly leptokurtic for short-term contratcs with huge spikes on the Spot market. 
\end{itemize}
Hence, a challenge is to be able to describe with a single model, both the spikes on the short term and the volatility term structure of the forward curve. 
One reasonable attempt to do so is to consider the exponential Lévy factor model, proposed by Benth and Benth~\cite{BS03}, or~\cite{oudjaneCollet}. The forward price given at time $t$ for delivery at time $T_d\geq t$, denoted $F_t^{T_d}$ is then modeled by a $p$-factors model, such that 
\begin{equation}
\label{eq:elec}
F_t^{T_d}=F_0^{T_d}\exp(m_t^{T_d}+\sum_{k=1}^pX_t^{k,{T_d}})\ ,\quad \textrm{for all}\ t\in [0,{T_d}]\ , \textrm{where}
\end{equation}
\begin{itemize}
	\item $(m_t^{T_d})_{0\leq t\leq {T_d}}$ is a real deterministic trend; 
	\item For any $k=1,\cdots p$, $(X_t^{k,{T_d}})_{0\leq t\leq {T_d}}$ is such
      that $X_t^{k,{T_d}}=\int_0^t \sigma_k e^{-\lambda_k ({T_d}-s)}
      d\Lambda^k_s$,
 where $\Lambda=(\Lambda^1,\cdots,\Lambda ^p)$ is a Lévy process 
on $\mathbb{R}^d$, with
 $\mathbb{E}[\Lambda^k_1]=0$ and $Var[\Lambda^k_1]=1$; 
	\item $\sigma_k>0\,,\,\lambda_k \geq0\ ,$ are called respectively the \textit{volatilities} and the \textit{mean-reverting rates}. 
\end{itemize}
Hence,  forward prices are given as exponentials of PII with \textit{non-stationary increments}. Then, the spot model is derived by setting $S_{T_d}=F_{T_d}^{T_d}$ and reduces to the exponential of a sum of possibly non-Gaussian Ornstein-Uhlenbeck processes. 
In practice, we consider the case of a one or a two factors model ($p=1$
 or $2$), where the first factor $X^1$ is a non-Gaussian PII and the second factor $X^2$ is a Brownian motion with $\sigma_1\gg \sigma_2$. 
Notice that this kind of model was originally developed and studied in
details for interest rates in~\cite{Raible98}, as an extension of
 the Heath-Jarrow-Morton model where the Brownian motion has
 been replaced by a general Lévy process. 
Recent contributions  in the subject are \cite{tappe, ruediger}.
\\
%%%%%%%%%%%%%%%%%%%%%%%%%%%%%%%%%%%%%%%%%%%%%%%%%%%%%%%%%%%%%%%%%%%%%%%%%%%%
%The paper of Hubaleck et al. considers the variance optimal hedging problem in the case where the underlying claim is modellized by an exponential Lévy process. The aim of this paper is to generalize their results to the case of exponentials of processes with independent increments. This generalization is motivated by the problem of pricing and hedging electricty derivatives.  
Of course, this modeling procedure~(\ref{eq:elec}), implies
incompleteness of the market. Hence, if we aim at pricing and hedging a
European call on a forward with maturity $T\leq T_d$, it won't be possible,
in general, to hedge perfectly the payoff $(F^{T_d}_T-K)_+$ with a hedging
portfolio of forward contracts.
% with maturity $T_d$, with values $(S_t^{0,T_d})_{0\leq t\leq T_d}$. 
%(we recall that non-storability of electricty implies that it is not
%possible to hedge on the spot market, so that hedging instruments are
%composed by forward). 
Then, a natural approach could consist in looking for the variance
optimal price and hedging portfolio. In this framework, the results of
Section~\ref{sec:expPII} generalizing the results of Hubalek \& al in~\cite{Ka06} to the case of non stationary PII can be useful. Similarly, some arithmetic models proposed in~\cite{Benth-Kallsen} for  electricity prices, consists of replacing the right-hand side of~(\ref{eq:elec}) by its logarithm. Hence, with this kind of models the results of Section~\ref{sec:FSPII} can also be useful.

\subsection{The non Gaussian two factors model}
%%%%%%%%%%%%%%%%%%%%%%%%%%%%%%%%%%%%%%%%%%%%%%%
%NADIA:  JUSTIFICATION DU MODELE?

To simplify let us forget the upperscript $T_d$ denoting the delivery period (since we will consider a fixed delivery period).
 We suppose that the forward price $F$ follows the two factors model
\begin{equation}
\label{eq:elec:2F}
F_t=F_0\exp(m_t+X^1_t+X^2_t)\ ,\quad \textrm{for all}\ t\in [0,T_d]\ ,
 \textrm{where}
\end{equation}
\begin{itemize}
\item $m$ is a real deterministic trend starting at $0$. 
It is supposed to be absolutely continuous with respect to Lebesgue;
	\item  $X^1_t=\int_0^t \sigma_s e^{-\lambda (T_d-u)} d\Lambda_u$,
      where $\Lambda$ is a 
 Lévy process on
 $\mathbb{R}$ with $\Lambda$ following a Normal Inverse Gaussian (NIG) 
distribution
% which was introduced by Bandorff--Nielsen~\cite{bnh} or
or a Variance Gamma  (VG) distribution. Moreover, we will assume that
  $\mathbb{E}[\Lambda_1]=0$ and $Var[\Lambda_1]=1$; 
	\item $X^2=\sigma_l W$ where $W$ is  a standard Brownian motion 
on $\mathbb{R}$;
\item $\Lambda$ and $W$ are independent.
	\item $\sigma_s$ and $\sigma_l$ standing respectively for the short-term volatilty and long-term volatility. 
\end{itemize}
%
%Notice that if $\Lambda$ a NIG or VG Lévy process, then  it is automatically independent of the Brownian motion $W$ since it is a purely discontinuous process. 

\subsection{Verification of the assumptions}

The result below helps to extend Theorem \ref{mainthm} to the case where
$X$ is a finite sum of independent PII semimartingales,
each one verifying Assumptions \ref{HypD},  \ref{Hyp1bis} and \ref{Hyp1} 
for a given payoff $H = f(s_0 e^{X_T})$.
\begin{lemme}\label{LSum}
Let $X^1, X^2$ be two independent PII semimartingales with cumulant
generating functions $\kappa^i$ and related domains  
$D^i, \shd^i, i = 1,2$ characterized in Remark \ref{CumGenPAI}
and \eqref{Eqshd}. Let $f: \C \rightarrow \C$ of the form~(\ref{FORM}). \\
For $X = X^1 + X^2$ with related domains $D, \shd$
and cumulant generating function $\kappa$, we have  the following.
\begin{enumerate}
\item $D = D^1 \cap D^2$.
\item $\shd^1 \cap \shd^2 \subset \shd$.
\item  If $X^1, X^2$ verify
Assumptions \ref{HypD},  \ref{Hyp1bis} and \ref{Hyp1},
then $X$ has the same property.
\end{enumerate}
\end{lemme}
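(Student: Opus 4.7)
The plan is to exploit independence to reduce all three statements to one-dimensional computations. First I would establish the identity $\kappa_t(z)=\kappa^1_t(z)+\kappa^2_t(z)$ whenever both sides make sense. Independence of $X^1$ and $X^2$ gives, for $z \in \mathbb{C}$,
\begin{equation*}
\mathbb{E}[e^{\mathrm{Re}(z)X_t}] = \mathbb{E}[e^{\mathrm{Re}(z)X^1_t}]\,\mathbb{E}[e^{\mathrm{Re}(z)X^2_t}],
\end{equation*}
and since each factor is strictly positive, the left-hand side is finite for all $t\in[0,T]$ iff both factors are, which proves $D=D^1\cap D^2$. On $D$, the multiplicativity of characteristic-like functionals then yields $\kappa_t=\kappa^1_t+\kappa^2_t$ (up to the usual $2\pi i$ convention fixed in Remark \ref{rem:moment}), and consequently $\rho_t=\rho^1_t+\rho^2_t$ where $\rho^i_t=\kappa^i_t(2)-2\kappa^i_t(1)$.

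For point 2, the natural idea is to compare the Radon--Nikodym densities. Since $\rho=\rho^1+\rho^2$ with both summands non-negative, each $\rho^i$ is absolutely continuous with respect to $\rho$ and the densities $\lambda^i:=d\rho^i/d\rho$ satisfy $0\le\lambda^i\le 1$ and $\lambda^1+\lambda^2=1$. Given $z\in\shd^1\cap\shd^2\subset D^1\cap D^2=D$, Proposition \ref{remarque314Bis} applied to each $X^i$ gives $d\kappa^i(z)\ll d\rho^i\ll d\rho$, so the chain rule yields
\begin{equation*}
\frac{d\kappa_t(z)}{d\rho_t}=g^1(t)\lambda^1(t)+g^2(t)\lambda^2(t),\qquad g^i(t):=\frac{d\kappa^i_t(z)}{d\rho^i_t}.
\end{equation*}
Using $(a+b)^2\le 2(a^2+b^2)$ together with $(\lambda^i)^2\le\lambda^i$, one obtains
\begin{equation*}
\int_0^T\!\Bigl|\tfrac{d\kappa_u(z)}{d\rho_u}\Bigr|^2 d\rho_u \le 2\int_0^T|g^1|^2\lambda^1\,d\rho + 2\int_0^T|g^2|^2\lambda^2\,d\rho = 2\int_0^T|g^1|^2 d\rho^1 + 2\int_0^T|g^2|^2 d\rho^2,
\end{equation*}
which is finite by definition of $\shd^1$ and $\shd^2$. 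Hence $z\in\shd$.

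Point 3 then follows routinely. Assumption \ref{HypD}(1) transfers because independence of $X^1_t-X^1_s$ and $X^2_t-X^2_s$ forces the sum to have strictly positive variance whenever one of the summands does; Assumption \ref{HypD}(2) follows from $D=D^1\cap D^2$ and $2\in D^1\cap D^2$. Assumption \ref{Hyp1bis} is an immediate application of point 2. For Assumption \ref{Hyp1}, the inclusion $I\subset D$ and membership $z,z+1\in\shd$ for $z\in\mathrm{supp}\,\Pi$ follow as above; the essential-sup bound
\begin{equation*}
\sup_{x\in I\cup\{1\}}\Bigl\|\tfrac{d\kappa_t(x)}{d\rho_t}\Bigr\|_\infty<\infty
\end{equation*}
follows from the decomposition $\frac{d\kappa_t(x)}{d\rho_t}=g^1(t)\lambda^1(t)+g^2(t)\lambda^2(t)$, $|\lambda^i|\le 1$ and the corresponding bounds for $X^1$ and $X^2$.

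The main (mild) obstacle is point 2, because one must justify the Radon--Nikodym chain rule $g^i\lambda^i=d\kappa^i(z)/d\rho$ with care: this requires first checking that $d\kappa^i(z)\ll d\rho^i$ (a consequence of Proposition \ref{remarque314Bis} applied to each $X^i$ under Assumption \ref{HypD}) and then that $d\rho^i\ll d\rho$, which is immediate since $\rho^i$ is a non-negative summand of $\rho$.
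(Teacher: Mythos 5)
Your proof is correct and follows essentially the same route as the paper: additivity of cumulant generating functions under independence gives $D=D^1\cap D^2$ and $\rho=\rho^1+\rho^2$, and your estimate $|g^1\lambda^1+g^2\lambda^2|^2\le 2|g^1|^2\lambda^1+2|g^2|^2\lambda^2$ is exactly the paper's chain of inequalities for point 2. The only difference is that you spell out point 3, which the paper dismisses as following "by inspection"; your verification there is accurate.
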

\begin{proof} 
Since $X^1, X^2$ are independent and taking into account
Remark \ref{CumGenPAI} we obtain 1. and 
$$  \kappa_t(z) = \kappa_t^1(z) + \kappa^2(z), \ \forall z \in D.$$
 We denote by $\rho^i, i = 1,2$, the reference variance
 measures defined
in Remark \ref{remarque313}. Clearly 
$\rho = \rho^1 + \rho^2$ and  $d\rho^i \ll  d\rho$
with $\Vert \frac{d\rho^i}{d\rho} \Vert_{\infty} \le 1$.\\
If $z \in \shd^1 \cap \shd^2$, we can write 
\begin{eqnarray*}
\int_0^T\left|\frac{d\kappa_t(z)}{d\rho_t}\right|^2d\rho_t&\leq&2\int_0^T\left|\frac{d\kappa^1_t(z)}{d\rho^1_t}\frac{d\rho^1_t}{d\rho_t}\right|^2d\rho_t 
+2\int_0^T\left|\frac{d\kappa^2_t(z)}{d\rho^2_t}\frac{d\rho^2_t}{d\rho_t}\right|^2d\rho_t \\
&=&2\int_0^T\left|\frac{d\kappa^1_t(z)}{d\rho^1_t}\right|^2\frac{d\rho^1_t}{d\rho_t}d\rho^1_t 
+2\int_0^T\left|\frac{d\kappa^2_t(z)}{d\rho^2_t}\right|^2\frac{d\rho^2_t}{d\rho_t}d\rho^2_t \\
&\leq& 2\left(\int_0^T\left|\frac{d\kappa^1_t(z)}{d\rho^1_t}\right|^2d\rho^1_t+\int_0^T\left|\frac{d\kappa^2_t(z)}{d\rho^2_t}\right|^2d\rho^2_t\right)\ .
\end{eqnarray*}
This concludes the proof of $\shd^1 \cap \shd^2 \subset \shd$ and
therefore of the of
Point 2. \\
Finally Point 3. follows then by inspection.
\end{proof}

%%%%%%%%%%%%%%%%%%%%%%%%%%%%%%%%%%%%%%%%%%%%%%%%%
With the two factors model, the forward price $F$  is then given
 as the exponential of a PII, $X$, such that for all $t\in [0,{T_d}]$, 
\begin{equation}
\label{eq:X}
X_t=m_t+X^1_t+X^2_t=m_t+\sigma_s 
\int_{0}^t e^{-\lambda ({T_d}-u)}d\Lambda_u+\sigma_{l}W_t\ .
\end{equation}
%Assume that  $\sigma_s<{\displaystyle \frac{C}{2}}$, where 
%$\mathbb{E}[\exp(C \Lambda_1)]<\infty$. 
For this model, we formulate the following assumption.
\begin{Hyp}  \label{AssuElec}
\begin{enumerate}
\item $ 2 \sigma_s \in D_\Lambda$.
\item If $\sigma_l = 0$, we require $\Lambda$ not to have deterministic
increments.
\item We define $\varepsilon = \sigma_s e^{-\lambda T_d},
 \quad r = 2 \sigma_s$.\\
$f: \C \rightarrow \C$ is of the type~(\ref{FORM})
 fulfilling \eqref{E4000}.
\end{enumerate}
\end{Hyp}
\begin{propo} \label{PElec61}
%We define $\varepsilon = \sigma_s e^{-\lambda T_d}, \quad r = 2
%\sigma_s$.
\begin{enumerate} 
\item The cumulant generating function of $X$ defined by~(\ref{eq:X}), 
 $\kappa:[0,{T_d}]\times D\rightarrow \mathbb{C}$ is such that for all $z \in D_{\varepsilon, r} : =  \left\{x\in \mathbb{R}\,|\,
 x\sigma_s\in D_\Lambda \right \} + i \R $, then  for all $t\in [0,T_d]$, 
% where $D=\left\{x\in \mathbb{R}\,|\, x\sigma_s\in D_\Lambda\right\}$ 
%such that for all $t\in [0,T]$, 
\begin{equation} \label{99}
\kappa_t(z)=z m_t+\frac{z^2\sigma^2_{l}t}{2}+\int_0^t \kappa^\Lambda(z\sigma_s e^{-\lambda ({T_d}-u)})du \ .
\end{equation}
In particular for fixed $z \in D_{\varepsilon, r}$,
 $t \mapsto \kappa_t(z)$ is absolutely continuous with respect to
 Lebesgue  measure.
% Moreover, the measure $\rho_{dt}$ defined by 
%\begin{eqnarray}
%\rho_{dt}= \kappa_{dt}(2)-2\kappa_{dt}(1)=\sigma^2_{l}dt+\big [\kappa^\Lambda(2\sigma_{s}e^{-\lambda ({T_d}-t)})-2\kappa^\Lambda(\sigma_{s}e^{-\lambda ({T_d}- t)}\big ]\,dt \ ,
%\label{E712}\end{eqnarray}
%is strictly positive. 
\item  Assumptions \ref{HypD}, \ref{Hyp1bis} and  
\ref{Hyp1} are verified.
\end{enumerate}
\end{propo}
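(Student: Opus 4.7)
For Part 1, I will compute $\mathbb{E}[e^{zX_t}]$ directly by exploiting the independence of the three building blocks appearing in \eqref{eq:X}. The deterministic drift contributes $zm_t$; the Brownian factor yields $\tfrac{z^2\sigma_l^2 t}{2}$ via the Gaussian Laplace transform; and for the Wiener integral $\sigma_s\int_0^t e^{-\lambda(T_d-u)}d\Lambda_u$ I will apply point 2 of Lemma \ref{lemme27} with the bounded continuous integrand $\gamma_u=\sigma_s e^{-\lambda(T_d-u)}$. The only subtle point is checking the domain hypothesis: for $z\in D_{\varepsilon,r}$ one needs $\operatorname{Re}(z)\sigma_s e^{-\lambda(T_d-u)}\in D_\Lambda$ for every $u\in[0,t]$, and this follows because $D_\Lambda\cap\mathbb R$ is convex containing $0$ and $\operatorname{Re}(z)\sigma_s$ (Proposition \ref{LKdecoKappa}), so it contains the intermediate value $\operatorname{Re}(z)\sigma_s e^{-\lambda(T_d-u)}$. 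Absolute continuity of $t\mapsto \kappa_t(z)$ is then immediate, since $m_t$ is absolutely continuous by assumption, $\tfrac{z^2\sigma_l^2}{2}t$ is linear, and the third summand is a Lebesgue integral of a bounded continuous integrand.

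For Part 2, I will decompose $X=X^1+X^2$ where
\[
X^1_t:= m_t+\sigma_s\int_0^t e^{-\lambda(T_d-u)}d\Lambda_u,\qquad X^2_t:=\sigma_l W_t,
\]
are two independent PII semimartingales, and then apply Lemma \ref{LSum}. The strategy is to verify Assumptions \ref{HypD}, \ref{Hyp1bis}, \ref{Hyp1} for each of $X^1$ and $X^2$ separately.

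For $X^1$, the situation matches exactly Section \ref{ExpWiener} with $l_t=\sigma_s e^{-\lambda(T_d-t)}$ taking values in $[\sigma_s e^{-\lambda T_d},\sigma_s]=[\varepsilon,r/2]$. Adding the absolutely continuous deterministic drift $m_t$ does not alter the reference measure $\rho^1_t$ (as $\rho^1_t$ only involves $\kappa^1_t(2)-2\kappa^1_t(1)$ whose drift contributions cancel) and leaves the semimartingale decomposition intact, so Proposition \ref{propo411} yields Assumptions \ref{HypD} and \ref{Hyp1bis} and the inclusion $D_{\varepsilon,r}\subset \mathscr{D}^1$. Assumption \ref{Hyp1}.1 for $X^1$ follows from Assumption \ref{AssuElec}.3 combined with Remark \ref{remarque412}.1, and Assumption \ref{Hyp1}.2 follows from the continuity statement in Remark \ref{remarque412}.3, using compactness of $I\cup\{1\}$.

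For $X^2=\sigma_l W$, the cumulant generating function is $\kappa^2_t(z)=\tfrac{z^2\sigma_l^2 t}{2}$ with $D^2=\mathbb C$, and $\rho^2_t=\sigma_l^2 t$ is strictly increasing as soon as $\sigma_l>0$, giving $\tfrac{d\kappa^2_t(z)}{d\rho^2_t}=\tfrac{z^2}{2}$, bounded on compact sets; all three assumptions then hold trivially. Combining via Lemma \ref{LSum}.3 proves the result when $\sigma_l>0$. The degenerate case $\sigma_l=0$ is treated separately by taking $X=X^1$ alone: Assumption \ref{AssuElec}.2 then supplies the non-deterministic increment property needed in \ref{HypD}, and the other assumptions are inherited directly from the analysis of $X^1$. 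The main technical point to watch is that the boundedness of $d\kappa_t/d\rho_t$ for $X$ is obtained from the inequality $|d\kappa_t/d\rho_t|\le |d\kappa^1_t/d\rho^1_t|+|d\kappa^2_t/d\rho^2_t|$, which relies on $d\rho^i/d\rho\le 1$, an observation already used in the proof of Lemma \ref{LSum}.
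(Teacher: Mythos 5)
Your proof is correct and follows essentially the same route as the paper: decompose $X$ into independent PII factors, treat the Wiener integral of $\Lambda$ via Section~\ref{ExpWiener} (Proposition~\ref{propo411}, Remark~\ref{remarque412}, condition~\eqref{E4000}), and combine with Lemma~\ref{LSum}. The only differences are minor --- you attach the drift $m$ to the L\'evy factor rather than to the Brownian one as the paper does, you spell out the additivity computation behind~\eqref{99}, and you handle the degenerate case $\sigma_l=0$ explicitly, which the paper's direct appeal to Lemma~\ref{LSum} silently glosses over.
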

\begin{proof}
We set $\tilde X^2 = m + X^2$. We observe that 
$$ D^2 = \shd^2 = \C, \quad \kappa^2_t(z) = \exp(z m_t + z^2 \sigma_l^2 
\frac{t}{2}). $$
We recall that $\Lambda$ and $W$ are independent so that
$\tilde X^2$ and $X^1$ are independent.\\
$X^1$ is a process of the type studied at Section \ref{ExpWiener}.
According to Proposition \ref{propo411},
Remark \ref{remarque412} and  \eqref{E4000}
it follows that Assumptions \ref{HypD}, \ref{Hyp1bis} and  
\ref{Hyp1}
are verified for $X^1$. \\
Both statements 1. and 2. are now a consequence of 
Lemma \ref{LSum}. 
\end{proof}
\begin{remarque} \label{RExample}
For examples of $f$ fulfilling \eqref{E4000}, we refer to Example 
\ref{E517}.
\end{remarque}

The solution to the mean-variance problem is provided by Theorem~\ref{mainthm}.
\begin{thm}
The variance-optimal capital $V_0$ and the variance-optimal hedging strategy $\varphi$, solution of the minimization problem $(\ref{problem2})$, are given by
\begin{eqnarray}V_0=H_0\end{eqnarray}
and the implicit expression
\begin{eqnarray}\varphi_t=\xi_t+\frac{\lambda_t
  }{S_{t-}}(H_{t-}-V_0-\int_0^t\varphi_sdS_s),\end{eqnarray}
where 
%where  $\widetilde{z}_t:=\sigma_s e^{-\lambda ({T_d}-t)}$
the processes $(H_t)$,$(\xi_t)$ and $(\lambda_t)$ are defined 
as follows:
\begin{eqnarray*}
\widetilde{z}_t :&=&\sigma_s e^{-\lambda ({T_d}-t)},\\
\gamma(z,t) :&=&\frac{z\sigma^2_l+\kappa^\Lambda
((z+1)\widetilde{z})-\kappa^\Lambda(z\widetilde{z})-
\kappa^\Lambda(\widetilde{z})}
{\sigma^2_l+\kappa^\Lambda(2\widetilde{z})-
2\kappa^\Lambda(\widetilde{z})}, \\
\eta(z,t):&=&\left [ zm_t+\frac{z^2\sigma^2_l}{2}+
\kappa^\Lambda(z\widetilde{z})
-\gamma(z,t) \big
(m_t+\frac{\sigma^2_l}{2}+\kappa^\Lambda(\widetilde{z})\big )
\right]\,dt\ ,
\end{eqnarray*}
\begin{eqnarray*}\lambda_t &=&\frac{m_t+\frac{\sigma_l^2}{2}+
\kappa^\Lambda(\widetilde{z})}{\sigma^2_l+\kappa^\Lambda(2\widetilde{z})-2\kappa^\Lambda(\widetilde{z})}, \\
H_t &=&\int_\mathbb{C} e^{\int_t^T\eta(z,ds)}S_t^z\Pi(dz), \\
\xi_t &=& \int_\mathbb{C} \gamma(z,t)e^{\int_t^T\eta(z,ds)}
S_{t-}^{z-1}\Pi(dz)\ .\end{eqnarray*}
The optimal initial capital is unique. The optimal hedging strategy $\varphi_t(\omega)$ is unique up to some $(P(d\omega)\otimes dt)$-null set.
\end{thm}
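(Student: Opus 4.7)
The plan is to apply Theorem~\ref{mainthm} directly to the two-factors model, where the role of $X$ is played by the process defined in~\eqref{eq:X}. First, I would invoke Proposition~\ref{PElec61}, which already verifies that Assumptions~\ref{HypD}, \ref{Hyp1bis} and~\ref{Hyp1} hold for this $X$ under Assumption~\ref{AssuElec} (with $f$ of type~\eqref{FORM} satisfying the appropriate integrability), so the hypotheses of Theorem~\ref{mainthm} are in force. The uniqueness claims then follow immediately from Theorem~\ref{thm128}.

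The remaining task is to unfold the abstract formulas $\gamma(z,t)=d\rho_t(z,1)/d\rho_t$, $\lambda_t=d\kappa_t(1)/d\rho_t$ and $\eta(z,dt)=\kappa_{dt}(z)-\gamma(z,t)\kappa_{dt}(1)$ of Theorem~\ref{mainthm} using the explicit form~\eqref{99} of the cumulant generating function
\[ \kappa_t(z)=z m_t+\tfrac{z^2 \sigma_l^2 t}{2}+\int_0^t\kappa^\Lambda\!\big(z\,\widetilde z_u\big)\,du,\qquad \widetilde z_u:=\sigma_s e^{-\lambda(T_d-u)}. \]
Since $m$ is absolutely continuous with respect to Lebesgue measure by hypothesis, and so is $t\mapsto\kappa_t(z)$ for every $z\in D_{\varepsilon,r}$, all Radon--Nikodym derivatives reduce to ratios of Lebesgue densities. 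A direct computation gives
\[ \kappa_{dt}(2)-2\kappa_{dt}(1)=\big[\sigma_l^2+\kappa^\Lambda(2\widetilde z_t)-2\kappa^\Lambda(\widetilde z_t)\big]\,dt, \]
which is the denominator of $\gamma$ and $\lambda$, and analogously
\[ \kappa_{dt}(z+1)-\kappa_{dt}(z)-\kappa_{dt}(1)=\big[z\sigma_l^2+\kappa^\Lambda((z+1)\widetilde z_t)-\kappa^\Lambda(z\widetilde z_t)-\kappa^\Lambda(\widetilde z_t)\big]\,dt, \]
which yields the stated $\gamma(z,t)$ and, after substitution into $\eta(z,dt)=\kappa_{dt}(z)-\gamma(z,t)\kappa_{dt}(1)$, the announced expression for $\eta$. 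The density $\lambda_t$ is obtained in the same manner from $\kappa_{dt}(1)$. Once these three quantities are identified, the formulas for $H_t$, $\xi_t$ and $\varphi_t$ are nothing but specializations of the corresponding formulas in Theorem~\ref{mainthm}.

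Since the model's cumulant function is a sum of three explicit terms (the drift, the Brownian contribution and the L\'evy-Wiener integral contribution), no real obstacle arises: the argument is a straightforward verification that the Lebesgue densities computed above coincide with the ratios defining $\gamma, \lambda, \eta$. The only place requiring minor care is the interpretation of $m_t$ appearing inside $\lambda_t$ and $\eta(z,t)$, which should be read as the Lebesgue density $dm_t/dt$ of the absolutely continuous trend $m$, so that the resulting $\eta(z,dt)$ and $\lambda_t\,d\rho_t$ are genuine measures with respect to which the integrals $\int_t^T\eta(z,ds)$ in $H_t,\xi_t$ are well defined.
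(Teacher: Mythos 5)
Your proposal is correct and follows exactly the route the paper intends: verify the hypotheses via Proposition~\ref{PElec61}, invoke Theorem~\ref{mainthm} (and Theorem~\ref{thm128} for uniqueness), and unfold $\gamma$, $\lambda$, $\eta$ as Lebesgue densities computed from the explicit cumulant function~\eqref{99}. Your remark that the $m_t$ appearing in $\lambda_t$ and $\eta(z,t)$ must be read as the Lebesgue density of the absolutely continuous trend is a correct and worthwhile clarification of the statement's notation.
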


\begin{remarque} \label{R65Levy}
%%%%%%%%%%%%%%%%%%%%%%
Previous formulae are practically exploitable numerically.
 The last condition to be checked is 
\begin{eqnarray} \label{E731}
2 \sigma_s \in D_\Lambda.
\end{eqnarray}
%where $\mathbb{E}[\exp(C \Lambda_1)]<\infty$. We make use Remark 
%\ref{remark47Bis}
%
In our classical examples, this is always verified.
\begin{enumerate}
\item $\Lambda_1$ is a Normal Inverse Gaussian random variable. If $ \sigma_s\leq \frac{\alpha-\beta}{2}$ then $(\ref{E731})$ is verified.
\item $\Lambda_1$ is a Variance Gamma random variable then $(\ref{E731})$ is verified. if for instance $\sigma_s\leq\frac{-\beta+\sqrt{\beta^2+2\alpha}}{2}\ .$
\end{enumerate}
\end{remarque}
%\newpage

%%%%%%%%%%%%%%%%%%%%%%%%%%
\section{Simulations}
%%%%%%%%%%%%%%%%%%%%%%%%%%

\subsection{Exponential Lévy}

We consider the problem of pricing a European call, with payoff $(S_T-K)_+$, where the underlying process $S$ is given as the exponential of a NIG Lévy process i.e. for all $t\in [0,T]$, 
$$
S_t=e^{X_t}\ ,\quad \textrm{where $X$ is a Lévy process with } \  X_1\,\sim\, NIG(\alpha,\beta,\delta,\mu)\ .
$$
The time unit is the year and the interest rate is zero in all our simulations. The initial value of the underlying is $S_0=100$ Euros.  The maturity of the option is $T=0.25$ i.e. three months from now. 
Five different sets of parameters for the NIG distribution have been considered, going from the case of \textit{almost Gaussian} returns corresponding to standard equities, 
%($C=2$)
 to the case of \textit {highly non Gaussian} returns. 
 % corresponding to short-term electricity forwards. 
%$(C=0.08)$. 
The standard set of parameters 
%$(C=1)$ 
is  estimated on the \textit{Month-ahead} \textit{base} forward prices of the French Power market in 2007:  
\begin{equation}
\label{eq:para;levy}
\alpha=38.46 \,,\  \beta= -3.85\,,\ \delta = 6.40\,,\  \mu= 0.64 \ .
\end{equation}
Those parameters imply a zero mean, a standard deviation of $41\%$, a
skewness (measuring the asymmetry) of $-0.02$ and an excess kurtosis
(measuring the \textit{fatness} of the tails) of $0.01$. The other sets
of parameters are obtained by multiplying parameter  $\alpha$ by a
coefficient $C$, ($\beta,\delta,\mu$) being such that the first three
moments are unchanged. Note that when $C$ grows to infinity the tails of
the NIG distribution get closer to the tails of the Gaussian
distribution. For instance, Table~\ref{tab:kurtosis} shows how the
excess kurtosis (which is zero for a Gaussian distribution) is modified
with the five values of $C$ chosen in our simulations. 
%%
%\begin{figure}[htbp]
%\begin{center}
%\begin{tabular}{|c||c|c|c|c|c|c|}
%\hline
% \textrm{Coefficient} & $C=0.08$ & $C=0.14$ & $C=0.2$ & $C=1$ &$C=2$  &$C=3$ \\
%\hline
%\hline
% $\alpha$&  $3.08$&  $5.38$&  $7.69$&  $38.46$&  $76.92$&  $115.38$ \\
%\hline
% Excess kurtosis&  $1.87$&  $0.61$&  $0.30$&  $0.01$&  $4. \,10^{-3}$&  $2.\,10^{-3}$ \\
%\hline
%\end{tabular}
%\end{center}
%\caption{{\small Excess kurtosis of $X_1$ for different values of $\alpha$,  $(\beta, \delta, \mu)$ insuring the same three first moments.}}
%\label{tab:kurtosis}
%\end{figure}
%%
%
\begin{figure}[htbp]
\begin{center}
\begin{tabular}{|c||c|c|c|c|c|c|}
\hline
 \textrm{Coefficient} & $C=0.08$ & $C=0.14$ & $C=0.2$ & $C=1$ &$C=2$ \\
\hline
\hline
 $\alpha$&  $3.08$&  $5.38$&  $7.69$&  $38.46$&  $76.92$ \\
\hline
 Excess kurtosis&  $1.87$&  $0.61$&  $0.30$&  $0.01$&  $4. \,10^{-3}$ \\
\hline
\end{tabular}
\end{center}
\caption{{\small Excess kurtosis of $X_1$ for different values of $\alpha$,  $(\beta, \delta, \mu)$ insuring the same three first moments.}}
\label{tab:kurtosis}
\end{figure}

We have compared on simulations the Variance Optimal strategy (VO) using the real NIG incomplete market model with the real values of parameters to the Black-Scholes strategy (BS) assuming Gaussian returns with the real values of mean and variance.  Of course, the VO strategy is by definition theoritically optimal in continuous time, w.r.t. the quadratic norm. However, both strategies are implemented in discrete time, hence the performances observed in our simulations are spoiled w.r.t. the theoritical continuous rebalancing framework. 

\subsubsection{Strike impact on the pricing value and the hedging ratio}
%%%%%%%%%%%%%%%%%%%%%%%%%%%%
Figure~\ref{fig:strike} shows the Initial Capital (on the left graph) and the initial hedge ratio (on the right graph) produced by the VO and the BS strategies as functions of the strike, for three different sets of parameters $C=0.08\,,\ C=1\,,\ C=2$. We consider $N=12$  trading dates, which corresponds to operational practices on electricity markets, for an option expirying in three months. One can observe that BS results are very similar to VO results for $C\geq 1$ which corresponds to \textit{almost Gaussian returns}.  
However, for small values of $C$, for $C=0.08$,  corresponding to highly non Gaussian returns,  BS approach under-estimates \textit{out-of-the-money} options and over-estimates \textit{at-the-money} options. For instance, on Figure~\ref{tab:IC}, one can observe that for $K=99$ Euros the Black-Scholes  Initial Capital ($\textrm{IC}_{BS}$) represents $122\%$ of the variance optimal Initial Capital ($\textrm{IC}_{VO}$), while for $K=150$ it represents only  $57\%$ of the  variance optimal price. Moreover, the hedging strategy differs sensibly for $C=0.08$, while it is quite similar to BS's ratio for $C\geq 1$.  
\begin{figure}[htbp]
\begin{center}
   \epsfig{figure=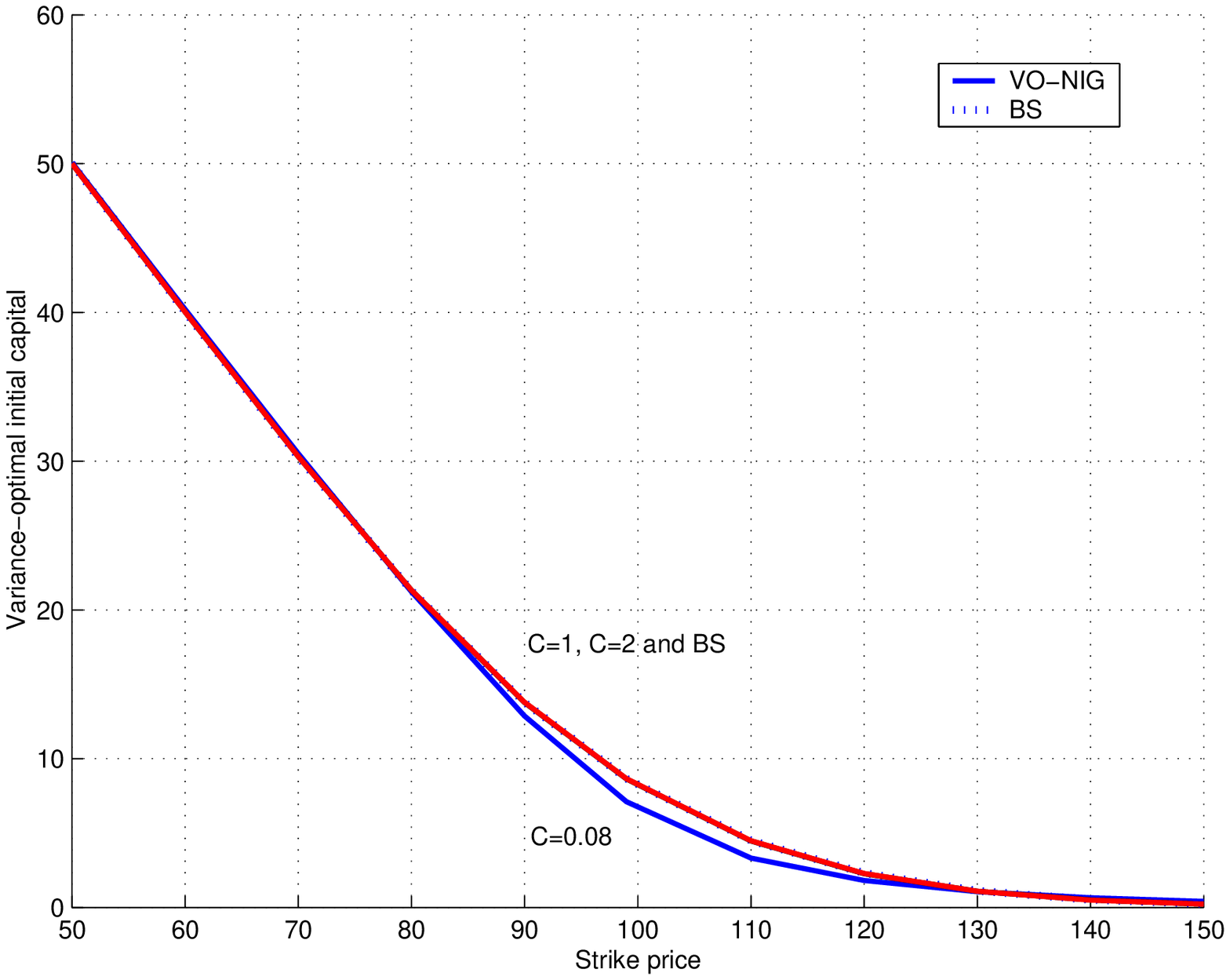,height=5cm, width=7.5cm}
   $\quad$
   \epsfig{figure=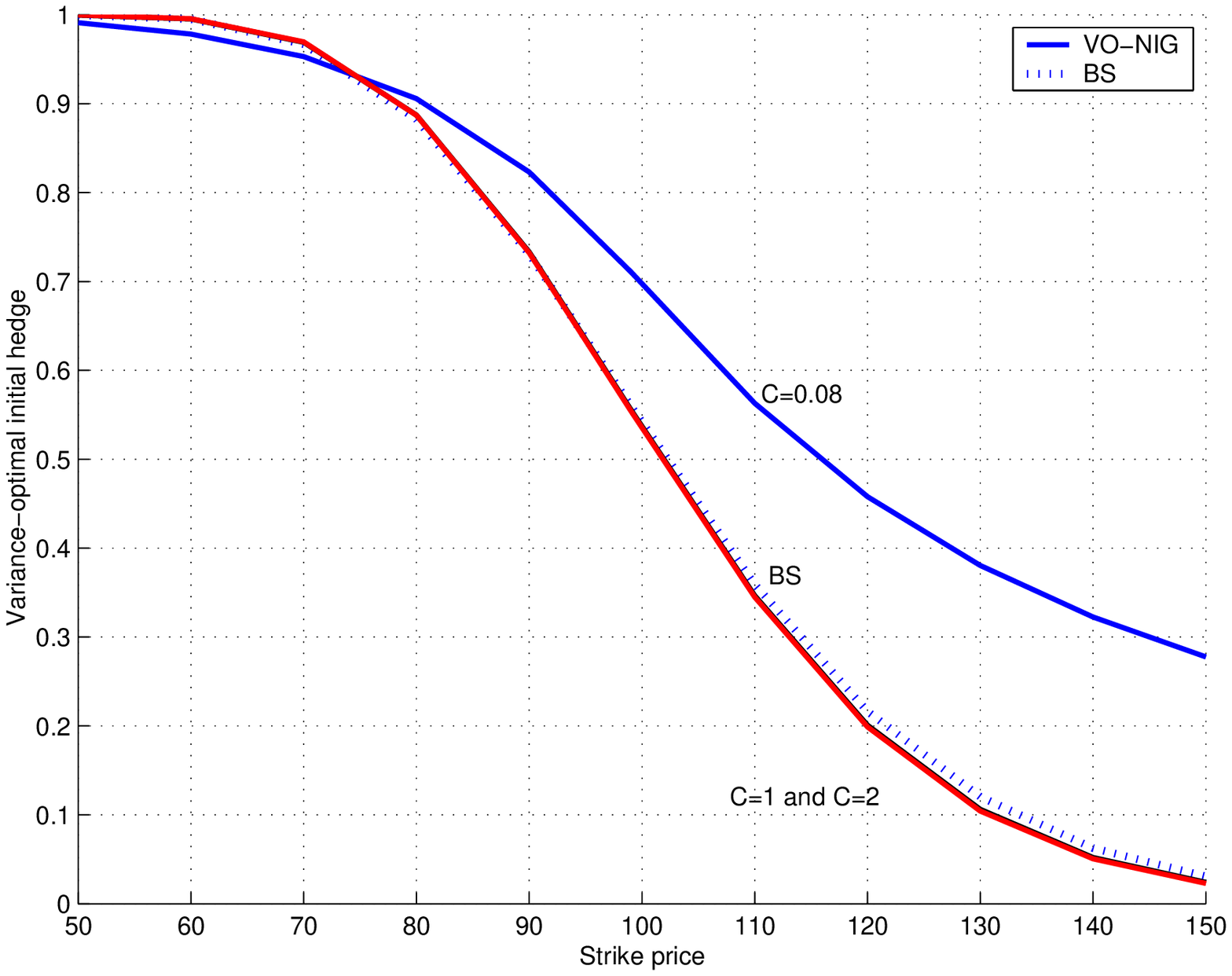,height=5cm, width=7.5cm}
\end{center}
\caption{{\small Initial Capital (on the left) and hedge ratio (on the right) w.r.t. the strike, for $C=0.08\,,\ C=1\,,\ C=2$.  }}
\label{fig:strike}
\end{figure}
\begin{figure}[htbp]
\begin{center}
\begin{tabular}{|c||c|c|c|c|}
\hline
 \textrm{Strikes} & $K=60$ & $K=99$ & $K=150$\\
\hline
\hline
$\textrm{IC}_{VO}$  &  $50.08$ &  $7.11$ &  $0.40$  \\
\hline
  $\textrm{IC}_{BS}$ (vs $\textrm{IC}_{VO}$) &  $50.00$ ($99.56\%$)&  $8.65$ ($121.73\%$)&  $0.23$ ($57.30\%$) \\
%\hline
%$\textrm{IC}_{VO}$ $C=1$ &  $50.00$ ($100\%$)&  $8.64$ ($100.20\%$)&  $0.22$ ($104.52\%$) \\
% \hline
% $\textrm{IC}_{BS}/\textrm{IC}_{VO}$&   &  &    \\
\hline
\end{tabular}
\end{center}
\caption{{\small Initial Capital of VO pricing ($\textrm{IC}_{VO}$) vs Initial Capital of BS pricing ($\textrm{IC}_{BS}$) for $C=0.08$. }}
\label{tab:IC}
\end{figure}

\subsubsection{Hedging error and number of trading dates}
%%%%%%%%%%%%%%%%%%%%%%%%%%%%%%%%%%%%%%%%%%%%%%%%%%%%
Figure~\ref{fig:trading:dates} considers the hedging error (the
difference between the terminal value of the hedging portfolio and the
payoff) as a function of the number of trading dates, for a strike
$K=99$ Euros (at the money) and for five different sets of parameters $C$
described on Figure~\ref{tab:kurtosis}. 
%$C=0.08,\,C=0.14,\,C=0.2,\,\ C=1,\,\ C=2$.  
%
The bias (on the left graph) and standard deviation (on the right graph) of the hedging error have been estimated by Monte Carlo method on $5000$ runs. Note that we could have used the formula stated in Theorem~\ref{thm34} to compute the variance of the error, but this would have give us the limiting error which does not take into account the additional error due to the finite number of trading dates. 

In terms of standard deviation, the VO strategy seems to outperform sensibly the BS strategy, for small values of $C$. For instance, one can observe on Figure~\ref{tab:hedging:error}, for $C=0.08$ that the VO strategy allows to reduce $10\%$ of the standard deviation of the error.  
As expected, one can observe that the VO error converges to the BS error when $C$ increases. This is due to the convergence of  NIG log-returns to Gaussian log-returns when $C$ increases (recall that the simulated log-returns are almost symmetric). 
One can distinguish two sources of incompleteness, the \textit{rebalancing error} due to the dicrete rebalancing strategy and the \textit{intrinsic error} due to the model incompleteness. 
On Figure~\ref{fig:trading:dates}, the hedging error (both for BS and VO) decreases with the number of trading dates and seems to converge to a limiting error corresponding to the intrinsic error. For $C=1$ and for a small number of trading dates $N\leq 5$, the rebalancing error represents the most part of the hedging error, then it  seems to vanish over $N=30$ trading dates, where the intrinsic error is predominant. For small values of $C\leq 0.2$, even for small numbers of trading dates,  the intrinsic error seems to be predominant. For $C\leq 0.2$ and $N\geq 12$ trading dates,  it seems useless to increase the number of trading dates. 
%The the \textit{incompleteness of the model increases}
%  , but this decrease seems to be slower for small values of $C$. One could think that the incompleteness is so important that increasing the number of trading dates does not improve the hedge. 
%
Moreover, one can observe that for a small number of trading dates $N\leq 12$ and for large values of $C\geq 1$,  BS seems to outperform the VO strategy, in terms of standard deviation.
% and this appears more clearly when $C$ increases. 
 This can be interpreted as a consequence of the central limit theorem. Indeed, when the time between two trading dates increases the corresponding increments of the Lévy process converge to a Gaussian variable. Hence, the model error comitted by the BS approach decreases when the number of trading dates decreases. 
%In the case of NIG log-returns, the VO strategy seems to outpeform sensibly the BS strategy for small values of $C\leq 0.2$ or for large numbers of trading dates. Unfortunately 

In term of bias, the over-estimation of at-the-money options (observed
for $C=0.08$, on Figures~\ref{fig:strike},~\ref{tab:IC}) seems to induce
a positive bias for the BS error (see Figure~\ref{fig:trading:dates}),
whereas the Bias of the VO error is negligeable (as expected from the
theory). However, one can observe on Figure~\ref{tab:hedging:error},
that the difference  between VO and BS bias error is smaller than the
difference between the Initial Capitals, therefore one can conclude
that, in our simulations, the BS hedging strategy induces more losses in average than the VO strategy. 

However, to be more relevant in our analysis, we have compared on Figure~\ref{fig:trading:datesINI}, the performances of the BS hedging portfolio with the VO hedging portfolio starting with the same Initial Capital as the BS hedging portfolio. One can observe on Figure~\ref{tab:hedging:error} that this approach allows to reduce the standard deviation of the VO hedging error (increasing the bias and of course the global quadratic error w.r.t. the VO strategy with optimal Initial capital). 

It is interesting to notice that, in terms of skewness and kurtosis, the
VO strategy seems to outperform sensibly the BS strategy for small
values of $C$. Figure~\ref{tab:hedging:error:moments} shows that for
$C=0.08$, the skewness of the BS hedging error is strongly negative (3
times greater than the VO error using the same Initial Capital) and the
kurtosis is high (14 times greater than the VO error). Hence, in our
simulations, BS strategy seems to imply more extreme losses than the VO strategy.   

In conclusion, the VO approach provides initial capital and hedging
strategies which are not significantly different from the BS approach
except for log-returns with high excess kurtosis (with small values of
parameter $\alpha$ in the NIG case). Similarly, we can observe (though
the figures are not reported here) the same behaviour w.r.t. to the
asymmetry of the distribution: the VO approach allows to outperform
significantly the BS approach for strongly asymmetric log-returns (with
high (absolute) values of parameter $\beta$ in the NIG case). On the
other hand, in more standard cases, the VO strategy seems to be comparable with the BS strategy in terms of quadratic error and to have the
significant and unexpected advantage to limit extreme losses (skewness and kurtosis) compared to the BS strategy.  
\begin{figure}[htbp]
\begin{center}
   \epsfig{figure=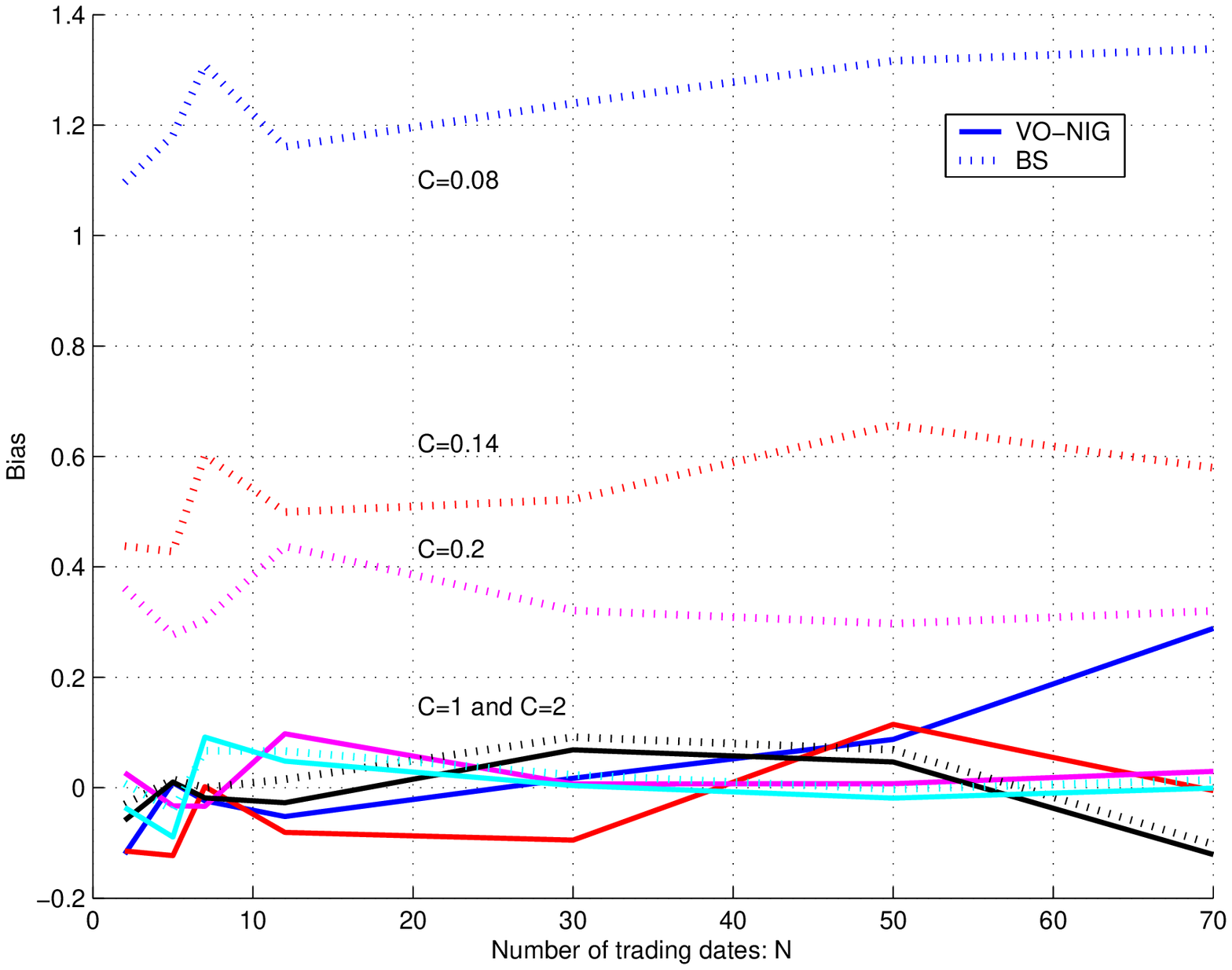,height=5cm, width=7.5cm}
   $\quad$
   \epsfig{figure=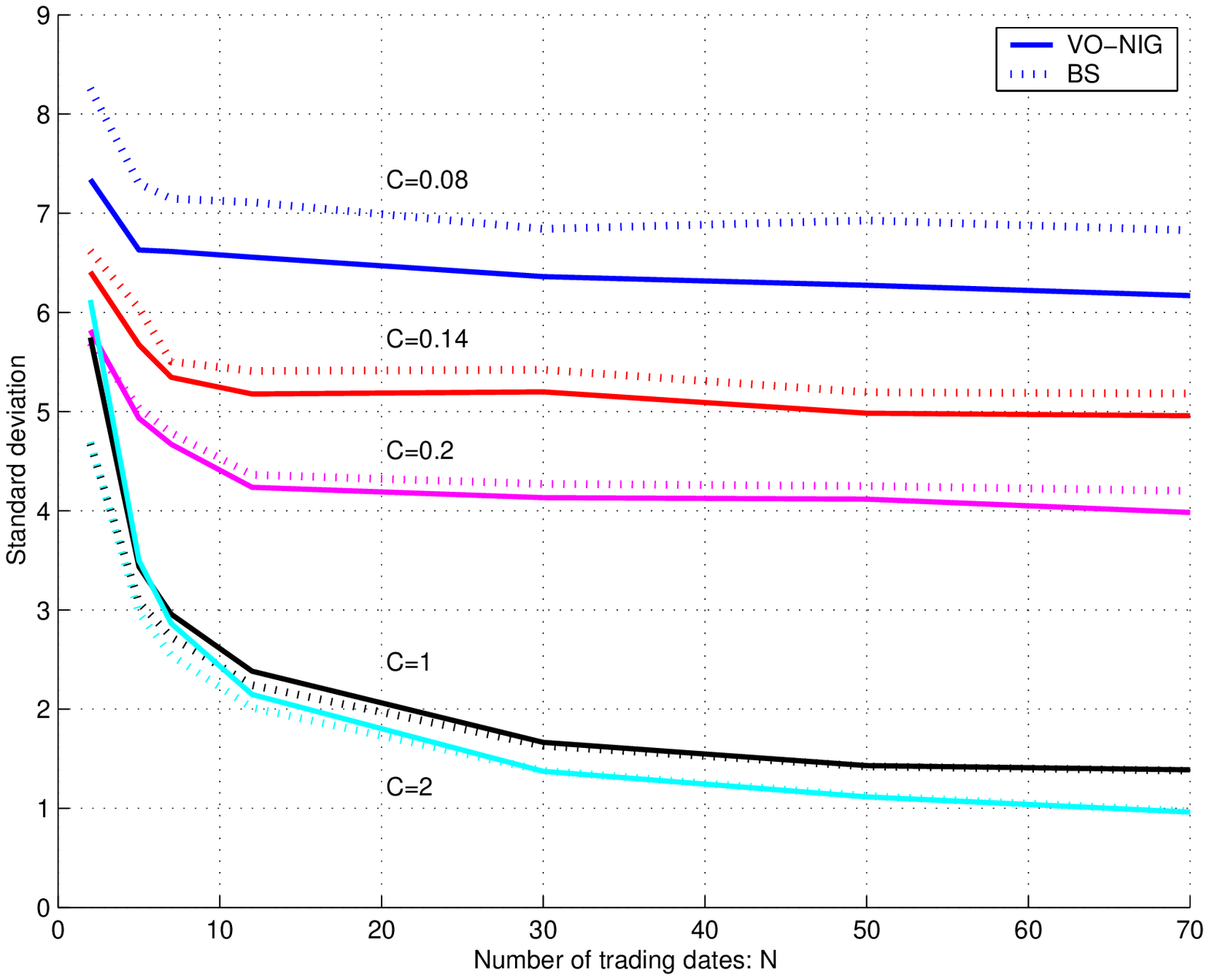,height=5cm, width=7.5cm}
\end{center}
\caption{{\small Hedging Error w.r.t. the number of trading dates for different values of $C$ and for $K=99$ Euros (Bias, on the left 	and standard deviation, on the right).}}
\label{fig:trading:dates}
\end{figure}
\begin{figure}[htbp]
\begin{center}
\begin{tabular}{|c||c|c|c|c|c|c|}
\hline
 \textrm{Coefficient} & $C=0.08$ & $C=0.14$ & $C=0.2$  & $C=1$ & $C=2$\\
\hline
\hline
 $\textrm{Std}_{VO}/\textrm{Std}_{BS}$ &  $91.19\%$&  $95.88\%$&  $97.63\%$ &  $107.52\%$ &  $109.39\%$ \\
 \hline
 $\textrm{Bias}_{BS}-\textrm{Bias}_{VO}$ &  $1.20$&  $0.57$&  $0.32$ &  $0.022$ &  $0.019$ \\
\hline
 $\textrm{IC}_{BS}-\textrm{IC}_{VO}$ &  $1.55$&  $0.7$&  $0.39$ &  $0.01$ &  $0$ \\
\hline
\end{tabular}
\end{center}
\caption{{\small Variance optimal hedging error vs Black-Scholes hedging error for different values of $C$ and for $K=99$ Euros (averaged values for different numbers of trading dates). }}
\label{tab:hedging:error}
\end{figure}
%
%%%%%%%%%%%%%%%%%%%%%%%%%%%%%%%%%%%%%%%%%%%%%%%%%%%%%%%%%%%%%%%%%%%%%%%%
%\begin{figure}[htbp]
%\begin{center}
%   \epsfig{figure=FIGURES/DensityC008N12.eps,height=5cm, width=7.5cm}
%   %$\quad$
%   %\epsfig{figure=FIGURES/Std_C.eps,height=5cm, width=7.5cm}
%\end{center}
%\caption{{\small Density of the hedging error for $C=0.08$ and $N=12$}}
%\label{fig:density:dates}
%\end{figure}
%%%%%%%%%%%%%%%%%%%%%%%%%%%%%%%%%%%%%%%%%%%%%%%%%%%%%%%%%%%%%%%%%%%%%%%%%%
%
\begin{figure}[htbp]
\begin{center}
\begin{tabular}{|c||c|c|c|c|c|c|c|}
\hline
 \textrm{Moments} & Mean & Standard deviation & Skewness & Kurtosis  \\
\hline
\hline
 VO& $-0.049$&  $6.59$&  $-3.50$&  $31.51$ \\
\hline
 BS& $1.27$& $7.25$&  $-7.65$&  $152.09$ \\

\hline
 VO with $\textrm{IC}_{VO}=\textrm{IC}_{BS}$ & $1.39$&  $6.47$&  $-2.37$&  $10.70$ \\            
\hline
\end{tabular}
\end{center}
\caption{{\small Empirical moments of the hedging error for $C=0.08$, $N=12$ and $K=99$ Euros (averaged values for different number of trading dates).  }}
\label{tab:hedging:error:moments}
\end{figure}
%
%
%%%%%%%%%%%%%%%%%%%%%%%%%%%%%%%%%%%%%%%%%%%%%%%%%%%%%%%%
%\subsection{ON CHANGE LA VALEUR DU CAPITAL INITIALE}
%%%%%%%%%%%%%%%%%%%%%%%%%%%%%%%%%%%%%%%%%%%%%%%%%%%%%%%%
%Les données sont MC=5000 sauf pour N=70 ou c'est 2000.\\
%Le V0 optimal donnée par la méthode L2 est de 7.11\\
%Le VO donne par la formule de BS est de  8.65\\
%
\begin{figure}[htbp]
\begin{center}
   \epsfig{figure=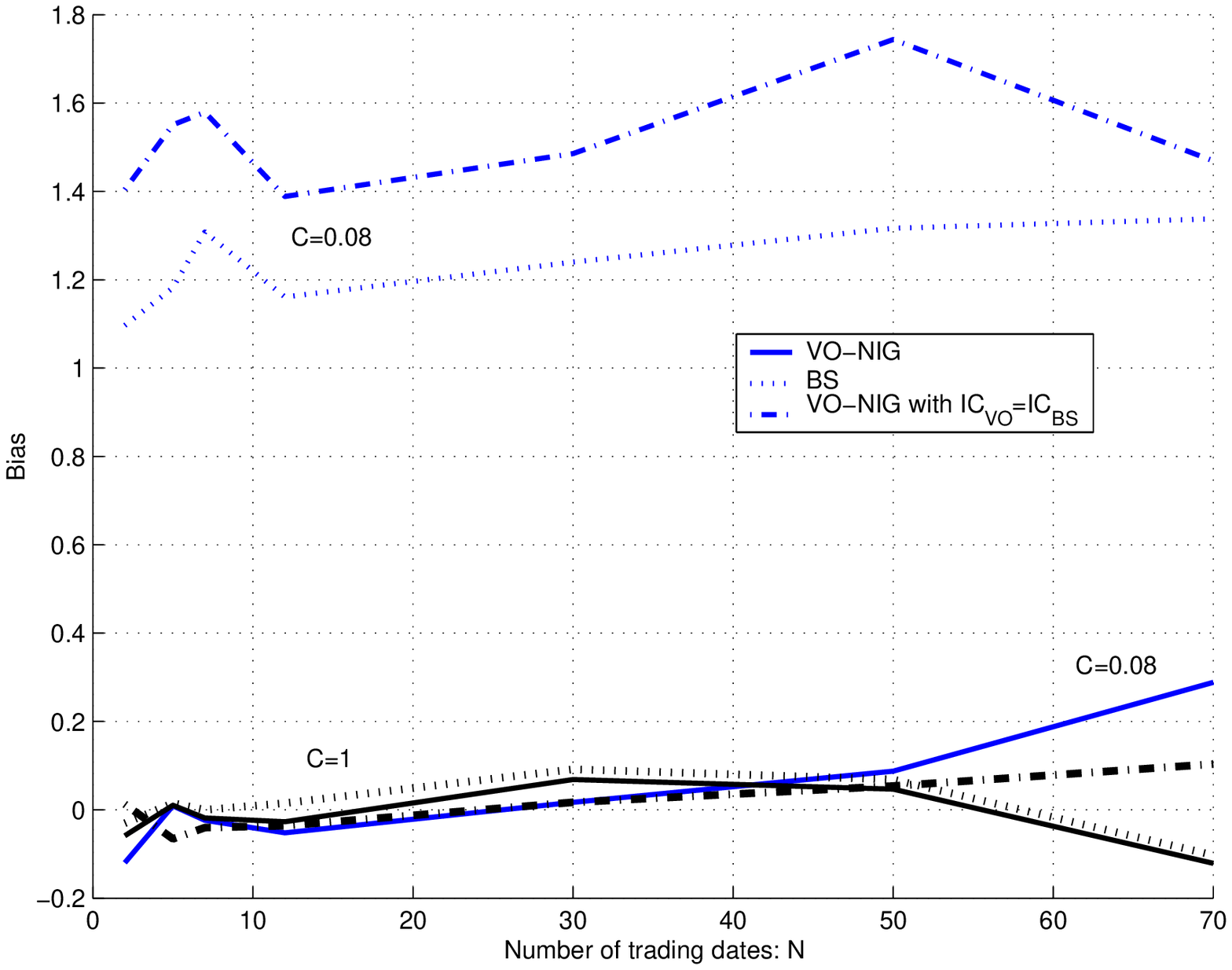,height=5cm, width=7.5cm}
   $\quad$
   \epsfig{figure=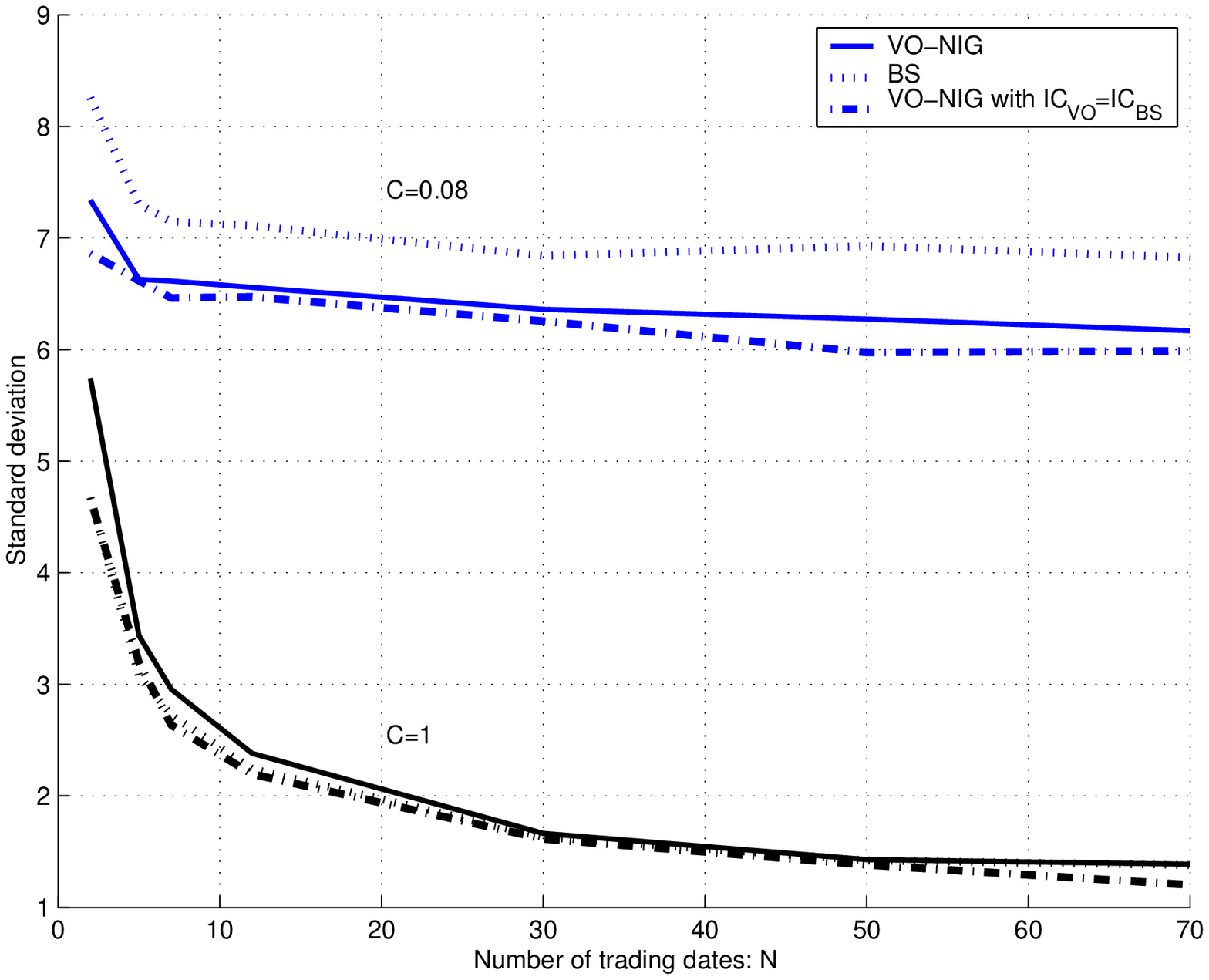,height=5cm, width=7.5cm}
\end{center}
\caption{{\small Hedging Error of BS strategy v.s. the VO strategy with the same initial capital as BS w.r.t. the number of trading dates for different values of $C$ and for $K=99$ Euros (Bias, on the left and standard deviation, on the right).}}
\label{fig:trading:datesINI}
\end{figure}

\subsection{Exponential PII}
%%%%%%%%%%%%%%%%%%%%%%%%%%%%%%%%%
%%%%%%%%%%%%%%%%%%%%%%%%%%%%%%%%%
We consider the problem of hedging and pricing a European call on an electricity forward, with a maturity $T=0.25$ of three month. The  maturity is equal to the delivery date of the forward contract $T=T_d$. As stated in Section~\ref{sec:elec},  the natural hedging instrument is the corresponding forward contract with value $S^0_t=e^{-r(T-t)}(F_t^{T}-F_0^T)$ for all $t\in [0, T]$, where $F^T=F$ is supposed to follow the  NIG one factor model: 
$$
F_t=e^{X_t}\ ,\quad \textrm{where $X_t=\int_0^t \sigma_s e^{-\lambda(T-u)}d\Lambda_u$} \quad  \textrm{where $\Lambda$ is a NIG process with  }\quad \Lambda_1\,\sim\, NIG(\alpha,\beta,\delta,\mu)\ .
$$
The standard set of parameters $(C=1)$ for the distribution of $\Lambda_1$ is  estimated on the same data as in the previous section (\textit{Month-ahead} \textit{base} forward prices of the French Power market in 2007):  
$$\alpha=15.81 \,,\  \beta= -1.581\,,\ \delta = 15.57\,,\  \mu= 1.56 \ .$$
Those parameters correspond to  a standard and centered NIG distribution with a skewness of $-0.019$.
The estimated annual  short-term volatility and mean-reverting rate are $\sigma_s=57.47\%$ and $\lambda=3$. 
The other sets of parameters considered in simulations are obtained by multiplying parameter  $\alpha$ by a coefficient $C$, ($\beta,\delta,\mu$ being such that the first three moments are unchanged). Table~\ref{tab:kurtosis} shows how the excess kurtosis is modified with $C$. 
\begin{figure}[htbp]
\begin{center}
\begin{tabular}{|c||c|c|}
\hline
 \textrm{Coefficient} & $C=0.08$ & $C=1$\\
\hline
\hline
 $\alpha$&  $1.26$&  $15.81$\\
\hline
 Excess kurtosis&  $1.87$&  $0.013$\\
\hline
\end{tabular}
\end{center}
\caption{{\small Excess kurtosis of $\Lambda_1$ for different values of $\alpha$  $(\beta, \delta, \mu)$ insuring the same three first moments}}
\label{tab:kurtosisb}
\end{figure}
%
%\subsection{Strike impact}
%%%%%%%%%%%%%%%%%%%%%%%%%%%%
%\begin{figure}[htbp]
%\begin{center}
%   \epsfig{figure=FIGURES/InitialCapital_PAI_K.eps,height=5cm, width=7.5cm}
%   $\quad$
%   \epsfig{figure=FIGURES/InitialHedge_PAI_K.eps,height=5cm, width=7.5cm}
%\end{center}
%\caption{{\small Initial Capital and hedge ratio w.r.t. the strike. }}
%\label{fig:strikePAI}
%\end{figure}
%
%\begin{figure}[htbp]
%\begin{center}
%   \epsfig{figure=FIGURES/InitialHedge_PAI_K_atm.eps,height=5cm, width=7.5cm}
%\end{center}
%\caption{{\small Initial hedge ratio w.r.t. the strike at the money. }}
%\label{fig:strikePAIatm}
%\end{figure}
%
%\subsection{Number of trading dates impact}
%%%%%%%%%%%%%%%%%%%%%%%%%%%%

%Figure~\ref{fig:trading:datesPAI} considers the hedging error (the difference between the terminl value of the hedging portfolio and the payoff) as a function of the number of trading dates, for five different sets of parameters $C=0.08$ and $C=1$. The bias (on the left graph) and standard deviation (on the right graph) of the hedging error have been estimated by Monte Calo method on $5000$ runs.  
%

Figure~\ref{fig:trading:datesPAI} shows the Bias and Standard deviation of the hedging error as a function of the number of trading dates estimated by Monte Calo method on $5000$ runs. The results are comparable to those obtained in the case of the Lévy process, on Figure~\ref{fig:trading:datesPAI}. However, one can notice that the BS strategy does no more outperform the VO strategy for small numbers of trading dates as observed in the Lévy case. This is due to the fact that $X_t$ is no more a sum of i.i.d. variables.  

%
%\textbf{METTRE ICI COMMENTAIRE FIGURE}\\
%
\begin{figure}[htbp]
\begin{center}
   \epsfig{figure=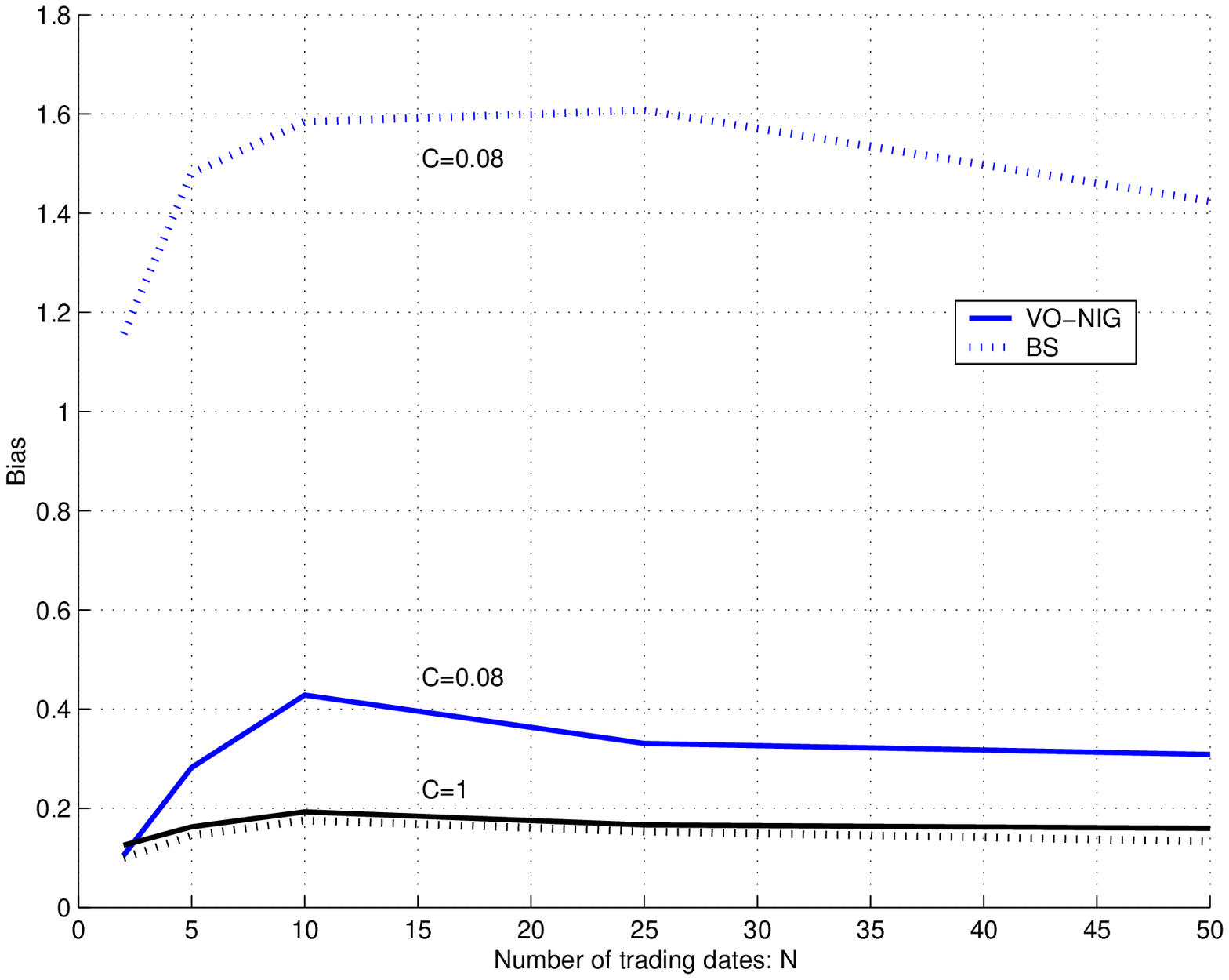,height=5cm, width=7.5cm}
   $\quad$
   \epsfig{figure=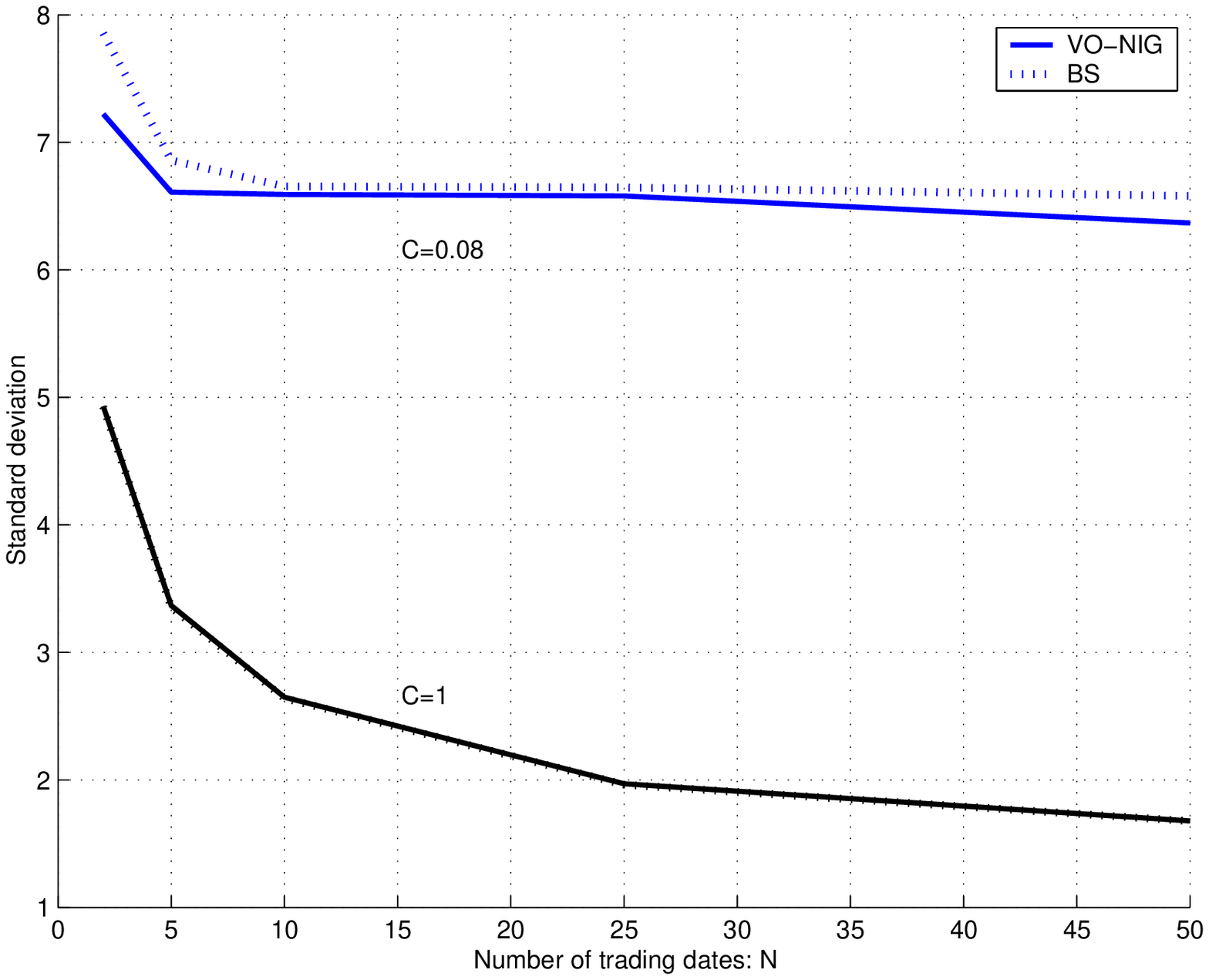,height=5cm, width=7.5cm}
\end{center}
\caption{{\small Hedging Error w.r.t. the number of trading dates for $C=0.08$ and $C=1$, for $K=99$ Euros (Bias, on the left 	and standard deviation, on the right).}}
\label{fig:trading:datesPAI}
\end{figure}
\begin{figure}[htbp]
\begin{center}
\begin{tabular}{|c||c|c|c|c|c|c|c|}
\hline
 \textrm{Moments} & Mean & Standard deviation & Skewness & Kurtosis  \\
\hline
\hline
 VO& $0.43$&  $6.59$&  $-2.89$&  $16.24$ \\
\hline
 BS& $1.58$& $6.65$&  $-3.79 $&  $25.53$ \\
\hline
\end{tabular}
\end{center}
\caption{{\small Empirical moments of the hedging error for $C=0.08$, $N=10$ and $K=99$ Euros. }}
\label{tab:hedging:error:PAI}
\end{figure}

\newpage

\bigskip
{\bf ACKNOWLEDGEMENTS:} The first named author was partially founded
by Banca Intesa San Paolo. 
The second named author was supported by FiME, Laboratoire de Finance des Marchés de l'Energie (Dauphine, CREST, EDF R\&D) www.fime-lab.org. \\
All the authors are grateful to F. Hubalek for useful advices to improve the numerical computations of Laplace transforms performed in simulations.   

\newpage

%%%%%%%%%%%%%%%%%%%%%%%%%%%%%%%%%%%%%%%%%
%BIBLIO
%%%%%%%%%%%%%%%%%%%%%%%%%%%%%%%%%%%%%%%%

\end{document}